\newcommand{\rank}{\mathrm{rank}}
\def\spacingset#1{\renewcommand{\baselinestretch}%
	{#1}\small\normalsize} \spacingset{1}
\newtheorem{theorem}{Theorem}
\newtheorem{corollary}{Corollary}
\newtheorem{lemma}{Lemma}
\newtheorem{proposition}{Proposition}
\newtheorem{remark}{Remark}
\newtheorem{condition}{Condition}
\newcommand{\ols}{{\rm ols}}
\newcommand{\bzero}{\boldsymbol 0}
\newcommand{\bLambda}{\boldsymbol \Lambda}
\newcommand{\bOmega}{\boldsymbol \Omega}
\newcommand{\bSigma}{\boldsymbol \Sigma}
\newcommand{\bGamma}{\boldsymbol \Gamma}
\newcommand{\bTheta}{\boldsymbol \Theta}
\newcommand{\bPi}{\boldsymbol \Pi}
\newcommand{\bXi}{\boldsymbol \Xi}
\newcommand{\bpsi}{\boldsymbol \psi}
\newcommand{\bPsi}{\boldsymbol \Psi}
\newcommand{\bphi}{\boldsymbol \phi}
\newcommand{\bPhi}{\boldsymbol \Phi}
\newcommand{\bxi}{\boldsymbol \xi}
\newcommand{\bnu}{\boldsymbol \nu}
\newcommand{\bvarepsilon}{\boldsymbol \varepsilon}
\newcommand{\be}{{\mathbf e}}
\newcommand{\bx}{{\mathbf x}}
\newcommand{\by}{{\mathbf y}}
\newcommand{\bu}{{\mathbf u}}
\newcommand{\br}{{\mathbf r}}
\newcommand{\bs}{{\mathbf s}}
\newcommand{\bz}{{\mathbf z}}
\newcommand{\bbb}{{\mathbf b}}
\newcommand{\bc}{{\mathbf c}}
\newcommand{\bW}{{\bf W}}
\newcommand{\bD}{{\bf D}}
\newcommand{\bA}{{\bf A}}
\newcommand{\bB}{{\bf B}}
\newcommand{\bC}{{\bf C}}
\newcommand{\bE}{{\bf E}}
\newcommand{\bI}{{\bf I}}
\newcommand{\bK}{{\bf K}}
\newcommand{\bP}{{\bf P}}
\newcommand{\bS}{{\bf S}}
\newcommand{\bT}{{\bf T}}
\newcommand{\bX}{{\bf X}}
\newcommand{\bY}{{\bf Y}}
\newcommand{\bZ}{{\bf Z}}
\newcommand{\bR}{{\bf R}}
\newcommand{\bU}{{\bf U}}
\newcommand{\bV}{{\bf V}}
\newcommand{\bQ}{{\bf Q}}
\newcommand{\bM}{{\bf M}}
\newcommand{\bH}{{\bf H}}
\newcommand{\bG}{{\bf G}}
\newcommand{\bN}{{\bf N}}
\newcommand{\bL}{{\bf L}}
\newcommand{\cC}{{\cal C}}
\newcommand{\cD}{{\cal D}}
\newcommand{\cN}{{\cal N}}
\newcommand{\eZ}{\mathbb{Z}}
\newcommand{\eR}{\mathbb{R}}
\newcommand{\eE}{\mathbb{E}}
\newcommand{\eP}{\mathbb{P}}
\newcommand{\cov}{\text{Cov}}
\newcommand{\bic}{\text{BIC}}
\newcommand{\tr}{\mbox{tr}}
\newcommand{\diag}{\mbox{diag}}
\def\F{{ \mathrm{\scriptstyle F} }}
\def\IC{{ \mathrm{\scriptscriptstyle IC} }}
\newcommand*{\rom}[1]{\expandafter\@slowromancap\romannumeral #1@}
\DeclareRobustCommand\widecheck[1]{{\mathpalette\@widecheck{#1}}}
\def\@widecheck#1#2{%
	\setbox\z@\hbox{\m@th$#1#2$}%
	\setbox\tw@\hbox{\m@th$#1%
		\widehat{%
			\vrule\@width\z@\@height\ht\z@
			\vrule\@height\z@\@width\wd\z@}$}%
	\dp\tw@-\ht\z@
	\@tempdima\ht\z@ \advance\@tempdima2\ht\tw@ \divide\@tempdima\thr@@
	\setbox\tw@\hbox{%
		\raise\@tempdima\hbox{\scalebox{1}[-1]{\lower\@tempdima\box
				\tw@}}}%
	{\ooalign{\box\tw@ \cr \box\z@}}}
\def\spacingset#1{\renewcommand{\baselinestretch}%
	{#1}\small\normalsize} \spacingset{1}
\providecommand{\keywords}[1]{\textbf{\textit{Keywords: }} #1}
\newcommand{\blind}{1}
\begin{document}
	\if1\blind
	{
    	\spacingset{1.25}
		\title{\bf \Large Weight-calibrated estimation for factor models of high-dimensional time series\footnote{The authors' names are sorted alphabetically.}
            \hspace{.2cm}\\
		}
		\author[1]{Xinghao Qiao}
		\author[2]{Zihan Wang}
		\author[3]{Qiwei Yao}
            \author[4]{Bo Zhang}
            \affil[1]{\it Faculty of Business and Economics, The University of Hong Kong, Hong Kong}
	    \affil[2]{\it Department of Statistics and Data Science, Tsinghua University, Beijing, China}
		\affil[3]{\it Department of Statistics, London School of Economics, London, U.K.}
            \affil[4]{\it School of Management, University of Science and Technology of China, Hefei, China}
		\setcounter{Maxaffil}{0}
		
		\renewcommand\Affilfont{\itshape\small}
		\date{\vspace{-5ex}}
		\maketitle
	} \fi
	\if0\blind
	{   \spacingset{1.7}
            \bigskip
		\bigskip
		\bigskip
		\begin{center}
			{\Large\bf Weight-calibrated estimation for factor models of high-dimensional time series}
		\end{center}
		\medskip
	} \fi

\setcounter{Maxaffil}{0}
\renewcommand\Affilfont{\itshape\small}
\spacingset{1.25}

\begin{abstract}
    The factor modeling for high-dimensional time series is powerful in discovering latent common components for dimension reduction and information extraction. Most available estimation methods can be divided into two categories: the covariance-based under asymptotically-identifiable assumption and the  autocovariance-based with white idiosyncratic noise. This paper follows the autocovariance-based framework and develops a novel weight-calibrated method to improve the estimation performance. It adopts a linear projection to tackle high-dimensionality, and employs a reduced-rank autoregression formulation. The asymptotic theory of the proposed method is established, relaxing the assumption on white noise. Additionally, we make the first attempt in the literature by providing a systematic theoretical comparison among the covariance-based, the standard autocovariance-based, and our proposed weight-calibrated autocovariance-based methods in the presence of factors with different strengths. Extensive simulations are conducted to showcase the superior finite-sample performance of our proposed method, as well as to validate the newly established theory. The superiority of our proposal is further illustrated through the analysis of one financial and one macroeconomic data sets.
\end{abstract}

\keywords{Autocovariance; Covariance; Eigenanalysis; Factor strength; Reduced-rank autoregression; Weight matrix.}

\newpage
\spacingset{1.7}

\section{Introduction}
\label{sec.intro}
High-dimensional time series have become indispensable in various fields such as economics, finance, climatology, medical research and others. However, conventional methods such as joint multivariate or componentwise univariate time series modeling often suffer from over-parametrization or omission of cross-sectional correlations. 
To overcome these difficulties, high-dimensional factor models have emerged as one of the most powerful approaches for dimension reduction and information extraction. Consider the factor model for a stationary $p$-vector time series $\{\by_t\}_{t\in\eZ}$:
\begin{equation}
    \label{eq.model}
    \by_t=\bA\bx_t+\be_t,~~t=1,\dots,n,
\end{equation}
where $\bx_t$ is a latent stationary $r_0$-vector factors, $\bA$ is a $p\times r_0$ full-ranked factor loading matrix, $\be_t$ is a stationary $p$-vector  idiosyncratic component, and the number of factors $r_0$ is unknown but assumed finite. Denote $\bOmega_{y}(k)=\cov(\by_t,\by_{t-k})$ for $k\in\eZ$ as the (auto)covariance matrix of $\by_t$\footnote{We refer to lag-$0$ and nonzero-lagged autocovariances as covariance and autocovariance, respectively.} with the corresponding sample estimate $\widehat \bOmega_y(k)$. For simplicity, we use $\bOmega_{y}$ to denote the covariance matrix $\bOmega_{y}(0)$. Similar definitions also apply to $\bOmega_{x}(k)$ and $\bOmega_{e}(k)$. 

The literature has mainly focused on two types of model assumptions on \eqref{eq.model}. 
The first type assumes that the $r_0$ eigenvalues of $\bA\bOmega_{x}\bA^{\top}$ diverge at rate $O(p)$ whereas all the eigenvalues of $\bOmega_e$ are bounded as $p \rightarrow \infty.$ This asymptotic identifiability assumption holds when 
the common factors can influence a non-vanishing proportion of components in $\by_t$ and the idiosyncratic components exhibit weak cross-sectional correlations.
The seminal work \cite{bai2002determining} proposes to estimate factors and loadings by solving a constrained least squares minimization problem, whose solution is equivalent to the principal component analysis (PCA) based on the sample covariance estimator $\widehat \bOmega_y$, thereby introducing a covariance-based approach. 
Other relevant literature includes, e.g., \cite{chamberlain1983arbitrage,stock2002forecasting,bai2003inferential,onatski2012asymptotics,bai2012statistical,connor2012efficient,fan2013large,bai2016maximum,bai2023approximate}, and extensions to matrix factor models \citep{yu2022projected,chen2023statistical}, functional factor models \citep{li2025factor} and tensor factor models \citep{chen2024rank,chen2026estimation}. 

The second type of models assumes that the dynamic information in $\by_t$ is entirely captured by the common factors, while the idiosyncratic component $\be_t$ is a white noise sequence with ${\mathbb E}(\be_t)={\bf 0}$ and $\bOmega_e(k)=0$ for any $k \neq 0,$ and is allowed to exhibit any strength of cross-sectional correlations.
Given that the autocovariance of $\by_t$ can filter out the impact of $\be_t$ automatically, \cite{lam2011estimation} proposed an autocovariance-based method to estimate the factor loading space and the number of factors $r_0$ through eigenanlaysis of $\sum_{k=1}^{m}\widehat\bOmega_{y}(k)\widehat\bOmega_{y}(k)^{\top},$ where $m$ is some prescribed positive integer. 
Other relevant literature includes, e.g., \cite{pena1987identifying,pan2008modelling,lam2012factor,gao2022modeling}, and extensions to matrix factor models \citep{wang2019,chen2020constrained,chang2023modelling}, functional factor models \citep{guo2026factor} and tensor factor models \citep{chen2022factor,han2024cp,han2024tensor}. In this paper, we follow the autocovariance-based estimation in our methodological development, but systematically compare it with the covariance-based method in theory.

The static factor model \eqref{eq.model} considered in this paper differs from the dynamic factor model \cite[]{forni2000generalized}, which allows $\by_t$ to depend on $\bx_t$ and its lagged values. Their approach adopts the frequency domain analysis based on the PCA for spectral density matrices. See also \cite{barigozzi2024dynamic} and the references therein.

Without the loss of generality, we assume
that the columns of $\bA$ are orthonormal, i.e. $\bA^{\top}\bA=\bI_{r_0},$ as $(\bA, \bx_t)$ in \eqref{eq.model} can be replaced by $(\widecheck{\bA}, \bV \bx_t)$, where, e.g., $\bA = \widecheck{\bA} \bV$ is the QR decomposition of $\bA$. Even with this orthogonality constraint, $\bA$ can still be replaced by $\bA \bU$ in \eqref{eq.model} for any $r_0 \times r_0$ orthogonal matrix $\bU$. However the linear space spanned by the columns of $\bA$, denoted by ${\cal C}(\bA)$, can be uniquely determined by \eqref{eq.model} when $\be_t$ is white noise and none of the linear combinations of $\bx_t$ is white noise. Then the standard autocovariance-based method is based on the fact that ${\cal C}(\bA)$ can be spanned by the $r_0$ leading eigenvectors, i.e., they correspond to the $r_0$ largest eigenvalues, of $\bOmega_{y}(k)\bOmega_{y}(k)^{\top}$ for any $k\ne 0$.
Hence the columns of an estimator for ${\bA}$ can be taken as the $r_0$ leading orthonormal eigenvectors of $\widehat{\bM}_1=\sum_{k=1}^{m}\widehat\bOmega_{y}(k)\widehat\bOmega_{y}(k)^{\top}.$
However, such method using the eigenvalue-ratio principle \citep{lam2012factor} often suffers from underestimating the number of factors in practice due to its less effectiveness in  separating the common and idiosyncratic components in finite samples, not mentioning the existence of the common factors with different strengths in practice. Alternative methods for estimating $r_0$ include the covariance-based eigenvalue-ratio principle \citep{ahn2013eigenvalue}, information criteria \citep{bai2002determining}, and testing-based approaches \citep{onatski2009testing,trapani2018randomized}. Despite these advances, determining the number of factors remains challenging in practice and may yield unsatisfactory results.

In this paper, we propose a novel weight-calibrated method by introducing a $p\times p$ non-negative definite weight matrix $\widehat\bW$ to enhance the separation, thereby improving the performance of the standard autocovariance-based method.
Specifically, we take the $r_0$ leading orthonormal eigenvectors of matrix
\begin{equation}
    \label{eq.hat_M}
    \widehat \bM = \sum_{k=1}^{m}\widehat\bOmega_{y}(k)\widehat\bW\,\widehat\bOmega_{y}(k)^{\top}
\end{equation}
as columns of the estimated loading matrix. 
By casting this weight-calibrated autocovariance-based estimation in a reduced-rank autoregression framework, 
the weight matrix turns out to be of the form
\begin{equation}
    \label{eq.matrix_W}
    \widehat{\bW}=\bQ\big(\bQ^{\top}\widehat{\bOmega}_{y}\bQ\big)^{-1}\bQ^{\top},
\end{equation}
where $\bQ$ is  a $p \times q$ projection matrix and its columns are the $q$ leading orthonormal eigenvectors of $\widehat\bOmega_y$, and $r_0< q\le \min(p,n)$. 
This weight matrix induces a rescaling of eigenvalues of $\widehat{\bOmega}_y(k)\widehat{\bOmega}_y(k)^{\top},$ often resulting in an enhanced separation between the common and idiosyncratic components, and, more precisely, the relative decrease from the $r_0$-th to the $(r_0+1)$-th largest eigenvalues of $\widehat{\bOmega}_{y}(k)\widehat{\bW}\widehat{\bOmega}_{y}(k)^{\top}$ is of higher order than that of $\widehat{\bOmega}_{y}(k)\widehat{\bOmega}_{y}(k)^{\top}.$
In practice when $r_0$ is unknown, we may choose $q$ to be sufficiently large so as to satisfy the constraint.
Additionally, we introduce a data-driven criterion for selecting $q.$ 
Our method is tailored to time series data with  serial dependence, but is not applicable to independent data.

The main contribution of our paper is four-fold.
Firstly, this paper represents the first effort in the literature to establish the connection between the autocovariance-based eigenanalysis and constrained least squares in a reduced-rank autoregression formulation. 
Our proposal involves a novel weight matrix to enhance the separation of common and idiosyncratic components, leading to the  improvement in estimation. The proposed weight-calibrated estimation method is of independent interest as it can also be applied to other factor models, such as those for matrix or tensor time series. See Section~\ref{sec.disscuss}.
Secondly, under model~\eqref{eq.model}, we investigate the asymptotic properties of the relevant estimated quantities using our weight-calibrated autocovariance-based method. In particular, we relax the commonly imposed white noise assumption in the autocovariance-based literature to allow for weak serial correlations in the idiosyncratic components.

Thirdly, this paper makes the first attempt in the literature to systematically compare the covariance-based and  autocovariance-based methods in the presence of the factors with different strengths,
highlighting their respective applicable scenarios through both theoretical analysis and simulation studies. Last but not least, by presenting asymptotic properties of the corresponding eigenvalues and eigenvectors that arise from the covariance-based, the standard autocovariance-based and our newly proposed weight-calibrated autocovariance-based methods, we demonstrate the theoretical superiority of our approach in effectively distinguishing strong factors from weak factors and idiosyncratic components, even when the two competing methods fail.

The remainder of the paper is organized as follows. In Section~\ref{sec.method}, we specify the weight matrix from a reduced-rank autoregression formulation. 
Section~\ref{sec.theory_basic} presents the asymptotic properties of the proposed estimators. 
Section~\ref{sec.theory_weak} conducts the theoretical analysis of the covariance-based, the standard autocovariance-based, and the weight-calibrated autocovariance-based methods in the presence of the factors with different strengths. In Sections~\ref{sec.sim} and \ref{sec.real}, we demonstrate the superior finite sample performance of our proposed method over the competitors through extensive simulations and the analysis of two real datasets, respectively. Section~\ref{sec.disscuss} discusses several future extensions. All technical proofs are relegated to the supplementary material.

{\it Notation}. For any matrix $\bB=(B_{ij})_{p \times q},$ we let $\Vert \bB\Vert_{\min}=\lambda_{\min}^{1/2}(\bB^{\top}\bB),\Vert \bB\Vert=\lambda_{\max}^{1/2}(\bB^{\top}\bB)$ and denote its Frobenius norm by $\Vert \bB\Vert_{\F}=(\sum_{i,j}B_{ij}^2)^{1/2}$. Let $\lambda_{i}(\bB)$ and $\sigma_{i}(\bB)$ be the $i$-th largest eigenvalue (if exists) and singular value of $\bB$, respectively. Let $\rank(\bB)$ be the rank of $\bB$. Let $\cC(\bB)=\cC(\{\bbb_i\}_{i=1}^q)$ be the space spanned by the columns of $\bB=(\bbb_1, \dots, \bbb_q).$ For a positive integer $m,$ write $[m]=\{1, \dots, m\}$ and denote by $\bI_m$ the identity matrix of size $m \times m.$ For $x,y \in {\mathbb R},$ we use $x \wedge y = \min(x,y)$, $x\vee y=\max(x,y)$, and $\lfloor x\rfloor$ as the floor function of $x$. For two positive sequences $\{a_n\}$ and $\{b_n\}$, we write $a_n\lesssim b_n$ or $a_n=O(b_n)$ or $b_n\gtrsim a_n$ if there exists a positive constant $c$ such that $a_n/b_n\le c$, and $a_n\ll b_n$ or $a_n=o(b_n)$ if $a_n/b_n\to0$. We write $a_n\asymp b_n$ if and only if $a_n\lesssim b_n$ and $a_n\gtrsim b_n$ hold simultaneously.

\section{Methodology}
\label{sec.method}

\subsection{Reduced-rank autoregression formulation}
\label{subsec.reduced}

To specify the weight matrix in \eqref{eq.matrix_W}, we cast the autocovariance-based method in a reduced-rank autoregression framework. This is motivated by the equivalence between the standard PCA and a reduced-rank regression \cite[]{reinsel2022multivariate}, which also involves a weight matrix implicitly. Specifically, consider regressing a $p$-vector $\by_t$ on itself with a rank-reduced coefficient matrix: $\by_t = \bL \by_t + \be_t$ for $t\in [n]$, where $\bL$ is  $p\times p$ with rank $r_0<p$. Let $p\le n$. 
Denoted by $\widehat \bL$ the constrained least squares estimator for this reduced-rank regression. Then $\cC(\widehat\bL)$ is spanned by the $r_0$ leading eigenvectors of $\bY^{\top}\bY(\bY^{\top}\bY)^{-1}\bY^{\top}\bY:=n\widehat{\bOmega}_y\widetilde{\bW}\widehat{\bOmega}_y^{\top}=n\widehat{\bOmega}_y,$ where $\bY=(\by_1,\dots,\by_n)^{\top}\in\eR^{n\times p}.$ Note the sandwiched form with the weight matrix $\widetilde{\bW}=\widehat{\bOmega}_y^{-1}$. This effectively amounts to regressing $\by_t$ on its $r_0$ leading principal components.

Inspired by this observation, we aim to implement the autocovariance-based eigenanalysis based on \eqref{eq.matrix_W} through regressing $\by_t$ on its lagged values $\by_{t-k}$, for $k \ge 1$, using latent factor $\bx_t$ as an intermediary. To
tackle high-dimensionality including the case $p>n,$ we project $\by_t$ to $\widetilde{\by}_t=\bQ^{\top}\by_t,$ which confines our analysis to a subspace that retains the most information of common components. 
Here $\bQ$ is a $p \times q$ projection matrix and its columns are the $q$ leading orthonormal eigenvectors of $\widehat\bOmega_y$, and $q$, satisfying $r_0 < q\le p\wedge n$, is a tuning parameter. Now by assuming the latent factor is of the form $\bx_t=\bH_k^{\top}\widetilde{\by}_{t-k}+\widetilde{\be}_{tk}$, where $\bH_k$ is  $q\times r_0$ and full-ranked, model \eqref{eq.model} becomes a reduced-rank autoregression
\begin{equation}
    \label{eq.rrr}
    \by_t=\bA\bH_k^{\top}\widetilde{\by}_{t-k}+\be_{tk},~~t=k+1,\dots,n,
\end{equation}
where $\be_{tk}=\bA\widetilde{\be}_{tk}+\be_t$.
The estimators for both $\bA$ and $\bH_k$ are then the solution of the following constrained least squares problem
\begin{equation}
    \label{eq.minimize}
    \min_{\bH_k,\,\bA^{\top}\bA=\bI_{r_0}}
    \sum_{t=k+1}^n \| \by_t - \bA \bH_k^{\top}\, \widetilde{\by}_{t-k}\|^2.
\end{equation}
According to Section~\ref{supsubsec.solution} of the supplementary material,  the columns of the resulting estimator for $\bA$ can be taken as the $r_0$ leading orthonormal eigenvectors of matrix 
$$
\widehat{\bOmega}_{y}(k)\bQ\big(\bQ^{\top}\widehat{\bOmega}_{y}\bQ\big)^{-1}\bQ^{\top}\widehat{\bOmega}_{y}(k)^{\top}
= \widehat{\bOmega}_{y}(k) \widehat{\bW} \, \widehat{\bOmega}_{y}(k)^{\top},
$$
where the weight matrix $\widehat{\bW}$ is defined in \eqref{eq.matrix_W}. To accumulate information across different lags, we propose to estimate the columns of $\bA$ by the $r_0$ leading orthonormal eigenvectors of matrix $\widehat \bM$ defined in \eqref{eq.hat_M}.

The estimation procedure developed for model \eqref{eq.model} assumes that the number of factors $r_0$ is known or can be correctly identified. In practice, $r_0$ is unknown and needs to be estimated.
Let $\hat{\lambda}_{k1} \ge \cdots \ge \hat{\lambda}_{kq}$ be the eigenvalues of $\widehat{\bOmega}_y(k)\widehat{\bW}\widehat{\bOmega}_y(k)^{\top}$ for $k\in[m]$.
Motivated by \cite{zhang2024factor}, we then estimate $r_0$ by maximizing the ratios of the adjacent cumulative weighted eigenvalues as 
\begin{equation}
    \label{eq.deter_r}
    \hat{r}_{0}=\arg\max_{j\in[q-1]}R_{j},~~{\rm with~~ }R_{j}=\frac{\sum_{k=1}^{m}(1-k/n)\hat{\lambda}_{kj}+\vartheta_{n}}{\sum_{k=1}^{m}(1-k/n)\hat{\lambda}_{k,j+1}+\vartheta_{n}},
\end{equation}
where heavier weights are placed on eigenvalues with smaller $k$. Here $\vartheta_n>0$ provides a lower bound correction to $\hat{\lambda}_{kj}$ for $j=r_0+1,\dots,q$ and satisfies the conditions in Proposition~\ref{propos.est_r}  below. In practice, we may set $\vartheta_n=0.1n^{-1}p.$ It is worth noting that the standard method based on the ratios of the adjacent eigenvalues of $\widehat{\bM}$ \cite[]{lam2012factor} may suffer from inconsistent estimation, as discussed in \cite{zhang2024factor}. We finally take the $\hat{r}_{0}$ leading eigenvectors of $\widehat{\bM}$ as the columns of our loading matrix estimator $\widehat{\bA}.$ 

\begin{remark}
\label{rmk.calibrating}
(i) We provide an explanation for why the weight matrix $\widehat{\bW}$ can improve the estimation performance for model~\eqref{eq.model}. Since $\widehat{\bW}=\bQ(\bQ^{\top}\widehat{\bOmega}_{y}\bQ)^{-1}\bQ^{\top}$ acts as a rank-$q$ pseudo-inverse of $\widehat{\bOmega}_{y}$, it
induces a rescaling of eigenvalues of $\widehat{\bOmega}_y(k)\widehat{\bOmega}_y(k)^{\top}.$ This often enhances the separation between the common and idiosyncratic components, which in turn improves the performance of the ratio-based method in \eqref{eq.deter_r}, as empirically evidenced by simulation results in Section~\ref{sim.bfm} below. Detailed intuitive illustrations are presented in Section~\ref{supsubsec.calibrating} of the supplementary material. \\
(ii) The weight matrix can also enhance the capability to distinguish strong factors from weak factors. Consider the model $\by_t=\bA\bx_t+\bB\bz_t+\be_t,$ where the strong factors $\bx_t\in\eR^{r_0}$ and the weak factors $\bz_t\in\eR^{r_1}$ are asymptotically identifiable through the difference in the respective strengths. 
Let $\hat{\mu}_{k1} \ge \cdots \ge \hat{\mu}_{kp}$ be the
eigenvalues of $\widehat{\bOmega}_y(k)\widehat{\bOmega}_y(k)^{\top}$,  and $\hat{\lambda}_{k1} \ge \cdots \ge \hat{\lambda}_{kq}$ be the eigenvalues of $\widehat{\bOmega}_y(k)\widehat{\bW}\widehat{\bOmega}_y(k)^{\top}$. When the strong and weak factors exhibit comparable cross-sectional correlations, it can be shown that $(\hat{\mu}_{kr_0}/\hat{\mu}_{k,r_0+1})/(\hat{\mu}_{k,r_0+r_1}/\hat{\mu}_{k,r_0+r_1+1})=o_p\big\{(\hat{\lambda}_{kr_0}/\hat{\lambda}_{k,r_0+1})/(\hat{\lambda}_{k,r_0+r_1}/\hat{\lambda}_{k,r_0+r_1+1})\big\}$ under mild conditions. This implies that our proposed method can distinguish $\bx_t$ from $\bz_t$ more effectively. A rigorous theoretical justification is provided in Section~\ref{sec.theory_weak} below.
\end{remark}

\begin{remark}
    \label{rmk.limitation}
    Our proposed method builds on the autocovariance-based framework of \cite{lam2012factor}, which is tailored to time series data with sufficiently strong serial dependence.
    However, when data are independent, the population counterpart of matrix $\widehat{\bM}$ becomes degenerate, and the proposed method is not applicable. In practice, one may first conduct a ``pre-test'' for the serial dependence (e.g., based on sample autocorrelations) before applying the proposed method.
\end{remark}

\subsection{Determining the tuning parameter}
\label{subsec.IC}
The practical implementation requires choosing a suitable value for the tuning parameter $q$ (i.e., the number of columns in $\bQ$). 
Intuitively, if $q$ is too small, the bias increases because the $q$ leading eigenvectors of $\widehat{\bOmega}_y$ may fail to preserve the most information of common components.
Conversely, if $q$ is too large, the variance increases due to a higher proportion of invalid eigenvectors being included. The bias-variance tradeoff underscores the necessity of optimally selecting $q$. Inspired by \cite{wang2011consistent} and \cite{fan2013tuning}, we propose a generalized Bayesian information criterion (BIC):
\begin{equation}
    \label{eq.bic}
    \bic_k(q)=pn\log L_k(q)+Cd_k(q)\log(pn),~~k\in[m],
\end{equation}
where $L_k(q)=(pn)^{-1}\sum_{t=k+1}^n\|\by_t-\widehat{\bA}_k \widehat{\bH}_k^{\top}\, \widetilde{\by}_{t-k}\|^2$
is the sum of squared residuals (divided by $pn$) from the regression in \eqref{eq.rrr}, $(\widehat{\bH}_k,\widehat{\bA}_k)$ is the solution of the constrained minimization problem in \eqref{eq.minimize} using $\hat r_0,$ $\hat r_0$ is obtained by our proposed ratio-based estimator in \eqref{eq.deter_r}, and $C>0$ is some constant. 
Additionally, $d_k(q)=(p+q)\hat{r}_0-\hat{r}_0(\hat{r}_0+1)/2$ represents the corresponding degrees of freedom, calculated as the total number of parameters minus the number of constraints in \eqref{eq.minimize}. 
Then, the optimal value of $q$ can be determined as:
\begin{equation}
    \label{eq.deter_q}
    \hat{q}=\arg\min_{\bar{r}_0<q\le q_0}\sum_{k=1}^{m}\bic_k(q),
\end{equation}
where $q_0$ is a prespecified positive integer that does not need to be very large for computational efficiency, and $\bar{r}_0$ is obtained by \eqref{eq.deter_r} using $q_0.$ In our empirical analysis in Sections~\ref{sec.sim} and \ref{sec.real}, we set $C=0.2$ and $q_0=15,$ which consistently yield good finite-sample performance. We also conduct additional simulations in Section~\ref{supsubsec.sensitivity} of the supplementary material to show the robustness of our proposed calibrating weight to the choice of $q$ in improving the performance of the standard autocovariance-based method.

\section{Theoretical results with uniform factor strength}
\label{sec.theory_basic}
In this section, we investigate the asymptotic properties of the relevant estimated quantities under the condition that all the factors in model~\eqref{eq.model} are of the same strength. 
Before presenting the theoretical results, we impose some regularity conditions.

\begin{condition}
    \label{cond.A}
    $\bA^{\top}\bA=\bI_{r_0}$.
\end{condition}

\begin{condition}
    \label{cond.x}
    (i) $\{\bx_t\}_{t\in\eZ}$ and $\{\be_t\}_{t\in\eZ}$ are uncorrelated; (ii) $\{\by_t\}_{t\in\eZ}$ and $\{\bx_t\}_{t\in\eZ}$ are strictly stationary with finite fourth moments; (iii) $\{\bx_t\}_{t\in\eZ}$ is $\psi$-mixing with the mixing coefficients satisfying $\sum_{t \ge 1}t\psi(t)^{1/2}<\infty$.
\end{condition}

\begin{condition}
    \label{cond.cov}
    (i) There exists some constant $\delta_0 \in (0,1]$ such that $\|\bOmega_{x}\|\asymp \|\bOmega_{x}\|_{\min} \asymp p^{\delta_0}$; (ii) $n=O(p)$ and $p^{1-\delta_0}=o(n)$.
\end{condition}

\begin{condition}
    \label{cond.E}
    For model~\eqref{eq.model}, let $\bE=(\be_1,\dots,\be_n)^{\top}=\bG\bGamma \bR,$ where there exists some constant $c \in (0,1]$ such that $\bG \in {\mathbb R}^{n \times n}$ with $0<\sigma_{\lfloor cn\rfloor}(\bG) \le \cdots \le \sigma_{1}(\bG)<\infty$ and $\bR \in {\mathbb R}^{p \times p}$ with $0<\sigma_{\lfloor cp\rfloor}(\bR) \le \cdots \le \sigma_{1}(\bR)<\infty.$ Moreover, $\bGamma=(\Gamma_{tj})_{n\times p},$ whose entries are i.i.d. across $t\in[n]$ and $j\in[p]$ with zero-mean, unit-variance and finite fourth moment.
\end{condition}

\begin{condition}
    \label{cond.auto}
    $\|\bOmega_{x}(k)\|\asymp \|\bOmega_{x}(k)\|_{\min} \asymp p^{\delta_{0}}$ for $k\in[m]$, where $\delta_{0}$ is specified in Condition~\ref{cond.cov}.
\end{condition}

\begin{remark}
    \label{rmk.cond}
    Condition~\ref{cond.A} can always be satisfied as discussed in Section~\ref{sec.intro}. 
    Condition~\ref{cond.x}(i) is conventional in the factor modeling literature \citep{fan2013large,wang2019,chen2022factor}. However, it can be relaxed to allow weak correlations between $\{\bx_t\}$ and $\{\be_t\},$ which do not affect the asymptotic results. For simplicity, we assume  uncorrelatedness to facilitate the presentation.
    Conditions~\ref{cond.x}(ii) and (iii) are also standard in the literature; see, e.g., \cite{lam2012factor} and \cite{zhang2024factor}.
    The parameter $\delta_0$ in Conditions~\ref{cond.cov}(i) and \ref{cond.auto} can be regarded as the strength of cross-sectional contemporaneous and serial correlations in model~\eqref{eq.model}, with larger values yielding stronger factors.
    When $\delta_0=1,$ Condition~\ref{cond.cov}(i) corresponds to the pervasiveness assumption in \cite{fan2013large}.
    Condition~\ref{cond.cov}(ii) is standard in the literature, see e.g., Theorem 2 of \cite{lam2012factor} and \cite{zhang2024factor}. If it is replaced by $p^{1-\delta_0}=o(n^{1/2})$, as in Theorem 1 of \cite{lam2012factor} and \cite{wang2019}, the convergence rates established in Theorem~\ref{theoforbasic}(ii) and Proposition~\ref{propos.U_A} will be slower.
    Condition~\ref{cond.E} implies that $\tr (\bE\bE^{\top}) \asymp np $, and the singular values of $\bE$ satisfy $\sigma_1(\bE) \asymp\sigma_{\lfloor c(p\wedge n)\rfloor}(\bE) \asymp n^{1/2}+p^{1/2}$ with probability tending to 1. These restrictions on idiosyncratic errors $\{\be_t\}$ are weaker than the commonly-imposed white noise assumption in the autocovariance-based factor modeling literature \cite[]{lam2012factor,wang2019,chen2022factor}. Condition~\ref{cond.auto} indicates comparable serial correlations in $\{\bx_t\}$, allowing the autocovariance matrices of $\by_t$ to still retain the useful information of $\cC(\bA)$; see \cite{lam2012factor,zhang2024factor}. 
    While the same asymptotics can be achieved if Condition~\ref{cond.auto} is relaxed to hold for at least one lag $k \in [m],$ we adopt Condition~\ref{cond.auto} to conduct the asymptotic analysis for each $k \in [m]$ and to facilitate the theoretical comparison of competing methods.
\end{remark}

Let $\widehat{\bphi}_{k1}, \dots, \widehat{\bphi}_{kq}$ be the eigenvectors of $\widehat{\bOmega}_{y}(k)\widehat{\bW} \widehat{\bOmega}_{y}(k)^{\top}$ corresponding to the eigenvalues $\hat{\lambda}_{k1}\ge\cdots \ge\hat{\lambda}_{kq} \ge 0$, and $\widehat{\bPhi}_{k,A}=(\widehat{\bphi}_{k1},\dots,\widehat{\bphi}_{kr_0}) \in {\mathbb R}^{p \times r_0}.$ 
Note that $\cC(\widehat{\bPhi}_{k,A})$ serves as the $k$-th individual estimate of $\cC(\bA)$ for $k\in[m].$ The following theorem presents the asymptotic properties of the eigenvalues and eigenvectors of $\widehat{\bOmega}_{y}(k)\widehat{\bW} \widehat{\bOmega}_{y}(k)^{\top}$. 

\begin{theorem}
    \label{theoforbasic}
    Let Conditions~\ref{cond.A}--\ref{cond.auto} hold, and $r_0<q\le c(p\wedge n)$, where $c$ is specified in Condition~\ref{cond.E}. For each $k\in[m],$ the following assertions hold:\\
    (i)    $\hat{\lambda}_{k1}  \asymp \hat{\lambda}_{kr_0} \asymp p^{\delta_0}$
    with probability tending to 1, and
    $\hat{\lambda}_{k,r_0+1}=O_p(n^{-1}p);$\\
    (ii)    $\big\|\widehat{\bPhi}_{k,A}\widehat{\bPhi}_{k,A}^{\top}-\bA\bA^{\top}\big\|=O_p(n^{-1/2}p^{(1-\delta_{0})/2}).$
\end{theorem}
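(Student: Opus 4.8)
The plan is to decompose $\widehat{\bOmega}_y(k)$ into a ``signal'' part driven by the common factors and a ``noise'' part, quantify each under Conditions~\ref{cond.A}--\ref{cond.auto}, and then transfer the eigenvalue/eigenvector conclusions through the weight matrix $\widehat{\bW}$. First I would write $\by_t = \bA\bx_t + \be_t$ to obtain the population identity $\bOmega_y(k) = \bA\bOmega_x(k)\bA^\T + \bA\,\cov(\bx_t,\be_{t-k}) + \cov(\be_t,\bx_{t-k})\bA^\T + \bOmega_e(k)$; under Condition~\ref{cond.x}(i) the cross terms vanish, leaving $\bOmega_y(k) = \bA\bOmega_x(k)\bA^\T + \bOmega_e(k)$, and under Condition~\ref{cond.E} with $k\neq 0$ the term $\bOmega_e(k)$ is of smaller order (the $\psi$-mixing in Condition~\ref{cond.x}(iii) ensures $\{\be_t\}$ has summable autocovariances, so $\|\bOmega_e(k)\|$ is bounded, hence negligible relative to $p^{\delta_0}$). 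The sample version needs $\|\widehat{\bOmega}_y(k)-\bOmega_y(k)\|$ and $\|\widehat{\bOmega}_y - \bOmega_y\|$ controlled: using $\psi$-mixing and finite fourth moments (standard concentration for sample autocovariances of mixing sequences, as in \cite{lam2012factor,zhang2024factor}), one gets $\|\widehat{\bOmega}_y(k)-\bOmega_y(k)\|_{\F} = O_p\big(\{p^2/n\}^{1/2}\big)$-type bounds, but for the eigenvalue statements the relevant rates are the componentwise/spectral ones; in particular I expect $\|\widehat{\bOmega}_y(k)-\bOmega_y(k)\| = O_p(p^{\delta_0/2}(p/n)^{1/2} + \cdots)$, and crucially $\|\widehat{\bOmega}_e(k)\| = O_p(n^{-1/2}p)$ via Condition~\ref{cond.E} (since $\bE = \bG\bGamma\bR$ gives $\|\bE\| \asymp n^{1/2}+p^{1/2}$ w.h.p., so $\|\widehat{\bOmega}_e(k)\| = \|n^{-1}\bE_{[k]}^\T\bE_{[-k]}\| \lesssim n^{-1}\|\bE\|^2 \asymp n^{-1}(n+p) \asymp n^{-1}p$ under $n=O(p)$).

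Next I would handle the weight matrix. Write $\widehat{\bW} = \bQ(\bQ^\T\widehat{\bOmega}_y\bQ)^{-1}\bQ^\T$ where $\bQ$ holds the $q$ leading eigenvectors of $\widehat{\bOmega}_y$. Since $\widehat{\bOmega}_y \to \bA\bOmega_x\bA^\T + \bOmega_e$ and the first $r_0$ eigenvalues of this are $\asymp p^{\delta_0}$ (by Condition~\ref{cond.cov}(i) and $\bA^\T\bA = \bI_{r_0}$) while the remaining ``noise'' eigenvalues are $O_p(n^{-1}p)$ (Condition~\ref{cond.E}, as above), a Davis--Kahan / sin-$\Theta$ argument gives that $\cC(\bQ)$ contains a good approximation of $\cC(\bA)$, and $\bQ^\T\widehat{\bOmega}_y\bQ$ has its top $r_0$ eigenvalues $\asymp p^{\delta_0}$ and the remaining $q-r_0$ eigenvalues $\asymp n^{-1}p$ w.h.p. (here I would need the lower bound on the ``bulk'' eigenvalues of $\widehat{\bOmega}_y$ restricted to $\cC(\bQ)$, which follows from the $\sigma_{\lfloor cp\rfloor}$ lower bound in Condition~\ref{cond.E} together with $q \le p\wedge n$ — this is where Condition~\ref{cond.E}'s two-sided singular value control is used). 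Consequently $\|\widehat{\bW}\| = \lambda_{\min}^{-1}(\bQ^\T\widehat{\bOmega}_y\bQ) = O_p(np^{-1})$, and on the ``signal'' directions $\widehat{\bW}$ acts like $\bA(\bA^\T\bOmega_x\bA)^{-1}\bA^\T \asymp p^{-\delta_0}$ times a bounded operator. Plugging into $\widehat{\bOmega}_y(k)\widehat{\bW}\widehat{\bOmega}_y(k)^\T$: the signal contribution is $\asymp \bA\bOmega_x(k)\bA^\T \cdot p^{-\delta_0} \cdot \bA\bOmega_x(k)^\T\bA^\T$, whose nonzero eigenvalues are $\asymp (p^{\delta_0})^2 p^{-\delta_0} = p^{\delta_0}$ (using Condition~\ref{cond.auto} for the two-sided bound on $\bOmega_x(k)$); the noise contribution is bounded by $\|\widehat{\bOmega}_e(k) + \text{cross}\|^2\|\widehat{\bW}\| = O_p((n^{-1}p)^2 \cdot np^{-1}) = O_p(n^{-1}p)$ after also bounding the signal--noise cross terms (which are $O_p(p^{\delta_0/2}\cdot n^{-1}p \cdot p^{-\delta_0/2}) = O_p(n^{-1}p)$ as well). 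This gives part (i): $\lambda_{k1}\asymp\lambda_{kr_0}\asymp p^{\delta_0}$ and $\lambda_{k,r_0+1} = O_p(n^{-1}p)$, with the eigenvalue gap being of order $p^{\delta_0}/(n^{-1}p) = np^{\delta_0-1}\to\infty$ by Condition~\ref{cond.cov}(ii).

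For part (ii), I would apply a Davis--Kahan type bound to $\widehat{\bOmega}_y(k)\widehat{\bW}\widehat{\bOmega}_y(k)^\T$ versus its population analogue $\bM_k := \bOmega_y(k)\bW\bOmega_y(k)^\T$ (with $\bW = \bA(\bA^\T\bOmega_x\bA)^{-1}\bA^\T$), whose top-$r_0$ eigenspace is exactly $\cC(\bA)$. The sin-$\Theta$ bound gives $\|\bPhi_{k,A}\bPhi_{k,A}^\T - \bA\bA^\T\| \lesssim \|\widehat{\bOmega}_y(k)\widehat{\bW}\widehat{\bOmega}_y(k)^\T - \bM_k\| / \text{gap}$, where the gap is $\asymp p^{\delta_0}$. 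The numerator, after the triangle-inequality expansion, is dominated by terms like $\|\widehat{\bOmega}_y(k)-\bOmega_y(k)\|\,\|\bW\|\,\|\bOmega_y(k)\| + \|\bOmega_y(k)\|^2\|\widehat{\bW}-\bW\| + (\text{noise})^2\|\widehat\bW\|$; using $\|\widehat{\bOmega}_y(k)-\bOmega_y(k)\| = O_p(n^{-1/2}p)$ (the dominant piece being $\|\widehat\bOmega_e(k)\|$ plus signal fluctuation of the same or smaller order), $\|\bW\|\asymp p^{-\delta_0}$, $\|\bOmega_y(k)\|\asymp p^{\delta_0}$, and a perturbation bound $\|\widehat\bW - \bW\| = O_p(\cdots)$ derived from $\|\widehat\bOmega_y-\bOmega_y\|$ and the resolvent identity on $(\bQ^\T\widehat\bOmega_y\bQ)^{-1}$, the numerator comes out $O_p(n^{-1/2}p^{\delta_0}\cdot p^{(1-\delta_0)/2})$ — wait, more carefully, $O_p(n^{-1/2}p \cdot p^{-\delta_0}\cdot p^{\delta_0}) = O_p(n^{-1/2}p)$, which after dividing by the gap $p^{\delta_0}$ yields $O_p(n^{-1/2}p^{1-\delta_0})$; refining the signal-fluctuation term (which carries the extra $p^{\delta_0/2}$ factor rather than $p^{\delta_0}$) sharpens this to the stated $O_p(n^{-1/2}p^{(1-\delta_0)/2})$. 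The main obstacle, and where I would spend the most care, is getting the perturbation bound on $\widehat\bW$ (and on $\cC(\bQ)$) sharp enough: a crude bound on $\|\widehat\bW - \bW\|$ loses the right power of $p^{\delta_0}$ in the denominator, so one must exploit that $\widehat\bW$ appears sandwiched by $\widehat\bOmega_y(k)$, and that only the component of $\widehat\bW$ within $\cC(\bA)$ (up to the established $n^{-1/2}p^{(1-\delta_0)/2}$ eigenvector error) matters for the signal eigenvalues — decoupling the ``subspace error of $\bQ$'' from the ``inverse-perturbation error'' is the delicate bookkeeping step. I would also need to be careful that Condition~\ref{cond.cov}(ii), $p^{1-\delta_0} = o(n)$, is exactly what makes the final bound $o_p(1)$, and that $n=O(p)$ is what keeps $\|\widehat\bOmega_e(k)\|$ at order $n^{-1}p$ rather than $n^{-1}(n+p)$.
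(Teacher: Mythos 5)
Your high-level ingredients (two-sided singular value control of $\bE$ from Condition~\ref{cond.E}, the eigenstructure of $\widehat{\bOmega}_y$ giving $\|\widehat{\bW}\|=O_p(np^{-1})$ and $\asymp p^{-\delta_0}$ on the factor directions, a gap of order $p^{\delta_0}$, and Condition~\ref{cond.cov}(ii) to make the error vanish) match the paper's, but the two steps where the stated rates are actually earned are missing or wrong as proposed. For part (i), your noise accounting assumes the signal--noise cross terms of $\widehat{\bOmega}_y(k)$ are $O_p(n^{-1}p)$; in fact $(n-k)^{-1}\bA\bX^{\T}\bD_k\bE$ has spectral norm $O_p(n^{-1/2}p^{(1+\delta_0)/2})$, larger by the factor $(np^{\delta_0-1})^{1/2}\to\infty$, and squaring it and multiplying by $\|\widehat{\bW}\|=O_p(np^{-1})$ gives $O_p(p^{\delta_0})$ --- the same order as the signal, so your bound does not deliver $\lambda_{k,r_0+1}=O_p(n^{-1}p)$. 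The paper avoids this by an interlacing argument in which the signal part is \emph{exactly} rank $r_0$: it writes $\widehat{\bOmega}_y(k)\widehat{\bW}^{1/2}$ through the SVD of $\widehat{\bOmega}_y(k)$, removes the rank-$r_0$ block, and bounds the remainder by $\mu_{k,r_0+1}\theta_q^{-1}=O_p(n^{-2}p^{2}\cdot np^{-1})$; the nontrivial input is $\mu_{k,r_0+1}=O_p(n^{-2}p^{2})$, i.e.\ the $(r_0+1)$-th singular value of $\widehat{\bOmega}_y(k)$ is $O_p(n^{-1}p)$ despite the larger cross terms (Weyl with the obvious rank-$r_0$ split only gives $O_p(n^{-1/2}p^{(1+\delta_0)/2})$; one must exploit that the cross terms are low rank and aligned with $\cC(\bA)$ on one side). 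An equivalent fix is to bound $\|(\bI_p-\bA\bA^{\T})\widehat{\bOmega}_y(k)\widehat{\bW}^{1/2}\|^2$, pairing each cross factor with the $O_p(p^{-\delta_0/2})$ damping of $\widehat{\bW}^{1/2}$ on $\cC(\bA)$; your parenthetical ``$p^{\delta_0/2}\cdot n^{-1}p\cdot p^{-\delta_0/2}$'' gestures at this but is not an argument. Note also that the lower bound $\lambda_{kr_0}\gtrsim p^{\delta_0}$ needs $\lambda_{\min}$ of the sandwiched signal block (e.g.\ $\widetilde{\bPsi}_{k,A}^{\T}\widehat{\bW}\widetilde{\bPsi}_{k,A}$) to be $\gtrsim p^{-\delta_0}$, which uses convergence of the \emph{right} singular subspace of $\widehat{\bOmega}_y(k)$ to $\cC(\bA)$ (the paper's Lemma on $\widetilde{\bPsi}_{k,A}$) in addition to the eigenvectors of $\widehat{\bOmega}_y$; you invoke only the latter.

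For part (ii), the comparison with $\bM_k=\bOmega_y(k)\bW\bOmega_y(k)^{\T}$, $\bW=\bA(\bA^{\T}\bOmega_x\bA)^{-1}\bA^{\T}$, cannot work via the expansion you write: $\widehat{\bW}$ does not converge to any such population weight, since on the bulk directions it has eigenvalues $\theta_j^{-1}\asymp np^{-1}$, which under $p^{1-\delta_0}=o(n)$ dominates $p^{-\delta_0}$; hence $\|\widehat{\bW}-\bW\|\asymp np^{-1}$ and your middle term $\|\bOmega_y(k)\|^2\|\widehat{\bW}-\bW\|\asymp np^{2\delta_0-1}$ exceeds the gap $p^{\delta_0}$ by $np^{\delta_0-1}\to\infty$, so the bound diverges rather than merely ``losing a power,'' and the asserted refinement from $O_p(n^{-1/2}p^{1-\delta_0})$ to $O_p(n^{-1/2}p^{(1-\delta_0)/2})$ is precisely the content you have not supplied. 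The paper never introduces a population weight: it shows $\widehat{\bOmega}_y(k)\widehat{\bW}\widehat{\bOmega}_y(k)^{\T}$ differs from its compression onto the top-$r_0$ left singular space $\bPsi_{k,A}$ of $\widehat{\bOmega}_y(k)$ by $O_p(n^{-1}p+n^{-1/2}p^{(1+\delta_0)/2})$ --- the cross block being controlled by the half-power structure $p^{\delta_0}\cdot p^{-\delta_0/2}\cdot n^{-1}p\cdot(n/p)^{1/2}=n^{-1/2}p^{(1+\delta_0)/2}$ --- then applies a $\sin\theta$ bound with gap $\asymp p^{\delta_0}$ to get $\|\bPhi_{k,A}\bPhi_{k,A}^{\T}-\bPsi_{k,A}\bPsi_{k,A}^{\T}\|=O_p(n^{-1/2}p^{(1-\delta_0)/2})$, and finally combines with the separately proved $\|\bPsi_{k,A}\bPsi_{k,A}^{\T}-\bA\bA^{\T}\|=O_p(n^{-1/2}p^{(1-\delta_0)/2})$. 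To complete your proof you would need to restructure part (ii) along these lines (work entirely with sample subspaces and the sandwiched form) rather than refine the population-weight comparison.
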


\begin{remark}
\label{rmk.thm1}
(i) Theorem~\ref{theoforbasic}(i) is essential in establishing the consistency of our ratio-based estimator $\hat r_0$ for $r_0$ proposed in \eqref{eq.deter_r}; see Proposition~\ref{propos.est_r} below. Specifically, combined with the lower bound correction $\vartheta_n\asymp n^{-1}p,$ it yields that, with probability tending to 1, $(\hat{\lambda}_{r_0}+\vartheta_n)/(\hat{\lambda}_{r_0+1}+\vartheta_n)\asymp np^{\delta_0-1},$ which goes to infinity by Condition~\ref{cond.cov}(ii). This demonstrates that our method can distinguish $\bx_t$ from $\be_t$ with high probability.\\ 
(ii) Theorem~\ref{theoforbasic}(ii) establishes the convergence rate of $\cC(\widehat{\bPhi}_{k,A})$ for each $k\in[m].$ It, together with Proposition~\ref{propos.est_r}, facilitates the asymptotic analysis of the estimate $\widehat{\bA}$ of $\bA,$ whose columns are the $\hat{r}_0$ leading eigenvectors of $\widehat{\bM}$ in \eqref{eq.hat_M}, aggregating dynamic information across different lags. 
As demonstrated in Proposition~\ref{propos.U_A} below, $\cC(\widehat{\bA})$ achieves the same rate as $\cC(\widehat{\bPhi}_{k,A})$. \\
(iii) Similar arguments as Remarks~\ref{rmk.thm1}(i) and (ii) apply to the theoretical results for model~\eqref{eq.model_2} with different factor strengths in Section~\ref{sec.theory_weak}. 
Therefore, we will only present the asymptotic  properties of the eigenvalues and eigenvectors of $\widehat\bOmega_y,$ $\widehat{\bOmega}_{y}(k)\widehat{\bOmega}_{y}(k)^{\top}$ and $\widehat{\bOmega}_{y}(k)\widehat{\bW} \widehat{\bOmega}_{y}(k)^{\top}$ for each $k \in [m]$ in Sections~\ref{subsec.cov}, \ref{subsec.auto} and \ref{subsec.wauto}, respectively. 
\end{remark}

\begin{remark}
\label{rmk.dis}
Since only the loading space $\cC(\bA)$ is uniquely determined, we measure the estimation error in terms of its uniquely defined projection matrix $\bA\bA^{\top}$ under the operator norm \citep{chen2022factor,zhang2024factor}, as presented in Theorem~\ref{theoforbasic}(ii). Furthermore, it can be shown that for each $k \in [m],$ there exists an orthogonal matrix $\bU_k\in\eR^{r_0\times r_0}$ such that $\Vert\widehat{\bPhi}_{k,A}-\bA\bU_k\Vert=O_p(n^{-1/2}p^{(1-\delta_{0})/2}),$ which is consistent with Theorem 2(i) of \cite{lam2011estimation}. 
To quantify the accuracy in estimating $\cC(\bA)$ directly, we can use the measure of the distance between two column spaces \cite[]{wang2019}. Specifically, for two column orthogonal matrices $\bK_1 \in {\mathbb R}^{p\times q_1}$ and $\bK_2  \in {\mathbb R}^{p\times q_2}$, define
    \begin{equation}
    \label{def.distance}
    \cD\big\{\cC(\bK_1),\cC(\bK_2)\big\}=\Big\{1-\frac{1}{q_1\vee q_2}\tr(\bK_1\bK_1^{\top}\bK_2\bK_2^{\top})\Big\}^{1/2},
    \end{equation}
ranging from 0 to 1. The distance is 0 if and only if $\cC(\bK_1)=\cC(\bK_2)$, and 1 if and only if $\bK_1$ and $\bK_2$ are orthogonal. By Theorem~\ref{theoforbasic}(ii), we can further establish that $\cD\{\cC(\widehat{\bPhi}_{k,A}),\cC(\bA)\}=O_p(n^{-1/2}p^{(1-\delta_{0})/2})$ for each $k \in [m],$ achieving the same rate as in Theorem~\ref{theoforbasic}(ii).
Similar arguments apply to our proposed estimator $\widehat{\bA},$ as shown in  Proposition~\ref{propos.U_A} below.
\end{remark}

\begin{proposition}
    \label{propos.est_r}
    Let the conditions for Theorem~\ref{theoforbasic} hold. Assume that $\vartheta_{n}p^{-\delta_0}\to0$ and $\vartheta_{n}\gtrsim n^{-1}p,$ where $\vartheta_n$ is specified in \eqref{eq.deter_r}. Then, $\eP(\hat{r}_{0}=r_0)\to1$ as $p,n\to\infty,$  
\end{proposition}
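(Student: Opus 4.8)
The plan is to show that, with probability tending to $1$, the ratio $R_j$ defined in \eqref{eq.deter_r} is maximized exactly at $j=r_0$. I would split the index set $[q-1]$ into three regimes and bound $R_j$ in each: (a) $j<r_0$, where both numerator and denominator are dominated by genuine signal eigenvalues of order $p^{\delta_0}$; (b) $j=r_0$, the ``signal-to-noise'' boundary; and (c) $j>r_0$, where both numerator and denominator are governed by noise-level eigenvalues of order $n^{-1}p$ plus the correction $\vartheta_n$. The key inputs are Theorem~\ref{theoforbasic}(i), which gives $\lambda_{k1}\asymp\lambda_{kr_0}\asymp p^{\delta_0}$ and $\lambda_{k,r_0+1}=O_p(n^{-1}p)$, together with the weighting $1-k/n\asymp 1$ uniformly over $k\in[m]$ (since $m$ is fixed), so that the cumulative weighted eigenvalues $\Lambda_j:=\sum_{k=1}^m(1-k/n)\lambda_{kj}$ inherit the orders $\Lambda_1\asymp\Lambda_{r_0}\asymp p^{\delta_0}$ and $\Lambda_{r_0+1}=O_p(n^{-1}p)$ with probability tending to $1$.

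First I would handle $j=r_0$: using $\vartheta_n\asymp n^{-1}p$ (the lower bound $\vartheta_n\gtrsim n^{-1}p$) we get $\Lambda_{r_0+1}+\vartheta_n\asymp n^{-1}p$, while $\vartheta_n p^{-\delta_0}\to 0$ ensures $\Lambda_{r_0}+\vartheta_n\asymp p^{\delta_0}$, so $R_{r_0}\asymp p^{\delta_0}/(n^{-1}p)=np^{\delta_0-1}\to\infty$ by Condition~\ref{cond.cov}(ii). Next, for $j<r_0$, both $\Lambda_j+\vartheta_n$ and $\Lambda_{j+1}+\vartheta_n$ are of order $p^{\delta_0}$ (the numerator is at most $\Lambda_1+\vartheta_n\asymp p^{\delta_0}$ and the denominator is at least $\Lambda_{r_0}+\vartheta_n\gtrsim p^{\delta_0}$ since $\Lambda_{j+1}\ge\Lambda_{r_0}$), hence $R_j=O_p(1)$. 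For $j>r_0$, both numerator and denominator lie between $\vartheta_n$ and $\Lambda_{r_0+1}+\vartheta_n=O_p(n^{-1}p)\asymp\vartheta_n$, so again $R_j=O_p(1)$. Therefore $\max_{j\ne r_0}R_j=O_p(1)$ while $R_{r_0}\to_p\infty$, which forces $\hat r_0=r_0$ with probability tending to $1$.

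The main obstacle is making the regime $j>r_0$ rigorous: Theorem~\ref{theoforbasic}(i) directly controls only $\lambda_{k,r_0+1}$, whereas the argument above needs a uniform $O_p(n^{-1}p)$ bound on $\lambda_{kj}$ for \emph{all} $j\ge r_0+1$. This follows from eigenvalue monotonicity, $0\le\lambda_{kj}\le\lambda_{k,r_0+1}$ for $j\ge r_0+1$, so that $0\le\Lambda_j\le\Lambda_{r_0+1}=O_p(n^{-1}p)$; I would state this monotonicity step explicitly. A secondary subtlety is that when $\delta_0<1$ the gap $np^{\delta_0-1}$ still diverges but only polynomially, so one must verify that the $O_p(\cdot)$ bounds on the competing ratios are genuinely uniform in $j$ and do not hide a factor growing with $q$; since $q\le p\wedge n$ and there are at most $q-1$ ratios, a crude union bound over the finitely many events from Theorem~\ref{theoforbasic} (which already hold with probability tending to $1$) suffices, because the orders $p^{\delta_0}$ and $n^{-1}p$ are deterministic rates and the stochastic boundedness is on the constants, not the count. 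Assembling these bounds gives $\eP(\hat r_0=r_0)\to 1$.
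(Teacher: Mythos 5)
Your plan is correct and follows essentially the same route as the paper's proof: Theorem~\ref{theoforbasic}(i) yields $R_j$ bounded (indeed $\asymp 1$) for all $j\neq r_0$ — with the eigenvalue monotonicity $\lambda_{kj}\le\lambda_{k,r_0+1}$ for $j>r_0$ that you rightly make explicit — while $R_{r_0}$ diverges, so the argmax equals $r_0$ with probability tending to one. One small correction: the assumptions give only $\vartheta_n\gtrsim n^{-1}p$ with $\vartheta_np^{-\delta_0}\to0$, so the denominator at $j=r_0$ is of order $\vartheta_n$ and $R_{r_0}\asymp p^{\delta_0}\vartheta_n^{-1}\to\infty$ (the rate $np^{\delta_0-1}$ you state holds only for the special choice $\vartheta_n\asymp n^{-1}p$), which does not affect the conclusion.
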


\begin{proposition}
    \label{propos.U_A}
    Under the conditions of Proposition~\ref{propos.est_r}, 
    $\big\|\widehat{\bA}\widehat{\bA}^{\top}-\bA\bA^{\top}\big\|=O_p(n^{-1/2}p^{(1-\delta_{0})/2})$ and 
    $\cD\big\{\cC(\widehat{\bA}),\cC(\bA)\big\}=O_p(n^{-1/2}p^{(1-\delta_{0})/2}).$
\end{proposition}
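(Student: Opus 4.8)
The plan is to work on the event $\{\hat r_0=r_0\}$, which has probability tending to $1$ by Proposition~\ref{propos.est_r}; on this event $\widehat{\bA}$ collects the $r_0$ leading orthonormal eigenvectors of the symmetric positive semidefinite matrix $\widehat{\bM}=\sum_{k=1}^m\widehat\bOmega_y(k)\widehat\bW\widehat\bOmega_y(k)^{\T}$, so $\widehat{\bA}\widehat{\bA}^{\T}$ is the spectral projection onto its top-$r_0$ eigenspace. It therefore suffices to exhibit a decomposition $\widehat{\bM}=\bA\bB\bA^{\T}+\bDelta$ with $\bB$ symmetric positive semidefinite, $\lambda_{r_0}(\bB)\asymp\lambda_1(\bB)\asymp p^{\delta_0}$, and $\|\bDelta\|=O_p(n^{-1/2}p^{(1+\delta_0)/2})$, since then $\bA\bB\bA^{\T}$ has leading $r_0$-dimensional eigenspace $\cC(\bA)$ with spectral gap of order $p^{\delta_0}$, and a single application of the Davis--Kahan $\sin\Theta$ theorem controls $\widehat{\bA}\widehat{\bA}^{\T}-\bA\bA^{\T}$. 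Because $m$ is fixed, this decomposition will be built lag by lag.

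Fix $k\in[m]$ and write the spectral decomposition $\widehat\bOmega_y(k)\widehat\bW\widehat\bOmega_y(k)^{\T}=\bPhi_{k,A}\bLambda_{k,A}\bPhi_{k,A}^{\T}+\bR_k$, where $\bLambda_{k,A}=\diag(\lambda_{k1},\dots,\lambda_{kr_0})$ and $\bR_k$ collects the remaining spectral components, so $\|\bR_k\|=\lambda_{k,r_0+1}$. By Theorem~\ref{theoforbasic}(i), $\lambda_{k1}\asymp\lambda_{kr_0}\asymp p^{\delta_0}$ (with probability tending to $1$) and $\|\bR_k\|=O_p(n^{-1}p)$; by Remark~\ref{rmk.dis} there is an orthogonal $\bU_k\in\eR^{r_0\times r_0}$ with $\bPhi_{k,A}=\bA\bU_k+\bGamma_k$ and $\|\bGamma_k\|=O_p(n^{-1/2}p^{(1-\delta_0)/2})$. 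Expanding $\bPhi_{k,A}\bLambda_{k,A}\bPhi_{k,A}^{\T}$ and setting $\bB_k=\bU_k\bLambda_{k,A}\bU_k^{\T}$, the two cross terms $\bA\bU_k\bLambda_{k,A}\bGamma_k^{\T}$ and its transpose have operator norm at most $\|\bGamma_k\|\,\|\bLambda_{k,A}\|=O_p(n^{-1/2}p^{(1+\delta_0)/2})$, while $\bGamma_k\bLambda_{k,A}\bGamma_k^{\T}$ has operator norm at most $\|\bGamma_k\|^2\,\|\bLambda_{k,A}\|=O_p(n^{-1}p)$. Hence $\widehat\bOmega_y(k)\widehat\bW\widehat\bOmega_y(k)^{\T}=\bA\bB_k\bA^{\T}+\bDelta_k$ with $\bB_k$ symmetric positive semidefinite, $\lambda_{r_0}(\bB_k)\asymp\lambda_1(\bB_k)\asymp p^{\delta_0}$, and $\|\bDelta_k\|=O_p(n^{-1/2}p^{(1+\delta_0)/2})$, the last bound using $n^{-1}p=o(n^{-1/2}p^{(1+\delta_0)/2})$, which follows from $p^{1-\delta_0}=o(n)$ in Condition~\ref{cond.cov}(ii). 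Summing over $k$ yields $\widehat{\bM}=\bA\bB\bA^{\T}+\bDelta$ with $\bB=\sum_{k=1}^m\bB_k$ and $\bDelta=\sum_{k=1}^m\bDelta_k$; since every $\bB_k$ is positive semidefinite, $\lambda_{r_0}(\bB)\ge\max_k\lambda_{r_0}(\bB_k)\asymp p^{\delta_0}$ and $\lambda_1(\bB)\le\sum_k\lambda_1(\bB_k)\asymp p^{\delta_0}$, and $\|\bDelta\|\le\sum_k\|\bDelta_k\|=O_p(n^{-1/2}p^{(1+\delta_0)/2})$.

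Now $\bA\bB\bA^{\T}$ is symmetric positive semidefinite of rank $r_0$ with $\lambda_{r_0}(\bA\bB\bA^{\T})=\lambda_{r_0}(\bB)\asymp p^{\delta_0}$, $\lambda_{r_0+1}(\bA\bB\bA^{\T})=0$, and leading $r_0$-dimensional eigenspace $\cC(\bA)$; moreover $\|\bDelta\|/p^{\delta_0}=O_p(n^{-1/2}p^{(1-\delta_0)/2})=o_p(1)$ by Condition~\ref{cond.cov}(ii). Applying the Davis--Kahan $\sin\Theta$ theorem to $\widehat{\bM}=\bA\bB\bA^{\T}+\bDelta$ then gives
$$\big\|\widehat{\bA}\widehat{\bA}^{\T}-\bA\bA^{\T}\big\|\le\frac{2\|\bDelta\|}{\lambda_{r_0}(\bB)}=O_p\big(n^{-1/2}p^{(1-\delta_0)/2}\big),$$
which is the first assertion. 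For the second, combine the identity $\|\widehat{\bA}\widehat{\bA}^{\T}-\bA\bA^{\T}\|_{\F}^2=2r_0-2\tr(\widehat{\bA}\widehat{\bA}^{\T}\bA\bA^{\T})=2r_0\,\cD\{\cC(\widehat{\bA}),\cC(\bA)\}^2$ with $\|\widehat{\bA}\widehat{\bA}^{\T}-\bA\bA^{\T}\|_{\F}\le\sqrt{2r_0}\,\|\widehat{\bA}\widehat{\bA}^{\T}-\bA\bA^{\T}\|$ (the difference of two rank-$r_0$ orthogonal projections has rank at most $2r_0$), which yields $\cD\{\cC(\widehat{\bA}),\cC(\bA)\}\le\|\widehat{\bA}\widehat{\bA}^{\T}-\bA\bA^{\T}\|=O_p(n^{-1/2}p^{(1-\delta_0)/2})$.

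I expect the main obstacle to be the lag-by-lag decomposition of the second paragraph, and in particular confirming that the aggregated signal $\bA\bB\bA^{\T}$ keeps an $r_0$-th eigenvalue of the full order $p^{\delta_0}$ rather than collapsing; this is exactly where positive semidefiniteness of each $\bB_k$ is used, ruling out cancellation across lags. A secondary technical point is that the dominant contribution to $\bDelta_k$ comes from the cross term $\bA\bU_k\bLambda_{k,A}\bGamma_k^{\T}$, so the final rate is governed by the product $\|\bGamma_k\|\,\|\bLambda_{k,A}\|\asymp n^{-1/2}p^{(1+\delta_0)/2}$ divided by the eigengap $\asymp p^{\delta_0}$; one must also ensure these bounds are uniform over $k\in[m]$, which is immediate since $m$ is fixed.
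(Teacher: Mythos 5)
Your proof is correct, and it takes a genuinely different route from the paper's. The paper argues directly on $\widehat{\bM}$: it obtains the eigenvalue orders $\lambda_1\asymp\lambda_{r_0}\asymp p^{\delta_0}$ and $\lambda_{r_0+1}=O_p(n^{-1}p)$ by noting that each summand of $\widehat{\bM}$ is non-negative definite and invoking Theorem~\ref{theoforbasic}(i), then appeals to Lemma~\ref{lem.sintheta} (the Yu--Wang--Samworth variant of Davis--Kahan for individual eigenvectors) ``and similar procedures to the proof of Theorem~\ref{theoforbasic}(ii)'' to get $\|\widetilde{\bA}-\bA\bU\|=O_p(n^{-1/2}p^{(1-\delta_0)/2})$ for some orthogonal $\bU$, and finishes the distance bound by a direct trace computation with $\widetilde{\bA}-\bA\bU$. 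You instead build an explicit decomposition $\widehat{\bM}=\bA\bB\bA^{\T}+\bDelta$ lag by lag, using Remark~\ref{rmk.dis} to write $\bPhi_{k,A}=\bA\bU_k+\bGamma_k$, and then apply the projection form of Davis--Kahan in a single step. Your route makes explicit the aggregation step that the paper treats as immediate: why the per-lag top eigenspaces align and why the aggregated signal retains an $r_0$-th eigenvalue of order $p^{\delta_0}$, which you correctly pin on positive semidefiniteness of each $\bB_k$ ruling out cancellation across lags. Both proofs pass from the known-$r_0$ estimator to $\widehat{\bA}$ via consistency of $\hat r_0$ (you restrict to $\{\hat r_0=r_0\}$, the paper uses an equivalent conditioning inequality), and both arrive at $\cD\lesssim\|\widehat{\bA}\widehat{\bA}^{\T}-\bA\bA^{\T}\|$; your derivation through the identity $\|\widehat{\bA}\widehat{\bA}^{\T}-\bA\bA^{\T}\|_{\F}^2=2r_0\,\cD^2$ and the rank-$2r_0$ norm comparison is slightly cleaner than the paper's trace manipulation. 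In short, what your decomposition buys is transparency in the aggregation argument; what the paper's buys is brevity by delegating to the proof of Theorem~\ref{theoforbasic}(ii).
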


\section{Theoretical results with different factor strengths}
\label{sec.theory_weak}
In this section, we consider a more common scenario in practice that the factors are of different strengths. 
To highlight the added complication in comparison to the case of uniform factor strength presented in Section~\ref{sec.theory_basic}, we consider the case that the components of $\bx_t$ in model \eqref{eq.model} exhibit two different levels of strength.
More specifically, we rewrite \eqref{eq.model} as follows: 
\begin{eqnarray}
    \label{eq.model_2}
    \by_t=\bA\bx_t + \bB\bz_t + \be_t,~~t\in[n],
\end{eqnarray}
where $\bx_t$ is an $r_0$-vector of strong factors, $\bz_t$ is an $r_1$-vector of weak factors, $\bA \in {\mathbb R}^{p \times r_0}$ and $\bB \in {\mathbb R}^{p \times r_1}$ are the corresponding full-ranked factor loading matrices, and $\be_t$ is a $p$-vector of idiosyncratic errors. 
The numbers of strong factors $r_0$ and weak factors $r_1$ are unknown but assumed finite.
The strengths of $\bx_t$ and $\bz_t$ are reflected by constants $\delta_0$ and $\delta_1,$ $\delta_{1k}$'s respectively, which are specified in Conditions \ref{cond.cov_a}, \ref{cond.auto_a00}, \ref{cond.auto_a} below.
Model~\eqref{eq.model_2} and its variants have been extensively studied in the literature; see, e.g., \cite{pena2006nonstationary,chudik2011weak,lam2012factor,ando2017clustering,anatolyev2022factor} and \cite{zhang2024factor}. Note that \cite{zhang2024factor} studied~\eqref{eq.model_2} under a special assumption on the strengths of $\bx_t$ and $\bz_t$. In this section, we relax their factor strength assumption; see details in Remark \ref{rmk.cond_a} below.

In model~\eqref{eq.model_2}, our main focus is on the estimation of the number of strong factors $r_0$ and the corresponding loading matrix $\bA$ for two main reasons. Firstly, the strong factors $\{\bx_t\},$ which exhibit strong serial correlations, play a crucial role in forecasting the time series $\{\by_t\}.$ 
Once the strong factors are correctly identified, the weak factors and their loadings can then be estimated using a two-step method \cite[]{lam2012factor}.
Secondly, treating $\bu_t=\bB\bz_t+\be_t$ as the idiosyncratic errors, which may exhibit moderate cross-sectional and serial correlations due to an unobserved factor structure \cite[]{anatolyev2022factor}, can degenerate model~\eqref{eq.model_2} to model~\eqref{eq.model}, where $r_0$ and $\bA$ are the primary quantities to be estimated. 

To identify the scenarios in which $\bx_t$ can be distinguished from $\bz_t$ and $\be_t$, we investigate the theoretical properties of the covariance-based, the standard autocovariance-based, and our weight-calibrated autocovariance-based methods in Sections~\ref{subsec.cov}, \ref{subsec.auto} and \ref{subsec.wauto}, respectively.  Specifically, we present the corresponding asymptotic results for the eigenvalues and eigenvectors of $\widehat{\bOmega}_{y}$, $\widehat{\bOmega}_{y}(k)\widehat{\bOmega}_{y}(k)^{\top}$, and $\widehat{\bOmega}_{y}(k)\widehat{\bW}\widehat{\bOmega}_{y}(k)^{\top}$ for each $k \in [m].$ The number of strong factors $r_0$ can be determined using the ratios of adjacent eigenvalues for the covariance-based method \cite[]{ahn2013eigenvalue} and, analogously to \eqref{eq.deter_r}, the ratios of adjacent cumulative weighted eigenvalues for the autocovariance-based methods. The asymptotic properties of the corresponding estimators for $r_0$ and $\bA$ will be presented in a similar fashion to Theorem~\ref{theoforbasic}(i) and (ii).

\subsection{Covariance-based method}
\label{subsec.cov}

We impose the following regularity conditions for model~\eqref{eq.model_2}, which correspondingly generalize Conditions~\ref{cond.A}--\ref{cond.E} for model~\eqref{eq.model}  to incorporate the additional weak factors $\{\bz_t\}.$ 

\begin{condition}
    \label{cond.A_a}
    $\bA^{\top}\bA=\bI_{r_0},\bB^{\top}\bB=\bI_{r_1},$ $\|\bA^{\top}\bB\|<1$.
\end{condition}

\begin{condition}
    \label{cond.x_a}
    (i) $\{\bx_t\}_{t\in\eZ},\{\bz_t\}_{t\in\eZ}$, and $\{\be_t\}_{t\in\eZ}$ are mutually uncorrelated; (ii) $\{\by_t\}_{t\in\eZ},\{\bx_t\}_{t\in\eZ}$ and $\{\bz_t\}_{t\in\eZ}$ are strictly stationary with the finite fourth moments; (iii) both $\{\bx_t\}_{t\in\eZ}$ and $\{\bz_t\}_{t\in\eZ}$ are $\psi$-mixing with the mixing coefficients satisfying $\sum_{t \ge 1}t\psi(t)^{1/2}<\infty$.
\end{condition}

\begin{condition}
    \label{cond.cov_a}
    (i) There exist some constants $\delta_0$ and $\delta_1$ with $0<\delta_1 \le \delta_0 \le 1$ such that $\|\bOmega_{x}\|\asymp \|\bOmega_{x}\|_{\min} \asymp p^{\delta_0}$ and $\|\bOmega_{z}\|\asymp \|\bOmega_{z}\|_{\min} \asymp p^{\delta_1}$; (ii) $n=O(p)$ and $p^{1-\delta_1}=o(n)$.
\end{condition}

\begin{condition}
    \label{cond.E_a}
    For model~\eqref{eq.model_2}, let $\bE=(\be_1, \dots, \be_n)^{\top}= \bG\bGamma \bR,$ where there exists some constant $c' \in (0,1]$ such that $\bG \in {\mathbb R}^{n \times n}$ with $0<\sigma_{\lfloor c'n\rfloor}(\bG) \le \cdots \le \sigma_{1}(\bG)<\infty$ and $\bR \in {\mathbb R}^{p \times p}$ with $0<\sigma_{\lfloor c'p\rfloor}(\bR) \le \cdots \le \sigma_{1}(\bR)<\infty.$ Moreover, $\bGamma=(\Gamma_{tj})_{n\times p},$ whose entries are i.i.d. across $t\in[n]$ and $j\in[p]$ with zero-mean, unit-variance and finite fourth moment.
\end{condition}

Let $\hat{\theta}_{1}\ge\cdots \ge\hat{\theta}_{p} \ge 0$ be the eigenvalues of $\widehat{\bOmega}_{y}$ with the corresponding eigenvectors $\widehat{\bxi}_{1}, \dots,  \widehat{\bxi}_{p},$ and $\widehat{\bXi}_{A}=(\widehat{\bxi}_{1},\dots,\widehat{\bxi}_{r_0}) \in {\mathbb R}^{p \times r_0}.$ 
Denote $r=r_0+r_1.$
The following theorem establishes the asymptotic  properties of the eigenvalues and eigenvectors of $\widehat{\bOmega}_{y}$. 

\begin{theorem}\label{advsam1}
Let Conditions~\ref{cond.A_a}--\ref{cond.E_a} hold, and $\delta_0>\delta_1$. 
The following assertions hold:\\
(i) $\hat{\theta}_{1} \asymp \hat{\theta}_{r_0} \asymp p^{\delta_{0}},\hat{\theta}_{r_0+1} \asymp \hat{\theta}_{r} \asymp p^{\delta_{1}},$ and $\hat{\theta}_{r+1} \asymp \hat{\theta}_{\lfloor c'(p\wedge n)\rfloor} \asymp n^{-1}p$
with probability tending to 1;\\
(ii) $\big\|\widehat{\bXi}_{A}\widehat{\bXi}_{A}^{\top}-\bA\bA^{\top}\big\|=O_p(n^{-1/2}p^{(1-\delta_0)/2}+p^{\delta_1-\delta_0}).$
\end{theorem}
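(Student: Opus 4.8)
The plan is to work in matrix form. Writing $\bX=(\bx_1,\dots,\bx_n)^{\T}$, $\bZ=(\bz_1,\dots,\bz_n)^{\T}$, $\bE=(\be_1,\dots,\be_n)^{\T}$, and ignoring throughout a mean-centering correction that is of strictly smaller order than everything below (and is handled separately), we have $\widehat{\bOmega}_y=n^{-1}\bA\bX^{\T}\bX\bA^{\T}+n^{-1}\bB\bZ^{\T}\bZ\bB^{\T}+n^{-1}\bE^{\T}\bE+\mathcal{R}$, where $\mathcal{R}$ collects the cross-products $n^{-1}\bA\bX^{\T}\bZ\bB^{\T}$, $n^{-1}\bA\bX^{\T}\bE$, $n^{-1}\bB\bZ^{\T}\bE$ and their transposes. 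First I would record the sizes of these pieces: by Condition~\ref{cond.cov_a}(i) together with the in-probability concentration bounds $\|n^{-1}\bX^{\T}\bX-\bOmega_x\|=o_p(p^{\delta_0})$ and $\|n^{-1}\bZ^{\T}\bZ-\bOmega_z\|=o_p(p^{\delta_1})$ (from strict stationarity, the finite fourth moment and $\psi$-mixing, with $r$ fixed), the two signal blocks have eigenvalues of exact order $p^{\delta_0}$ and $p^{\delta_1}$ with probability tending to one; by Condition~\ref{cond.E_a}, $n^{-1}\bE^{\T}\bE$ has its top $\lfloor c'(p\wedge n)\rfloor$ eigenvalues all of order $n^{-1}p$; and, using the uncorrelatedness in Condition~\ref{cond.x_a}(i) for the $\bx$--$\bz$ block and the singular-value control of $\bE$ together with submultiplicativity for the blocks involving $\bE$, the summands of $\mathcal{R}$ have operator norms $O_p(n^{-1/2}p^{(\delta_0+\delta_1)/2})$, $O_p(n^{-1/2}p^{(\delta_0+1)/2})$ and $O_p(n^{-1/2}p^{(\delta_1+1)/2})$, respectively.

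The one summand that resists a crude perturbation argument is the $\bx$--$\be$ cross-product $n^{-1}\bA\bX^{\T}\bE$: of order $n^{-1/2}p^{(\delta_0+1)/2}$, it need not be negligible relative to the weak scale $p^{\delta_1}$ (their ratio can diverge), so Weyl's inequality alone cannot locate the intermediate eigenvalues $\theta_{r_0+1},\dots,\theta_r$. The idea is to use only projections that annihilate it. The top level is immediate: $\theta_1\asymp\theta_{r_0}\asymp p^{\delta_0}$ follows from Weyl comparing $\widehat{\bOmega}_y$ with $n^{-1}\bA\bX^{\T}\bX\bA^{\T}$, every other piece being $o_p(p^{\delta_0})$ (here $n^{-1}p^{1-\delta_0}\to0$, from Condition~\ref{cond.cov_a}(ii) and $\delta_0\ge\delta_1$, kills the noise and cross-product contributions). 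For the idiosyncratic level, take an orthonormal basis $\bV$ of $\cC([\bA,\bB])^{\perp}$; since $\bV^{\T}\bA=\bV^{\T}\bB=\bzero$, \emph{every} signal block and \emph{every} summand of $\mathcal{R}$ vanishes under $\bV^{\T}(\cdot)\bV$, so $\bV^{\T}\widehat{\bOmega}_y\bV=n^{-1}\bV^{\T}\bE^{\T}\bE\bV$ exactly, whose top $\asymp p\wedge n$ eigenvalues are of order $n^{-1}p$ by Condition~\ref{cond.E_a}; Cauchy interlacing between $\widehat{\bOmega}_y$ and this compression yields $\theta_{r+1}\asymp\theta_{\lfloor c'(p\wedge n)\rfloor}\asymp n^{-1}p$. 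For the \emph{upper} bound at the weak level, let $\bU$ be an orthonormal basis of $\cC(\bA)^{\perp}$; then $\bM_{22}:=\bU^{\T}\widehat{\bOmega}_y\bU$ kills every cross-product carrying a left factor $\bA$, and, using $\|\bA^{\T}\bB\|<1$ from Condition~\ref{cond.A_a} to make $\bU^{\T}\bB$ well-conditioned, $\bM_{22}$ has $r_1$ eigenvalues of order $p^{\delta_1}$ and the remaining ones of order $n^{-1}p$; interlacing gives $\theta_{r_0+1}\le\theta_1(\bM_{22})\asymp p^{\delta_1}$ and $\theta_r\le\theta_{r_1}(\bM_{22})\asymp p^{\delta_1}$.

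The remaining and most delicate assertion is the matching \emph{lower} bound $\theta_r\gtrsim p^{\delta_1}$, where the $\bx$--$\be$ cross-product is genuinely dangerous. For this I would block-partition $\widehat{\bOmega}_y$ along $\cC(\bA)\oplus\cC(\bA)^{\perp}$ into $\bM_{11}=\bA^{\T}\widehat{\bOmega}_y\bA$, $\bM_{12}=\bA^{\T}\widehat{\bOmega}_y\bU$, $\bM_{22}=\bU^{\T}\widehat{\bOmega}_y\bU$; here $\bM_{11}=n^{-1}\bX^{\T}\bX+o_p(p^{\delta_0})$ is invertible with $\lambda_{\min}(\bM_{11})\asymp p^{\delta_0}$, and $\|\bM_{12}\|=O_p\big(p^{\delta_1}+n^{-1/2}p^{(\delta_0+1)/2}\big)$. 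The block-triangular congruence $\bC$ that diagonalizes out $\bM_{11}$ gives $\bC\widehat{\bOmega}_y\bC^{\T}=\diag(\bM_{11},\bS)$ with Schur complement $\bS=\bM_{22}-\bM_{12}^{\T}\bM_{11}^{-1}\bM_{12}$ and $\|\bC^{\pm1}\|=1+O_p\big(p^{\delta_1-\delta_0}+n^{-1/2}p^{(1-\delta_0)/2}\big)=1+o_p(1)$. The key point is that the Schur correction is negligible at the weak scale: $\|\bM_{12}^{\T}\bM_{11}^{-1}\bM_{12}\|=O_p\big(\|\bM_{12}\|^2p^{-\delta_0}\big)=O_p\big(p^{2\delta_1-\delta_0}+n^{-1}p\big)=o_p(p^{\delta_1})$, the bound $2\delta_1-\delta_0<\delta_1$ being exactly the theorem's hypothesis $\delta_0>\delta_1$ and $n^{-1}p=o(p^{\delta_1})$ being Condition~\ref{cond.cov_a}(ii). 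Hence $\lambda_{r_1}(\bS)\ge\lambda_{r_1}(\bM_{22})-o_p(p^{\delta_1})\gtrsim p^{\delta_1}$, and since $\lambda_{\min}(\bM_{11})\gg p^{\delta_1}$ the $r$-th largest eigenvalue of $\diag(\bM_{11},\bS)$ equals $\lambda_{r_1}(\bS)$; by Ostrowski's theorem, $\theta_r(\widehat{\bOmega}_y)\ge(1-o_p(1))\lambda_{r_1}(\bS)\gtrsim p^{\delta_1}$. Together with the previous paragraph this gives part~(i).

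For part~(ii) I would use a sin-$\Theta$ bound. Write $\widehat{\bOmega}_y=\bA\bOmega_x\bA^{\T}+\bDelta$ with $\bDelta=\bA(n^{-1}\bX^{\T}\bX-\bOmega_x)\bA^{\T}+n^{-1}\bB\bZ^{\T}\bZ\bB^{\T}+n^{-1}\bE^{\T}\bE+\mathcal{R}+(\text{centering})$, so $\|\bDelta\|=o_p(p^{\delta_0})$; $\cC(\bA)$ is the top-$r_0$ eigenspace of $\bA\bOmega_x\bA^{\T}$ with gap $\lambda_{r_0}(\bOmega_x)\asymp p^{\delta_0}$, and the eigengap $\theta_{r_0}-\theta_{r_0+1}\asymp p^{\delta_0}$ of $\widehat{\bOmega}_y$ is supplied by part~(i). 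The Davis--Kahan theorem gives $\|\bXi_A\bXi_A^{\T}-\bA\bA^{\T}\|\lesssim\|(\bI_p-\bA\bA^{\T})\,\bDelta\,\bA\bA^{\T}\|/p^{\delta_0}$. On the right, the first summand of $\bDelta$ contributes nothing because $(\bI_p-\bA\bA^{\T})\bA=\bzero$; the $\bB$-block contributes $O_p(p^{\delta_1})$, the $\bE$-block $O_p(n^{-1}p)$, and $\mathcal{R}$ contributes $O_p(n^{-1/2}p^{(\delta_0+1)/2})$, its dominant piece being $n^{-1}(\bI_p-\bA\bA^{\T})\bE^{\T}\bX\bA^{\T}$ (the summands of $\mathcal{R}$ with a left factor $\bA$ are annihilated, and the $\bz$-involving summands are of smaller order, using $\|n^{-1}\bX^{\T}\bZ\|=O_p(n^{-1/2}p^{(\delta_0+\delta_1)/2})$ for the $\bx$--$\bz$ piece). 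Dividing by $p^{\delta_0}$ and noting $n^{-1}p^{1-\delta_0}=o\big(n^{-1/2}p^{(1-\delta_0)/2}\big)$ leaves $\|\bXi_A\bXi_A^{\T}-\bA\bA^{\T}\|=O_p\big(p^{\delta_1-\delta_0}+n^{-1/2}p^{(1-\delta_0)/2}\big)$, as asserted. The main obstacle throughout is precisely the large-but-structured $\bx$--$\be$ cross-product: not negligible at the scale $p^{\delta_1}$, it must be routed through the projections onto $\cC(\bA)^{\perp}$ and $\cC([\bA,\bB])^{\perp}$, where it vanishes, and through the Schur complement of the $\cC(\bA)$-block, where it enters only quadratically and hence at order $o_p(p^{\delta_1})$; a secondary, routine, layer is establishing the concentration and operator-norm bounds for $n^{-1}\bX^{\T}\bX$, $n^{-1}\bZ^{\T}\bZ$ and the cross-products under the $\psi$-mixing and fourth-moment assumptions and the structural form of $\bE$ in Condition~\ref{cond.E_a}.
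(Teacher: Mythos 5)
Your argument is correct, and it reaches both parts by a route that differs from the paper's in the place that matters most: the intermediate eigenvalues $\theta_{r_0+1},\dots,\theta_r$. The paper first proves part (ii) by the crude bound $\|\widehat{\bOmega}_y-n^{-1}\bA\bX^{\T}\bX\bA^{\T}\|=O_p(n^{-1/2}p^{(1+\delta_0)/2}+p^{\delta_1})$ divided by the gap $p^{\delta_0}$, and then, for the middle block of (i), deflates by the \emph{estimated} eigenvectors — it studies the singular values of $n^{-1/2}\bY(\bI_p-\bXi_A\bXi_A^{\T})$, replaces $\bXi_A\bXi_A^{\T}$ by $\bA\bA^{\T}$ using (ii), and reads off the order $p^{\delta_1/2}$ from $n^{-1/2}(\bZ\bB^{\T}+\bE)(\bI_p-\bA\bA^{\T})$, with the bottom block handled through $\rank(\bY-\bE)=r$. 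You instead work entirely with population projections: Cauchy interlacing of the compressions onto $\cC(\bA)^{\perp}$ and $\cC([\bA,\bB])^{\perp}$ gives the upper bounds and the idiosyncratic level, and the lower bound $\theta_r\gtrsim p^{\delta_1}$ comes from the Schur complement of the $\cC(\bA)$-block plus Ostrowski, so the dangerous $\bx$--$\be$ cross-product enters only quadratically, at order $p^{2\delta_1-\delta_0}+n^{-1}p=o_p(p^{\delta_1})$ exactly under $\delta_0>\delta_1$ and Condition~\ref{cond.cov_a}(ii). What each buys: the paper's deflation argument is shorter once (ii) is available but makes (i) depend on (ii); your Schur/Ostrowski argument is self-contained, never invokes (ii), and isolates transparently where $\delta_0>\delta_1$ is used. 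Two small caveats, neither a gap: (a) in part (ii) the projected numerator $\|(\bI_p-\bA\bA^{\T})\bDelta\,\bA\bA^{\T}\|$ with the \emph{unperturbed} projector on both sides is not literally the cited $\sin\Theta$ lemma and strictly needs a second-order remainder $O\{(\|\bDelta\|/p^{\delta_0})^2\}$, which is negligible here — but the refinement buys you nothing, since the paper's cruder bound $\|\bDelta\|/p^{\delta_0}$ already yields the identical rate $O_p(n^{-1/2}p^{(1-\delta_0)/2}+p^{\delta_1-\delta_0})$; (b) the fixed index shift by $r$ in the interlacing step at the level $\lfloor c'(p\wedge n)\rfloor$ deserves a one-line remark (adjust $c'$ slightly), though the paper's own rank argument glosses over the same point, and the concentration bounds you defer for $n^{-1}\bX^{\T}\bX$, $n^{-1}\bZ^{\T}\bZ$ and the cross-products are likewise asserted rather than proved in the paper.
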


\begin{remark}
    \label{rmk.cov}
    By Theorem~\ref{advsam1}(i), we obtain that $\hat{\theta}_{r_0}/\hat{\theta}_{r_0+1}\asymp p^{\delta_0-\delta_1}$ and $\hat{\theta}_{r}/\hat{\theta}_{r+1}\asymp np^{\delta_1-1}$ with probability tending to 1, which together demonstrate that the covariance-based method can distinguish $\bx_t$ from $\bz_t$ with high probability as long as $\delta_0>\delta_1$ and $n=o( p^{1+\delta_0-2\delta_1})$. The condition $\delta_0>\delta_1$ can be relaxed to $\delta_0\ge\delta_1$ for two autocovariance-based methods, as presented in Theorems~\ref{advauto} and \ref{advour} below.
\end{remark}

\subsection{Standard autocovariance-based method}
\label{subsec.auto}

In addition to Conditions~\ref{cond.A_a}--\ref{cond.E_a}, we give the following regularity conditions for model~\eqref{eq.model_2}, which characterize the strengths of serial correlations for strong and weak factors.
\begin{condition}
    \label{cond.auto_a00}
    $\|\bOmega_{x}(k)\|\asymp \|\bOmega_{x}(k)\|_{\min} \asymp p^{\delta_{0}}$ for $k\in[m]$, where $\delta_{0}$ is specified in Condition~\ref{cond.cov_a}.
\end{condition}

\begin{condition}
    \label{cond.auto_a}
    (i) There exists some constant $\delta_{1k}$ with $0 < \delta_{1k}<\delta_{0}$ such that $\|\bOmega_{z}(k)\|\asymp \|\bOmega_{z}(k)\|_{\min} \asymp p^{\delta_{1k}}$ for $k\in[m]$, where $\delta_0$ is specified in Condition~\ref{cond.cov_a}; (ii) $p^{1+\delta_1-2\delta_{1k}}=o(n)$.
\end{condition}

\begin{remark}
    \label{rmk.cond_a}
    Condition~\ref{cond.auto_a00} serves as the counterpart of Condition~\ref{cond.auto} for strong factors in model~\eqref{eq.model_2}. 
    Condition~\ref{cond.auto_a}(i) implies that $\{\bz_t\}$ has weaker serial correlations compared to $\{\bx_t\}$, where $\delta_{1k}<\delta_0$ is satisfied  if $\delta_{1k}<\delta_1$ or $\delta_1<\delta_0$. A similar assumption is imposed in \cite{zhang2024factor} as $\delta_{1k}=\delta_1<\delta_0=1,$ which is a special case of Condition~\ref{cond.auto_a}(i).
    Condition~\ref{cond.auto_a}(ii) is required to establish the asymptotics of the standard and weight-calibrated autocovariance-based methods, which is stronger than Condition~\ref{cond.cov_a}(ii) if $\delta_{1k}<\delta_1$.
\end{remark}

Let $\widehat{\bpsi}_{k1}, \dots,  \widehat{\bpsi}_{kp}$ be the eigenvectors of $\widehat{\bOmega}_{y}(k)\widehat{\bOmega}_{y}(k)^{\top}$ corresponding to the eigenvalues $\hat{\mu}_{k1} \ge \cdots \ge \hat{\mu}_{kp} \ge 0$, and $\widehat{\bPsi}_{k,A}=(\widehat{\bpsi}_{k1},\dots,\widehat{\bpsi}_{kr_0}) \in {\mathbb R}^{p \times r_0}.$
The following theorem gives the asymptotic properties of the eigenvalues and eigenvectors of $\widehat{\bOmega}_{y}(k)\widehat{\bOmega}_{y}(k)^{\top}$.

\begin{theorem}\label{advauto}
Let Conditions~\ref{cond.A_a}--\ref{cond.auto_a} hold. For each $k\in[m]$, the following assertions hold:\\
    (i) $\hat{\mu}_{k1} \asymp \hat{\mu}_{kr_0} \asymp p^{2\delta_{0}},\hat{\mu}_{k,r_0+1} \asymp \hat{\mu}_{kr} \asymp p^{2\delta_{1k}},$ and $\hat{\mu}_{k,r+1} \asymp \hat{\mu}_{k,\lfloor c'(p\wedge n)\rfloor} \asymp n^{-2}p^{2}$
    with probability tending to 1;\\
    (ii) $\big\|\widehat{\bPsi}_{k,A}\widehat{\bPsi}_{k,A}^{\top}-\bA\bA^{\top}\big\|=O_p(n^{-1/2}p^{(1-\delta_{0})/2}+p^{\delta_{1k}-\delta_{0}}).$
\end{theorem}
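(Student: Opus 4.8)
The plan is to analyze the perturbation of $\widehat{\bOmega}_y(k)\widehat{\bOmega}_y(k)^{\T}$ around its population analogue by first isolating the signal decomposition. Write $\bOmega_y(k) = \bA\bOmega_{x}(k)\bA^{\T} + \bB\bOmega_{xz}(k)\bA^{\T} + \bA\bOmega_{zx}(k)\bB^{\T} + \bB\bOmega_{z}(k)\bB^{\T} + \bOmega_{e}(k) + (\text{cross terms with }\be_t)$. Under Condition~\ref{cond.x_a}(i) the factor–noise cross terms vanish in population, and Condition~\ref{cond.A_a} ($\|\bA^{\T}\bB\|<1$) keeps the strong and weak loading spaces asymptotically separated. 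The population matrix $\bOmega_y(k)\bOmega_y(k)^{\T}$ then has a block structure whose top $r_0$ eigenvalues are driven by $\|\bOmega_x(k)\|^2 \asymp p^{2\delta_0}$, the next $r_1$ by $\|\bOmega_z(k)\|^2 \asymp p^{2\delta_{1k}}$ (using Conditions~\ref{cond.auto_a00} and \ref{cond.auto_a}(i)), and the remainder governed by $\bOmega_e(k)$, which is $O(1)$ in operator norm under the weak-serial-dependence relaxation implied by Condition~\ref{cond.E_a}. The first step, therefore, is a clean statement of these population magnitudes via Weyl's inequality applied to the block decomposition.

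The second step is to control the estimation error $\widehat{\bOmega}_y(k) - \bOmega_y(k)$. Under the $\psi$-mixing and fourth-moment Conditions~\ref{cond.x_a}(ii)–(iii), standard concentration (as in \cite{lam2012factor} or \cite{zhang2024factor}) gives entrywise and operator-norm bounds, but the crucial quantitative input is that the sample autocovariance of the idiosyncratic part satisfies $\|\widehat{\bOmega}_e(k)\|_{\F}^2 \asymp n^{-2}\tr(\bE\bE^{\T})^2$-type behavior, which by Condition~\ref{cond.E_a} ($\sigma_1(\bE)\asymp\sigma_{\lfloor c'(p\wedge n)\rfloor}(\bE)\asymp n^{1/2}+p^{1/2}$ w.p.\ $\to 1$, hence $\|\bE\bE^{\T}\| \asymp n\vee p \asymp p$ under $n=O(p)$) yields $\|\widehat{\bOmega}_y(k)\widehat{\bOmega}_y(k)^{\T}\|$ contributions of order $n^{-2}p^2$ from the idiosyncratic block. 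Combining with the cross-product expansion $\widehat{\bOmega}_y(k)\widehat{\bOmega}_y(k)^{\T} = \bOmega_y(k)\bOmega_y(k)^{\T} + \bOmega_y(k)(\widehat{\bOmega}_y(k)-\bOmega_y(k))^{\T} + (\cdots)^{\T} + (\widehat{\bOmega}_y(k)-\bOmega_y(k))(\widehat{\bOmega}_y(k)-\bOmega_y(k))^{\T}$ and invoking Weyl again delivers part~(i): the signal eigenvalues $\mu_{k1},\dots,\mu_{kr_0}$ and $\mu_{k,r_0+1},\dots,\mu_{kr}$ survive at rates $p^{2\delta_0}$ and $p^{2\delta_{1k}}$ (the perturbation being of smaller order precisely because of Conditions~\ref{cond.cov_a}(ii) and \ref{cond.auto_a}(ii)), while $\mu_{k,r+1},\dots,\mu_{k,\lfloor c'(p\wedge n)\rfloor} \asymp n^{-2}p^2$.

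For part~(ii), the plan is a Davis–Kahan / $\sin\Theta$ argument applied to the $r_0$-dimensional leading eigenspace. The relevant eigen-gap is $\mu_{kr_0} - \mu_{k,r_0+1} \asymp p^{2\delta_0}$ (since $\delta_0 > \delta_{1k}$ forces $p^{2\delta_{1k}} = o(p^{2\delta_0})$), and the perturbation term, measured in operator norm, decomposes into (a) the statistical fluctuation of order $n^{-1/2}p^{\delta_0 + 1/2}$ from the signal–noise cross products, giving rate $n^{-1/2}p^{(1-\delta_0)/2}$ after dividing by the gap, and (b) a deterministic "leakage" term from the weak factors of order $p^{2\delta_{1k}} \cdot p^{-?}$... more carefully, the bias contributed by $\bB\bOmega_z(k)\bB^{\T}$ pushing the strong eigenspace is controlled by the ratio of the weak signal strength to the strong gap, but here the relevant leakage is actually through the $\bA\bOmega_{zx}(k)\bB^{\T}$-type cross term and scales as $p^{\delta_{1k}-\delta_0}$ — note this matches the second term in the claimed bound and dominates the naive $p^{\delta_1 - \delta_0}$ because $\delta_{1k}$ can exceed... no, $\delta_{1k} < \delta_0$ throughout. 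The main obstacle I anticipate is precisely pinning down this bias term: showing that the contamination of $\cC(\bPsi_{k,A})$ by the weak-factor directions is $O_p(p^{\delta_{1k}-\delta_0})$ and not merely $O_p(p^{(\delta_{1k}-\delta_0)/2})$, which requires carefully tracking that the cross-terms enter $\widehat{\bOmega}_y(k)\widehat{\bOmega}_y(k)^{\T}$ at order $p^{\delta_0 + \delta_{1k}}$ (product of a strong and a weak factor magnitude) rather than at order $p^{2\delta_{1k}}$, and then dividing by the $p^{2\delta_0}$ gap. The rest — assembling the two error sources and verifying they are the right order via the stated conditions — is routine once the decomposition is set up.
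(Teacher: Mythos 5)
Your high-level bookkeeping for part~(ii) — signal eigengap $\asymp p^{2\delta_0}$, off-diagonal block $\asymp p^{\delta_0+\delta_{1k}}$, hence contamination $\asymp p^{\delta_{1k}-\delta_0}$ after Davis--Kahan — arrives at the right order, but the mechanism you identify is wrong, and for part~(i) the ``population $+$ Weyl'' plan has a genuine gap.

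On the mechanism for part~(ii): you trace the $p^{\delta_{1k}-\delta_0}$ bias to a ``$\bA\bOmega_{zx}(k)\bB^{\T}$-type cross term.'' But Condition~\ref{cond.x_a}(i) says $\{\bx_t\}$, $\{\bz_t\}$, $\{\be_t\}$ are mutually uncorrelated, so $\bOmega_{zx}(k)=\bzero$ in population; there is no such term. The off-diagonal block of order $p^{\delta_0+\delta_{1k}}$ in $\widehat{\bOmega}_y(k)\widehat{\bOmega}_y(k)^{\T}$ arises not from a population cross-covariance but from the algebraic fact that $\widehat{\bOmega}_y(k)$ has a strong sub-block of size $\asymp p^{\delta_0}$ and a weak sub-block of size $\asymp p^{\delta_{1k}}$, and squaring couples them. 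The paper tracks this explicitly (Lemma~\ref{lem.auto} of the supplement), where one has to show e.g.\ $\|\bPsi_{k,B}^{\T}\widehat{\bOmega}_y(k)\bPsi_{k,A}\|=O_p(p^{2\delta_{1k}-\delta_0}+n^{-1/2}p^{(1-\delta_0+2\delta_{1k})/2})=o_p(p^{\delta_{1k}})$, working in the estimated (not population) eigenspaces of $\widehat{\bOmega}_y(k)\widehat{\bOmega}_y(k)^{\T}$. Your framing of ``controlled by the ratio of the weak signal to the strong gap'' captures the right heuristic, but the precise magnitude $p^{\delta_0+\delta_{1k}}$ must be established from sample products of $\bA\bX^{\T}\bD_k(\cdot)$ and $\bB\bZ^{\T}\bD_k(\cdot)$, not from a vanishing population cross-term.

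On part~(i): the plan to establish population eigenvalue magnitudes and then apply Weyl to $\widehat{\bOmega}_y(k)\widehat{\bOmega}_y(k)^{\T}-\bOmega_y(k)\bOmega_y(k)^{\T}$ cannot produce the claim $\mu_{k,r+1}\asymp\mu_{k,\lfloor c'(p\wedge n)\rfloor}\asymp n^{-2}p^2$. The order $n^{-2}p^2$ is \emph{not} a population quantity — under $n=O(p)$ it diverges, whereas $\|\bOmega_e(k)\|$ is bounded. It is a sample artifact: $\widehat{\bOmega}_e(k)=(n-k)^{-1}\bE^{\T}\bD_k\bE$ does not concentrate around $\bOmega_e(k)$ in operator norm in this high-dimensional regime, and its bulk singular values live at $n^{-1}p$. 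In particular the \emph{lower} bound $\mu_{k,\lfloor c'(p\wedge n)\rfloor}\gtrsim n^{-2}p^2$ — which the theorem asserts — is the hard direction and would be false for the population matrix. The paper proves it via a trace argument: projecting onto $(\bI_p-\bPsi_{k,A}\bPsi_{k,A}^{\T})$, splitting out the $\bB_1$-space with $\bP_{A,B}-\bA\bA^{\T}=\bB_1\bB_1^{\T}$, and showing $\sum_{i>r_0}\mu_{ki}\gtrsim n^{-1}p^2$ with high probability by exhibiting a Frobenius-norm lower bound of the form $\|(\bI_p-\bPsi_{k,A}\bPsi_{k,A}^{\T})\bE^{\T}\bD_k\bE(\bI_p-\bA\bA^{\T})\|_{\F}^2\asymp n^{-1}p^2$ under Condition~\ref{cond.E_a}. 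Your proposal sketches only the upper-bound direction and treats $\bOmega_e(k)$ as the driver, so the bulk lower bound and the $n^{-2}p^2$ order would not be recovered. (Your claim that $\|\bOmega_e(k)\|=O(1)$ ``governs the remainder'' is precisely the step that breaks: the remainder is governed by the non-concentration of the \emph{sample} noise autocovariance, not the population one.)

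The overall architecture — working directly with the sample matrix, projecting onto the estimated leading eigenspace, decomposing the projected matrix and bounding the cross-block in operator norm, and computing a trace lower bound for the bulk — is the route the paper takes, and it differs from what you propose in both places where the difference matters.
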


\begin{remark}
    \label{rmk.auto}
    Theorem~\ref{advauto}(i) entails that $\hat{\mu}_{kr_0}/\hat{\mu}_{k,r_0+1}\asymp p^{2\delta_0-2\delta_{1k}}$ and $\hat{\mu}_{kr}/\hat{\mu}_{k,r+1}\asymp n^2p^{2\delta_{1k}-2}$ with probability tending to 1. These results show that the standard autocovariance-based method can distinguish $\bx_t$ from $\bz_t$ with high probability when $\delta_0>\delta_{1k}$ and $n=o(p^{1+\delta_0-2\delta_{1k}})$.
\end{remark}

\begin{remark}
    \label{rmk.MA_pro}
    Condition~\ref{cond.auto_a}(ii) may not always hold, especially when there are very weak serial correlations in $\{\bz_t\}$ (i.e., very small $\delta_{1k}$). For example, when $\{\bz_t\}$ is from a vector MA$(l)$ model, then for each $k>l$, $\bOmega_{z}(k)=\bzero,$ resulting in $\delta_{1k}=0$. Conditions~\ref{cond.cov_a}(ii) and \ref{cond.auto_a}(ii) cannot be satisfied simultaneously.
    To address this issue, we can replace Condition~\ref{cond.auto_a} with a new condition $\|\bOmega_{z}(k)\|=O(n^{-1/2}p^{(1+\delta_1)/2})$, which includes the case that $\{\bz_t\}$ follows a vector MA($l$) process with $k>l$, and establish the asymptotic results for $\widehat{\bOmega}_{y}(k)\widehat{\bOmega}_{y}(k)^{\top}$ in Corollary~\ref{coro.auto} below. With a lower bound correction $\vartheta_n\asymp n^{-1}p^{1+\delta_1}$, Corollary~\ref{coro.auto}(i) implies that  $(\hat{\mu}_{kr_0}+\vartheta_n)/(\hat{\mu}_{k,r_0+1}+\vartheta_n)$ goes to infinity with probability tending to 1, which ensures the consistency of our ratio-based estimator for $r_0$. Moreover, the convergence rate of $\widehat{\bPsi}_{k,A}$ in Corollary~\ref{coro.auto}(ii) is consistent with that in Theorem~\ref{theoforbasic}(ii) for model~\eqref{eq.model}. 
\end{remark}

\begin{corollary}
    \label{coro.auto}
    Let Conditions~\ref{cond.A_a}--\ref{cond.auto_a00} hold and $\|\bOmega_{z}(k)\|=O(n^{-1/2}p^{(1+\delta_1)/2})$ for each $k\in[m].$ \\
    (i) $\hat{\mu}_{k1} \asymp \hat{\mu}_{kr_0} \asymp p^{2\delta_{0}}$ with probability tending to 1, and $\hat{\mu}_{k,r_0+1}=O_p(n^{-1}p^{1+\delta_1});$\\ 
    (ii) $\big\|\widehat{\bPsi}_{k,A}\widehat{\bPsi}_{k,A}^{\top}-\bA\bA^{\top}\big\|=O_p(n^{-1/2}p^{(1-\delta_{0})/2}).$
\end{corollary}

\subsection{Weight-calibrated autocovariance-based method}
\label{subsec.wauto}
The following theorem presents the asymptotic theory for the eigenvalues and eigenvectors of $\widehat{\bOmega}_{y}(k)\widehat{\bW}\widehat{\bOmega}_{y}(k)^{\top}$ for model~\eqref{eq.model_2}, compared to Theorem~\ref{theoforbasic} for model~\eqref{eq.model}. 

\begin{theorem}\label{advour}
Let Conditions~\ref{cond.A_a}--\ref{cond.auto_a} hold, $q=O(n^{-1}p^{2-\delta_0})$, and $r<q\le c'(p\wedge n)$, where $c'$ is specified in Condition~\ref{cond.E_a}. For each $k\in[m]$, the following assertions hold:\\
    (i) $\hat{\lambda}_{k1} \asymp  \hat{\lambda}_{kr_0} \asymp p^{\delta_0}$ and $\hat{\lambda}_{k,r_0+1} \asymp \hat{\lambda}_{kr}  \asymp p^{2\delta_{1k}-\delta_1}$
    with probability tending to 1, and $\hat{\lambda}_{k,r+1}=O_p(n^{-1}p)$;\\
    (ii) $\big\|\widehat{\bPhi}_{k,A}\widehat{\bPhi}_{k,A}^{\top}-\bA\bA^{\top}\big\|=O_p(n^{-1/2}p^{(1-\delta_{0})/2}+p^{\delta_{1k}-(\delta_{0}+\delta_{1})/2}).$
\end{theorem}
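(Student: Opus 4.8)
The plan is to decompose $\widehat{\bOmega}_y(k)\widehat{\bW}\widehat{\bOmega}_y(k)^{\T}$ in a way that isolates the contribution of the strong factors $\bA\bx_t$, the weak factors $\bB\bz_t$, and the idiosyncratic part $\be_t$, and then to track how the rank-$q$ pseudo-inverse weight $\widehat{\bW}=\bQ(\bQ^{\T}\widehat{\bOmega}_y\bQ)^{-1}\bQ^{\T}$ rescales each block. First I would record the analogue of the population-level identities used in Theorem~\ref{theoforbasic}: write $\widehat{\bOmega}_y(k)=\bA\widehat{\bOmega}_{x}(k)\bH_{xx}+\bB\widehat{\bOmega}_{z}(k)(\cdots)+(\text{cross terms})+(\text{error terms})$, where the cross and error terms are controlled by $\psi$-mixing (Condition~\ref{cond.x_a}(iii)) and the sandwich structure $\bE=\bG\bGamma\bR$ (Condition~\ref{cond.E_a}), yielding $\|\widehat{\bOmega}_y(k)-\bOmega_y(k)\|=O_p(\{n^{-1}p\}^{1/2}\cdot\text{something})$ plus the uniform bounds $\sigma_1(\bE)\asymp n^{1/2}+p^{1/2}$. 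The key new ingredient relative to Theorem~\ref{advauto} is that $\bQ$ (the $q$ leading eigenvectors of $\widehat{\bOmega}_y$) almost spans $\cC(\bA)\oplus\cC(\bB)$ by Theorem~\ref{advsam1}(i) whenever $q\ge r=r_0+r_1$, so $\bQ^{\T}\bA$ and $\bQ^{\T}\bB$ are (to leading order) well-conditioned, while $\bQ^{\T}\widehat{\bOmega}_y\bQ$ has its top $r_0$ eigenvalues of order $p^{\delta_0}$, next $r_1$ of order $p^{\delta_1}$, and the remaining $q-r$ of order $n^{-1}p$. Inverting this block-diagonal-like matrix is exactly what produces the rescaling: the strong block gets divided by $p^{\delta_0}$, the weak block by $p^{\delta_1}$, and the noise block by $n^{-1}p$.

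Next I would carry out the eigenvalue computation for part (i). For the top $r_0$ eigenvalues: the strong-factor contribution to $\widehat{\bOmega}_y(k)\widehat{\bW}\widehat{\bOmega}_y(k)^{\T}$ is $\approx \bA\bOmega_x(k)(\bA^{\T}\bQ)(\bQ^{\T}\bOmega_y\bQ)^{-1}(\bQ^{\T}\bA)\bOmega_x(k)^{\T}\bA^{\T}$, whose nonzero eigenvalues are of order $p^{\delta_0}\cdot p^{-\delta_0}\cdot p^{\delta_0}=p^{\delta_0}$, using $\|\bOmega_x(k)\|\asymp p^{\delta_0}$ (Condition~\ref{cond.auto_a00}) and the $p^{\delta_0}$ scaling of the relevant block of $(\bQ^{\T}\bOmega_y\bQ)^{-1}$; this gives $\lambda_{k1}\asymp\lambda_{kr_0}\asymp p^{\delta_0}$. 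For $\lambda_{k,r_0+1}\asymp\lambda_{kr}$: the weak-factor block contributes $\approx p^{\delta_{1k}}\cdot p^{-\delta_1}\cdot p^{\delta_{1k}}=p^{2\delta_{1k}-\delta_1}$, since $\|\bOmega_z(k)\|\asymp p^{\delta_{1k}}$ (Condition~\ref{cond.auto_a}(i)) and the weak-factor block of the inverse weight scales like $p^{-\delta_1}$; one must check that cross terms between strong and weak blocks, and the contributions of idiosyncratic noise, are of smaller order $O_p(n^{-1}p)$ — this uses Condition~\ref{cond.auto_a}(ii) to ensure $p^{2\delta_{1k}-\delta_1}\gg n^{-1}p$. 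Finally $\lambda_{k,r+1}=O_p(n^{-1}p)$ comes from bounding the residual after projecting off the $r$-dimensional factor subspace, where $\widehat{\bW}$ acts on the noise block essentially as $(n^{-1}p)^{-1}$ times a bounded operator and $\widehat{\bOmega}_y(k)$ restricted there is $O_p(n^{-1}p)$ (via Condition~\ref{cond.auto_a}(ii) controlling the weak-factor leakage), so the product is $O_p(n^{-1}p)\cdot(n^{-1}p)^{-1}\cdot O_p(n^{-1}p)=O_p(n^{-1}p)$.

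For part (ii), the plan is a standard $\sin\Theta$ / Davis–Kahan argument: $\|\bPhi_{k,A}\bPhi_{k,A}^{\T}-\bA\bA^{\T}\|\lesssim \{\text{perturbation of }\widehat{\bOmega}_y(k)\widehat{\bW}\widehat{\bOmega}_y(k)^{\T}\text{ relative to its idealized version}\}/\{\text{eigengap}\}$. The eigengap is $\lambda_{kr_0}-\lambda_{k,r_0+1}\asymp p^{\delta_0}$ from part (i). The perturbation has two sources: the sampling error in $\widehat{\bOmega}_y(k)$ and $\bQ$, which propagates to a bound of order $n^{-1/2}p^{1-\delta_0/2}\cdot p^{\delta_0/2}$-type terms after the sandwich, giving the $n^{-1/2}p^{(1-\delta_0)/2}$ contribution; and the bias from the weak factors, which is not annihilated and leaks into the estimated strong-factor space at level $\sqrt{\lambda_{k,r_0+1}/\lambda_{kr_0}}\asymp\sqrt{p^{2\delta_{1k}-\delta_1}/p^{\delta_0}}=p^{\delta_{1k}-(\delta_0+\delta_1)/2}$, which is exactly the second term. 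I would carefully separate these by writing the target matrix as $\bM_{\text{strong}}+\bM_{\text{weak}}+\bM_{\text{noise+error}}$ and applying a two-step Davis–Kahan (first peel off noise/error using part (i) bounds, then treat $\bM_{\text{weak}}$ as a bias).

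The main obstacle I anticipate is the control of $(\bQ^{\T}\widehat{\bOmega}_y\bQ)^{-1}$ and its interaction with $\bQ^{\T}\bA$, $\bQ^{\T}\bB$: because $\bQ$ is itself a random estimated projection and the three eigenvalue scales $p^{\delta_0}$, $p^{\delta_1}$, $n^{-1}p$ are genuinely different orders, one cannot just bound $\|(\bQ^{\T}\widehat{\bOmega}_y\bQ)^{-1}\|\asymp np^{-1}$ and be done — that crude bound would destroy the sharp $p^{\delta_0}$ and $p^{2\delta_{1k}-\delta_1}$ rates. Instead I would need a block decomposition of $\bQ^{\T}\widehat{\bOmega}_y\bQ$ aligned with the (estimated) strong/weak/noise eigenspaces, show the off-diagonal blocks are negligible relative to the diagonal ones (a delicate step requiring Theorem~\ref{advsam1}(ii) plus Condition~\ref{cond.auto_a}(ii)), and then invert blockwise. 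Getting the error terms in this block inversion to be uniformly of smaller order than each of the three target scales — simultaneously — is where the bulk of the technical work lies, and it is precisely this bookkeeping that the earlier excerpt defers to the supplementary material.
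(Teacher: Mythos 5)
Your overall strategy coincides with the paper's: split along $\cC(\bA)$, the part of $\cC(\bB)$ orthogonal to $\cC(\bA)$ (write its orthonormal basis as $\bB_1$), and the complement; exploit the three eigenvalue scales $p^{\delta_0}$, $p^{\delta_1}$, $n^{-1}p$ of $\widehat{\bOmega}_y$ to see how $\widehat{\bW}$ rescales each block (this is exactly the content of the paper's lemmas giving $\|\bA^{\T}\widehat{\bW}\bA\|=O_p(p^{-\delta_0})$, $\|\bB_1^{\T}\widehat{\bW}\bB_1\|=O_p(p^{-\delta_1})$, $\|\bA^{\T}\widehat{\bW}\bB_1\|=O_p(p^{-(\delta_0+\delta_1)/2})$, together with $\|\bxi_i^{\T}\bA\|$, $\|\bxi_i^{\T}\bB_1\|$ bounds for $i>r$); obtain $\lambda_{k,r+1}=O_p(n^{-1}p)$ from a rank-$r$ part plus a small remainder; and prove (ii) by a sin-theta argument with eigengap $\asymp p^{\delta_0}$. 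Your heuristic that the weak-factor leakage in (ii) is $(\lambda_{k,r_0+1}/\lambda_{kr_0})^{1/2}\asymp p^{\delta_{1k}-(\delta_0+\delta_1)/2}$ also matches the paper's rate, which it obtains by dividing $\|(\bI_p-\bA\bA^{\T})\widehat{\bOmega}_y(k)\widehat{\bW}\widehat{\bOmega}_y(k)^{\T}\|=O_p(n^{-1/2}p^{(1+\delta_0)/2}+p^{\delta_{1k}+(\delta_0-\delta_1)/2})$ by the gap.

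There is, however, a genuine gap in your part (i): the claim that the strong--weak cross terms are $O_p(n^{-1}p)$ and hence negligible relative to $p^{2\delta_{1k}-\delta_1}$ is false. The coupling block between $\cC(\bA)$ and $\cC(\bB_1)$ in $\widehat{\bOmega}_y(k)\widehat{\bW}\widehat{\bOmega}_y(k)^{\T}$ is of order $p^{\delta_0}\cdot p^{-(\delta_0+\delta_1)/2}\cdot p^{\delta_{1k}}=p^{\delta_{1k}+(\delta_0-\delta_1)/2}$, which strictly dominates the target middle-eigenvalue scale $p^{2\delta_{1k}-\delta_1}$ (the exponent difference is $(\delta_0+\delta_1)/2-\delta_{1k}>0$). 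Consequently a ``diagonal blocks dominate'' argument yields only the upper bound $\lambda_{k,r_0+1}=O_p(p^{2\delta_{1k}-\delta_1})$, not the matching lower bound for $\lambda_{kr}$, and your proposed ordering (all of (i) first, then (ii)) cannot be carried out as stated when $\delta_0>\delta_1$. The paper's proof resolves this with a specific ordering: it first gets $\lambda_{k1}\asymp\lambda_{kr_0}\asymp p^{\delta_0}$, the upper bound on $\lambda_{k,r_0+1}$, and $\lambda_{k,r+1}=O_p(n^{-1}p)$; then proves (ii), noting that the large cross term only needs to be $o_p$ of the $p^{\delta_0}$ gap, not of $p^{2\delta_{1k}-\delta_1}$; and only then returns to the lower bound, bounding the smallest singular value of $(\bI_p-\bPhi_{k,A}\bPhi_{k,A}^{\T})\widehat{\bOmega}_y(k)\bXi_{B}\bTheta_{B}^{-1/2}$ from below by $\asymp p^{\delta_{1k}-\delta_1/2}$, where the strong-factor contamination now enters only through $\|\bA\bA^{\T}-\bPhi_{k,A}\bPhi_{k,A}^{\T}\|$ multiplied by strong-scale quantities and is controlled by (ii). You would need to add such a deflation step (or an equivalent device) to close the argument. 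A minor further slip: your sampling-error bookkeeping for (ii), ``$n^{-1/2}p^{1-\delta_0/2}\cdot p^{\delta_0/2}$ after the sandwich,'' equals $n^{-1/2}p$ and, divided by the $p^{\delta_0}$ gap, would give $n^{-1/2}p^{1-\delta_0}$ rather than the stated $n^{-1/2}p^{(1-\delta_0)/2}$; the correct intermediate bound is $n^{-1/2}p^{(1+\delta_0)/2}$.
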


The requirements on $q$ are satisfied for fixed $q$ under Condition~\ref{cond.cov_a}, and also allow $q$ to diverge to infinity if $\delta_0<1$ or $n=o(p)$.
Similar to the arguments in Remark~\ref{rmk.thm1}, Theorems~\ref{advauto}(i) and \ref{advour}(i) can guarantee the consistency of the corresponding ratio-based estimators for $r_0,$ which, combined with Theorems~\ref{advauto}(ii) and \ref{advour}(ii), can ensure that the respective loading space estimates achieve the same convergence rates. Therefore, we rely on the results in Theorems~\ref{advauto} and \ref{advour} to compare the theoretical properties of the standard and weight-calibrated autocovariance-based methods.

\begin{remark}
\label{remark.eigengap}
    Comparing the asymptotic properties of eigenvalues in Theorems~\ref{advauto}(i) and \ref{advour}(i) yields that $\hat{\lambda}_{kj}\asymp\hat{\mu}_{ki}^{1/2}$ for $j\in[r_0]$ with probability tending to 1 and $\hat{\lambda}_{kj}=O_p(\hat{\mu}_{kl}^{1/2})$ for $j>r.$ Moreover $\hat{\lambda}_{kj}=o_p(\hat{\mu}_{kj}^{1/2})$ for $j=r_0+1,\dots,r$ if $\delta_{1}>\delta_{1k}.$ 
    Therefore, compared to $\widehat{\bOmega}_{y}(k)\widehat{\bOmega}_{y}(k)^{\top},$ our weight-calibrated version $\widehat{\bOmega}_{y}(k)\widehat{\bW}\widehat{\bOmega}_{y}(k)^{\top}$ 
    results in a higher order of relative decrease from the $r_0$-th to the $(r_0+1)$-th eigenvalues and a smaller order of relative decrease from the $r$-th and $(r+1)$-th eigenvalues,
    which enhance its capability to separate $\bx_t$ from $\bz_t$.
    Specifically, by Theorem~\ref{advour}(i) with a lower bound correction $\vartheta_{n}\asymp n^{-1}p$, we have $(\hat{\lambda}_{kr_0}+\vartheta_{n})/(\hat{\lambda}_{k,r_0+1}+\vartheta_{n})\asymp p^{\delta_0+\delta_1-2\delta_{1k}}$ and $(\hat{\lambda}_{kr}+\vartheta_{n})/(\hat{\lambda}_{k,r+1}+\vartheta_{n})\asymp np^{2\delta_{1k}-\delta_1-1}$ with probability tending to 1. 
    If $n=o(p^{\delta_0+2\delta_1-4\delta_{1k}+1})$, then $(\hat{\lambda}_{kr}+\vartheta_{n})/(\hat{\lambda}_{k,r+1}+\vartheta_{n})=o_p\{(\hat{\lambda}_{kr_0}+\vartheta_{n})/(\hat{\lambda}_{k,r_0+1}+\vartheta_{n})\},$ which indicates that the proposed method can distinguish $\bx_t$ from $\bz_t$ with high probability. Combined with Remark~\ref{rmk.auto}, there exists an overlap, $p^{\delta_0-2\delta_{1k}+1}\ll n\ll p^{\delta_0+2\delta_1-4\delta_{1k}+1}$ when $\delta_1>\delta_{1k}$, such that our weight-calibrated autocovariance-based method can distinguish $\bx_t$ from $\bz_t$ with high probability, whereas the standard autocovariance-based method cannot.
\end{remark}

\begin{remark}
    For the loading space estimation of $\bA,$ despite the rate of $\widehat{\bPhi}_{k,A}$ being not faster than that of $\widehat{\bPsi}_{k,A}$ according to Theorems~\ref{advauto}(ii) and \ref{advour}(ii), when $\delta_0=\delta_1$ or Condition~\ref{cond.auto_a} is replaced by $\|\bOmega_{z}(k)\|=O(n^{-1/2}p^{(1+\delta_1)/2})$ (see Corollaries~\ref{coro.auto} and \ref{coro.new}), the standard and  weight-calibrated autocovariance-based methods achieve the same rate. Our simulations provide empirical evidence of improved estimation performance for $\bA$ using the proposed method.
\end{remark}

Following an analogy to Remark~\ref{rmk.MA_pro}, we present the following corollary for our method. 

\begin{corollary}
    \label{coro.new}
    Let Conditions~\ref{cond.A_a}--\ref{cond.auto_a00} hold, $q=O(n^{-1}p^{2-\delta_0})$, $r_0<q\le c'(p\wedge n)$, and $\|\bOmega_{z}(k)\|=O(n^{-1/2}p^{(1+\delta_1)/2})$ for each $k\in[m].$ \\
    (i) $\hat{\lambda}_{k1} \asymp  \hat{\lambda}_{kr_0} \asymp p^{\delta_0}$ with probability tending to 1, and $\hat{\lambda}_{k,r_0+1}=O_p(n^{-1}p);$ \\
    (ii)$
    \big\|\widehat{\bPhi}_{k,A}\widehat{\bPhi}_{k,A}^{\top}-\bA\bA^{\top}\big\|=O_p(n^{-1/2}p^{(1-\delta_{0})/2}).$
\end{corollary}

\subsection{Summary of theoretical results}
\label{subsec.summary}
We compare the effectiveness of three competing methods in distinguishing $\bx_t$ from $\bz_t$ and $\be_t$, as well as in estimating $\cC(\bA)$, based on the results in Theorems~\ref{advsam1}--\ref{advour} and Corollaries~\ref{coro.auto}--\ref{coro.new} as follows:

\begin{itemize}
\item Theorem~\ref{advsam1}(i) reveals that when $\delta_0=\delta_1$, the covariance-based method cannot separate $\bx_t$ from $\bz_t$ with high probability. In contrast, both the standard and weight-calibrated autocovariance-based methods can achieve this separation even when $\delta_0=\delta_1$, as demonstrated in Theorems~\ref{advauto}(i) and \ref{advour}(i). 

\item Theorems~\ref{advauto}(i) and \ref{advour}(i) respectively show that, with probability tending to 1, 
$$
\frac{\hat{\mu}_{kr_0}/\hat{\mu}_{k,r_0+1}}{\hat{\mu}_{kr}/\hat{\mu}_{k,r+1}}\asymp n^{-2}p^{2+2\delta_0-4\delta_{1k}},~~\frac{(\hat{\lambda}_{kr_0}+\vartheta_n)/(\hat{\lambda}_{k,r_0+1}+\vartheta_n)}{(\hat{\lambda}_{kr}+\vartheta_n)/(\hat{\lambda}_{k,r+1}+\vartheta_n)}\asymp n^{-1}p^{1+\delta_0+2\delta_1-4\delta_{1k}},
$$ 
where $\vartheta_n\asymp n^{-1}p$ provides a lower bound correction to $\hat{\lambda}_{kj}$ for $j>r$. When $p^{1+\delta_0-2\delta_1}=o(n)$, we have 
$\hat{\mu}_{kr_0}\hat{\mu}_{k,r+1}/\hat{\mu}_{k,r_0+1}\hat{\mu}_{kr}=o_p\{(\hat{\lambda}_{kr_0}+\vartheta_n)(\hat{\lambda}_{k,r+1}+\vartheta_n)/(\hat{\lambda}_{k,r_0+1}+\vartheta_n)(\hat{\lambda}_{kr}+\vartheta_n)\}.$ This demonstrates that our proposed method can distinguish $\bx_t$ from $\bz_t$ more effectively, highlighting its theoretical superiority over the standard autocovariance-based method. It is worth noting that, when $\delta_0=\delta_1$, the requirement $p^{1+\delta_0-2\delta_1}=o(n)$ is automatically satisfied under Condition~\ref{cond.cov_a}(ii).

\item In estimating $\cC(\bA),$ Theorems~\ref{advauto}(ii) and \ref{advour}(ii) show that our method does not converge faster than the standard autocovariance-based method. When $\delta_0=\delta_1$ or Condition~\ref{cond.auto_a} is replaced by $\|\bOmega_{z}(k)\|=O(n^{-1/2}p^{(1+\delta_1)/2})$ (see Corollaries~\ref{coro.auto} and \ref{coro.new}), both methods achieve the same convergence rate. Together with Theorem~\ref{advsam1}(ii), these results indicate that the rate of the covariance-based method is not uniformly faster or slower than those of the two autocovariance-based methods, but depends on the relationship among $\delta_0$, $\delta_1$, and $\delta_{1k}$. When $\delta_0=\delta_1$, however, the covariance-based method yields inconsistent estimation, whereas the two autocovariance-based methods remain consistent.
\end{itemize}

\section{Simulations}
\label{sec.sim}

\subsection{The model with uniform factor strength}
\label{sim.bfm}
Consider the factor model
\begin{equation}
    \label{eq.model_ex1}
    \by_t=\widetilde{\bA}\widetilde{\bx}_t+\be_t,~~t\in[n],
\end{equation}
where the entries of $\widetilde{\bA} \in \eR^{p \times r_0}$ are sampled independently from the uniform distribution on $(-p^{-(1-\delta_0)/2},p^{-(1-\delta_0)/2})$, following a similar sampling procedure as in \cite{wang2019}. Each component series of $\widetilde{\bx}_t$ follows an AR(1) process with independent ${\cal N}(0,1)$ innovations, and the autoregressive coefficient drawn from the uniform distribution on $\{(-0.95,-0.7)\cup (0.7,0.95)\}.$ The $p$-vector of the idiosyncratic error $\be_t$ is generated by a vector MA(1) model $\be_t=\bvarepsilon_{t}+\bPi \bvarepsilon_{t-1},$ where the coefficient matrix $\bPi=(0.6^{|i-j|})_{p\times p},$ and each component series of the $p$-vector $\bvarepsilon_t$ follows a MA(1) process with independent ${\cal N}(0, 1)$ innovations, and the moving-average coefficient drawn from the uniform distribution on $\{(-0.15,-0.05)\cup(0.05,0.15)\}$.
To incorporate model~\eqref{eq.model_ex1} into our framework, we let the columns of $\bA\in\eR^{p\times r_0}$ consist of the $r_0$ leading left singular vectors of $\widetilde{\bA}$, which ensures  $\cC(\bA)=\cC(\widetilde{\bA})$ and $\bA^{\top}\bA=\bI_{r_0},$ and results in the decomposition  $\widetilde{\bA}=\bA\bC.$ By letting $\bx_t=\bC\widetilde{\bx}_t$, it is easy to check that $\{\bx_t\}$ fulfills the factor strength requirements in Conditions~\ref{cond.cov}(i) and \ref{cond.auto}. Then, model~\eqref{eq.model_ex1} can be rewritten as $\by_t=\bA\bx_t+\be_t,$ aligning with model~\eqref{eq.model} with uniform factor strength. We set in model~\eqref{eq.model_ex1} $n=300,$ $p=50,100,200$ and $400,$ and $r_0=3$ with factor strength ranging from relatively weak to strong, specifically $\delta_0=0.75$ and $1.$  

We compare the proposed weight-calibrated autocovariance-based method, referred to as WAuto for simplicity, with two competing methods: the covariance-based method (denoted as Cov) and the standard autocovariance-based method (denoted as Auto). We evaluate the sample performance of these three methods in estimating the number of factors $r_0$ via the corresponding eigenvalue-ratio-based methods and the factor loading space $\cC(\bA),$ which are respectively measured by the relative frequency estimate for $\eP(\hat{r}_{0}=r_0)$ and the estimation error using the distance $\cD\big\{\cC(\widehat{\bA}), \cC(\bA)\big\},$ defined according to \eqref{def.distance}. 
Since our experimental results are insensitive to the choice of $m$ when implementing Auto and WAuto, we set $m=2.$ We ran each simulation 1000 times.

\begin{table}[ht]
    \small
    \centering
    \caption{The relative frequency estimate of $\eP(\hat{r}_{0}=r_0)$ and the average of $\hat{r}_{0}$ (in parentheses) for model~\eqref{eq.model_ex1} over 1000 simulation runs.}
    \label{tab.num}{
    \begin{tabular}{l|ccc|ccc}
            \hline
            & \multicolumn{3}{c|}{$\delta_0=0.75$} & \multicolumn{3}{c}{$\delta_0=1$} \\
          & Cov & Auto & WAuto & Cov & Auto & WAuto \\
          \hline
          $p=50$ & 0.098  & 0.647  & 0.899  & 0.651  & 0.961  & 0.998  \\
          & (1.635) & (2.430) & (2.855) & (2.432) & (2.937) & (2.997) \\
    $p=100$ & 0.282  & 0.730  & 0.942  & 0.948  & 0.991  & 1.000  \\
          & (1.812) & (2.549) & (2.922) & (2.910) & (2.984) & (3.000) \\
    $p=200$ & 0.605  & 0.791  & 0.961  & 0.991  & 0.991  & 0.999  \\
          & (2.341) & (2.649) & (2.940) & (2.983) & (2.983) & (2.999) \\
    $p=400$ & 0.851  & 0.873  & 0.965  & 1.000  & 0.999  & 1.000  \\
          & (2.749) & (2.792) & (2.939) & (3.000)   & (2.998) & (3.000) \\
        \hline
    \end{tabular}}
\end{table}

\begin{table}[ht]
    \small
    \centering
    \caption{The mean and standard deviation (in parentheses) of $\cD\big\{\cC(\widehat{\bA}),\cC(\bA)\big\}$ for model~\eqref{eq.model_ex1} over 1000 simulation runs.} 
    \label{tab.loading}{
    \begin{tabular}{l|ccc|ccc}
            \hline
             & \multicolumn{3}{c|}{$\delta_0=0.75$} & \multicolumn{3}{c}{$\delta_0=1$} \\
          & Cov & Auto & WAuto & Cov & Auto & WAuto \\
          \hline
              $p=50$ & 0.724  & 0.370  & 0.230  & 0.345  & 0.133  & 0.109  \\
          & (0.152) & (0.276) & (0.165) & (0.292) & (0.124) & (0.037) \\
    $p=100$ & 0.606  & 0.341  & 0.223  & 0.137  & 0.117  & 0.109  \\
          & (0.244) & (0.254) & (0.121) & (0.148) & (0.066) & (0.022) \\
    $p=200$ & 0.427  & 0.321  & 0.233  & 0.097  & 0.118  & 0.111  \\
          & (0.267) & (0.227) & (0.107) & (0.068) & (0.068) & (0.026) \\
    $p=400$ & 0.281  & 0.285  & 0.243  & 0.085  & 0.110  & 0.108  \\
          & (0.201) & (0.180) & (0.108) & (0.011) & (0.030) & (0.020) \\
        \hline
    \end{tabular}}
\end{table}

Table~\ref{tab.num} presents the average relative frequencies of $\hat r_0=r_0$ and the average $\hat r_0.$ Table~\ref{tab.loading} reports the numerical summaries of the corresponding estimation errors for $\cC(\bA).$
A few trends are observable from Tables~\ref{tab.num} and \ref{tab.loading}. 
Firstly, our proposed WAuto consistently outperforms Auto across all settings, significantly improving the estimation accuracy for both $r_0$ and $\cC(\bA),$ and thus demonstrating the effectiveness of calibrating weight in the autocovariance-based estimation. For example, when $\delta_0=0.75$ and $p=50$, the average relative frequency of $\hat r_0=r_0$ increases from 0.647 to 0.899, while the average $\cD\big\{\cC(\widehat{\bA}),\cC(\bA)\big\}$ decreases from 0.37 to 0.23. Such good performance of WAuto provides empirical evidence to validate the established asymptotic results in Propositions~\ref{propos.est_r} and \ref{propos.U_A}. 
Secondly, when the signal of common factors is weak relative to the idiosyncratics, corresponding to smaller values of $\delta_0$ and $p,$ the idiosyncratic errors exhibit cross-sectional correlations comparable to those of the factors in finite dimensions. This diminishes the capability of Cov to separate $\bx_t$ from $\be_t$, leading to inferior performance compared to Auto and WAuto, which achieve more effective separation. In contrast, under cases of relatively strong signals with larger values of $\delta_0$ and $p,$ Cov can often successfully identify the factors and result in the best performance, such as when $\delta_0=1$ and $p=200,400.$
Thirdly, we observe the phenomenon of the ``blessing of dimensionality'' when estimating $r_0$ in the sense of improved accuracy as $p$ increases, which is due to the increased information from the added components on the factors.

\subsection{The model with different factor strengths}
\label{sim.gfm}
Consider the factor model
\begin{equation}
    \label{eq.model_ex2}
    \by_t=\widetilde{\bA}\widetilde{\bx}_t+\widetilde{\bB}\widetilde{\bz}_t+\be_t,~~t\in[n],
\end{equation}
where the entries of $\widetilde{\bA}\in\eR^{p\times r_0}$ and $\widetilde{\bB}\in\eR^{p\times r_1}$ are sampled independently from the uniform distributions on $(-p^{-(1-\delta_0)/2},p^{-(1-\delta_0)/2})$ and $(-p^{-(1-\delta_1)/2},p^{-(1-\delta_1)/2})$, respectively. 
The component series of $\bx_t$ and $\bz_t$ respectively follow AR(1) and MA(1) models with independent $\cN(0,0.2^2)$ innovations, and the corresponding autoregressive and moving-average coefficients drawn from the uniform distribution on $(-0.95,-0.85)\cup(0.85,0.95)$. Each component series of $\be_t$ is generated independently from standard normal.
Similar to the formulation in Section~\ref{sim.bfm}, we can decompose $\widetilde{\bA}$ and $\widetilde{\bB}$ as $\widetilde{\bA}=\bA\bC_1$ and $\widetilde{\bB}=\bB\bC_2$ such that $\cC(\bA)=\cC(\widetilde{\bA}),\bA^{\top}\bA=\bI_{r_0},\cC(\bB)=\cC(\widetilde{\bB})$ and $\bB^{\top}\bB=\bI_{r_1}$. By letting $\bx_t=\bC_1\widetilde{\bx}_t$ and $\bz_t=\bC_2\widetilde{\bz}_t$, we can rewrite model~\eqref{eq.model_ex2} as $\by_t=\bA\bx_t+\bB\bz_t+\be_t,$ which takes the same form of factor model \eqref{eq.model_2}. Under our simulation setup, the strong and weak factor series $\{\bx_t\}$ and $\{\bz_t\}$ can be easily verified to satisfy the corresponding factor strength requirements in Conditions~\ref{cond.cov_a}(i), \ref{cond.auto_a00}, and \ref{cond.auto_a}(i). We generate $n=400$ serially dependent observations of $p=50,100,300$ and $500$ variables based on $r_0=3$ strong factors and $r_1=3$ weak factors with the respective strengths $(\delta_0,\delta_1)=(1,1)$ and $(\delta_0,\delta_1)=(1,0.85).$ Notable, under the first scenario $\delta_0=\delta_1,$ Cov fails to separate $\bx_t$ from $\bz_t$ according to Remark~\ref{rmk.cov}. Thus, the relaxed second scenario $\delta_0>\delta_1$ is also considered, where Cov is able to achieve such separation.

To better validate the theoretical results established in Section~\ref{sec.theory_weak}, in addition to three competing methods, we also compare the following methods: the standard autocovariance-based method using lag $1$ only (denoted as Auto1) and its weight-calibrated version (denoted as WAuto1), which estimate $\bA$ and $r_0$ by the corresponding eigenanalysis of $\widehat{\bOmega}_{y}(1)\widehat{\bOmega}_{y}(1)^{\top}$ and $\widehat{\bOmega}_{y}(1)\widehat{\bW}\widehat{\bOmega}_{y}(1)^{\top},$ and, similarly, two autocovariance-based methods using lag $2$ only (respectively denoted as Auto2 and WAuto2). For a fair comparison, we set $m=2$ for Auto and WAuto. All simulation results are based on 1000 replications.

Table~\ref{tab.num_new} reports the relative frequencies of $\hat{r}_{0}=r_0$ and the average $\hat{r}_{0}$. We observe several apparent patterns. 
Firstly, when strong and weak factors exhibit the same strength of cross-sectional correlations (i.e., $\delta_0=\delta_1=1$), Cov completely fails to estimate $r_0$ especially for large $p$, often misidentifying 6 strong factors. This aligns with Theorem~\ref{advsam1}(i). Additionally, Auto1 also fails to distinguish $\bx_t$ from $\bz_t$ effectively, whereas Auto2 performs very well, as guaranteed by Theorem~\ref{advauto}(i). This distinct performance arises from $\delta_{11}=1$ and $\delta_{12}=0$ for $\{\bz_t\}$, which follows a vector MA(1) process. 
Secondly, when $\delta_0=\delta_1=1$, our WAuto methods substantially improve the accuracy of estimating $r_0$ compared to Auto methods, demonstrating their undeniable advantage in distinguishing factors with the same strength of cross-sectional correlations but different strengths of serial correlations. Specifically, compared to Auto1, WAuto1 effectively corrects over-identified $\hat r_0$ to three strong factors with a high proportion. This confirms the result in Theorem~\ref{advour}(i) and the discussion in Remark~\ref{remark.eigengap}. Moreover, our weight-calibrated strategy also enhances the performance of Auto2.
For example, when $p=500,$ our method increases the relative frequency estimates of $\eP(\hat{r}_{0}=r_0)$ from 0.876 to 0.966 for Auto, from 0.532 to 0.913 for Auto1, and from 0.945 to 0.987 for Auto2. Meanwhile, the average $\hat{r}_{0}$ decreases from 3.914 to 2.972 for Auto1, and increases from 2.800 to 2.948 for Auto, and from 2.910 to 2.981 for Auto2. 
Thirdly, under the relaxed scenario where $\delta_0=1$ and $\delta_1=0.85,$ although Cov is able to distinguish $\bx_t$ from $\bz_t$ especially for large $p$, the proposed WAuto still outperforms the competitors across all settings, highlighting its broad empirical superiority.

\begin{table}[H]
    \small
    \centering
    \caption{ The relative frequency estimate of $\eP(\hat{r}_{0}=r_0)$ and the average of $\hat{r}_{0}$ (in parentheses) for model~\eqref{eq.model_ex2} over 1000 simulation runs.}
    \label{tab.num_new}{
    \begin{tabular}{l|ccccccc}
            \hline
          & Cov & Auto & WAuto & Auto1 & WAuto1 & Auto2 & WAuto2 \\
          \hline
          & \multicolumn{7}{c}{$\delta_0=1,~~~\delta_1=1$}\\
           $p=50$ & 0.322  & 0.664  & 0.839  & 0.634  & 0.793  & 0.700  & 0.823  \\
                 & (1.983) & (2.474) & (2.772) & (2.438) & (2.771) & (2.532) & (2.758) \\
           $p=100$ & 0.345  & 0.773  & 0.915  & 0.721  & 0.853  & 0.834  & 0.934  \\
                 & (2.703) & (2.664) & (2.881) & (2.580) & (2.797) & (2.752) & (2.910) \\
           $p=300$ & 0.073  & 0.830  & 0.952  & 0.687  & 0.892  & 0.918  & 0.974  \\
                 & (5.359) & (2.724) & (2.926) & (3.117) & (2.916) & (2.867) & (2.961) \\
           $p=500$ & 0.019  & 0.876  & 0.966  & 0.532  & 0.913  & 0.945  & 0.987  \\
                 & (5.844) & (2.800) & (2.948) & (3.914) & (2.972) & (2.910) & (2.981) \\
          \hline
          & \multicolumn{7}{c}{$\delta_0=1,~~~\delta_1=0.85$} \\
           $p=50$ & 0.580  & 0.795  & 0.924  & 0.821  & 0.913  & 0.752  & 0.835  \\
                 & (2.338) & (2.681) & (2.893) & (2.723) & (2.890) & (2.617) & (2.774) \\
           $p=100$ & 0.793  & 0.920  & 0.969  & 0.925  & 0.973  & 0.908  & 0.936  \\
                 & (2.669) & (2.881) & (2.958) & (2.890) & (2.964) & (2.864) & (2.910) \\
           $p=300$ & 0.945  & 0.977  & 0.990  & 0.976  & 0.992  & 0.980  & 0.984  \\
                 & (2.906) & (2.962) & (2.982) & (2.960) & (2.985) & (2.966) & (2.976) \\
           $p=500$ & 0.979  & 0.989  & 0.995  & 0.989  & 0.995  & 0.990  & 0.993  \\
                 & (2.964) & (2.982) & (2.992) & (2.982) & (2.992) & (2.984) & (2.990) \\
        \hline
    \end{tabular}}
\end{table}

\begin{table}[H]
    \small
    \centering
    \caption{ The mean and standard deviation (in parentheses) of $\cD\big\{\cC(\widehat{\bA}),\cC(\bA)\big\}$ for model~\eqref{eq.model_ex2} over 1000 simulation runs.}
    \label{tab.loading_new}{
    \begin{tabular}{l|ccccccc}
            \hline
          & Cov & Auto & WAuto & Auto1 & WAuto1 & Auto2 & WAuto2 \\
          \hline
        & \multicolumn{7}{c}{$\delta_0=1,~~~\delta_1=1$}\\
        $p=50$ & 0.583  & 0.402  & 0.319  & 0.425  & 0.351  & 0.409  & 0.357  \\
                 & (0.239) & (0.241) & (0.177) & (0.238) & (0.195) & (0.220) & (0.173) \\
           $p=100$ & 0.556  & 0.340  & 0.275  & 0.370  & 0.308  & 0.335  & 0.295  \\
                 & (0.253) & (0.208) & (0.134) & (0.224) & (0.170) & (0.177) & (0.114) \\
           $p=300$ & 0.687  & 0.312  & 0.252  & 0.384  & 0.281  & 0.288  & 0.266  \\
                 & (0.147) & (0.200) & (0.112) & (0.247) & (0.162) & (0.142) & (0.084) \\
           $p=500$ & 0.712  & 0.290  & 0.246  & 0.457  & 0.270  & 0.275  & 0.260  \\
                 & (0.077) & (0.174) & (0.095) & (0.263) & (0.146) & (0.120) & (0.064) \\
          \hline
    & \multicolumn{7}{c}{$\delta_0=1,~~~\delta_1=0.85$}\\
     $p=50$ & 0.438  & 0.335  & 0.275  & 0.332  & 0.291  & 0.382  & 0.349  \\
                 & (0.257) & (0.206) & (0.128) & (0.191) & (0.135) & (0.208) & (0.170) \\
           $p=100$ & 0.320  & 0.267  & 0.245  & 0.267  & 0.248  & 0.298  & 0.290  \\
                 & (0.217) & (0.136) & (0.085) & (0.131) & (0.080) & (0.139) & (0.115) \\
           $p=300$ & 0.230  & 0.234  & 0.230  & 0.233  & 0.227  & 0.255  & 0.257  \\
                 & (0.131) & (0.084) & (0.063) & (0.086) & (0.059) & (0.080) & (0.070) \\
           $p=500$ & 0.210  & 0.228  & 0.227  & 0.225  & 0.224  & 0.249  & 0.252  \\
                 & (0.084) & (0.061) & (0.046) & (0.061) & (0.046) & (0.059) & (0.052) \\
        \hline
    \end{tabular}}
\end{table}

Table~\ref{tab.loading_new} presents numerical summaries for $\cD\big\{\cC(\widehat{\bA}),\cC(\bA)\big\}$, revealing several notable trends. 
Firstly, when $\delta_0=\delta_1=1$, WAuto consistently achieves the most accurate recovery of the factor loading space for all settings we consider.
Secondly, when $\delta_0=1,\delta_1=0.85$ and $p=50,100,300,$ WAuto continues to perform the best, while for $p=500,$ Cov leads to the highest accuracy. This aligns with the findings in Section~\ref{sim.bfm} that Cov achieves the best performance when the signal is relatively strong. 
Thirdly, WAuto, which aggregates the dynamic information across different lags, is superior to both WAuto1 and WAuto2 in most cases, highlighting  the benefit of aggregating more autocovariance information. 
Note that $\widehat{\bA}$ is defined using $\hat{r}_0$. 
For comparison with the estimation results of the loading space with known $r_0$, see Section~\ref{supsubsec.true} of the supplementary material.

\section{Real data analysis}
\label{sec.real}

\subsection{Daily returns for S\&P 500 stocks}
\label{subsec.sp500}

The first dataset consists of the daily returns of S\&P 500 component stocks from January 2, 2002 to December 31, 2007 encompassing $n=1510$ trading days. We removed stocks that were not traded on every trading day during the period, resulting in a total of $p=375$ stocks in our analysis. A similar dataset was previously analyzed using the standard autocovariance-based method in \cite{lam2012factor}. 
Before further analysis, each component series is standardized to have zero mean and unit variance. Denote the $p$-vector of stock return series as $\{\by_t\}_{t=1}^{n}$.
To examine whether $\{\by_t\}$ has a finite fourth moment as in Condition~\ref{cond.x}(ii), we employ the Hill's tail-index estimator \cite[]{hsing1991tail}, which yields a 95\% confidence interval of $(3.847, 7.433)$. Therefore, the null hypothesis that $\{\by_t\}$ has a finite fourth moment cannot be rejected.
Since the returns exhibit very small lag-$k$ autocorrelations beyond $k=2,$ we use $m=2$ in our estimation. With $q$ selected by the generalized BIC in \eqref{eq.bic}, we employ the ratio-based method for both Auto and WAuto, yielding the estimated number of strong factors as $\tilde{r}_0=2$ and $\hat{r}_0=1,$ respectively. 

Figures~\ref{fig.eigen}(a) and (b) plot the ratios of cumulative weighted eigenvalues defined in the form of \eqref{eq.deter_r} for Auto and WAuto, respectively. It can be observed from Figure~\ref{fig.eigen}(b) that there exist two or three weak factors when implementing WAuto, as the 3rd and 4th ratios are significantly larger than the 2nd and 5th ratios. To further estimate the number of weak factors and the corresponding loading matrix, we apply the two-step estimation procedure introduced by \cite{lam2012factor}. 
Specifically for the two-step WAuto, let $\widehat{\bA} \in {\mathbb R}^{p \times \hat r_0}$ be the estimate of $\bA$ in \eqref{eq.model_2} and $\widehat\bx_t=\widehat{\bA}^{\top} \by_t$ be estimated strong factors based on the original time series $\{\by_t\}_{t \in [n]}$ in the first step. Define the remaining time series as $\by_t^*=\by_t - \widehat \bA \widehat \bx_t = \by_t-\widehat{\bA}\widehat{\bA}^{\top}\by_t$ for $t\in[n].$ In the second step, we apply the same estimation procedure to $\{\by_t^*\}_{t=1}^{n}$, thus obtaining the estimated number of weak factors as $\hat r_1$ and the corresponding estimated loading matrix as $\widehat{\bB}\in\eR^{p\times \hat r_1}.$ The resulting ratios of cumulative weighted eigenvalues in the second step for Auto and WAuto are displayed in Figures~\ref{fig.eigen}(c) and (d), respectively, indicating the corresponding estimated number of weak factors as $\tilde{r}_1=1$ and $\hat{r}_1=3.$

\begin{figure}[ht]
    \begin{center}
    \includegraphics[width=1\linewidth]{./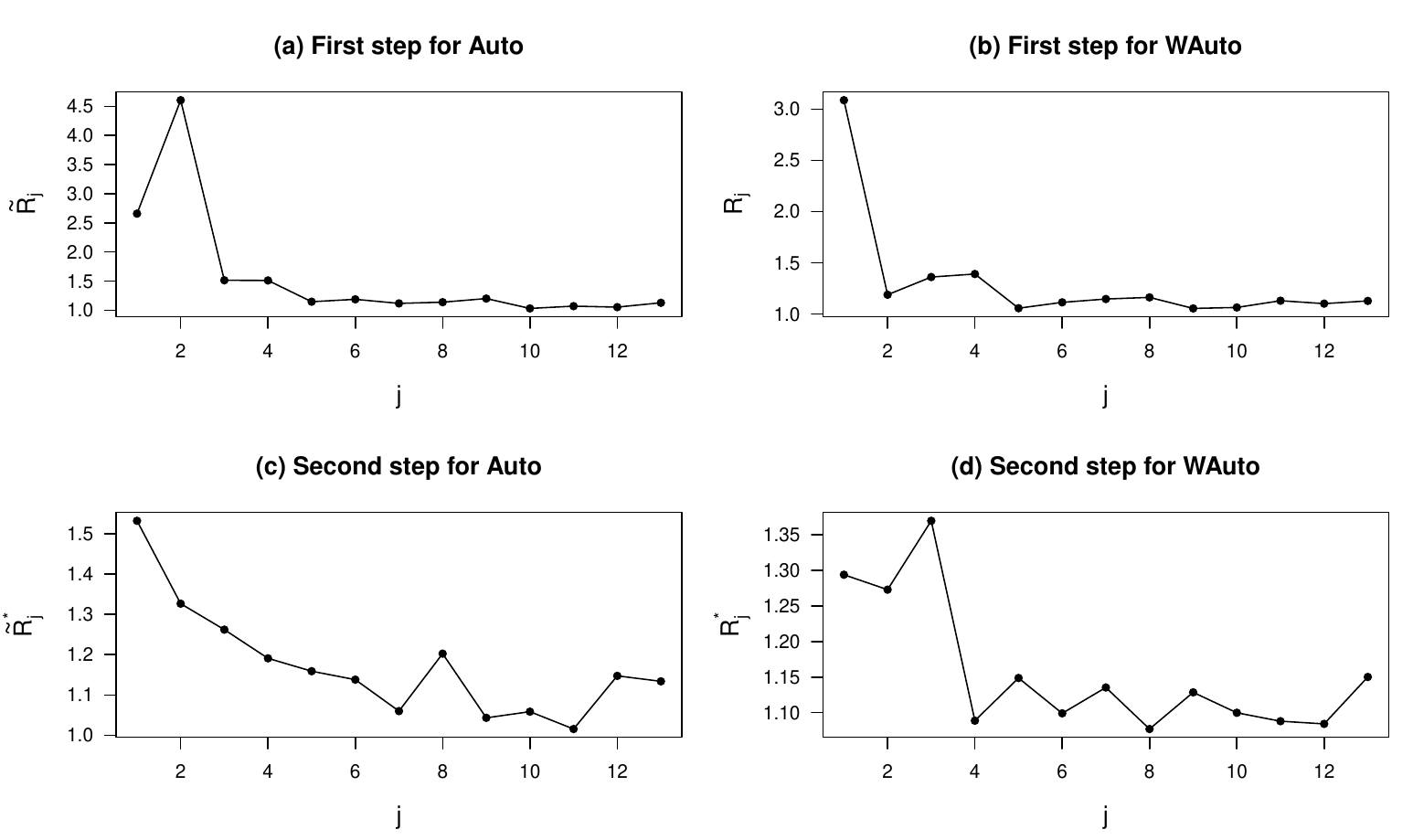}
    \end{center}
    \vspace{-2em}
    \caption{Plots of ratios of adjacent cumulative weighted eigenvalues in the first and second steps for Auto and WAuto. With such ratio proposed in \eqref{eq.deter_r} for WAuto, $\widetilde{R}_j,R_j^*$ and $\widetilde{R}_j^*$ can be defined analogously using $\{\hat{\mu}_{kj}\}$, $\{\hat{\lambda}_{kj}^*\}$ and $\{\hat{\mu}_{kj}^*\}$, respectively, where $\{\hat{\lambda}_{kj}^*\}$ and $\{\hat{\mu}_{kj}^*\}$ represent the corresponding eigenvalues obtained in the second step when applying WAuto or Auto to $\{\by_t^*\}_{t=1}^{n}$.}
    \label{fig.eigen}
\end{figure}

\begin{figure}[ht]
    \begin{center}
    \includegraphics[width=0.93\linewidth]{./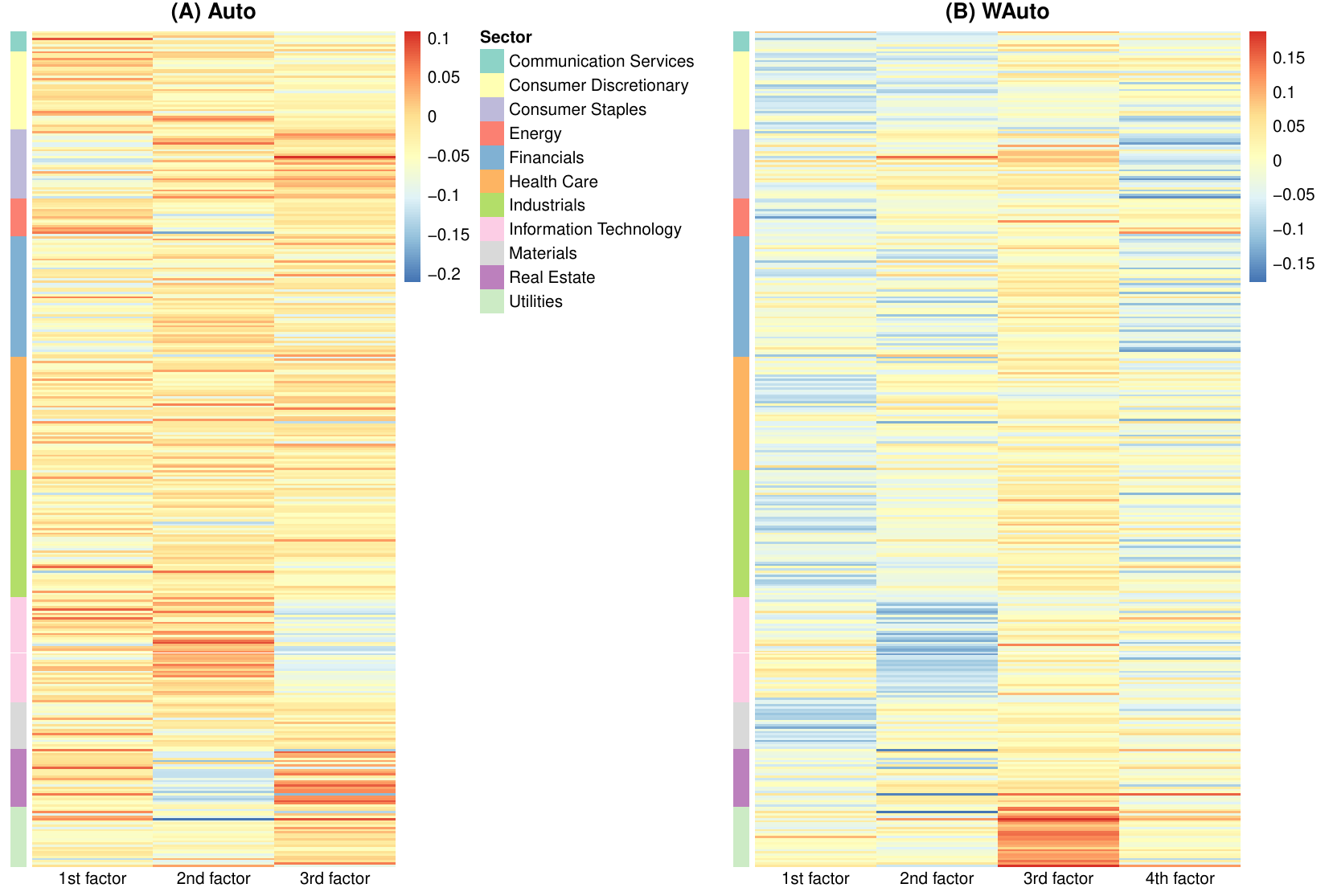}
    \end{center}
    \caption{ The left and right heatmaps display the varimax-rotated loadings of the first $(\tilde r_0 + \tilde r_1)$ and $(\hat r_0 + \hat r_1)$ identified factors for Auto and WAuto, respectively.}
    \label{fig.heat}
\end{figure}

Figure~\ref{fig.heat} displays the heatmaps of the loading matrices for the strong and weak factors identified by Auto and WAuto, using the varimax procedure to maximize the sum of the variances of the squared loadings. The varimax rotation enhances the color distinction in the heatmaps for better interpretation. The stocks are sorted according to their respective industrial sectors.
It is apparent that the four factors identified by WAuto serve as the main driving force for the dynamics of certain sectors. Specifically, the 1st factor mainly influences the dynamics of
Consumer Discretionary, Energy, Industrials, and Materials, as well as some stocks in Health
Care. The 2nd factor has a significantly impact on the dynamics of Information Technology, while Utilities is predominately loaded on the 3rd factor. Additionally, the dynamics of Financials are driven by both the 2nd and 4th factors, whereas the dynamics of Consumer Staples are influenced by both the 3rd and 4th factors. For Auto, although its three identified factors can also be viewed as the main driving force for the dynamics of specific sectors, its inferior intepretability is evident from the less clear sector-specific clustering structure of the factor loadings compared to WAuto, especially for the 1st factor. Moreover, the 2nd and 3rd factors have opposite impacts on the dynamics of Energy, Information Technology and Real Estate, which could potentially be merged into a single factor. In summary, WAuto provides more intepretable results for the model in presence of factors with different strengths.

\subsection{U.S. monthly macroeconomic data}
\label{subsec.fred}
Our second example considers incorporating the factor model framework into macroeconomic forecasting. The dataset, provided by \cite{mccracken2016fred}, contains $p=119$  U.S. monthly macroeconomic variables covering broad economic categories from January 1960 to September 2024 ($n=777$). Before further analysis, each macro variable is transformed as recommended in \cite{mccracken2016fred} to ensure stationarity and then standardized to have zero mean and unit variance. Denote the $p$-vector of macroeconomic variables series as $\{\by_t\}_{t\in[n]}$.
The corresponding 95\% confidence interval for the Hill's tail-index estimator is $(2.059, 4.554)$, indicating that the null hypothesis that $\{\by_t\}$ has a finite fourth moment cannot be rejected.
Following \cite{fan2013large} and \cite{wang2019}, we try $\hat{r}_0=1,3$ and 5 to check its effect on the forecasting performance of WAuto, and compare it with Cov and Auto. For each method, the $h$-step-ahead forecasting procedure consists of the following four steps:
\begin{itemize}
    \item[1.] For a given $\hat{r}_0$, apply the corresponding method to estimate model~\eqref{eq.model} based on past observations $\{\by_t\}_{t=1}^{T}$, thus obtaining the estimated factor loading matrix $\widehat{\bA}$.
    \item[2.] Compute $\hat{r}_0$ estimated factors by $\widehat{\bx}_t=\widehat{\bA}^{\top}\by_t$ for $t\in[T]$. 
    \item[3.] For each $j\in[\hat{r}_0],$ select the best ARMA model that fits $\{\widehat{x}_{tj}\}_{t=1}^{T}$ using the AIC criterion, and based on which achieves the $h$-step-ahead forecast for the factor series as $\widehat{x}_{T+h,j}$. 
    \item [4.] Convert back to obtain the $h$-step-ahead forecast for the original series as $\widehat{\by}_{T+h}=\widehat{\bA}\widehat\bx_{T+h}$, where $\widehat\bx_{T+h}=(\hat x_{T+h,1},\dots,\hat x_{T+h,\hat{r}_0})^{\top}$.
\end{itemize}

To evaluate the forecasting performance, we employ the expanding window approach. The data is divided into a training set and a test set, consisting of the first $n_1$ and the last $n_2=n-n_1$ observations, respectively. For a positive integer $h,$ we apply each fitting method to the training set, obtain an $h$-step-ahead forecast on the test data according to the fitted model, increase the training size by one, and repeat this procedure $n_2-h+1$ times. The $h$-step-ahead mean absolute forecasting error (MAFE) and mean squared forecasting error (MSFE) are given by ${\rm MAFE}(h)=\{p\times (n_2-h+1)\}^{-1}\sum_{t=n_1+h}^{n}\sum_{j=1}^{p}|\widehat{y}_{tj}-y_{tj}|$ and ${\rm MASE}(h)=\{p\times (n_2-h+1)\}^{-1}\sum_{t=n_1+h}^{n}\sum_{j=1}^{p}(\widehat{y}_{tj}-y_{tj})^2$. 
For comparison, we consider $n_1=727,n_2=50,h=1,2,3,$ and $\hat{r}_0=1,3,5$. 
Auto and WAuto are implemented by using $m=3$, as the corresponding autocorrelations of the cross-sectionally aggregated series are statistically significant. For WAuto, we select the optimal $q$ according to the generalized BIC \eqref{eq.deter_q}. 
The resulting MAFE and MSFE values are summarized in Table~\ref{tab.forecast}. It is obvious that WAuto consistently provides the highest forecasting accuracies among the three competitors across all settings. We perform the corresponding two-sample t-tests and find that, at the $0.05$ significance level, WAuto produces statistically significantly lower forecasting errors than both Cov and Auto.
Moreover, Auto outperforms Cov in many cases, particularly when a relatively small $\hat r_0$ or large $h$ is considered. Overall, our proposed WAuto effectively captures the dynamical information, thereby achieving the best forecasting performance.

\begin{table}[ht]
    \small
    \centering
    \caption{The MAFEs and MSFEs of the proposed WAuto and two competing methods. The lowest values are in bold font.} 
    \label{tab.forecast}{
    \begin{tabular}{ll|ccc|ccc}
            \hline
             & & \multicolumn{3}{c|}{MAFE} & \multicolumn{3}{c}{MSFE} \\
             & & Cov & Auto & WAuto & Cov & Auto & WAuto \\
          \hline
          \multirow{3}{*}{$h=1$}     & $\hat{r}_0=1$     & 0.715   & 0.689  & \textbf{0.685} & 1.303   & 1.177  & \textbf{1.167} \\
          & $\hat{r}_0=3$     & 0.723   & 0.716  & \textbf{0.693} & 1.318   & 1.325  & \textbf{1.204} \\
          & $\hat{r}_0=5$     & 0.712   & 0.717  & \textbf{0.683} & 1.297   & 1.342  & \textbf{1.206} \\
          \hline
          
    \multirow{3}{*}{$h=2$}     & $\hat{r}_0=1$     & 0.693   & 0.678  & \textbf{0.674} & 1.172   & 1.139  & \textbf{1.132} \\
          & $\hat{r}_0=3$     & 0.701  & 0.697  & \textbf{0.679} & 1.184   & 1.236  & \textbf{1.148} \\
          & $\hat{r}_0=5$     & 0.699   & 0.700  & \textbf{0.671} & 1.179  & 1.247  & \textbf{1.152} \\
          \hline
          
    \multirow{3}{*}{$h=3$}     & $\hat{r}_0=1$     & 0.696   & 0.679  & \textbf{0.677} & 1.186  & 1.143  & \textbf{1.140} \\
          & $\hat{r}_0=3$     & 0.701   & 0.678  & \textbf{0.672} & 1.192   & 1.138  & \textbf{1.130} \\
          & $\hat{r}_0=5$     & 0.699   & 0.680  & \textbf{0.666} & 1.189   & 1.143  & \textbf{1.138} \\
        \hline
    \end{tabular}}
\end{table}

\section{Discussion}
\label{sec.disscuss}
We identify several important directions for future research. The first topic involves developing inferential theory for our proposed weight-calibrated autocovariance-based estimation. The existing literature on the limiting distributions of relevant estimators for the factor models primarily focuses on the covariance-based method \citep{bai2003inferential}. However, even for the standard autocovariance-based method, the corresponding inferential theory is largely untouched. Motivated by the formulation of our proposal from a reduced-rank autoregression perspective, we may potentially leverage the existing inferential results in reduced-rank regression to investigate the limiting distributions of relevant estimated quantities. For some heuristic distributional analysis under strong assumptions, see Section~\ref{supsubsec.approximation} of the supplementary material.

The second topic considers developing the weight-calibrated autocovariance-based method to improve the estimation for high-dimensional matrix factor models. In the autocovariance-based estimation \citep[e.g.,][]{wang2019}, non-negative definite matrices can be constructed using respective weight matrices, and their eigenpairs can be used to estimate the numbers of factors and the factor loading spaces. Specifically, consider a stationary matrix-valued time series $\{\bY_t\}_{t\in\eZ}$ of size $p_1\times p_2$ satisfying the factor model $\bY_t=\bR\bX_t\bC^{\top}+\bE_t$ for $t\in [n],$ where the latent matrix-valued factor time series $\{\bX_t\}$ is of size $d_1\times d_2$.
For lag $k \in [m],$ define $\bOmega_{y,ij}^{(1)}(k)=\cov(\by_{t,\cdot i},\by_{t-k,\cdot j})$ for $i,j\in[p_2]$, and $\bOmega_{y,i'j'}^{(2)}(k)=\cov(\by_{t,i'\cdot},\by_{t-k,j'\cdot})$ for $i',j'\in[p_1]$, where $\by_{t,\cdot j}$ and $\by_{t,j'\cdot }$ are the $j$-th column and the $j'$-th row of $\bY_t,$ respectively. Let $\widehat{\bOmega}_{y,ij}^{(1)}(k)$ and $\widehat{\bOmega}_{y,i'j'}^{(2)}(k)$ be the corresponding sample estimates. Define 
$$
\widehat{\bM}^{(1)}=\sum_{k=1}^{m}\sum_{i=1}^{p_2}\sum_{j=1}^{p_2}\widehat{\bOmega}_{y,ij}^{(1)}(k)\widehat{\bW}_{j}^{(1)}\widehat{\bOmega}_{y,ij}^{(1)}(k)^{\top},~~{\rm and}
~~\widehat{\bM}^{(2)}=\sum_{k=1}^{m}\sum_{i'=1}^{p_1}\sum_{j'=1}^{p_1}\widehat{\bOmega}_{y,i'j'}^{(2)}(k)\widehat{\bW}_{j'}^{(2)}\widehat{\bOmega}_{y,i'j'}^{(2)}(k)^{\top},$$ 
where $\widehat{\bW}_{j}^{(1)}=\bQ_{1j}\big\{\bQ_{1j}^{\top}\widehat{\bOmega}_{y,jj}^{(1)}(0)\bQ_{1j}\big\}^{-1}\bQ_{1j}^{\top}$, and $\widehat{\bW}_{j'}^{(2)}=\bQ_{2j'}\big\{\bQ_{2j'}^{\top}\widehat{\bOmega}_{y,j'j'}^{(2)}(0)\bQ_{2j'}\big\}^{-1}\bQ_{2j'}^{\top}$. The expressions of $\widehat{\bM}^{(1)}$ and $\widehat{\bM}^{(2)}$ are derived in a reduced-rank autoregression formulation for matrix-valued time series; see Section~\ref{supsubsec.reduced} of the supplementary material. The row (or column) factor loading space $\cC(\bR)$ (or $\cC(\bC)$) can then be estimated using the $d_1$ (or $d_2$) leading eigenvectors of $\widehat{\bM}^{(1)}$ (or $\widehat{\bM}^{(2)}$). 
Analogous to the formulation in Section~\ref{subsec.reduced}, we can specify $\bQ_{1j}$ (or $\bQ_{2j'}$) as the $q_1$ (or $q_2$) leading eigenvectors of $\widehat{\bOmega}_{y,jj}^{(1)}(0)$ (or $\widehat{\bOmega}_{y,j'j'}^{(2)}(0)$). We also expect our weight-calibrated idea can also be applied to improve the autocovariance-based estimation of factor models for high-dimensional tensor time series \citep[e.g.,][]{chen2022factor}. 

The third topic concerns an iterative extension of our method to further improve the performance. At each iteration, the matrix $\widehat{\bM}=\sum_{k=1}^{m}\widehat\bOmega_{y}(k)\widehat\bW\,\widehat\bOmega_{y}(k)^{\top}$ obtained from the previous step is used to update the projection matrix $\bQ$, and then eigenanalysis is performed in the resulting subspace. Here, $\cC(\bQ)$ can be regarded as an ``overfitted'' candidate estimator of the loading space, and the iterative procedure aims to progressively refine its accuracy. The detailed algorithm and conducted simulations are provided in Section~\ref{supsubsec.iterative} of the supplementary material.
Lastly, the proposed calibrating weight is expected to have broader applications beyond the eigenvalue-ratio estimator. For example, it can be incorporated into the information criterion for estimating the number of factors. The detailed methodology and numerical evidence are provided in Section~\ref{supsubsec.IC} of the supplementary material.

\section*{Acknowledgments}
We are grateful to the editor, the associate editor and three referees for their insightful comments and suggestions, which have led to significant improvement of our paper.

\section*{Disclosure Statement}
The authors report there are no competing interests to declare.


\linespread{1}\selectfont
\bibliographystyle{dcu}
\bibliography{main}

\newpage
\linespread{1.7}\selectfont
\begin{center}
	{\noindent \bf \large Supplementary Material to ``Weight-calibrated estimation for factor models of high-dimensional time series''}\\
\end{center}
\begin{center}
	{\noindent Xinghao Qiao, Zihan Wang, Qiwei Yao, and Bo Zhang}
\end{center}
\bigskip

\setcounter{page}{1}
\setcounter{section}{0}
\renewcommand{\thesection}{\Alph{section}}
\renewcommand{\theHsection}{\Alph{section}}
\setcounter{lemma}{0}
\renewcommand{\thelemma}{S.\arabic{lemma}}
\renewcommand{\theHlemma}{S.\arabic{lemma}}
\setcounter{equation}{0}
\renewcommand{\theequation}{S.\arabic{equation}}
\renewcommand{\theHequation}{S.\arabic{equation}}
\setcounter{theorem}{0}
\renewcommand{\thetheorem}{S.\arabic{theorem}}
\renewcommand{\theHtheorem}{S.\arabic{theorem}}
\setcounter{remark}{0}
\renewcommand{\theremark}{S.\arabic{remark}}
\renewcommand{\theHremark}{S.\arabic{remark}}
\setcounter{definition}{0}
\renewcommand{\thedefinition}{S.\arabic{definition}}
\renewcommand{\theHdefinition}{S.\arabic{definition}}
\setcounter{proposition}{0}
\renewcommand{\theproposition}{S.\arabic{proposition}}
\renewcommand{\theHproposition}{S.\arabic{proposition}}
\setcounter{figure}{0}
\renewcommand{\thefigure}{S.\arabic{figure}}
\renewcommand{\theHfigure}{S.\arabic{figure}}
\setcounter{table}{0}
\renewcommand{\thetable}{S.\arabic{table}}
\renewcommand{\theHtable}{S.\arabic{table}}
\setcounter{condition}{0}
\renewcommand{\thecondition}{S.\arabic{condition}}
\renewcommand{\theHcondition}{S.\arabic{condition}}
\setcounter{assumption}{0}
\renewcommand{\theassumption}{S.\arabic{assumption}}
\renewcommand{\theHassumption}{S.\arabic{assumption}}
\setcounter{algorithm}{0}
\renewcommand{\thealgorithm}{S.\arabic{algorithm}}
\renewcommand{\theHalgorithm}{S.\arabic{algorithm}}

This supplementary material contains technical proofs of theoretical results in Sections~\ref{supsec.A} and \ref{supsec.B}, further methodology and derivations in Section~\ref{supsec.C}, and additional simulation results in Section~\ref{supsec.D}. 

Throughout, let $n^{-1/2}\bY=n^{-1/2}(\by_1,\dots,\by_n)^{\top}=\widetilde{\bXi}\widehat{\bTheta}^{1/2}\bXi^{\top}=\sum_{j=1}^{p} \hat{\theta}_j^{1/2} \widetilde{\bxi}_j \widehat{\bxi}_j^{\top}$ be the $n \times p$ matrix, where $\hat{\theta}_1^{1/2}\ge \cdots\ge\hat{\theta}_p^{1/2}\ge 0$ are the singular values of $n^{-1/2}\bY.$ Without loss of generality, we assume that $\{\by_t\}_{t\in\eZ},\{\bx_t\}_{t\in\eZ}$ and $\{\bz_t\}_{t\in\eZ}$ have been centered to have mean zero.
Then,
$\widehat{\bOmega}_{y}=n^{-1}\bY^{\top}\bY=\sum_{j=1}^{p} \hat{\theta}_j \widehat{\bxi}_j \widehat{\bxi}_j^{\top},$
and
$n^{-1}\bY\bY^{\top}=\sum_{j=1}^{p} \hat{\theta}_j \widetilde{\bxi}_j \widetilde{\bxi}_j^{\top}.$
Moreover, we define a $n \times n$ matrix $\bD_k=\{D_{ksj}\}_{n\times n}$ with $D_{ksj}=I(s=k+j)$ for $s,j\in[n]$, where $I(\cdot)$ is the indicator function.
Then, it follows that $\widehat{\bOmega}_{y}(k)=(n-k)^{-1}\bY^{\top}\bD_k \bY,$ and
\begin{eqnarray}\label{eq1}
\widehat{\bOmega}_{y}(k)\widehat{\bW}\widehat{\bOmega}_{y}(k)^{\top}=\widehat{\bOmega}_{y}(k) \widehat{\bXi}_q\widehat{\bTheta}_{q}^{-1}\widehat{\bXi}_q^{\top} \widehat{\bOmega}_{y}(k)^{\top},
\end{eqnarray}
where $\widehat{\bXi}_q=(\widehat{\bxi}_1,\dots,\widehat{\bxi}_q)$ is a $p \times q$ matrix, and $\widehat{\bTheta}_{q}=\diag(\hat{\theta}_1,\dots,\hat{\theta}_q)$.

\section{Proofs of theoretical results in Section~\ref{sec.theory_basic}}
\label{supsec.A}
 
Let $\bX=(\bx_1,\dots,\bx_n)^{\top}$ and $\bE=(\be_1,\dots,\be_n)^{\top}$. Then, model~\eqref{eq.model} can be rewritten in the matrix form as $\bY=\bX\bA^{\top}+\bE.$ We first introduce two useful lemmas that provide upper bounds on the perturbation of matrix eigenvalues and eigenvectors, and are used multiple times in the proofs of this paper. Let $\{\lambda_j\}_{j \in [p]}$ be the eigenvalues of $\bSigma\in\eR^{p \times p}$ in a descending order and $\{\bxi_j\}_{j \in [p]}$ be the corresponding eigenvectors. Similarly, $\{\widetilde{\lambda}_j\}_{j \in [p]}$ and $\{\widetilde{\bxi}_j\}_{j \in [p]}$ are the corresponding eigenvalues and eigenvectors of $\widetilde{\bSigma} \in {\mathbb R}^{p \times p},$ respectively. 
\begin{lemma}
\label{lem.weyl}
(Weyl's theorem; \textcolor{blue}{Weyl} (\textcolor{blue}{1912})). $|\widetilde{\lambda}_j-\lambda_j|\le\Vert\widetilde{\bSigma}-\bSigma\Vert$ for $j \in [p].$
\end{lemma}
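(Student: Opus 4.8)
The plan is to obtain the bound from the Courant--Fischer min--max characterization of the eigenvalues of a symmetric matrix. First I would recall that for any symmetric $\bB\in\eR^{p\times p}$ with eigenvalues $\beta_1\ge\cdots\ge\beta_p$, Courant--Fischer gives, for each $j\in[p]$,
$$\beta_j=\max_{\dim V=j}\ \min_{\bu\in V,\ \|\bu\|=1}\ \bu^{\T}\bB\bu,$$
where the maximum ranges over all $j$-dimensional linear subspaces $V\subseteq\eR^p$. Both $\bSigma$ and $\widetilde{\bSigma}$ are symmetric (so that the eigenvalues and eigenvectors in the statement are well defined), hence this representation applies to each of them over the same family of competing subspaces.

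Next, the key observation is that $\bDelta:=\widetilde{\bSigma}-\bSigma$ is symmetric, so for every unit vector $\bu$ one has $|\bu^{\T}\bDelta\bu|\le\|\bDelta\|$, since $\|\bDelta\|=\max_i|\lambda_i(\bDelta)|$ for symmetric $\bDelta$. Consequently $\bu^{\T}\widetilde{\bSigma}\bu\le\bu^{\T}\bSigma\bu+\|\bDelta\|$ for all $\|\bu\|=1$. Substituting into the min--max formula: for any fixed $j$-dimensional subspace $V$,
$$\min_{\bu\in V,\ \|\bu\|=1}\bu^{\T}\widetilde{\bSigma}\bu\ \le\ \min_{\bu\in V,\ \|\bu\|=1}\bu^{\T}\bSigma\bu+\|\bDelta\|,$$
and taking the maximum over all such $V$ yields $\widetilde{\lambda}_j\le\lambda_j+\|\bDelta\|$. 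Interchanging the roles of $\bSigma$ and $\widetilde{\bSigma}$ (noting $\|\bSigma-\widetilde{\bSigma}\|=\|\bDelta\|$) gives $\lambda_j\le\widetilde{\lambda}_j+\|\bDelta\|$, and combining the two one-sided bounds gives $|\widetilde{\lambda}_j-\lambda_j|\le\|\widetilde{\bSigma}-\bSigma\|$ for every $j\in[p]$.

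Since this is a classical fact, there is no genuine obstacle; the only points requiring care are (i) using symmetry of $\bSigma$ and $\widetilde{\bSigma}$ so that the variational formula and the identity $\|\bDelta\|=\max_i|\lambda_i(\bDelta)|$ are available, and (ii) running the comparison in both directions so the one-sided inequality becomes an absolute-value bound. An alternative route avoiding Courant--Fischer is a dimension-counting argument: the span of the eigenvectors of $\widetilde{\bSigma}$ associated with its $j$ largest eigenvalues has dimension $j$, while the span of the eigenvectors of $\bSigma$ associated with eigenvalues strictly below $\lambda_j-\|\bDelta\|$ has dimension at least $p-j+1$; these subspaces must intersect nontrivially, producing a unit vector $\bu$ with $\bu^{\T}\widetilde{\bSigma}\bu\ge\widetilde{\lambda}_j$ yet $\bu^{\T}\bSigma\bu<\lambda_j-\|\bDelta\|$, which contradicts $|\bu^{\T}\bDelta\bu|\le\|\bDelta\|$ unless $\widetilde{\lambda}_j\le\lambda_j+\|\bDelta\|$, and the symmetric statement follows likewise.
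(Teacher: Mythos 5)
Your proof is correct: the Courant--Fischer argument, with the perturbation bound $|\bu^{\T}(\widetilde{\bSigma}-\bSigma)\bu|\le\Vert\widetilde{\bSigma}-\bSigma\Vert$ applied in both directions, is the standard derivation of Weyl's inequality, and your alternative dimension-counting route is also valid. The paper itself gives no proof of this lemma --- it is stated as a classical result with a citation to Weyl (1912) --- so there is nothing in the paper's argument to contrast with; your write-up simply supplies the standard proof that the citation stands in for.
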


\begin{lemma}
\label{lem.sintheta}
(A useful variant of ${\rm sin}(\theta)$ theorem; \textcolor{blue}{Yu et al.} (\textcolor{blue}{2015})). If $\widetilde{\bxi}_j^{\top}\bxi_j\ge0$ for $j \in [p]$, then we have
$$\Vert\widetilde{\bxi}_{j}-\bxi_j\Vert\le\frac{\Vert\widetilde{\bSigma}-\bSigma\Vert/\sqrt{2}}{|\widetilde{\lambda}_{j-1}-\lambda_{j}|\wedge|\lambda_{j}-\widetilde{\lambda}_{j+1}|}.$$
\end{lemma}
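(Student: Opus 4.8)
The stated bound is the single‑eigenvector specialisation of Corollary~1 of \textcolor{blue}{Yu et al.}~(\textcolor{blue}{2015}), and the plan is to reproduce that short argument in two steps: (a) a Davis--Kahan bound on the angle $\theta_j\in[0,\pi/2]$ between the unit eigenvectors $\bxi_j$ and $\widetilde\bxi_j$, and (b) an elementary passage from that angle bound to an $\ell_2$ bound on $\widetilde\bxi_j-\bxi_j$. Step (b) is immediate: under the sign normalisation $\widetilde\bxi_j^{\T}\bxi_j\ge 0$ one has $\cos\theta_j=\widetilde\bxi_j^{\T}\bxi_j\in[0,1]$, hence
\[
\|\widetilde\bxi_j-\bxi_j\|^{2}=2(1-\cos\theta_j)\le 2(1-\cos^{2}\theta_j)=2\sin^{2}\theta_j ,
\]
so $\|\widetilde\bxi_j-\bxi_j\|\le\sqrt2\,\sin\theta_j$, and the whole proof then reduces to step (a).

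For (a) I would set $\delta_j=|\widetilde\lambda_{j-1}-\lambda_j|\wedge|\lambda_j-\widetilde\lambda_{j+1}|$, with the conventions $\widetilde\lambda_0=+\infty$, $\widetilde\lambda_{p+1}=-\infty$ so that $j=1$ and $j=p$ are covered. Because the ordered eigenvalues of $\widetilde{\bSigma}$ satisfy $\widetilde\lambda_i\ge\widetilde\lambda_{j-1}$ for all $i<j$ and $\widetilde\lambda_i\le\widetilde\lambda_{j+1}$ for all $i>j$, every eigenvalue of $\widetilde{\bSigma}$ other than $\widetilde\lambda_j$ lies outside the open interval $(\lambda_j-\delta_j,\lambda_j+\delta_j)$; that is, the target eigenvalue $\lambda_j$ of $\bSigma$ is separated from the rest of the spectrum of $\widetilde{\bSigma}$ by at least $\delta_j$. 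The generalized $\sin\theta$ theorem in this one‑dimensional, one‑sided form then gives $\sin\theta_j\le\|\widetilde{\bSigma}-\bSigma\|/\delta_j$. One can also see this directly: left‑multiply $\bSigma\bxi_j=\lambda_j\bxi_j$ by $\bI-\widetilde\bxi_j\widetilde\bxi_j^{\T}$, substitute $\bSigma=\widetilde{\bSigma}-(\widetilde{\bSigma}-\bSigma)$, and rearrange to obtain $(\widetilde{\bSigma}-\lambda_j\bI)\,\widetilde w=(\bI-\widetilde\bxi_j\widetilde\bxi_j^{\T})(\widetilde{\bSigma}-\bSigma)\bxi_j$, where $\widetilde w=(\bI-\widetilde\bxi_j\widetilde\bxi_j^{\T})\bxi_j$ has norm $\sin\theta_j$; since $\widetilde{\bSigma}-\lambda_j\bI$ restricted to $\operatorname{span}(\widetilde\bxi_j)^{\perp}$ is invertible with the norm of its inverse at most $\delta_j^{-1}$, the bound follows, and combining with step (b) completes the proof.

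I do not expect a genuine obstacle here: the result is classical and the two computations above are elementary. The only point that needs care — and the reason this particular ``variant'' is the one invoked in the main proofs, where the population eigengap can be unknown or degenerate — is the index bookkeeping in $\delta_j$: its two terms pair the \emph{empirical} eigenvalues $\widetilde\lambda_{j\pm1}$ with the \emph{population} eigenvalue $\lambda_j$, and one must verify, via the eigenvalue monotonicity used above, that this mixed quantity is indeed a legitimate separation parameter for the one‑sided form of Davis--Kahan being applied.
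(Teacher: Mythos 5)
The paper itself offers no argument for this lemma (it is imported verbatim from Yu et al., 2015), so your reconstruction has to stand on its own, and as written it has two genuine problems. First, the constant you prove is not the constant in the statement: your steps (a) and (b) combine to give $\Vert\widetilde{\bxi}_j-\bxi_j\Vert\le\sqrt{2}\,\sin\theta_j\le\sqrt{2}\,\Vert\widetilde{\bSigma}-\bSigma\Vert/\delta_j$, whereas the lemma claims $\Vert\widetilde{\bSigma}-\bSigma\Vert/(\sqrt{2}\,\delta_j)$, which is stronger by a factor of $2$, yet you declare the proof complete. In fact the printed constant is unattainable: take $p=2$, $\bSigma=\diag(1,0)$ and $\widetilde{\bSigma}$ its rotation by a small angle $\theta$; then $\Vert\widetilde{\bSigma}-\bSigma\Vert=\sin\theta$, $\delta_1=1$, but $\Vert\widetilde{\bxi}_1-\bxi_1\Vert=2\sin(\theta/2)\approx\theta>\theta/\sqrt{2}$. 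So the lemma as stated carries a misplaced $\sqrt{2}$ (compare the $2^{3/2}$ in Yu et al.'s Corollary~1, which moreover uses the population gaps $\lambda_{j-1}-\lambda_j$ and $\lambda_j-\lambda_{j+1}$); your $\sqrt{2}$ version is the correct one and is all that the paper's rate arguments require, but you should say so explicitly rather than assert the claimed bound.

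Second, the separation step in your part (a) is incomplete, and you yourself identify it as the delicate point without actually closing it. The monotonicity argument ``$\widetilde{\lambda}_i\ge\widetilde{\lambda}_{j-1}$ for $i<j$, $\widetilde{\lambda}_i\le\widetilde{\lambda}_{j+1}$ for $i>j$'' only shows that all other eigenvalues of $\widetilde{\bSigma}$ avoid $(\lambda_j-\delta_j,\lambda_j+\delta_j)$ when $\widetilde{\lambda}_{j+1}\le\lambda_j\le\widetilde{\lambda}_{j-1}$. Because $\delta_j$ is built from absolute values, it can be large in the crossed case $\widetilde{\lambda}_{j-1}<\lambda_j$ (or $\widetilde{\lambda}_{j+1}>\lambda_j$), where some $\widetilde{\lambda}_i$ with $i<j-1$ may lie anywhere above $\widetilde{\lambda}_{j-1}$, possibly equal to $\lambda_j$; then $(\widetilde{\bSigma}-\lambda_j\bI)$ restricted to $\widetilde{\bxi}_j^{\perp}$ need not be invertible with inverse norm at most $\delta_j^{-1}$, and your resolvent bound fails. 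The repair is a short case split: in the crossed case, Weyl's inequality (Lemma~\ref{lem.weyl}) gives $\Vert\widetilde{\bSigma}-\bSigma\Vert\ge\lambda_{j-1}-\widetilde{\lambda}_{j-1}\ge\lambda_j-\widetilde{\lambda}_{j-1}\ge\delta_j$ (and symmetrically on the other side), so the right-hand side of the $\sqrt{2}$ version is at least $\sqrt{2}$, which trivially dominates $\Vert\widetilde{\bxi}_j-\bxi_j\Vert\le\sqrt{2}$ under the sign normalisation; only in the uncrossed case does your Davis--Kahan computation apply, and there it is correct. With that case analysis and the corrected constant, your route goes through.
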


\subsection{Proof of Theorem \ref{theoforbasic}}
Let $\hat{\lambda}_{k1}\ge\cdots \ge\hat{\lambda}_{kq} \ge 0$ be the eigenvalues of $\widehat{\bOmega}_{y}(k)\widehat{\bW} \widehat{\bOmega}_{y}(k)^{\top}$, with the corresponding spectral decomposition
\begin{equation}
    \label{basiceq000anew}
    \widehat{\bOmega}_{y}(k)\widehat{\bW}\widehat{\bOmega}_{y}(k)^{\top}=\sum_{i=1}^{q} \hat{\lambda}_{ki}\widehat{\bphi}_{ki} \widehat{\bphi}_{ki}^{\top}=\widehat{\bPhi}_{k,A} \widehat{\bLambda}_{k,A} \widehat{\bPhi}_{k,A}^{\top}+\widehat{\bPhi}_{k,E} \widehat{\bLambda}_{k,E} \widehat{\bPhi}_{k,E}^{\top},
\end{equation}
where $\widehat{\bLambda}_{k,A}=\diag(\hat{\lambda}_{k1},\dots,\hat{\lambda}_{kr_0})$, $\widehat{\bLambda}_{k,E}=\diag(\hat{\lambda}_{k,r_0+1},\dots,\hat{\lambda}_{kq}),\widehat{\bPhi}_{k,A}=(\widehat{\bphi}_{k1},\dots,\widehat{\bphi}_{kr_0})$, and $\widehat{\bPhi}_{k,E}=(\widehat{\bphi}_{k,r_0+1},\dots,\widehat{\bphi}_{kq}).$ 
To show Theorem~\ref{theoforbasic}, we first introduce two technical lemmas.

\begin{lemma}\label{basiclemma1}
For $r_0<q\le c(p\wedge n)$, where $c$ is specified in Condition~\ref{cond.E}, recall that $n^{-1/2}\bY=\sum_{j=1}^{p} \hat{\theta}_j^{1/2} \widetilde{\bxi}_j \widehat{\bxi}_j^{\top}$, and $\hat{\theta}_1 \ge\cdots \ge\hat{\theta}_q \ge 0$ are the eigenvalues of $\widehat{\bOmega}_y$. Let $\widehat{\bXi}_A=(\widehat{\bxi}_1,\dots,\widehat{\bxi}_{r_0})$. Then, under Conditions~\ref{cond.A}--\ref{cond.E}, $\hat{\theta}_1 \asymp \hat{\theta}_{r_0} \asymp p^{\delta_0}$ with probability tending to 1, $\hat{\theta}_{r_0+1} \asymp \hat{\theta}_{q} \asymp n^{-1}p$ with probability tending to 1, and
 \begin{eqnarray}\label{vjAeq0}
\big\|\widehat{\bXi}_A\widehat{\bXi}_A^{\top}-\bA\bA^{\top}\big\|=O_p(n^{-1/2}p^{(1-\delta_{0})/2}).
\end{eqnarray}
\end{lemma}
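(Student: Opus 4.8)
The plan is to start from the matrix form $\bY=\bX\bA^{\T}+\bE$ of model~\eqref{eq.model} and split $\widehat{\bOmega}_{y}$ into a ``signal'' plus a ``perturbation'':
\[
\widehat{\bOmega}_{y}=n^{-1}\bY^{\T}\bY=\bA\widehat{\bOmega}_{x}\bA^{\T}+\bA\bF_n+\bF_n^{\T}\bA^{\T}+\widehat{\bOmega}_{e},\qquad \bF_n:=n^{-1}\bX^{\T}\bE,\ \ \widehat{\bOmega}_{e}:=n^{-1}\bE^{\T}\bE .
\]
Since $r_0$ is fixed (Condition~\ref{cond.A}) and $\{\bx_t\}$ is strictly stationary, $\psi$-mixing with finite fourth moments (Condition~\ref{cond.x}), a routine moment/mixing argument gives $\|\widehat{\bOmega}_{x}-\bOmega_{x}\|=O_p(n^{-1/2}p^{\delta_0})$, so by Weyl's inequality (Lemma~\ref{lem.weyl}) and Condition~\ref{cond.cov}(i) both $\|\widehat{\bOmega}_{x}\|$ and $\|\widehat{\bOmega}_{x}\|_{\min}$ are $\asymp p^{\delta_0}$ with probability tending to $1$; because $\bA$ has orthonormal columns, the $r_0$ nonzero eigenvalues of $\bA\widehat{\bOmega}_{x}\bA^{\T}$ equal those of $\widehat{\bOmega}_{x}$ and its leading-$r_0$ eigenspace is exactly $\cC(\bA)$. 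For the perturbation I would bound $\|\widehat{\bOmega}_{e}\|=n^{-1}\sigma_1^2(\bE)=O_p(n^{-1}p)$ via the singular-value asymptotics for $\bE=\bG\bGamma\bR$ implied by Condition~\ref{cond.E} (and recorded in Remark~\ref{rmk.cond}), and $\|\bF_n\|\le n^{-1}\|\bX\|\,\|\bE\|=O_p\!\big(n^{-1}(np^{\delta_0})^{1/2}p^{1/2}\big)=O_p(n^{-1/2}p^{(1+\delta_0)/2})$, which dominates $\|\widehat{\bOmega}_e\|$ since $np^{\delta_0-1}\to\infty$ by Condition~\ref{cond.cov}(ii); in particular the total perturbation has operator norm $O_p(n^{-1/2}p^{(1+\delta_0)/2})=o_p(p^{\delta_0})$.

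For the eigenvalue claims I would proceed in three pieces. (a) Applying Weyl's inequality to the above decomposition and using that the nonzero eigenvalues of $\bA\widehat{\bOmega}_{x}\bA^{\T}$ are $\asymp p^{\delta_0}$ while the perturbation is $o_p(p^{\delta_0})$ yields $\theta_1\asymp\theta_{r_0}\asymp p^{\delta_0}$ with probability tending to $1$. (b) For the upper bound on $\theta_{r_0+1}$ I would use the Courant--Fischer characterisation with the $(p-r_0)$-dimensional test space $\cC(\bA)^{\perp}$: for every unit $\bv\perp\cC(\bA)$ one has $\bv^{\T}\widehat{\bOmega}_{y}\bv=\bv^{\T}\widehat{\bOmega}_{e}\bv\le\|\widehat{\bOmega}_{e}\|$, hence $\theta_{r_0+1}\le\|\widehat{\bOmega}_{e}\|=O_p(n^{-1}p)$. (c) For the lower bound on $\theta_q$ I would pass to singular values: as $\bX\bA^{\T}$ has rank $r_0$, Weyl's inequality for singular values gives $\sigma_{q+r_0}(\bE)\le\sigma_{q}(\bY)$, so $\theta_{q}=n^{-1}\sigma_{q}^{2}(\bY)\ge n^{-1}\sigma_{q+r_0}^{2}(\bE)\gtrsim n^{-1}p$ on the relevant range of $q$, using the lower bound $\sigma_{\lfloor c(p\wedge n)\rfloor}(\bE)\gtrsim n^{1/2}+p^{1/2}$ from Condition~\ref{cond.E}. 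Combining (b) and (c) gives $\theta_{r_0+1}\asymp\theta_{q}\asymp n^{-1}p$.

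For \eqref{vjAeq0}, the leading-$r_0$ eigenspace of $\bA\widehat{\bOmega}_{x}\bA^{\T}$ is $\cC(\bA)$ and its $r_0$-th to $(r_0{+}1)$-th eigengap is $\|\widehat{\bOmega}_{x}\|_{\min}\asymp p^{\delta_0}$ with probability tending to $1$; treating $\widehat{\bOmega}_{y}$ as a perturbation of $\bA\widehat{\bOmega}_{x}\bA^{\T}$ of operator norm $O_p(n^{-1/2}p^{(1+\delta_0)/2})$ and invoking the $\sin\theta$ theorem in its subspace (projection-norm) form (the projection-matrix analogue of Lemma~\ref{lem.sintheta}, equivalently Davis--Kahan) yields $\|\bXi_A\bXi_A^{\T}-\bA\bA^{\T}\|=O_p\big(n^{-1/2}p^{(1+\delta_0)/2}/p^{\delta_0}\big)=O_p(n^{-1/2}p^{(1-\delta_0)/2})$, as claimed.

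The main obstacle, where essentially all the technical effort concentrates, is obtaining the \emph{sharp} orders of the noise and cross terms under the relaxed (non-white) idiosyncratic structure: controlling $\sigma_1(\bE)$ and the $\lfloor c(p\wedge n)\rfloor$-th singular value of $\bE=\bG\bGamma\bR$ requires non-asymptotic random-matrix estimates for the rectangular matrix $\bGamma$ with only finite fourth moments (and with $p$ possibly much larger than $n$), and bounding $\|\bF_n\|=\|n^{-1}\sum_{t}\bx_t\be_t^{\T}\|$ at the stated rate should exploit the uncorrelatedness of $\{\bx_t\}$ and $\{\be_t\}$ together with the $\psi$-mixing decay rather than only the crude submultiplicative estimate. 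These are precisely the steps in which Conditions~\ref{cond.E} and~\ref{cond.cov}(ii) are used; the companion bound $\|\widehat{\bOmega}_x-\bOmega_x\|=O_p(n^{-1/2}p^{\delta_0})$ is comparatively routine given the fixed dimension $r_0$ and the mixing assumption.
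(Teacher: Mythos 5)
Your proposal is correct and follows essentially the same route as the paper: the same signal-plus-noise decomposition of $\bY$ (equivalently of $\widehat{\bOmega}_y$), the singular-value bounds on $\bE$ from Condition~\ref{cond.E} together with Weyl/interlacing to pin down the eigenvalue orders, and a $\sin\theta$/Davis--Kahan step with perturbation $O_p(n^{-1/2}p^{(1+\delta_0)/2})$ against the eigengap $\asymp p^{\delta_0}$ to get \eqref{vjAeq0}. The only cosmetic differences are that you obtain the middle eigenvalues via Courant--Fischer plus singular-value Weyl instead of the paper's direct rank-$r_0$ interlacing on $n^{-1/2}\bY$, and the concern in your last paragraph about $\bF_n$ is unnecessary: since $\|\bE\|=O_p(n^{1/2}+p^{1/2})$ and $n=O(p)$, the crude bound $n^{-1}\|\bX\|\,\|\bE\|$ already yields the rate $O_p(n^{-1/2}p^{(1+\delta_0)/2})$ that the paper itself uses.
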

\begin{proof}
    Note that $\hat{\theta}_1^{1/2}\ge\cdots\ge\hat{\theta}_q^{1/2}$ are the $q$ largest singular values of $n^{-1/2}\bY=n^{-1/2}\bX\bA^{\top}+n^{-1/2}\bE$. By Conditions~\ref{cond.x}--\ref{cond.cov}, $n^{-1/2}\bX\bA^{\top}$ has $r_0$ singular values with the order $p^{\delta_0/2}$ with probability tending to 1. By Condition~\ref{cond.E} and Remark~\ref{rmk.cond}, the singular values of $\bE$ satisfy that $\sigma_1(\bE)\asymp\sigma_{\lfloor c(p\wedge n)\rfloor}(\bE)\asymp n^{1/2}+p^{1/2}$ with probability tending to 1. Since $\rank(\bY-\bE)=r_0$ and $r_0<q\le c(p\wedge n)$, we have $\hat{\theta}_1 \asymp \hat{\theta}_{r_0} \asymp p^{\delta_0},$ and $\hat{\theta}_{r_0+1} \asymp \hat{\theta}_{q} \asymp n^{-1}p$ with probability tending to 1. 

    By Condition~\ref{cond.x}(i), $\{\bx_t\}$ and $\{\be_t\}$ are uncorrelated, then $\Vert n^{-1}\bA\bX^{\top}\bE\Vert=O_p(n^{-1/2}p^{1/2+\delta_0/2})$ by Conditions~\ref{cond.A}--\ref{cond.E}. Considering that the $r_0$ largest eigenvalues of $n^{-1}\bA\bX^{\top}\bX\bA^{\top}$ are of order $p^{\delta_0}$ with probability tending to 1. Thus, we conclude \eqref{vjAeq0}, which completes the proof. 
\end{proof}

\begin{lemma}\label{basiclemma2}
Let $\hat{\mu}_{k1} \ge \cdots \ge \hat{\mu}_{kp} \ge 0$ be the eigenvalues of $\widehat{\bOmega}_{y}(k)\widehat{\bOmega}_{y}(k)^{\top}$.
Rewrite $\widehat{\bOmega}_{y}(k)$ as
$\widehat{\bOmega}_{y}(k)=\sum_{i=1}^{p} \hat{\mu}_{ki}^{1/2}\widehat{\bpsi}_{ki} \widetilde{\bpsi}_{ki}^{\top}=\widehat{\bPsi}_{k,A} \widehat{\bN}_{k,A}^{1/2}\widetilde{\bPsi}_{k,A}^{\top}+\widehat{\bPsi}_{k,E} \widehat{\bN}_{k,E}^{1/2} \widetilde{\bPsi}_{k,E}^{\top},$
where $\widehat{\bN}_{k,A}=\diag(\hat{\mu}_{k1},\dots,\hat{\mu}_{kr_0}),$ and $\widehat{\bN}_{k,E}=\diag(\hat{\mu}_{k,r_0+1},\dots,\hat{\mu}_{kp})$. Then, under Conditions~\ref{cond.A}--\ref{cond.auto}, for $r_0<q\le c(p\wedge n)$ and $k\in[m],$ we have $\hat{\mu}_{k1}  \asymp \hat{\mu}_{kr_0} \asymp p^{2\delta_0},\hat{\mu}_{k,r_0+1}  \asymp \hat{\mu}_{kq} \asymp n^{-2}p^2$ with probability tending to 1, and
 \begin{eqnarray}
&\big\|\widehat{\bPsi}_{k,A} \widehat{\bPsi}_{k,A}^{\top}-\bA\bA^{\top}\big\|=O_p(n^{-1/2}p^{(1-\delta_{0})/2}),\label{vjAeqq}\\
&\big\|\widetilde{\bPsi}_{k,A} \widetilde{\bPsi}_{k,A}^{\top}-\bA\bA^{\top}\big\|=O_p(n^{-1/2}p^{(1-\delta_{0})/2}).\label{vjAeq2}
\end{eqnarray}
\end{lemma}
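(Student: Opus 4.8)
The plan is to reduce the claim to a rank-$r_0$-plus-noise decomposition of $\widehat\bOmega_y(k)$ and then invoke the perturbation bounds in Lemmas~\ref{lem.weyl} and~\ref{lem.sintheta}. Writing model~\eqref{eq.model} as $\bY=\bX\bA^{\T}+\bE$ and expanding the lag-$k$ sample autocovariance,
\[
\widehat\bOmega_y(k)=\bA\widehat\bOmega_x(k)\bA^{\T}+\bA\widehat\bOmega_{xe}(k)+\widehat\bOmega_{ex}(k)\bA^{\T}+\widehat\bOmega_e(k),
\]
the key algebraic step is to absorb the two cross terms into modified $p\times r_0$ loading matrices: with $\bU_k=\widehat\bOmega_{ex}(k)\widehat\bOmega_x(k)^{-1}$ and $\bV_k=\widehat\bOmega_{xe}(k)^{\T}\widehat\bOmega_x(k)^{-\T}$ one gets
\[
\widehat\bOmega_y(k)=\bP_k+\bN_k,\qquad \bP_k:=(\bA+\bU_k)\widehat\bOmega_x(k)(\bA+\bV_k)^{\T},\quad \bN_k:=\widehat\bOmega_e(k)-\bU_k\widehat\bOmega_x(k)\bV_k^{\T},
\]
with $\rank(\bP_k)\le r_0$. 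I would then assemble the relevant magnitudes: the sample analogues of Conditions~\ref{cond.cov} and~\ref{cond.auto} give $\|\widehat\bOmega_x(k)\|\asymp\|\widehat\bOmega_x(k)\|_{\min}\asymp p^{\delta_0}$; the uncorrelatedness of Condition~\ref{cond.x}(i) together with the mixing and fourth-moment conditions gives $\|\widehat\bOmega_{xe}(k)\|\vee\|\widehat\bOmega_{ex}(k)\|=O_p(n^{-1/2}p^{(1+\delta_0)/2})$, hence $\|\bU_k\|\vee\|\bV_k\|=O_p(n^{-1/2}p^{(1-\delta_0)/2})=o_p(1)$ by Condition~\ref{cond.cov}(ii); and Condition~\ref{cond.E} (via $\sigma_1(\bE)\asymp n^{1/2}+p^{1/2}$ and $\widehat\bOmega_e(k)=(n-k)^{-1}\bE^{\T}\bD_k\bE$) gives $\|\widehat\bOmega_e(k)\|=O_p(n^{-1}p)$, so $\|\bN_k\|=O_p(n^{-1}p)$.

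Given this decomposition the eigenvalue assertions follow quickly. Since $\rank(\bP_k)\le r_0$, Weyl's inequality for singular values gives $\sigma_{r_0+j}\big(\widehat\bOmega_y(k)\big)\le\sigma_1(\bN_k)=O_p(n^{-1}p)$ for every $j\ge1$, so $\mu_{k,r_0+1}=\sigma_{r_0+1}\big(\widehat\bOmega_y(k)\big)^2=O_p(n^{-2}p^2)$ and likewise $\mu_{kq}=O_p(n^{-2}p^2)$. For the large eigenvalues, $\sigma_{\min}(\bA+\bU_k)\wedge\sigma_{\min}(\bA+\bV_k)\ge1-o_p(1)$ yields $\sigma_{r_0}(\bP_k)\gtrsim p^{\delta_0}$, whence $\sigma_{r_0}\big(\widehat\bOmega_y(k)\big)\ge\sigma_{r_0}(\bP_k)-\|\bN_k\|\gtrsim p^{\delta_0}$ and $\sigma_1\big(\widehat\bOmega_y(k)\big)\le\|\bP_k\|+\|\bN_k\|\lesssim p^{\delta_0}$, i.e. $\mu_{k1}\asymp\mu_{kr_0}\asymp p^{2\delta_0}$ with probability tending to one (here $n^{-1}p=o(p^{\delta_0})$ by Condition~\ref{cond.cov}(ii)). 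The remaining lower bound $\mu_{kq}\gtrsim n^{-2}p^2$ requires $\sigma_q\big(\widehat\bOmega_y(k)\big)\gtrsim n^{-1}p$; removing the two rank-$\le r_0$ pieces $\bP_k$ and $\bU_k\widehat\bOmega_x(k)\bV_k^{\T}$ reduces it to $\sigma_{q+2r_0}\big(\widehat\bOmega_e(k)\big)\gtrsim n^{-1}p$, i.e. to a lower bound on the order-$p$ bulk of the singular values of the shifted Gram-type matrix $\bE^{\T}\bD_k\bE$; this is supplied by a random matrix argument that exploits the structural form $\bE=\bG\bGamma\bR$ of Condition~\ref{cond.E} together with the fact that the shift $\bD_k$ keeps the dominant singular subspace of $\bE$ well conditioned, on the range of $q$ for which Condition~\ref{cond.E} controls $\sigma_q(\bE)$.

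For the eigenvector statements \eqref{vjAeqq} and \eqref{vjAeq2} I would apply Lemma~\ref{lem.sintheta} to $\widehat\bOmega_y(k)\widehat\bOmega_y(k)^{\T}=\bP_k\bP_k^{\T}+\big(\bP_k\bN_k^{\T}+\bN_k\bP_k^{\T}+\bN_k\bN_k^{\T}\big)$: the leading term has $\cC(\bA+\bU_k)$ as its top-$r_0$ eigenspace with spectral gap $\asymp p^{2\delta_0}$, while the bracketed remainder has norm $O_p\big(\|\bP_k\|\,\|\bN_k\|+\|\bN_k\|^2\big)=O_p(n^{-1}p^{1+\delta_0})$, so the top-$r_0$ eigenprojection of $\widehat\bOmega_y(k)\widehat\bOmega_y(k)^{\T}$ lies within $O_p(n^{-1}p^{1-\delta_0})=O_p(n^{-1/2}p^{(1-\delta_0)/2})$ of the projection onto $\cC(\bA+\bU_k)$; combining this with $\big\|\bA\bA^{\T}-(\bA+\bU_k)\{(\bA+\bU_k)^{\T}(\bA+\bU_k)\}^{-1}(\bA+\bU_k)^{\T}\big\|\lesssim\|\bU_k\|=O_p(n^{-1/2}p^{(1-\delta_0)/2})$ gives $\big\|\bPsi_{k,A}\bPsi_{k,A}^{\T}-\bA\bA^{\T}\big\|=O_p(n^{-1/2}p^{(1-\delta_0)/2})$. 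Repeating the argument on $\widehat\bOmega_y(k)^{\T}\widehat\bOmega_y(k)=\bP_k^{\T}\bP_k+\cdots$, whose dominant eigenspace is $\cC(\bA+\bV_k)$, gives the bound for $\widetilde\bPsi_{k,A}$. I expect the main obstacles to be (i) the lower bound on the noise-level singular values of $\widehat\bOmega_e(k)$, which is precisely where relaxing the white-noise assumption to Condition~\ref{cond.E} must be paid for, and (ii) establishing the sample and cross autocovariance concentration rates under the $\psi$-mixing and fourth-moment conditions rather than under independence.
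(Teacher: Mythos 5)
Your decomposition $\widehat{\bOmega}_y(k)=\bP_k+\bN_k$, with the cross terms absorbed into perturbed loadings $\bA+\bU_k$, $\bA+\bV_k$, is algebraically correct, and it does deliver $\mu_{k1}\asymp\mu_{kr_0}\asymp p^{2\delta_0}$, $\mu_{k,r_0+1}=O_p(n^{-2}p^2)$ and both projection bounds at the rate $n^{-1/2}p^{(1-\delta_0)/2}$. For that half of the lemma the paper argues the same way in substance (it reduces these claims to "the same procedure as Lemma~\ref{basiclemma1}": $r_0$ signal singular values of order $p^{\delta_0}$, cross terms $O_p(n^{-1/2}p^{(1+\delta_0)/2})$, noise of order $n^{-1}p$, then Lemmas~\ref{lem.weyl} and \ref{lem.sintheta}), so your treatment there is, if anything, more explicit than the paper's.

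The genuine gap is the lower bound $\mu_{kq}\gtrsim n^{-2}p^2$, which is exactly the part the paper writes out in detail because it is \emph{not} analogous to Lemma~\ref{basiclemma1}. You reduce it by interlacing to $\sigma_{q+2r_0}\{(n-k)^{-1}\bE^{\T}\bD_k\bE\}\gtrsim n^{-1}p$ and then assert that this is "supplied by a random matrix argument". That is a placeholder, not a proof, and the reduction target does not follow from Condition~\ref{cond.E} in the way you suggest: Condition~\ref{cond.E} (via Remark~\ref{rmk.cond}) controls $\sigma_{\lfloor c(p\wedge n)\rfloor}(\bE)$, but a lower bound on the singular values of $\bE$ does not transfer to $\bE^{\T}\bD_k\bE$, since $\bD_k$ has a nontrivial kernel and can misalign the left and right singular subspaces of $\bE$; in fact for $q$ near $n$ the target fails outright because $\rank(\bE^{\T}\bD_k\bE)\le n-k$. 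A per-index bulk lower bound for the spectrum of a lag-$k$ autocovariance-type matrix is serious random-matrix work and is not implied by the mixing and fourth-moment conditions you cite. The paper sidesteps this entirely: it lower-bounds the aggregate $\sum_{i>r_0}\mu_{ki}=\tr\{(\bI_p-\bPsi_{k,A}\bPsi_{k,A}^{\T})\widehat{\bOmega}_y(k)\widehat{\bOmega}_y(k)^{\T}(\bI_p-\bPsi_{k,A}\bPsi_{k,A}^{\T})\}$ by the squared Frobenius norm of the projected noise term $(n-k)^{-1}(\bI_p-\bPsi_{k,A}\bPsi_{k,A}^{\T})\bE^{\T}\bD_k\bE(\bI_p-\bA\bA^{\T})$, evaluates that Frobenius norm to be of order $n^{-1}p^2$ using the structure $\bE=\bG\bGamma\bR$ at the trace level (plus showing the $\bX$-cross term is negligible), and then combines this with the uniform upper bound $\mu_{k,r_0+1}=O_p(n^{-2}p^2)$ and $n=O(p)$ in a counting step. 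To complete your argument you would either have to actually prove the bulk singular-value bound you postulate (and restrict $q$ away from $n$), or replace that step by a trace-type argument of the paper's kind, which only requires Frobenius-level control of the noise.
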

\begin{proof}
We only prove the results $\hat{\mu}_{k,r_0+1}  \asymp \hat{\mu}_{kq} \asymp n^{-2}p^2$ with probability tending to 1, as the others can be obtained by similar procedures to those in the proof of Lemma~\ref{basiclemma1}. Since $\|\bE\|=O_p(n^{1/2}+p^{1/2})$, it can be shown that $\hat{\mu}_{k,r_0+1}=O_p(n^{-2}p^2)$.
Recall that $\bD_k=\{D_{ksj}\}_{n\times n}$ and $\widehat{\bOmega}_{y}(k)=(n-k)^{-1}\bY^{\top}\bD_k \bY.$ Then, we have
$$
\begin{aligned}
\sum_{i=r_0+1}^p\hat{\mu}_{ki}
=&\tr\{(\bI_p-\widehat{\bPsi}_{k,A} \widehat{\bPsi}_{k,A}^{\top})\widehat{\bOmega}_{y}(k)\widehat{\bOmega}_{y}(k)^{\top}(\bI_p-\widehat{\bPsi}_{k,A} \widehat{\bPsi}_{k,A}^{\top})\}\\
=&(n-k)^{-2}\tr\{(\bI_p-\widehat{\bPsi}_{k,A} \widehat{\bPsi}_{k,A}^{\top})\bY^{\top}\bD_k \bY\bY^{\top}\bD_k^{\top}\bY(\bI_p-\widehat{\bPsi}_{k,A} \widehat{\bPsi}_{k,A}^{\top})\}\\
=&(n-k)^{-2}\tr\{(\bI_p-\widehat{\bPsi}_{k,A} \widehat{\bPsi}_{k,A}^{\top})(\bA\bX^{\top}+\bE^{\top})\bD_k(\bX\bA^{\top}+\bE)\\
&\times(\bA\bX^{\top}+\bE^{\top})\bD_k^{\top}(\bX\bA^{\top}+\bE)(\bI_p-\widehat{\bPsi}_{k,A} \widehat{\bPsi}_{k,A}^{\top})\}\\
\ge&(n-k)^{-2}\|(\bI_p-\widehat{\bPsi}_{k,A} \widehat{\bPsi}_{k,A}^{\top})(\bA\bX^{\top}+\bE^{\top})\bD_k\bE(\bI_p-\bA\bA^{\top})\|_{\F}^2
\end{aligned}
$$
with probability tending to 1. Note that $\bD_k$ can be rewritten as $\bD_k=\sum_{i=1}^{n-k}\bs_{k+i}\bs_i^{\top}$, where $\bs_i$ is the $n$-dimensional vector with the $j$-th element $I\{i=j\}$. Then, the singular values of $\bR$ and $\bG^{\top}\bD_k\bG$ in Condition~\ref{cond.E} imply that
$(n-k)^{-2}\|(\bI_p-\widehat{\bPsi}_{k,A} \widehat{\bPsi}_{k,A}^{\top})\bE^{\top}\bD_k\bE(\bI_p-\bA\bA^{\top})\|_{\F}^2 \asymp n^{-1}p^2$ with probability tending to 1, and
$
(n-k)^{-2}\|(\bI_p-\widehat{\bPsi}_{k,A} \widehat{\bPsi}_{k,A}^{\top})\bA\bX^{\top}\bD_k\bE(\bI_p-\bA\bA^{\top})\|_{\F}^2 
\le r_0(n-k)^{-2}\|(\bI_p-\widehat{\bPsi}_{k,A} \widehat{\bPsi}_{k,A}^{\top})\bA\|_{\F}^2\|\bX^{\top}\bD_k\bE\|_{\F}^2=o_p(n^{-1}p^2).
$
Thus, there exists some constant $C_0>0$ such that $\sum_{i=r_0+1}^p\hat{\mu}_{ki} \ge C_0n^{-1}p^2 $ with probability tending to 1, which, together with $n=O(p)$ by Condition~\ref{cond.cov}(ii) and $\hat{\mu}_{k,r_0+1}=O_p(n^{-2}p^2)$, completes the proofs.
\end{proof}

Now we are ready to prove Theorem~\ref{theoforbasic}. 

(i) To study the asymptotic properties of $\hat{\lambda}_{k1} \ge \hat{\lambda}_{k2}\ge \cdots \ge \hat{\lambda}_{kq} \ge 0$, we consider the singular values of $\widehat{\bOmega}_{y}(k)\widehat{\bW}^{1/2}$. Note that
\begin{eqnarray*}
\widehat{\bOmega}_{y}(k)\widehat{\bW}^{1/2}=(\widehat{\bPsi}_{k,A} \widehat{\bN}_{k,A}^{1/2} \widetilde{\bPsi}_{k,A}^{\top}+\widehat{\bPsi}_{k,E} \widehat{\bN}_{k,E}^{1/2} \widetilde{\bPsi}_{k,E}^{\top})\Big(\sum_{j=1}^{q} \hat{\theta}_j^{-1/2} \widehat{\bxi}_j \widehat{\bxi}_j^{\top}\Big),
\end{eqnarray*}
where $\hat{\theta}_j,\widehat{\bxi}_j,\widehat{\bPsi}_{k,A},\widehat{\bN}_{k,A}^{1/2}$, and $\widetilde{\bPsi}_{k,A}$ are defined in Lemmas~\ref{basiclemma1} and \ref{basiclemma2}.
It can be shown that $\rank\big\{\widehat{\bPsi}_{k,A}\widehat{\bN}_{k,A}^{1/2}\widetilde{\bPsi}_{k,A}^{\top}(\sum_{j=1}^{q}\hat{\theta}_j^{-1/2} \widehat{\bxi}_j \widehat{\bxi}_j^{\top})\big\}=r_0$ with probability tending to 1, and the eigenvalues of $\widehat{\bN}_{k,A}^{1/2} \widetilde{\bPsi}_{k,A}^{\top}(\sum_{j=1}^{q} \hat{\theta}_j^{-1/} \widehat{\bxi}_j \widehat{\bxi}_j^{\top}) \widetilde{\bPsi}_{k,A} \widehat{\bN}_{k,A}^{1/2}$ are not smaller than the corresponding ordered eigenvalues of $\widehat{\bN}_{k,A}^{1/2} \widetilde{\bPsi}_{k,A}^{\top}(\sum_{j=1}^{r_0} \hat{\theta}_j^{-1} \widehat{\bxi}_j \widehat{\bxi}_j^{\top}) \widetilde{\bPsi}_{k,A} \widehat{\bN}_{k,A}^{1/2}.$
Then, by combining \eqref{vjAeq0} and \eqref{vjAeq2} in Lemmas \ref{basiclemma1} and \ref{basiclemma2}, it follows that 
$$
\Big\|\widetilde{\bPsi}_{k,A}^{\top}\Big(\sum_{j=1}^{r_0} \hat{\theta}_j^{-1} \widehat{\bxi}_j \widehat{\bxi}_j^{\top}\Big)\widetilde{\bPsi}_{k,A}\Big\| \asymp \Big\|\widetilde{\bPsi}_{k,A}^{\top}\Big(\sum_{j=1}^{r_0} \hat{\theta}_j^{-1} \widehat{\bxi}_j \widehat{\bxi}_j^{\top}\Big)\widetilde{\bPsi}_{k,A}\Big\|_{\min} \asymp p^{-\delta_0}
$$
with probability tending to 1, and
$$
\Big\|\widetilde{\bPsi}_{k,A}^{\top}\Big(\sum_{j=r_0+1}^{q} \hat{\theta}_j^{-1} \widehat{\bxi}_j \widehat{\bxi}_j^{\top}\Big)\widetilde{\bPsi}_{k,A}\Big\| =O_p(n^{-1}p^{1-\delta_0}\cdot np^{-1})=O_p(p^{-\delta_0}).
$$
Thus, we have
$$\Big\|\widehat{\bN}_{k,A}^{1/2} \widetilde{\bPsi}_{k,A}^{\top}\Big(\sum_{j=1}^{r_0} \hat{\theta}_j^{-1} \widehat{\bxi}_j \widehat{\bxi}_j^{\top}\Big) \widetilde{\bPsi}_{k,A} \widehat{\bN}_{k,A}^{1/2}\Big\| \asymp\Big\|\widehat{\bN}_{k,A}^{1/2} \widetilde{\bPsi}_{k,A}^{\top}\Big(\sum_{j=1}^{r_0} \hat{\theta}_j^{-1} \widehat{\bxi}_j \widehat{\bxi}_j^{\top}) \widetilde{\bPsi}_{k,A} \widehat{\bN}_{k,A}^{1/2}\Big\|_{\min} \asymp p^{\delta_0}$$
with probability tending to 1, and
$$
\Big\|\widehat{\bN}_{k,A}^{1/2} \widetilde{\bPsi}_{k,A}^{\top}\Big(\sum_{j=r_0+1}^{q} \hat{\theta}_j^{-1} \widehat{\bxi}_j \widehat{\bxi}_j^{\top}\Big) \widetilde{\bPsi}_{k,A} \widehat{\bN}_{k,A}^{1/2}\Big\|  =O_p(p^{2\delta_0}\cdot p^{-\delta_0})=O_p(p^{\delta_0}),
$$
which together shows that
\begin{eqnarray}\label{eq.jless_r0}
\Big\|\widehat{\bN}_{k,A}^{1/2} \widetilde{\bPsi}_{k,A}^{\top}\Big(\sum_{j=1}^{q} \hat{\theta}_j^{-1} \widehat{\bxi}_j \widehat{\bxi}_j^{\top}\Big) \widetilde{\bPsi}_{k,A} \widehat{\bN}_{k,A}^{1/2}\Big\| \asymp\Big\|\widehat{\bN}_{k,A}^{1/2} \widetilde{\bPsi}_{k,A}^{\top}\Big(\sum_{j=1}^{q} \hat{\theta}_j^{-1} \widehat{\bxi}_j \widehat{\bxi}_j^{\top}\Big) \widetilde{\bPsi}_{k,A} \widehat{\bN}_{k,A}^{1/2}\Big\|_{\min} \asymp p^{\delta_0}
\end{eqnarray}
with probability tending to 1. For $\hat{\lambda}_{kj}$ with $j>r_0,$ by Lemmas~\ref{basiclemma1} and \ref{basiclemma2}, we have
\begin{eqnarray}\label{eq.jlarge_r0}
\Big\|\widehat{\bN}_{k,E}^{1/2} \widetilde{\bPsi}_{k,E}^{\top} \Big(\sum_{j=1}^{q} \hat{\theta}_j^{-1} \widehat{\bxi}_j \widehat{\bxi}_j^{\top}\Big)\widetilde{\bPsi}_{k,E}\widehat{\bN}_{k,E}^{1/2}\Big\|=O_p( \hat{\theta}_{k,r_0+1}\hat{\theta}_q^{-1})=O_p(n^{-1}p).
\end{eqnarray}
Hence, it follows that $\hat{\lambda}_{k1} \asymp \hat{\lambda}_{kr_0} \asymp p^{\delta_0}$ with probability tending to 1 by combining \eqref{eq.jless_r0} and \eqref{eq.jlarge_r0}, and $\hat{\lambda}_{k,r_0+1}=O_p(n^{-1}p)=o_p(p^{\delta_0})$ since $p^{1-\delta_0}=o(n)$ by Condition~\ref{cond.cov}(ii).

(ii) Notice that 
$$
\begin{aligned}
    &\|\widehat{\bOmega}_y(k)\widehat{\bW}\widehat{\bOmega}_y(k)^{\top}-\widehat{\bPsi}_{k,A}\widehat{\bN}_{k,A}^{1/2}\widetilde{\bPsi}_{k,A}^{\top}\widehat{\bW}\widetilde{\bPsi}_{k,A}\widehat{\bN}_{k,A}^{1/2}\widehat{\bPsi}_{k,A}^{\top}\| \\
    \le&\|\widehat{\bPsi}_{k,E}\widehat{\bN}_{k,E}^{1/2}\widetilde{\bPsi}_{k,E}^{\top}\widehat{\bW}\widetilde{\bPsi}_{k,E}\widehat{\bN}_{k,E}^{1/2}\widehat{\bPsi}_{k,E}^{\top}\|+2\|\widehat{\bPsi}_{k,A}\widehat{\bN}_{k,A}^{1/2}\widetilde{\bPsi}_{k,A}^{\top}\widehat{\bW}\widetilde{\bPsi}_{k,E}\widehat{\bN}_{k,E}^{1/2}\widehat{\bPsi}_{k,E}^{\top}\|\\
    =&O_p(n^{-1}p+n^{-1/2}p^{(1+\delta_0)/2})=O_p(n^{-1/2}p^{(1+\delta_0)/2})
\end{aligned}
$$ 
by using \eqref{eq.jless_r0} and \eqref{eq.jlarge_r0}. Then by Lemma~\ref{lem.sintheta}, we have
\begin{eqnarray*}
\|\widehat{\bPhi}_{k,A}\widehat{\bPhi}_{k,A}^{\top}-\widehat{\bPsi}_{k,A} \widehat{\bPsi}_{k,A}^{\top}\|=O_p(n^{-1/2}p^{(1-\delta_{0})/2}),
\end{eqnarray*}
which, together with \eqref{vjAeqq} in Lemma \ref{basiclemma2}, obtains that $\big\|\widehat{\bPhi}_{k,A}\widehat{\bPhi}_{k,A}^{\top}-\bA\bA^{\top}\big\|=O_p(n^{-1/2}p^{(1-\delta_{0})/2})$ and completes the proof of Theorem~\ref{theoforbasic}.

\subsection{Proof of Proposition \ref{propos.est_r}}
By Theorem \ref{theoforbasic}(i), and the conditions $\vartheta_np^{-\delta_0}\to0$ and $\vartheta_n\gtrsim n^{-1}p$ as imposed in Proposition~\ref{propos.est_r}, we have $R_j\asymp1$ with probability tending to 1 for $j\le r_0-1$ and $r_0+1\le j\le q-1$, where $R_j$ is defined in \eqref{eq.deter_r}. Moreover, $R_{r_0}\asymp p^{\delta_0}\vartheta_n^{-1}\to\infty$ with probability tending to 1. Thus, $R_{r_0}>\max_{1\le j\neq r_0\le q-1}R_j$ with probability tending to 1, which implies that $\eP(\hat{r}_{0}=r_0)\to1$ as $p,n\to\infty.$

\subsection{Proof of Proposition \ref{propos.U_A}}
Let $\hat{\lambda}_{1}\ge\cdots \ge\hat{\lambda}_{q} \ge 0$ be the eigenvalues of $\widehat{\bM}=\sum_{k=1}^{m}\widehat{\bOmega}_{y}(k)\widehat{\bW} \widehat{\bOmega}_{y}(k)^{\top}$ with the corresponding eigenvectors $\widehat{\bphi}_{1},\dots,\widehat{\bphi}_{q},$ and $\widehat{\bA}=(\widehat{\bphi}_{1},\dots,\widehat{\bphi}_{\hat{r}_0})$. Denote $\widetilde{\bA}=(\widehat{\bphi}_{1},\dots,\widehat{\bphi}_{r_0})$ as the estimator of $\bA$ with the known value of $r_0.$ Note that $\widehat{\bM}=\sum_{k=1}^{m}\widehat{\bOmega}_{y}(k)\widehat{\bW}\widehat{\bOmega}_{y}(k)^{\top},$ where each term on the right hand is non-negative. Thus, by Theorem~\ref{theoforbasic}(i), we immediately obtain that $\hat{\lambda}_{1}\asymp\hat{\lambda}_{r_0}\asymp p^{\delta_0}$ with probability tending to 1, and $\hat{\lambda}_{r_0+1}=O_p(n^{-1}p)$. By Lemma~\ref{lem.sintheta} and similar procedures to the proof of Theorem~\ref{theoforbasic}(ii), it can be shown that there exists an orthogonal matrix $\bU\in\eR^{r_0\times r_0}$ such that $\|\widetilde{\bA}-\bA\bU\|=O_p(n^{-1/2}p^{(1-\delta_{0})/2})$, and thus 
\begin{equation}
    \label{eq.tilde_A}
    \big\|\widetilde{\bA}\widetilde{\bA}^{\top}-\bA\bA^{\top}\big\|=O_p(n^{-1/2}p^{(1-\delta_{0})/2}).
\end{equation}

With the aid of weak consistency of $\hat{r}_{0}$ in Proposition~\ref{propos.est_r}, the convergence rate of $\widehat{\bA}$ with the estimated $\hat{r}_0$ is the same as $\widetilde{\bA}$ with the known $r_0$. To see this, let $\varrho_{n}=n^{-1/2}p^{(1-\delta_{0})/2}$. It holds that for any constant $\tilde{c}>0$ that 
\begin{equation}
\label{eq.rho_n}
\begin{aligned}
\eP\big(\varrho_{n}^{-1}\|\widehat{\bA}\widehat{\bA}^{\top}-\bA\bA^{\top}\|>\tilde{c}\big)\le&\eP\big(\varrho_{n}^{-1}\|\widehat{\bA}\widehat{\bA}^{\top}-\bA\bA^{\top}\|>\tilde{c}\mid\hat{r}_{0}=r_0\big)+\eP(\hat{r}_{0}\neq r_0)\\
\le&\eP\big(\varrho_{n}^{-1}\|\widetilde{\bA}\widetilde{\bA}^{\top}-\bA\bA^{\top}\|>\tilde{c}\big)+o(1),
\end{aligned}
\end{equation}
which, together with \eqref{eq.tilde_A}, shows that $\big\|\widehat{\bA}\widehat{\bA}^{\top}-\bA\bA^{\top}\big\|=O_p(\varrho_{n})=O_p(n^{-1/2}p^{(1-\delta_{0})/2}).$

For the measure of the distance between two columns spaces, it follows from the orthogonality of columns of $\widetilde{\bA}$ and $\bA$ that
$$
\begin{aligned}
    \cD^2\big\{\cC(\widetilde{\bA}),\cC(\bA)\big\}=&1-r_0^{-1}\tr(\bA\bA^{\top}\widetilde{\bA}\widetilde{\bA}^{\top})=r_0^{-1}\tr(\bA^{\top}\bA-\bA\bA^{\top}\widetilde{\bA}\widetilde{\bA}^{\top})\\
    =&r_0^{-1}\tr\{\bU^{\top}\bA^{\top}(\bI_p-\widetilde{\bA}\widetilde{\bA}^{\top})\bA\bU\}\\
    =&r_0^{-1}\left[\tr\{\bU^{\top}\bA^{\top}(\bI_p-\widetilde{\bA}\widetilde{\bA}^{\top})\bA\bU\}-\tr\{\bU^{\top}\bA^{\top}(\bI_p-\bA\bA^{\top})\bA\bU\}\right]\\
    =&r_0^{-1}\tr\{\bU^{\top}\bA^{\top}(\bA\bA^{\top}-\widetilde{\bA}\widetilde{\bA}^{\top})\bA\bU\}\\
    \le&\Vert\bU^{\top}\bA^{\top}(\bA\bA^{\top}-\widetilde{\bA}\widetilde{\bA}^{\top})\bA\bU\Vert\\
    =&\Vert(\bA\bU-\widetilde{\bA})^{\top}(\bA\bU-\widetilde{\bA})-\bU^{\top}\bA^{\top}(\bA\bU-\widetilde{\bA})(\bA\bU-\widetilde{\bA})^{\top}\bA\bU\Vert\\
    \le&2\Vert\widetilde{\bA}-\bA\bU\Vert^2=O_p(n^{-1}p^{1-\delta_{0}}),
\end{aligned}
$$
which implies that $\cD\big\{\cC(\widetilde{\bA}),\cC(\bA)\big\}=O_p(n^{-1/2}p^{(1-\delta_{0})/2})$. By using the similar arguments to \eqref{eq.rho_n}, we have $\cD\big\{\cC(\widehat{\bA}),\cC(\bA)\big\}=O_p(n^{-1/2}p^{(1-\delta_{0})/2})$. The proof is complete.

\section{Proofs of theoretical results in Section~\ref{sec.theory_weak}}
\label{supsec.B}

Define $\bX=(\bx_1,\dots,\bx_n)^{\top}$, $\bZ=(\bz_1,\dots,\bz_n)^{\top}$ and $\bE=(\be_1,\dots,\be_n)^{\top}$. Then model~\eqref{eq.model_2} can be expressed as $\bY=\bX\bA^{\top}+\bZ\bB^{\top}+\bE.$

\subsection{Proof of Theorem \ref{advsam1}}
Let $\hat{\theta}_{1} \ge \cdots \ge \hat{\theta}_{p} \ge 0$ be the eigenvalues of $\widehat{\bOmega}_{y}$, with the corresponding spectral decomposition written as
$$
\widehat{\bOmega}_{y}=\sum_{i=1}^{p}\hat{\theta}_{i}\widehat{\bxi}_{i}\widehat{\bxi}_{i}^{\top}=\widehat{\bXi}_{A}\widehat{\bTheta}_{A}\widehat{\bXi}_{A}^{\top}+\widehat{\bXi}_{B}\widehat{\bTheta}_{B}\widehat{\bXi}_{B}^{\top}+\widehat{\bXi}_{E}\widehat{\bTheta}_{E}\widehat{\bXi}_{E}^{\top},
$$
where $\widehat{\bTheta}_{A}=\diag(\hat{\theta}_{1},\dots,\hat{\theta}_{r_0}),\widehat{\bTheta}_{B}=\diag(\hat{\theta}_{r_0+1},\dots,\hat{\theta}_{r}),\widehat{\bTheta}_{E}=\diag(\hat{\theta}_{r+1},\dots,\hat{\theta}_{p}),\widehat{\bXi}_{A}=(\widehat{\bxi}_1,\dots,\widehat{\bxi}_{r_0}),\widehat{\bXi}_{B}=(\widehat{\bxi}_{r_0+1},\dots,\widehat{\bxi}_{r}),$ and $\widehat{\bXi}_{E}=(\widehat{\bxi}_{r+1},\dots,\widehat{\bxi}_{p})$.

(i) Firstly, $\hat{\theta}_1^{1/2}\ge \cdots \ge \hat{\theta}_{\lfloor c'(p\wedge n)\rfloor}^{1/2} \ge 0$ are the $\lfloor c'(p\wedge n)\rfloor$ largest singular values of $n^{-1/2}\bY=n^{-1/2}\bX\bA^{\top}+n^{-1/2}\bZ\bB^{\top}+n^{-1/2}\bE$. By Conditions \ref{cond.x_a}--\ref{cond.E_a}, $n^{-1/2}\bX\bA^{\top}$ has $r_0$ singular values of order $p^{\delta_0/2}$ with probability tending to 1, $\|n^{-1/2}\bZ\bB^{\top}\|=O_p(p^{\delta_1/2})=o_p(p^{\delta_0/2}),$ and $\|n^{-1/2}\bE\|=O_p(n^{-1/2}p^{1/2})=o_p(p^{\delta_0/2}),$ which together derive $\hat{\theta}_{1} \asymp \hat{\theta}_{r_0} \asymp p^{\delta_{0}}$ with probability tending to 1.

Secondly, consider that
$n^{-1/2}\bY(\bI_p-\widehat{\bXi}_A\widehat{\bXi}_A^{\top})=n^{-1/2}(\bX\bA^{\top}+\bZ\bB^{\top}+\bE)(\bI_p-\widehat{\bXi}_A\widehat{\bXi}_A^{\top})
    =n^{-1/2}(\bZ\bB^{\top}+\bE)(\bI_p-\bA\bA^{\top})+n^{-1/2}(\bX\bA^{\top}+\bZ\bB^{\top}+\bE)(\bA\bA^{\top}-\widehat{\bXi}_A\widehat{\bXi}_A^{\top}).
$
Condition \ref{cond.A_a} and Lemma~\ref{lem.weyl} together ensure that $\|\bB^{\top}(\bI_p-\bA\bA^{\top})\|_{\min}\ge\Vert\bB\Vert_{\min}-\|\bA\bA^{\top}\bB\|>0$, which further implies that $$\Vert n^{-1/2}(\bZ\bB^{\top}+\bE)(\bI_p-\bA\bA^{\top})\Vert\asymp\Vert n^{-1/2}(\bZ\bB^{\top}+\bE)(\bI_p-\bA\bA^{\top})\Vert_{\min}\asymp p^{\delta_1/2}$$ with probability tending to 1. 
By Theorem \ref{advsam1}(ii), which will be proved later, and Conditions~\ref{cond.A_a}--\ref{cond.E_a}, we have $\|n^{-1/2}(\bX\bA^{\top}+\bZ\bB^{\top}+\bE)(\bA\bA^{\top}-\widehat{\bXi}_A\widehat{\bXi}_A^{\top})\|=O_p(n^{-1/2}p^{1/2}+p^{\delta_1-\delta_0/2})=o_p(p^{\delta_1/2}),$ which implies that $\|n^{-1/2}\bY(\bI_p-\widehat{\bXi}_A\widehat{\bXi}_A^{\top})-n^{-1/2}(\bZ\bB^{\top}+\bE)(\bI_p-\bA\bA^{\top})\|=o_p(p^{\delta_1/2}).$ Note that $\hat{\theta}_{r_0+1}^{1/2}\ge \cdots \ge \hat{\theta}_r^{1/2} \ge 0$ are the $r_1$ largest singular values of $n^{-1/2}\bY(\bI_p-\widehat{\bXi}_A\widehat{\bXi}_A^{\top})$.
Thus, we have $\hat{\theta}_{r_0+1} \asymp \hat{\theta}_{r} \asymp p^{\delta_{1}}$ with probability tending to 1.

Thirdly, by Condition \ref{cond.E_a} and Remark~\ref{rmk.cond}, the singular values of $\bE$ satisfy that and $\sigma_1(\bE) \asymp\sigma_{\lfloor c'(p\wedge n)\rfloor}(\bE) \asymp n^{1/2}+p^{1/2}$ with probability tending to 1. Since $\rank(\bY-\bE)=r$, we conclude that $\hat{\theta}_{r+1} \asymp \hat{\theta}_{\lfloor c'(p\wedge n)\rfloor} \asymp n^{-1}p$ with probability tending to 1. Then, the proof is complete. 

(ii) By Condition \ref{cond.x_a}(i), $\{\bx_t\}_{t\in\eZ},\{\bz_t\}_{t\in\eZ}$ and $\{\be_t\}_{t\in\eZ}$ are mutually uncorrelated. Then, by Conditions~\ref{cond.A_a}--\ref{cond.E_a}, we have
$$
\begin{aligned}
    &\|n^{-1}\bA\bX^{\top}\bZ\bB^{\top}\|\le \|n^{-1}\bX^{\top} \bZ\| =O_p(n^{-1/2}p^{\delta_1/2+\delta_0/2})=O_p(n^{-1/2}p^{1/2+\delta_0/2}),\\
    & \|n^{-1}\bA\bX^{\top} \bE\|\le \|n^{-1}\bX^{\top} \bE\| =O_p(n^{-1/2}p^{1/2+\delta_0/2}),\\
    & \Vert n^{-1}\bB\bZ^{\top}\bZ\bB^{\top}\Vert=O_p(p^{\delta_1}),~~\Vert n^{-1}\bE^{\top}\bE\Vert=O_p(n^{-1}p),
\end{aligned}
$$
which together imply that $\|n^{-1}\bY^{\top} \bY-n^{-1}\bA\bX^{\top} \bX\bA^{\top}\|=O_p(n^{-1/2}p^{1/2+\delta_0/2}+p^{\delta_1}).$
Consider that the $r_0$ largest eigenvalues of $n^{-1}\bA\bX^{\top}\bX\bA^{\top}$ are of order $p^{\delta_0}$ with probability tending to 1. Thus, by using Lemma~\ref{lem.sintheta}, we conclude $\big\|\widehat{\bXi}_{A}\widehat{\bXi}_{A}^{\top}-\bA\bA^{\top}\big\|=O_p(n^{-1/2}p^{(1-\delta_0)/2}+p^{\delta_1-\delta_0}).$

\subsection{Proof of Theorem \ref{advauto}}
Let $\hat{\mu}_{k1} \ge \cdots \ge \hat{\mu}_{kp} \ge 0$ be the eigenvalues of $\widehat{\bOmega}_{y}(k)\widehat{\bOmega}_{y}(k)^{\top}$, with the corresponding spectral decomposition
 \begin{eqnarray}\label{eq000a}
\nonumber&&\widehat{\bOmega}_{y}(k)\widehat{\bOmega}_{y}(k)^{\top}=\sum_{i=1}^p \hat{\mu}_{ki}\widehat{\bpsi}_{ki} \widehat{\bpsi}_{ki}^{\top}
=\widehat{\bPsi}_{k,A} \widehat{\bN}_{k,A} \widehat{\bPsi}_{k,A}^{\top}+\widehat{\bPsi}_{k,B} \widehat{\bN}_{k,B} \widehat{\bPsi}_{k,B}^{\top}+\widehat{\bPsi}_{k,E} \widehat{\bN}_{k,E} \widehat{\bPsi}_{k,E}^{\top},
\end{eqnarray}
where $\widehat{\bN}_{k,A}=\diag(\hat{\mu}_{k1},\dots,\hat{\mu}_{kr_0}),\widehat{\bN}_{k,B}=\diag(\hat{\mu}_{k,r_0+1},\dots,\hat{\mu}_{kr}),\widehat{\bN}_{k,E}=\diag(\hat{\mu}_{k,r+1},\dots,\hat{\mu}_{kp}),$ $\widehat{\bPsi}_{k,A}=(\widehat{\bpsi}_{k1},\dots,\widehat{\bpsi}_{kr_{0}}),\widehat{\bPsi}_{k,B}=(\widehat{\bpsi}_{k,r_0+1},\dots,\widehat{\bpsi}_{kr})$, and $\widehat{\bPsi}_{k,E}=(\widehat{\bpsi}_{k,r+1},\dots,\widehat{\bpsi}_{kp})$.
To show Theorem~\ref{advauto}, we first introduce a technical lemma.
\begin{lemma}
    \label{lem.auto}
    Suppose that Conditions~\ref{cond.A_a}--\ref{cond.auto_a} hold and $\hat{\mu}_{k1} \asymp \hat{\mu}_{kr_0} \asymp p^{2\delta_{0}}$ with probability tending to 1. Then, for $k\in[m]$,
    \begin{equation}
        \label{eq.lem_auto}
        \Vert(\bI_p-\widehat{\bPsi}_{k,A} \widehat{\bPsi}_{k,A}^{\top})\widehat{\bOmega}_{y}(k)\widehat{\bPsi}_{k,A} \widehat{\bPsi}_{k,A}^{\top}\widehat{\bOmega}_{y}(k)^{\top}(\bI_p-\widehat{\bPsi}_{k,A} \widehat{\bPsi}_{k,A}^{\top})\Vert=o_p(p^{2\delta_{1k}}).
    \end{equation}
\end{lemma}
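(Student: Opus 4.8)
The plan is to reduce the statement to a bound of the shape $\sigma_{r_0+1}(\widehat{\bOmega}_{y}(k))\cdot\rho_k$, where $\rho_k$ is the rate at which the leading singular subspaces of $\widehat{\bOmega}_{y}(k)$ approximate $\cC(\bA)$. Write $\bP=\bPsi_{k,A}\bPsi_{k,A}^{\T}$, let $\widetilde{\bPsi}_{k,A}=(\widetilde{\bpsi}_{k1},\dots,\widetilde{\bpsi}_{kr_0})$ collect the $r_0$ leading right singular vectors of $\widehat{\bOmega}_{y}(k)$, and recall $\widehat{\bOmega}_{y}(k)=\sum_{i}\mu_{ki}^{1/2}\bpsi_{ki}\widetilde{\bpsi}_{ki}^{\T}$, so that $\widehat{\bOmega}_{y}(k)\widetilde{\bPsi}_{k,A}=\bPsi_{k,A}\bN_{k,A}^{1/2}$ with $\bN_{k,A}=\diag(\mu_{k1},\dots,\mu_{kr_0})$. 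Since $\widehat{\bOmega}_{y}(k)\widetilde{\bPsi}_{k,A}\in\cC(\bPsi_{k,A})$, we have $(\bI_p-\bP)\widehat{\bOmega}_{y}(k)\widetilde{\bPsi}_{k,A}=\bzero$; subtracting $(\bI_p-\bP)\widehat{\bOmega}_{y}(k)\widetilde{\bPsi}_{k,A}\widetilde{\bPsi}_{k,A}^{\T}\bPsi_{k,A}=\bzero$ therefore gives
$$(\bI_p-\bP)\widehat{\bOmega}_{y}(k)\bPsi_{k,A}=(\bI_p-\bP)\widehat{\bOmega}_{y}(k)\big(\bI_p-\widetilde{\bPsi}_{k,A}\widetilde{\bPsi}_{k,A}^{\T}\big)\bPsi_{k,A}.$$
Moreover the left-hand side of the lemma equals $\Vert(\bI_p-\bP)\widehat{\bOmega}_{y}(k)\bPsi_{k,A}\Vert^{2}$, since it is $\bM\bM^{\T}$ for $\bM=(\bI_p-\bP)\widehat{\bOmega}_{y}(k)\bPsi_{k,A}$; so it suffices to bound $\Vert(\bI_p-\bP)\widehat{\bOmega}_{y}(k)\bPsi_{k,A}\Vert$.

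Two preliminary estimates are needed. First, expanding $\widehat{\bOmega}_{y}(k)=(n-k)^{-1}(\bA\bX^{\T}+\bB\bZ^{\T}+\bE^{\T})\bD_k(\bX\bA^{\T}+\bZ\bB^{\T}+\bE)$ into its nine sample (auto)covariance blocks and invoking Conditions~\ref{cond.x_a}--\ref{cond.auto_a} together with the factorization $\bE=\bG\bGamma\bR$ from Condition~\ref{cond.E_a}, the cross blocks obey $\Vert(n-k)^{-1}\bX^{\T}\bD_k\bE\Vert=O_p(n^{-1/2}p^{(1+\delta_0)/2})$, $\Vert(n-k)^{-1}\bZ^{\T}\bD_k\bE\Vert=O_p(n^{-1/2}p^{(1+\delta_1)/2})$, $\Vert(n-k)^{-1}\bE^{\T}\bD_k\bE\Vert=O_p(n^{-1}p)$, $\Vert(n-k)^{-1}\bX^{\T}\bD_k\bZ\Vert=O_p(n^{-1/2}p^{(\delta_0+\delta_1)/2})$, while $\Vert\widehat{\bOmega}_{z}(k)\Vert\asymp p^{\delta_{1k}}$ and $\sigma_{r_0}(\widehat{\bOmega}_{x}(k))\asymp p^{\delta_0}$; collecting the pieces whose column space is $\cC(\bA)$ (each of rank $\le r_0$) and applying Weyl's inequality for singular values yields $\sigma_{r_0+1}(\widehat{\bOmega}_{y}(k))=O_p(p^{\delta_{1k}}+n^{-1/2}p^{(1+\delta_0)/2})$. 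Second, writing $\widehat{\bOmega}_{y}(k)=\bA\widehat{\bOmega}_{x}(k)\bA^{\T}+\bDelta_k$ with $\Vert\bDelta_k\Vert=O_p(p^{\delta_{1k}}+n^{-1/2}p^{(1+\delta_0)/2})$, the matrix $\widehat{\bOmega}_{y}(k)\widehat{\bOmega}_{y}(k)^{\T}$ is a perturbation of $\bA\widehat{\bOmega}_{x}(k)\widehat{\bOmega}_{x}(k)^{\T}\bA^{\T}$ (leading $r_0$ eigenvalues $\asymp p^{2\delta_0}$, eigenspace $\cC(\bA)$, $(r_0{+}1)$th eigenvalue $0$) of size $O_p(p^{\delta_0+\delta_{1k}}+n^{-1/2}p^{(1+3\delta_0)/2})=o_p(p^{2\delta_0})$, so by the hypothesis $\mu_{k1}\asymp\mu_{kr_0}\asymp p^{2\delta_0}$ and Lemmas~\ref{lem.weyl}--\ref{lem.sintheta}, $\Vert\bPsi_{k,A}\bPsi_{k,A}^{\T}-\bA\bA^{\T}\Vert=O_p(\rho_k)$ with $\rho_k:=p^{\delta_{1k}-\delta_0}+n^{-1/2}p^{(1-\delta_0)/2}=o_p(1)$; the same bound holds for $\Vert\widetilde{\bPsi}_{k,A}\widetilde{\bPsi}_{k,A}^{\T}-\bA\bA^{\T}\Vert$ by running the argument on $\widehat{\bOmega}_{y}(k)^{\T}\widehat{\bOmega}_{y}(k)$. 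Hence $\Vert(\bI_p-\widetilde{\bPsi}_{k,A}\widetilde{\bPsi}_{k,A}^{\T})\bPsi_{k,A}\Vert\le\Vert\bPsi_{k,A}\bPsi_{k,A}^{\T}-\widetilde{\bPsi}_{k,A}\widetilde{\bPsi}_{k,A}^{\T}\Vert=O_p(\rho_k)$.

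Combining these, $\widehat{\bOmega}_{y}(k)(\bI_p-\widetilde{\bPsi}_{k,A}\widetilde{\bPsi}_{k,A}^{\T})=\sum_{i>r_0}\mu_{ki}^{1/2}\bpsi_{ki}\widetilde{\bpsi}_{ki}^{\T}$ has operator norm exactly $\sigma_{r_0+1}(\widehat{\bOmega}_{y}(k))$, so from the displayed identity (and idempotency of $\bI_p-\widetilde{\bPsi}_{k,A}\widetilde{\bPsi}_{k,A}^{\T}$),
$$\Vert(\bI_p-\bP)\widehat{\bOmega}_{y}(k)\bPsi_{k,A}\Vert\le\sigma_{r_0+1}(\widehat{\bOmega}_{y}(k))\,\Vert(\bI_p-\widetilde{\bPsi}_{k,A}\widetilde{\bPsi}_{k,A}^{\T})\bPsi_{k,A}\Vert=O_p\big((p^{\delta_{1k}}+n^{-1/2}p^{(1+\delta_0)/2})\,\rho_k\big).$$
Squaring, the left-hand side of the lemma is $O_p\big((p^{\delta_{1k}-\delta_0/2}+n^{-1/2}p^{1/2})^{4}\big)$; since $\delta_{1k}<\delta_0$ (Condition~\ref{cond.auto_a}(i)) and $p^{1-\delta_{1k}}=o(n)$ (which follows from Conditions~\ref{cond.cov_a}(ii) and \ref{cond.auto_a}(ii)), this is $o_p(p^{2\delta_{1k}})$, establishing the claim.

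I expect the main obstacle to be twofold. The routine-but-delicate part is the block-by-block bookkeeping behind the first ingredient: controlling the sample autocovariance cross-products built from $\bE=\bG\bGamma\bR$ and the mixing/moment conditions so as to obtain the sharp $\sigma_{r_0+1}(\widehat{\bOmega}_{y}(k))=O_p(p^{\delta_{1k}}+n^{-1/2}p^{(1+\delta_0)/2})$ and the rate $\rho_k$. The conceptual crux, however, is that the naive bound $\Vert(\bI_p-\bP)\widehat{\bOmega}_{y}(k)\bPsi_{k,A}\Vert\le\Vert\widehat{\bOmega}_{y}(k)\Vert\,\rho_k\asymp p^{\delta_0}\rho_k$ is far too lossy (it only delivers $O_p(p^{\delta_{1k}}+n^{-1/2}p^{(1+\delta_0)/2})$, not $o_p(p^{\delta_{1k}})$); one must exploit the near-symmetry of $\widehat{\bOmega}_{y}(k)$ — that both its leading left and leading right singular subspaces approximate $\cC(\bA)$ — through the identity $(\bI_p-\bP)\widehat{\bOmega}_{y}(k)\widetilde{\bPsi}_{k,A}=\bzero$, which is precisely what replaces the factor $\Vert\widehat{\bOmega}_{y}(k)\Vert$ by the much smaller $\sigma_{r_0+1}(\widehat{\bOmega}_{y}(k))$.
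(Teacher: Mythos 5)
Your proof is correct, and it reaches \eqref{eq.lem_auto} by a genuinely different mechanism than the paper's. The paper expands $\bI_p-\bPsi_{k,A}\bPsi_{k,A}^{\T}=\bPsi_{k,B}\bPsi_{k,B}^{\T}+\bPsi_{k,E}\bPsi_{k,E}^{\T}$, reduces the lemma to the cross blocks $\bPsi_{k,B}^{\T}\widehat{\bOmega}_{y}(k)\bPsi_{k,A}$ and $\bPsi_{k,E}^{\T}\widehat{\bOmega}_{y}(k)\bPsi_{k,A}$, and extracts the former from the exact orthogonality $\bPsi_{k,B}^{\T}\widehat{\bOmega}_{y}(k)\widehat{\bOmega}_{y}(k)^{\T}\bPsi_{k,A}=\bzero$, using $\|\bPsi_{k,A}^{\T}\widehat{\bOmega}_{y}(k)^{\T}\bPsi_{k,A}\|_{\min}\asymp p^{\delta_0}$ together with model-based bounds on the remaining blocks, and borrowing the rate of Theorem~\ref{advauto}(ii) (delegated to Zhang et al.\ 2024). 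You instead use the SVD pairing $(\bI_p-\bPsi_{k,A}\bPsi_{k,A}^{\T})\widehat{\bOmega}_{y}(k)\widetilde{\bPsi}_{k,A}=\bzero$, which collapses the whole left-hand side to $\sigma_{r_0+1}(\widehat{\bOmega}_{y}(k))^2\,\|(\bI_p-\widetilde{\bPsi}_{k,A}\widetilde{\bPsi}_{k,A}^{\T})\bPsi_{k,A}\|^2$, and you make the argument self-contained via a rank-splitting Weyl bound for $\sigma_{r_0+1}(\widehat{\bOmega}_{y}(k))$ and a Davis--Kahan comparison of both singular subspaces with $\cC(\bA)$; both routes rest on the same structural fact you single out as the crux, namely that the leading left and right singular subspaces of $\widehat{\bOmega}_{y}(k)$ both align with $\cC(\bA)$, which is exactly what lets one beat the lossy bound $\|\widehat{\bOmega}_{y}(k)\|\rho_k$. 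The trade-off: your product bound carries an extra $n^{-1}p$ term relative to the paper's blockwise rate in \eqref{eq.BOA}, but it is still $o_p(p^{\delta_{1k}})$ because $p^{1-\delta_{1k}}=(p^{1-\delta_1}\,p^{1+\delta_1-2\delta_{1k}})^{1/2}=o(n)$, just as you argue; conversely, the paper's sharper block bound \eqref{eq.BOA} is reused downstream (Lemma~\ref{lem.our_2}) to control $\bPsi_{k,B}$, so your streamlined proof establishes this lemma but would not by itself replace that later ingredient, while it buys a shorter argument that does not outsource the $\bPsi_{k,A}$ and $\widetilde{\bPsi}_{k,A}$ convergence rates and handles the $\bPsi_{k,B}$ and $\bPsi_{k,E}$ directions in one stroke.
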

\begin{proof}
    Consider that
    \begin{equation}
        \label{ssf0}
        \begin{aligned}
            &(\bI_p-\widehat{\bPsi}_{k,A} \widehat{\bPsi}_{k,A}^{\top})\widehat{\bOmega}_{y}(k)\widehat{\bPsi}_{k,A} \widehat{\bPsi}_{k,A}^{\top}\widehat{\bOmega}_{y}(k)^{\top}(\bI_p-\widehat{\bPsi}_{k,A} \widehat{\bPsi}_{k,A}^{\top})\\
            =&\widehat{\bPsi}_{k,B} \widehat{\bPsi}_{k,B}^{\top}\widehat{\bOmega}_{y}(k)\widehat{\bPsi}_{k,A} \widehat{\bPsi}_{k,A}^{\top}\widehat{\bOmega}_{y}(k)^{\top}\widehat{\bPsi}_{k,B} \widehat{\bPsi}_{k,B}^{\top}
            +\widehat{\bPsi}_{k,B} \widehat{\bPsi}_{k,B}^{\top}\widehat{\bOmega}_{y}(k)\widehat{\bPsi}_{k,A} \widehat{\bPsi}_{k,A}^{\top}\widehat{\bOmega}_{y}(k)^{\top}\widehat{\bPsi}_{k,E} \widehat{\bPsi}_{k,E}^{\top}\\
            &+\widehat{\bPsi}_{k,E} \widehat{\bPsi}_{k,E}^{\top}\widehat{\bOmega}_{y}(k)\widehat{\bPsi}_{k,A} \widehat{\bPsi}_{k,A}^{\top}\widehat{\bOmega}_{y}(k)^{\top}\widehat{\bPsi}_{k,B} \widehat{\bPsi}_{k,B}^{\top}
            +\widehat{\bPsi}_{k,E} \widehat{\bPsi}_{k,E}^{\top}\widehat{\bOmega}_{y}(k)\widehat{\bPsi}_{k,A} \widehat{\bPsi}_{k,A}^{\top}\widehat{\bOmega}_{y}(k)^{\top}\widehat{\bPsi}_{k,E} \widehat{\bPsi}_{k,E}^{\top}.
        \end{aligned}
    \end{equation}
    We first derive the order of the first term in \eqref{ssf0}, i.e., the order of $\widehat{\bPsi}_{k,B}^{\top}\widehat{\bOmega}_{y}(k)\widehat{\bPsi}_{k,A}$. To this end, by the properties of eigenvectors, we have
    $$
    \begin{aligned}
    \bzero=&\widehat{\bPsi}_{k,B}^{\top}\widehat{\bOmega}_{y}(k)\widehat{\bOmega}_{y}(k)^{\top}\widehat{\bPsi}_{k,A}\\
    =&\widehat{\bPsi}_{k,B}^{\top}\widehat{\bOmega}_{y}(k)\widehat{\bPsi}_{k,A} \widehat{\bPsi}_{k,A}^{\top}\widehat{\bOmega}_{y}(k)^{\top}\widehat{\bPsi}_{k,A}+\widehat{\bPsi}_{k,B}^{\top}\widehat{\bOmega}_{y}(k)\widehat{\bPsi}_{k,B} \widehat{\bPsi}_{k,B}^{\top}\widehat{\bOmega}_{y}(k)^{\top}\widehat{\bPsi}_{k,A}\\
    &+\widehat{\bPsi}_{k,B}^{\top}\widehat{\bOmega}_{y}(k)\widehat{\bPsi}_{k,E} \widehat{\bPsi}_{k,E}^{\top}\widehat{\bOmega}_{y}(k)^{\top}\widehat{\bPsi}_{k,A}.
    \end{aligned}
    $$
    The orders of each term are calculated as follows: 
    \begin{itemize}
        \item [(i)] $\|\widehat{\bPsi}_{k,A}^{\top}\widehat{\bOmega}_{y}(k)^{\top}\widehat{\bPsi}_{k,A}\| \asymp \|\widehat{\bPsi}_{k,A}^{\top}\widehat{\bOmega}_{y}(k)^{\top}\widehat{\bPsi}_{k,A}\|_{\min} \asymp p^{\delta_0}$ with probability tending to 1, by noticing that $\hat{\mu}_{k1} \asymp \hat{\mu}_{kr_0} \asymp p^{2\delta_{0}}$, $\rank\big(\widehat{\bPsi}_{k,A}^{\top}\widehat{\bOmega}_{y}(k)^{\top}\widehat{\bPsi}_{k,A}\big)=r_0$, and $\widehat{\bPsi}_{k,A}^{\top}\widehat{\bOmega}_{y}(k)^{\top}\widehat{\bPsi}_{k,A}=\widehat{\bPsi}_{k,A}^{\top}\widetilde{\bPsi}_{k,A}\widehat{\bN}_{k,A}^{1/2}=\widehat{\bPsi}_{k,A}^{\top}\widehat{\bN}_{k,A}^{1/2}\widetilde{\bPsi}_{k,A}.$
    
        \item [(ii)] $\|\widehat{\bPsi}_{k,B}^{\top}\widehat{\bOmega}_{y}(k)\widehat{\bPsi}_{k,B}\| \asymp p^{\delta_{1k}}$ with probability tending to 1, by noticing that
        $$
        \begin{aligned}
        &\widehat{\bPsi}_{k,B}^{\top}\widehat{\bOmega}_{y}(k)\widehat{\bPsi}_{k,B}=\widehat{\bPsi}_{k,B}^{\top}(\bA\bX^{\top}+\bB\bZ^{\top}+\bE^{\top})\bD_k(\bX\bA^{\top}+\bZ\bB^{\top}+\bE)\widehat{\bPsi}_{k,B}\\
        =&\widehat{\bPsi}_{k,B}^{\top}(\bI_p-\widehat{\bPsi}_{k,A} \widehat{\bPsi}_{k,A}^{\top})(\bA\bX^{\top}+\bB\bZ^{\top}+\bE^{\top})\bD_k(\bX\bA^{\top}+\bZ\bB^{\top}+\bE)(\bI_p-\widehat{\bPsi}_{k,A} \widehat{\bPsi}_{k,A}^{\top})\widehat{\bPsi}_{k,B}\\
        =&\widehat{\bPsi}_{k,B}^{\top}(\bB\bZ^{\top}+\bE^{\top})\bD_k(\bZ\bB^{\top}+\bE)\widehat{\bPsi}_{k,B}\\
        &+\widehat{\bPsi}_{k,B}^{\top}(\bA\bA^{\top}-\widehat{\bPsi}_{k,A} \widehat{\bPsi}_{k,A}^{\top})\bA\bX^{\top}\bD_k\bX\bA^{\top}(\bA\bA^{\top}-\widehat{\bPsi}_{k,A} \widehat{\bPsi}_{k,A}^{\top})\widehat{\bPsi}_{k,B}\\
        &+\widehat{\bPsi}_{k,B}^{\top}(\bA\bA^{\top}-\widehat{\bPsi}_{k,A} \widehat{\bPsi}_{k,A}^{\top})\bA\bX^{\top}\bD_k(\bZ\bB^{\top}+\bE)\widehat{\bPsi}_{k,B}\\
        &+\widehat{\bPsi}_{k,B}^{\top}(\bB\bZ^{\top}+\bE^{\top})\bD_k\bX\bA^{\top}(\bA\bA^{\top}-\widehat{\bPsi}_{k,A} \widehat{\bPsi}_{k,A}^{\top})\widehat{\bPsi}_{k,B}.
        \end{aligned}
        $$
    
        \item [(iii)] $\|\widehat{\bPsi}_{k,B}^{\top}\widehat{\bOmega}_{y}(k)^{\top}\widehat{\bPsi}_{k,A}\|=O_p(p^{\delta_{1k}}+n^{-1/2}p^{(1+\delta_0)/2})$, by noticing that
        $$
        \begin{aligned}
            &\widehat{\bPsi}_{k,A}^{\top}\widehat{\bOmega}_{y}(k)\widehat{\bPsi}_{k,B}=\widehat{\bPsi}_A^{\top}(\bA\bX^{\top}+\bB\bZ^{\top}+\bE^{\top})\bD_k(\bX\bA^{\top}+\bZ\bB^{\top}+\bE)\widehat{\bPsi}_{k,B}\\
            =&\widehat{\bPsi}_{k,A}^{\top}(\bA\bX^{\top}+\bB\bZ^{\top}+\bE^{\top})\bD_k(\bX\bA^{\top}+\bZ\bB^{\top}+\bE)(\bI_p-\widehat{\bPsi}_{k,A} \widehat{\bPsi}_{k,A}^{\top})\widehat{\bPsi}_{k,B}\\
            =&\widehat{\bPsi}_{k,A}^{\top}(\bA\bX^{\top}+\bB\bZ^{\top}+\bE^{\top})\bD_k(\bZ\bB^{\top}+\bE)\widehat{\bPsi}_{k,B}\\
            &+\widehat{\bPsi}_{k,A}^{\top}(\bA\bX^{\top}+\bB\bZ^{\top}+\bE^{\top})\bD_k\bX\bA^{\top}(\bA\bA^{\top}-\widehat{\bPsi}_{k,A} \widehat{\bPsi}_{k,A}^{\top})\widehat{\bPsi}_{k,B}.
        \end{aligned}
        $$
    
        \item [(iv)] $\|\widehat{\bPsi}_{k,B}^{\top}\widehat{\bOmega}_{y}(k)\widehat{\bPsi}_{k,E}\|=O_p(p^{\delta_{1k}}),$ by noticing that
        $$
        \begin{aligned}
            &\widehat{\bPsi}_{k,B}^{\top}\widehat{\bOmega}_{y}(k)\widehat{\bPsi}_{k,E}=\widehat{\bPsi}_{k,B}^{\top}(\bA\bX^{\top}+\bB\bZ^{\top}+\bE^{\top})\bD_k(\bX\bA^{\top}+\bZ\bB^{\top}+\bE)\widehat{\bPsi}_{k,E}\\
            =&\widehat{\bPsi}_{k,B}^{\top}(\bI_p-\widehat{\bPsi}_{k,A} \widehat{\bPsi}_{k,A}^{\top})(\bA\bX^{\top}+\bB\bZ^{\top}+\bE^{\top})\bD_k(\bX\bA^{\top}+\bZ\bB^{\top}+\bE)(\bI_p-\widehat{\bPsi}_{k,A} \widehat{\bPsi}_{k,A}^{\top})\widehat{\bPsi}_{k,E}\\
            =&\widehat{\bPsi}_{k,B}^{\top}(\bB\bZ^{\top}+\bE^{\top})\bD_k(\bZ\bB^{\top}+\bE)\widehat{\bPsi}_{k,E}\\
            &+\widehat{\bPsi}_{k,B}\bB^{\top}(\bA\bA^{\top}-\widehat{\bPsi}_{k,A} \widehat{\bPsi}_{k,A}^{\top})\bA\bX^{\top}\bD_k\bX\bA^{\top}(\bA\bA^{\top}-\widehat{\bPsi}_{k,A} \widehat{\bPsi}_{k,A}^{\top})\widehat{\bPsi}_{k,E}\\
            &+\widehat{\bPsi}_{k,B}^{\top}(\bA\bA^{\top}-\widehat{\bPsi}_{k,A} \widehat{\bPsi}_{k,A}^{\top})\bA\bX^{\top}\bD_k(\bZ\bB^{\top}+\bE)\widehat{\bPsi}_{k,E}\\
            &+\widehat{\bPsi}_{k,B}^{\top}(\bB\bZ^{\top}+\bE^{\top})\bD_k\bX\bA^{\top}(\bA\bA^{\top}-\widehat{\bPsi}_{k,A} \widehat{\bPsi}_{k,A}^{\top})\widehat{\bPsi}_{k,E}.
        \end{aligned}
        $$
    
        \item [(v)] $\|\widehat{\bPsi}_{k,E}^{\top}\widehat{\bOmega}_{y}(k)^{\top}\widehat{\bPsi}_{k,A}\|=O_p(p^{\delta_{1k}}+n^{-1/2}p^{(1+\delta_0)/2}),$ by noticing that
        $$
        \begin{aligned}
            &\widehat{\bPsi}_{k,A}^{\top}\widehat{\bOmega}_{y}(k)\widehat{\bPsi}_{k,E}=\widehat{\bPsi}_{k,A}^{\top}(\bA\bX^{\top}+\bB\bZ^{\top}+\bE^{\top})\bD_k(\bX\bA^{\top}+\bZ\bB^{\top}+\bE)\widehat{\bPsi}_{k,E}\\
            =&\widehat{\bPsi}_{k,A}^{\top}(\bA\bX^{\top}+\bB\bZ^{\top}+\bE^{\top})\bD_k(\bX\bA^{\top}+\bZ\bB^{\top}+\bE)(\bI_p-\widehat{\bPsi}_{k,A} \widehat{\bPsi}_{k,A}^{\top})\widehat{\bPsi}_{k,E}\\
            =&\widehat{\bPsi}_{k,A}^{\top}(\bA\bX^{\top}+\bB\bZ^{\top}+\bE^{\top})\bD_k(\bZ\bB^{\top}+\bE)\widehat{\bPsi}_{k,B}\\
            &+\widehat{\bPsi}_{k,A}^{\top}(\bA\bX^{\top}+\bB\bZ^{\top}+\bE^{\top})\bD_k\bX\bA^{\top}(\bA\bA^{\top}-\widehat{\bPsi}_{k,A} \widehat{\bPsi}_{k,A}^{\top})\widehat{\bPsi}_{k,E}.
        \end{aligned}
        $$
    \end{itemize}
    Combining above, we have
    \begin{equation}
        \label{eq.BOA}
        \|\widehat{\bPsi}_{k,B}^{\top}\widehat{\bOmega}_{y}(k)\widehat{\bPsi}_{k,A} \|=O_p(p^{2\delta_{1k}-\delta_0}+n^{-1/2}p^{(1-\delta_0+2\delta_{1k})/2})=o_p(p^{\delta_{1k}}).
    \end{equation}
    Similar procedures can be used to prove the other three terms of \eqref{ssf0}. Thus, we have
    $$
    \|(\bI_p-\widehat{\bPsi}_{k,A} \widehat{\bPsi}_{k,A}^{\top})\widehat{\bOmega}_{y}(k)\widehat{\bPsi}_{k,A} \widehat{\bPsi}_{k,A}^{\top}\widehat{\bOmega}_{y}(k)^{\top}(\bI_p-\widehat{\bPsi}_{k,A} \widehat{\bPsi}_{k,A}^{\top})\|=o_p(p^{2\delta_{1k}}),
    $$
    which concludes \eqref{eq.lem_auto}. The proof is complete.
\end{proof}

Now we are ready to prove Theorem~\ref{advauto}. The proofs of $\hat{\mu}_{k1} \asymp \hat{\mu}_{kr_0} \asymp p^{2\delta_{0}}$ in Theorem~\ref{advauto}(i) and $\big\|\widehat{\bPsi}_{k,A}\widehat{\bPsi}_{k,A}^{\top}-\bA\bA^{\top}\big\|=O_p(n^{-1/2}p^{(1-\delta_{0})/2}+p^{\delta_{1k}-\delta_{0}})$ in Theorem~\ref{advauto}(ii) are similar to those of Theorem 1 in \textcolor{blue}{Zhang et al.} (\textcolor{blue}{2024}), and thus are omitted. It suffices to show that $\hat{\mu}_{k,r_0+1} \asymp \hat{\mu}_{kr} \asymp p^{2\delta_{1k}}$, and $ \hat{\mu}_{k,r+1} \asymp \hat{\mu}_{k,\lfloor c'(p\wedge n)\rfloor} \asymp n^{-2}p^{2}$ with probability tending to 1. By definitions, we have
\begin{eqnarray}
&&\widehat{\bPsi}_{k,B} \widehat{\bN}_{k,B} \widehat{\bPsi}_{k,B}^{\top}+\widehat{\bPsi}_{k,E} \widehat{\bN}_{k,E} \widehat{\bPsi}_{k,E}^{\top}
=(\bI_p-\widehat{\bPsi}_{k,A} \widehat{\bPsi}_{k,A}^{\top})\widehat{\bOmega}_{y}(k) \widehat{\bOmega}_{y}(k)^{\top}(\bI_p-\widehat{\bPsi}_{k,A} \widehat{\bPsi}_{k,A}^{\top})\nonumber\\
&=&(\bI_p-\widehat{\bPsi}_{k,A} \widehat{\bPsi}_{k,A}^{\top})\widehat{\bOmega}_{y}(k) (\bI_p-\widehat{\bPsi}_{k,A} \widehat{\bPsi}_{k,A}^{\top})\widehat{\bOmega}_{y}(k)^{\top}(\bI_p-\widehat{\bPsi}_{k,A} \widehat{\bPsi}_{k,A}^{\top})\label{eq000a1ss01}\\
&&+(\bI_p-\widehat{\bPsi}_{k,A} \widehat{\bPsi}_{k,A}^{\top})\widehat{\bOmega}_{y}(k) \widehat{\bPsi}_{k,A} \widehat{\bPsi}_{k,A}^{\top}\widehat{\bOmega}_{y}(k)^{\top}(\bI_p-\widehat{\bPsi}_{k,A} \widehat{\bPsi}_{k,A}^{\top})\label{eq000a1ss00}.
\end{eqnarray}
Since both \eqref{eq000a1ss01} and \eqref{eq000a1ss00} are non-negative definite, the rank of each term is not larger than $p-r_0$. Recall that $\widehat{\bOmega}_y(k)=(n-k)^{-1}\bY^{\top}\bD_k\bY$. Then we rewrite \eqref{eq000a1ss01} as
\begin{equation}
    \label{eq.thm_auto_1}
    \begin{aligned}
        &(\bI_p-\widehat{\bPsi}_{k,A} \widehat{\bPsi}_{k,A}^{\top})\widehat{\bOmega}_{y}(k) (\bI_p-\widehat{\bPsi}_{k,A} \widehat{\bPsi}_{k,A}^{\top})\widehat{\bOmega}_{y}(k)^{\top}(\bI_p-\widehat{\bPsi}_{k,A} \widehat{\bPsi}_{k,A}^{\top})\\
        =&(n-k)^{-2}(\bI_p-\widehat{\bPsi}_{k,A} \widehat{\bPsi}_{k,A}^{\top})\bY^{\top}\bD_k\bY (\bI_p-\widehat{\bPsi}_{k,A} \widehat{\bPsi}_{k,A}^{\top})(\bY^{\top}\bD_k\bY)^{\top}(\bI_p-\widehat{\bPsi}_{k,A} \widehat{\bPsi}_{k,A}^{\top})\\
        :=&\sum_{i=r_0+1}^p \widebar{\mu}_{ki}\widebar{\bpsi}_{ki} \widebar{\bpsi}_{ki}^{\top}=\widebar{\bPsi}_{k,B} \widebar{\bN}_{k,B} \widebar{\bPsi}_{k,B}^{\top}+\widebar{\bPsi}_{k,E} \widebar{\bN}_{k,E} \widebar{\bPsi}_{k,E}^{\top}
    \end{aligned}
\end{equation}
with $\widebar{\bN}_{k,B}=\diag(\widebar{\mu}_{k,r_0+1},\dots,\widebar{\mu}_{kr}),\widebar{\bN}_{k,E}=\diag(\widebar{\mu}_{k,r+1},\dots,\widebar{\mu}_{kp}),\widebar{\bPsi}_{k,B}=(\widebar{\bpsi}_{k,r_0+1},\dots,\widebar{\bpsi}_{kr}),$ and $\widebar{\bPsi}_{k,E}=(\widebar{\bpsi}_{k,r+1},\dots,\widebar{\bpsi}_{kp})$, where $\widebar{\mu}^{1/2}_{k,r_0+1}\ge \dots\ge\widebar{\mu}^{1/2}_{kp}$ are the singular values of $(n-k)^{-1}(\bI_p-\widehat{\bPsi}_{k,A} \widehat{\bPsi}_{k,A}^{\top})\bY^{\top}\bD_k\bY(\bI_p-\widehat{\bPsi}_{k,A} \widehat{\bPsi}_{k,A}^{\top}).$ 
Define $\bP_{A,B}=(\bA,\bB)\{(\bA,\bB)^{\top}(\bA,\bB)\}^{-1}(\bA,\bB)^{\top}$, and rewrite it as $\bP_{A,B}-\bA\bA^{\top}=\bB_1\bB_1^{\top}$ with $\bB_1^{\top}\bB_1=\bI_{r_1}$ (see Lemma~\ref{advsam0} below for the well-definedness of $\bB_1$).
Note that
$$
\begin{aligned}
    &(\bI_p-\widehat{\bPsi}_{k,A}\widehat{\bPsi}_{k,A}^{\top})\mathbf{Y}^{\top}=(\bI_p-\widehat{\bPsi}_{k,A}\widehat{\bPsi}_{k,A}^{\top})(\bA\bX^{\top}+\bB\bZ^{\top}+\bE^{\top})\\
    =&(\bA\bA^{\top}-\widehat{\bPsi}_{k,A}\widehat{\bPsi}_{k,A}^{\top})(\bA\bX^{\top}+\bB\bZ^{\top})+\bB_1\bB_1^{\top}\bB\bZ^{\top}+(\bI_p-\widehat{\bPsi}_{k,A}\widehat{\bPsi}_{k,A}^{\top})\bE^{\top}.
    \end{aligned}
$$
The orders of each term are calculated as follows:
\begin{itemize}
    \item [(i)] $(n-k)^{-1}\|\bE\|^2=O_p(n^{-1}p)=o_p(p^{\delta_{1k}})$.

    \item [(ii)] $\|(n-k)^{-1}\bB_1\bB_1^{\top}\bB\bZ^{\top}\bD_k\bE\|=O_p(n^{-1/2}p^{(1+\delta_{1})/2})=o_p(p^{\delta_{1k}}).$

    \item [(iii)] $\|(n-k)^{-1}\bB_1\bB_1^{\top}\bB\bZ^{\top}\bD_k\bZ\bB^{\top}\bB_1\bB_1^{\top}\| \asymp \|(n-k)^{-1}\bB_1\bB_1^{\top}\bB\bZ^{\top}\bD_k\bZ\bB^{\top}\bB_1\bB_1^{\top}\|_{\min} \asymp p^{\delta_{1k}}$ with probability tending to 1.

    \item [(iv)] $\|(n-k)^{-1}(\bA\bA^{\top}-\widehat{\bPsi}_{k,A}\widehat{\bPsi}_{k,A}^{\top})(\bA\bX^{\top}+\bB\bZ^{\top})\bD_k(\bX\bA^{\top}+\bZ\bB^{\top})(\bA\bA^{\top}-\widehat{\bPsi}_{k,A}\widehat{\bPsi}_{k,A}^{\top})\|=O_p(n^{-1}p+p^{2\delta_{1k}-\delta_{0}})=o_p(p^{\delta_{1k}})$ by using Theorem~\ref{advauto}(ii).

    \item [(v)] $\|(n-k)^{-1}\bB_1\bB_1^{\top}\bB\bZ^{\top}\bD_k(\bX\bA^{\top}+\bZ\bB^{\top})(\bA\bA^{\top}-\widehat{\bPsi}_{k,A}\widehat{\bPsi}_{k,A}^{\top})\|=O_p(n^{-1/2}p^{(1-\delta_0+2\delta_{1k})/2}+p^{2\delta_{1k}-\delta_{0}})=o_p(p^{\delta_{1k}})$ by using Theorem~\ref{advauto}(ii).

    \item [(vi)] $\|(n-k)^{-1}(\bA\bA^{\top}-\widehat{\bPsi}_{k,A}\widehat{\bPsi}_{k,A}^{\top})(\bA\bX^{\top}+\bB\bZ^{\top})\bD_k\bE\|=O_p(n^{-1}p+n^{-1/2}p^{(1-\delta_0+2\delta_{1k})/2})=o_p(p^{\delta_{1k}})$ by using Theorem~\ref{advauto}(ii).
\end{itemize}
Combining above, we have $\|\widebar{\bPsi}_{k,B}\widebar{\bPsi}_{k,B}^{\top}-\bB_1\bB_1^{\top}\|=O_p(n^{-1/2}p^{(1+\delta_1-2\delta_{1k})/2}+p^{\delta_{1k}-\delta_{0}}),$ and
$$
\widebar{\mu}_{k,r_0+1} \asymp \widebar{\mu}_{kr} \asymp p^{2\delta_{1k}}
$$
with probability tending to 1, which together with Lemmas~\ref{lem.weyl} and~\ref{lem.auto}, imply $\hat{\mu}_{k,r_0+1} \asymp \hat{\mu}_{kr} \asymp p^{2\delta_{1k}}$ with probability tending to 1.
The assertion $\hat{\mu}_{k,r+1} \asymp \hat{\mu}_{k,\lfloor c'(p\wedge n)\rfloor} \asymp n^{-2}p^{2}$ with probability tending to 1 can be proved by using similar procedures. 
The proof is complete.

\subsection{Proof of Corollary \ref{coro.auto}}
(i) Note that $\widehat{\bOmega}_{y}(k)=(\bI_p-\bA\bA^{\top})\widehat{\bOmega}_{y}(k)+\bA\bA^{\top}\widehat{\bOmega}_{y}(k)$ with $\|\bA^{\top}\widehat{\bOmega}_{y}(k)\| \asymp \|\bA^{\top}\widehat{\bOmega}_{y}(k)\|_{\min} \asymp p^{\delta_0}$ with probability tending to 1. Firstly, by $\|\Omega_z(k)\|=O(n^{-1/2}p^{(1+\delta_1)/2})$, we have
\begin{equation}
    \label{eq.coro_1}
    \begin{aligned}
        &\|(\bI_p-\bA\bA^{\top})\widehat{\bOmega}_{y}(k)\|=\|(n-k)^{-1}(\bI_p-\bA\bA^{\top})(\bB\bZ^{\top}+\bE^{\top})\bD_k(\bX\bA^{\top}+\bZ\bB^{\top}+\bE)\|\\
    \le&\|(n-k)^{-1}(\bB\bZ^{\top}+\bE^{\top})\bD_k(\bX\bA^{\top}+\bZ\bB^{\top}+\bE)\| \\
    \le& \|(n-k)^{-1}(\bB\bZ^{\top}+\bE^{\top})\bD_k(\bZ\bB^{\top}+\bE)\|+\|(n-k)^{-1}(\bB\bZ^{\top}+\bE^{\top})\bD_k \bX\bA^{\top})\|\\
    =&O_p(n^{-1/2}p^{(1+\delta_0)/2}+n^{-1/2}p^{(1+\delta_1)/2}+n^{-1/2}p^{(\delta_0+\delta_1)/2}+n^{-1}p)\\
    =&O_p(n^{-1/2}p^{(1+\delta_0)/2})=o_p(p^{\delta_0}),
    \end{aligned}
\end{equation}
which implies $\hat{\mu}_{k1} \asymp \hat{\mu}_{kr_0} \asymp p^{2\delta_{0}}$ with probability tending to 1. 

Note that $\hat{\mu}_{k,r_0+1}^{1/2}$ is the largest singular value of $(\bI_p-\widehat{\bPsi}_{k,A}\widehat{\bPsi}_{k,A}^{\top})\widehat{\bOmega}_{y}(k)$. It follows that
$$
\begin{aligned}
    &(\bI_p-\widehat{\bPsi}_{k,A} \widehat{\bPsi}_{k,A}^{\top})\widehat{\bOmega}_{y}(k)(\bI_p-\widehat{\bPsi}_{k,A} \widehat{\bPsi}_{k,A}^{\top})\\
=&(n-k)^{-1}(\bI_p-\widehat{\bPsi}_{k,A} \widehat{\bPsi}_{k,A}^{\top})(\bA\bX^{\top}+\bB\bZ^{\top}+\bE^{\top})\bD_k(\bX\bA^{\top}+\bZ\bB^{\top}+\bE)(\bI_p-\widehat{\bPsi}_{k,A} \widehat{\bPsi}_{k,A}^{\top})\\
=&(n-k)^{-1}(\bI_p-\widehat{\bPsi}_{k,A} \widehat{\bPsi}_{k,A}^{\top})(\bB\bZ^{\top}+\bE^{\top})\bD_k(\bZ\bB^{\top}+\bE)(\bI_p-\widehat{\bPsi}_{k,A} \widehat{\bPsi}_{k,A}^{\top})\\
&+(n-k)^{-1}(\bI_p-\widehat{\bPsi}_{k,A} \widehat{\bPsi}_{k,A}^{\top})(\bB\bZ^{\top}+\bE^{\top})\bD_k\bX\bA^{\top}(\bA\bA^{\top}-\widehat{\bPsi}_{k,A} \widehat{\bPsi}_{k,A}^{\top})\\
&+(n-k)^{-1}(\bA\bA^{\top}-\widehat{\bPsi}_{k,A} \widehat{\bPsi}_{k,A}^{\top})\bA\bX^{\top}\bD_k(\bZ\bB^{\top}+\bE)(\bI_p-\widehat{\bPsi}_{k,A} \widehat{\bPsi}_{k,A}^{\top})\\
&+(n-k)^{-1}(\bA\bA^{\top}-\widehat{\bPsi}_{k,A} \widehat{\bPsi}_{k,A}^{\top})\bA\bX^{\top}\bD_k\bX\bA^{\top}(\bA\bA^{\top}-\widehat{\bPsi}_{k,A} \widehat{\bPsi}_{k,A}^{\top})\\
=&O_p(n^{-1/2}p^{(1+\delta_1)/2}+n^{-1}p)=O_p(n^{-1/2}p^{(1+\delta_1)/2}).
\end{aligned}
$$
Similarly, we can show that
$\|(\bI_p-\widehat{\bPsi}_{k,A} \widehat{\bPsi}_{k,A}^{\top})\widehat{\bOmega}_{y}(k)\widehat{\bPsi}_{k,A} \widehat{\bPsi}_{k,A}^{\top}\|=O_p(n^{-1/2}p^{(1+\delta_1)/2}).$ Thus, we have $\hat{\mu}_{k,r_0+1}=O_p(n^{-1}p^{1+\delta_1})$. 

(ii) Noticing that $\|(\bI_p-\bA\bA^{\top})\widehat{\bOmega}_{y}(k)\widehat{\bOmega}_{y}(k)^{\top}\|\le \|(n-k)^{-1}(\bI_p-\bA\bA^{\top})(\bB\bZ^{\top}+\bE^{\top})\bD_k(\bX\bA^{\top}+\bZ\bB^{\top}+\bE)\|\|\widehat{\bOmega}_{y}(k)\|=O_p(n^{-1/2}p^{(1+3\delta_0)/2})$ by \eqref{eq.coro_1}, which, together with the fact $\hat{\mu}_{k1} \asymp \hat{\mu}_{kr_0} \asymp p^{2\delta_{0}}$ with probability tending to 1, derives that $\big\|\widehat{\bPsi}_{k,A}\widehat{\bPsi}_{k,A}^{\top}-\bA\bA^{\top}\big\|=O_p(n^{-1/2}p^{(1-\delta_{0})/2})$. The proof is complete.

\subsection{Proof of Theorem \ref{advour}}
Let $\hat{\lambda}_{k1} \ge  \cdots\ge \hat{\lambda}_{kq} \ge 0$ be the eigenvalues of $\widehat{\bOmega}_{y}(k)\widehat{\bW} \widehat{\bOmega}_{y}(k)^{\top}$, with the corresponding spectral decomposition
$$
\widehat{\bOmega}_{y}(k)\widehat{\bW}\widehat{\bOmega}_{y}(k)^{\top}=\sum_{i=1}^{q} \hat{\lambda}_{ki}\widehat{\bphi}_{ki}\widehat{\bphi}_{ki}^{\top}
=\widehat{\bPhi}_{k,A}\widehat{\bLambda}_{k,A}\widehat{\bPhi}_{k,A}^{\top}+\widehat{\bPhi}_{k,B}\widehat{\bLambda}_{k,B}\widehat{\bPhi}_{k,B}^{\top}+\widehat{\bPhi}_{k,E}\widehat{\bLambda}_{k,E}\widehat{\bPhi}_{k,E}^{\top},
$$
where $\widehat{\bLambda}_{k,A}=\diag(\hat{\lambda}_{k1},\dots,\hat{\lambda}_{kr_0}),\widehat{\bLambda}_{k,B}=\diag(\hat{\lambda}_{k,r_0+1},\dots,\hat{\lambda}_{kr}),\widehat{\bLambda}_{k,E}=\diag(\hat{\lambda}_{k,r+1},\dots,\hat{\lambda}_{kq})$, $\widehat{\bPhi}_{k,A}=(\widehat{\bphi}_{k1},\dots,\widehat{\bphi}_{kr_0}),\widehat{\bPhi}_{k,B}=(\widehat{\bphi}_{k,r_0+1},\dots,\widehat{\bphi}_{kr}),$ and $\widehat{\bPhi}_{k,E}=(\widehat{\bphi}_{k,r+1},\dots,\widehat{\bphi}_{kq})$.
To show Theorem~\ref{advour}, we first introduce the following technical lemmas.

\begin{lemma}\label{advsam0}
Define $\bP_{A,B}=(\bA,\bB)\{(\bA,\bB)^{\top}(\bA,\bB)\}^{-1}(\bA,\bB)^{\top}$. Under Conditions~\ref{cond.A_a}--\ref{cond.E_a}, 
\begin{align}
& \|\widehat{\bXi}_{A}\widehat{\bXi}_{A}^{\top}+\widehat{\bXi}_{B}\widehat{\bXi}_{B}^{\top}-\bP_{A,B}\|=O_p(n^{-1/2}p^{(1-\delta_1)/2}). \label{eq3}
\end{align}
Moreover, we rewrite $\bP_{A,B}-\bA\bA^{\top}=\bB_1\bB_1^{\top}$ with $\bB_1^{\top}\bB_1=\bI_{r_1}$.
Then, for any $i\in(r,q]$,
\begin{equation}
    \label{eq4ca}
    \|\widehat{\bxi}_{i}^{\top}\bA\|^2=O_p(n^{-1}p^{1-\delta_0}),~~\|\widehat{\bxi}_{i}^{\top}\bB_1\|^2=O_p(n^{-1}p^{1-\delta_1}).
\end{equation}
\end{lemma}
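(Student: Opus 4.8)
We read \eqref{eq3} as a statement about the orthogonal projection $\bXi_A\bXi_A^\T+\bXi_B\bXi_B^\T$ onto the top-$r$ eigenspace of $\widehat{\bOmega}_y$, with $r=r_0+r_1$, and \eqref{eq4ca} as a bound on the leakage of the trailing eigenvectors into the loading directions. The well-definedness of $\bB_1$ is immediate: Condition~\ref{cond.A_a}, in particular $\Vert\bA^\T\bB\Vert<1$, forces $(\bA,\bB)$ to have full column rank $r$, so $\bP_{A,B}$ projects onto the $r$-dimensional $\cC((\bA,\bB))$ while $\bA\bA^\T$ projects onto the subspace $\cC(\bA)\subseteq\cC((\bA,\bB))$; hence $\bP_{A,B}-\bA\bA^\T$ is the orthogonal projection onto the $r_1$-dimensional orthocomplement of $\cC(\bA)$ inside $\cC((\bA,\bB))$, and we may write it as $\bB_1\bB_1^\T$ with $\bB_1^\T\bB_1=\bI_{r_1}$ and, crucially, $\bA^\T\bB_1=\bzero$. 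This also pins down the identity quoted in the proof of Theorem~\ref{advauto}.

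For \eqref{eq3}, I would write model~\eqref{eq.model_2} as $\bY=(\bX,\bZ)(\bA,\bB)^\T+\bE$ and set $\bM_0=n^{-1}(\bA,\bB)(\bX,\bZ)^\T(\bX,\bZ)(\bA,\bB)^\T$, the signal part of $\widehat{\bOmega}_y$, which has column space exactly $\cC((\bA,\bB))$; by Conditions~\ref{cond.x_a}--\ref{cond.cov_a} its $r$ nonzero eigenvalues lie between orders $p^{\delta_1}$ and $p^{\delta_0}$ with probability tending to one, and its range projection is $\bP_{A,B}$. By Theorem~\ref{advsam1}(i), $\theta_r\asymp p^{\delta_1}$ while $\theta_{r+1}=O_p(n^{-1}p)=o_p(p^{\delta_1})$, so there is a spectral gap of order $p^{\delta_1}$ at position $r$. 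A subspace $\sin\theta$ bound of Davis--Kahan type (the invariant-subspace analogue of Lemma~\ref{lem.sintheta}) then controls $\Vert\bXi_A\bXi_A^\T+\bXi_B\bXi_B^\T-\bP_{A,B}\Vert$ by $\Vert(\bI_p-\bP_{A,B})\widehat{\bOmega}_y\bP_{A,B}\Vert$ divided by that gap. In the off-diagonal block only the terms of $n^{-1}\bY^\T\bY$ not annihilated on the left by $\bI_p-\bP_{A,B}$ survive, i.e. $n^{-1}\bE^\T(\bX,\bZ)(\bA,\bB)^\T$ and $n^{-1}\bE^\T\bE\bP_{A,B}$, and the point is to show these are $O_p(n^{-1/2}p^{(1+\delta_1)/2})$. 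For the second one subtracts the deterministic part $n^{-1}(\bI_p-\bP_{A,B})\bOmega_e\bP_{A,B}$ (of order $n^{-1}p$ after the factor $n^{-1}$, negligible since $p^{1-\delta_1}=o(n)$ by Condition~\ref{cond.cov_a}(ii)) and bounds the mean-zero remainder; for the first, one must avoid the crude Frobenius bound on $n^{-1}\bE^\T\bX$, which is only $O_p(n^{-1/2}p^{(1+\delta_0)/2})$, by peeling off the strong-factor directions first at their own, larger gap $p^{\delta_0}$ (invoking the already-proven Theorem~\ref{advsam1}(ii)), so that the residual $\bE$-contribution limiting the combined projection is charged to the weak-factor scale $p^{\delta_1}$. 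Dividing by $p^{\delta_1}$ gives \eqref{eq3}.

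For \eqref{eq4ca}, fix $i\in(r,q]$; by Theorem~\ref{advsam1}(i), $\theta_i=O_p(n^{-1}p)=o_p(p^{\delta_1})$. From $\widehat{\bOmega}_y\bxi_i=\theta_i\bxi_i$ I would left-multiply by $\bA^\T$ and by $\bB_1^\T$, expanding $\bA^\T\widehat{\bOmega}_y=n^{-1}(\bY\bA)^\T\bY$ with $\bY\bA=\bX+\bZ\bB^\T\bA+\bE\bA$ and $\bY\bB_1=\bZ\bB^\T\bB_1+\bE\bB_1$ (using $\bA^\T\bB_1=\bzero$), and using $(\bI_p-\bP_{A,B})\bA=(\bI_p-\bP_{A,B})\bB=\bzero$ to write $\bB^\T\bxi_i=(\bB^\T\bA)\bA^\T\bxi_i+(\bB^\T\bB_1)\bB_1^\T\bxi_i$. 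Isolating the leading blocks $n^{-1}\bX^\T\bX$ (order $p^{\delta_0}$) and $n^{-1}(\bB^\T\bB_1)^\T\bZ^\T\bZ(\bB^\T\bB_1)$ (order $p^{\delta_1}$, since $\Vert\bA^\T\bB\Vert<1$ keeps $\bB^\T\bB_1$ bounded and invertible), one obtains two near-diagonal linear systems whose left operators are invertible with inverse operator norms $O_p(p^{-\delta_0})$ and $O_p(p^{-\delta_1})$, whose right-hand sides couple the two quantities through an $O_p(p^{\delta_1})$ cross-block, and whose noise is dominated by $n^{-1}\bX^\T\bE$ and $n^{-1}\bZ^\T\bE$, of Frobenius order $O_p(n^{-1/2}p^{(1+\delta_0)/2})$ and $O_p(n^{-1/2}p^{(1+\delta_1)/2})$. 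This yields $\Vert\bA^\T\bxi_i\Vert=O_p(p^{\delta_1-\delta_0}\Vert\bB_1^\T\bxi_i\Vert+n^{-1/2}p^{(1-\delta_0)/2})$ and $\Vert\bB_1^\T\bxi_i\Vert=O_p(\Vert\bA^\T\bxi_i\Vert+n^{-1/2}p^{(1-\delta_1)/2})$; substituting and using $\delta_1\le\delta_0$ closes the bootstrap (the boundary case $\delta_1=\delta_0$ being handled by merging the two blocks at the common gap $p^{\delta_0}$), giving $\Vert\bxi_i^\T\bA\Vert=O_p(n^{-1/2}p^{(1-\delta_0)/2})$ and $\Vert\bxi_i^\T\bB_1\Vert=O_p(n^{-1/2}p^{(1-\delta_1)/2})$, which are \eqref{eq4ca} after squaring.

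The main obstacle is the sharp control of the off-diagonal block in \eqref{eq3}: the naive estimate of $n^{-1}\bE^\T(\bX,\bZ)$ overshoots by a factor $p^{(\delta_0-\delta_1)/2}$ when $\delta_0>\delta_1$, so one has to separate the strong- and weak-factor directions and charge each to its own spectral gap --- exactly the bookkeeping that the eigenvalue-equation argument for \eqref{eq4ca} makes explicit, with $\bA^\T\bxi_i$ controlled at scale $p^{\delta_0}$ and $\bB_1^\T\bxi_i$ at scale $p^{\delta_1}$. Everything else is routine given Theorem~\ref{advsam1}, the moment/mixing conditions, and Lemmas~\ref{lem.weyl} and \ref{lem.sintheta}.
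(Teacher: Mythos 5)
Your proposal is sound, but for \eqref{eq4ca} it takes a genuinely different route from the paper. The paper argues through the scalar quadratic form rather than the eigen-equation: writing $\bY=\dot{\bX}\bA^{\T}+\dot{\bZ}\bB_1^{\T}+\bE$ with $\dot{\bX}^{\T}=\bX^{\T}+\bA^{\T}\bB\bZ^{\T}$, $\dot{\bZ}^{\T}=\bB_1^{\T}\bB\bZ^{\T}$ (or $\bY=\bW\bC^{\T}+\bE$ with $\bC\bC^{\T}=\bP_{A,B}$ when $\delta_0=\delta_1$), it expands $\theta_i=\bxi_i^{\T}\widehat{\bOmega}_y\bxi_i$, uses that the two signal terms are nonnegative with $n^{-1}(\bxi_i^{\T}\bA)\dot{\bX}^{\T}\dot{\bX}(\bA^{\T}\bxi_i)\asymp p^{\delta_0}\|\bA^{\T}\bxi_i\|^2$ and $n^{-1}(\bxi_i^{\T}\bB_1)\dot{\bZ}^{\T}\dot{\bZ}(\bB_1^{\T}\bxi_i)\asymp p^{\delta_1}\|\bB_1^{\T}\bxi_i\|^2$, bounds all $\bE$-terms by $O_p(n^{-1}p)$, and kills the single mixed term $n^{-1}(\bxi_i^{\T}\bA)\dot{\bX}^{\T}\dot{\bZ}(\bB_1^{\T}\bxi_i)$ by a two-case comparison of $\|\bA^{\T}\bxi_i\|$ versus $\|\bB_1^{\T}\bxi_i\|$; since $\theta_i=O_p(n^{-1}p)$, positivity then forces both squared projections to the claimed rates at once. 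Your first-order bootstrap (left-multiplying $\widehat{\bOmega}_y\bxi_i=\theta_i\bxi_i$ by $\bA^{\T}$ and $\bB_1^{\T}$) reaches the same conclusion, but it must track the cross-coupling coefficients with some care: the term $n^{-1}\bB_1^{\T}\bE^{\T}\bX\bA^{\T}\bxi_i$ in the $\bB_1$-equation carries a coefficient of order $n^{-1/2}p^{(1+\delta_0)/2-\delta_1}$ under the crude operator-norm bound, which can diverge; the system still closes because the product of the two cross-coefficients is $o_p(1)$ (or because the sharper fixed-direction bound $\|n^{-1}\bB_1^{\T}\bE^{\T}\bX\|=O_p(n^{-1/2}p^{\delta_0/2})$ applies), but the paper's positivity argument sidesteps this bookkeeping entirely, at the cost of an explicit case split and a separate treatment of $\delta_0=\delta_1$, which mirrors your ``merge the blocks'' remark. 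For \eqref{eq3} the paper omits the proof, citing the argument for Theorem~\ref{advsam1}(ii); your Davis--Kahan plan is in the same spirit, and you correctly identify the real obstacle, namely that a single $\sin\theta$ bound at the gap $p^{\delta_1}$ overshoots by $p^{(\delta_0-\delta_1)/2}$, so the strong and weak directions must each be charged to their own gap (e.g.\ via the identity $(\bI_p-\bP_{A,B})\bxi_j=\theta_j^{-1}(\bI_p-\bP_{A,B})n^{-1}\bE^{\T}\bY\bxi_j$ combined with Theorem~\ref{advsam1}(ii) for the weak-factor eigenvectors). Your justification of the well-definedness of $\bB_1$ coincides with the paper's.
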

\begin{proof}
    Note that Condition \ref{cond.A_a} and Lemma~\ref{lem.weyl} together ensure that $\|\bB^{\top}(\bI_p-\bA\bA^{\top})\|_{\min}\ge\Vert\bB\Vert_{\min}-\|\bA\bA^{\top}\bB\|>0,$ which implies that $\bB_1$ is well defined with $\rank(\bB_1)=r_1$ and $\bB_1^{\top}\bB_1=\bI_{r_1}.$
    The proof of \eqref{eq3} is similar to that of our Theorem~\ref{advsam1}(ii), or Theorem 1 in \textcolor{blue}{Zhang et al.} (\textcolor{blue}{2024}), and thus is omitted. It suffices to prove \eqref{eq4ca}. 
    In the following, we consider two cases, i.e., $\delta_0=\delta_1$ and $\delta_0>\delta_1$, separately.
    On the one hand, when $\delta_0=\delta_1$, we can rewrite $\bY$ as $\bY=\bT\bC^{\top}+\bE$ with $\bC\bC^{\top}=\bP_{A,B}$ and $\|n^{-1}\bT^{\top}\bT\|\asymp \|n^{-1}\bT^{\top}\bT\|_{\min}=p^{\delta_0}$ with probability tending to 1.
    For each $i\in(r,q]$, with probability tending to 1,
    $$
    n^{-1}p \asymp \hat{\theta}_i=n^{-1}\widehat{\bxi}_i^{\top}(\bC\bT^{\top}+\bE^{\top})(\bT\bC^{\top}+\bE)\widehat{\bxi}_i
    =n^{-1}(\widehat{\bxi}_i^{\top}\bC)(\bT^{\top}\bT)(\bC^{\top}\widehat{\bxi}_i)+O_p(n^{-1}p),
    $$
    which implies that $\|\widehat{\bxi}_i^{\top}\bC\|^2=O_p(n^{-1}p^{1-\delta_0})$, and thus \eqref{eq4ca} holds.
    
    On the other hand, when $\delta_0>\delta_1$, define $\dot{\bX}^{\top}=\bX^{\top}+\bA^{\top}\bB\bZ^{\top}$ and $\dot{\bZ}^{\top}=\bB_1^{\top}\bB\bZ^{\top}$.
    Then $\bY=\dot{\bX}\bA^{\top}+\dot{\bZ}\bB_1^{\top}+\bE$.
    For each $i\in(r,q]$, with probability tending to 1,
    $$
    \begin{aligned}
    n^{-1}p \asymp \hat{\theta}_i=&n^{-1}\widehat{\bxi}_i^{\top}(\bA\dot{\bX}^{\top}+\bB_1\dot{\bZ}^{\top}+\bE^{\top})(\dot{\bX}\bA^{\top}+\dot{\bZ}\bB_1^{\top}+\bE)\widehat{\bxi}_i\\
    =&n^{-1}(\widehat{\bxi}_i^{\top}\bA)(\dot{\bX}^{\top}\dot{\bX})(\bA^{\top}\widehat{\bxi}_i)+2n^{-1}(\widehat{\bxi}_i^{\top}\bA)(\dot{\bX}^{\top}\dot{\bZ})(\bB_1^{\top}\widehat{\bxi}_i)+n^{-1}(\widehat{\bxi}_i^{\top}\bB_1)(\dot{\bZ}^{\top}\dot{\bZ})(\bB_1^{\top}\widehat{\bxi}_i)\\
    &+2n^{-1}\widehat{\bxi}_i^{\top}(\bA\dot{\bX}^{\top}+\bB_1\dot{\bZ}^{\top})\bE\widehat{\bxi}_i+O_p(n^{-1}p),
    \end{aligned}
    $$
    which implies that $\Vert n^{-1}\widehat{\bxi}_i^{\top}(\bA\dot{\bX}^{\top}+\bB_1\dot{\bZ}^{\top})\bE\widehat{\bxi}_i\Vert=O_p(n^{-1}p)$. Since $\{\bx_t\}_{t\in\eZ},\{\bz_t\}_{t\in\eZ}$ and $\{\be_t\}_{t\in\eZ}$ are mutually uncorrelated, by Condition~\ref{cond.cov_a}, it follows that
    $\|n^{-1}\bX^{\top}\bZ\|=O_p(n^{-1/2}p^{\delta_1/2+\delta_0/2})=o_p(p^{\delta_1}),$
    and thus $$\|n^{-1}\dot{\bX}^{\top}\dot{\bZ}\|=\|n^{-1}\bX^{\top}\bZ\bB^{\top}\bB_1+n^{-1}\bA^{\top}\bB\bZ^{\top}\bZ\bB^{\top}\bB_1\|=O_p(p^{\delta_1}).$$
    Then, with probability tending to 1, we have
    $$
    \begin{aligned}
        &n^{-1}(\widehat{\bxi}_i^{\top}\bA)(\dot{\bX}^{\top}\dot{\bX})(\bA^{\top}\widehat{\bxi}_i) \asymp p^{\delta_0}\|\widehat{\bxi}_i^{\top}\bA\|^2,\\
        &n^{-1}(\widehat{\bxi}_i^{\top}\bA)(\dot{\bX}^{\top}\dot{\bZ})(\bB_1^{\top}\widehat{\bxi}_i)=O_p(p^{\delta_1}\|\widehat{\bxi}_i^{\top}\bA\|\|\bB_1^{\top}\widehat{\bxi}_i\|),\\
        &n^{-1}(\widehat{\bxi}_i^{\top}\bB_1)(\dot{\bZ}^{\top}\dot{\bZ})(\bB_1^{\top}\widehat{\bxi}_i) \asymp p^{\delta_1}\|\widehat{\bxi}_i^{\top}\bB_1\|^2,\\
        &n^{-1}\widehat{\bxi}_i^{\top}(\bA\dot{\bX}^{\top}+\bB_1\dot{\bZ}^{\top})\bE\widehat{\bxi}_i=O_p(n^{-1}p).
    \end{aligned}
    $$
    For analysis, we consider the following two cases separately: \\
    (i) If $\|\widehat{\bxi}_i^{\top}\bA\|=o_p(\|\widehat{\bxi}_i^{\top}\bB_1\|)$, we have $n^{-1}(\widehat{\bxi}_i^{\top}\bA)(\dot{\bX}^{\top}\dot{\bZ})(\bB_1^{\top}\widehat{\bxi}_i)=o_p(p^{\delta_1}\|\widehat{\bxi}_i^{\top}\bB_1\|^2)$. Then, $p^{\delta_0}\|\widehat{\bxi}_i^{\top}\bA\|^2+p^{\delta_1}\|\widehat{\bxi}_i^{\top}\bB_1\|^2=O_p(n^{-1}p)$, which implies that \eqref{eq4ca}. \\
    (ii) If $\|\widehat{\bxi}_i^{\top}\bB_1\|=O_p(\|\widehat{\bxi}_i^{\top}\bA\|)$  and $\delta_0>\delta_1$, we have $
        n^{-1}(\widehat{\bxi}_i^{\top}\bA)(\dot{\bX}^{\top}\dot{\bZ})(\bB_1^{\top}\widehat{\bxi}_i)=o_p(p^{\delta_0}\|\widehat{\bxi}_i^{\top}\bA\|^2),$ and $
        n^{-1}(\widehat{\bxi}_i^{\top}\bB_1)(\dot{\bZ}^{\top}\dot{\bZ})(\bB_1^{\top}\widehat{\bxi}_i)=o_p(p^{\delta_0}\|\widehat{\bxi}_i^{\top}\bA\|^2).$
    Then, we have $p^{\delta_0}\|\widehat{\bxi}_i^{\top}\bA\|^2=O_p(n^{-1}p)$, which implies that \eqref{eq4ca}. The proof is complete. 
\end{proof}

\begin{lemma}
\label{lem.S1}
Under Conditions~\ref{cond.A_a}--\ref{cond.E_a} and $r<q\le c'(p\wedge n)$, we have
\begin{eqnarray}\label{middle1}
\|\bA^{\top}\widehat{\bW}\bA\|=O_p(p^{-\delta_0}),~~\|\bB_1^{\top}\widehat{\bW}\bA\|=O_p(p^{-(\delta_0+\delta_1)/2}),~~\|\bB_1^{\top}\widehat{\bW}\bB_1\|=O_p(p^{-\delta_1}),
\end{eqnarray}
and
\begin{eqnarray}\label{middle2}
\|\bA^{\top}\widehat{\bW}(\bI_p-\bP_{A,B})\|=O_p(n^{1/2}p^{-(1+\delta_0)/2}),~~\|\bB_1^{\top}\widehat{\bW}(\bI_p-\bP_{A,B})\|=O_p(n^{1/2}p^{-(1+\delta_1)/2}),
\end{eqnarray}
where $\bB_1$ and $\bP_{A,B}$ are defined in Lemma~\ref{advsam0}.
\end{lemma}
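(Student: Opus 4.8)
\medskip
\noindent\emph{Proof plan.} The plan is to work throughout with the spectral representation $\widehat{\bW}=\bXi_q\bTheta_q^{-1}\bXi_q^{\T}=\sum_{j=1}^{q}\theta_j^{-1}\bxi_j\bxi_j^{\T}$ recorded after \eqref{eq1}, and to split this sum into three blocks according to the magnitude of $\theta_j$ supplied by Theorem~\ref{advsam1}(i): the strong block $j\le r_0$ with $\theta_j\asymp p^{\delta_0}$, the weak block $r_0<j\le r$ with $\theta_j\asymp p^{\delta_1}$, and the noise block $r<j\le q$ with $\theta_j\asymp n^{-1}p$ (when $\delta_0=\delta_1$ the first two merge into a single block at order $p^{\delta_0}=p^{\delta_1}$). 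Sandwiching the resulting blockwise sums between $\bA$ or $\bB_1$ on the right and $\bA$, $\bB_1$, or $\bI_p-\bP_{A,B}$ on the left, I will estimate each piece using three ingredients: (a) the exact identity $\bB_1^{\T}\bA=\bzero$, which follows from $\bB_1\bB_1^{\T}=\bP_{A,B}-\bA\bA^{\T}$ and $\bP_{A,B}\bA=\bA$; (b) the consistency bound $\|\bXi_A\bXi_A^{\T}+\bXi_B\bXi_B^{\T}-\bP_{A,B}\|=O_p(n^{-1/2}p^{(1-\delta_1)/2})$ from \eqref{eq3}, together with $\|\bXi_A\bXi_A^{\T}-\bA\bA^{\T}\|=O_p(n^{-1/2}p^{(1-\delta_0)/2}+p^{\delta_1-\delta_0})$ from Theorem~\ref{advsam1}(ii) when $\delta_0>\delta_1$; and (c) the tail eigenvector bounds $\|\bxi_i^{\T}\bA\|^2=O_p(n^{-1}p^{1-\delta_0})$, $\|\bxi_i^{\T}\bB_1\|^2=O_p(n^{-1}p^{1-\delta_1})$ for $i\in(r,q]$ from \eqref{eq4ca}.

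For the three bounds in \eqref{middle1} I would argue blockwise. In the strong block, $\sum_{j\le r_0}\theta_j^{-1}(\bA^{\T}\bxi_j)(\bA^{\T}\bxi_j)^{\T}$ has norm $\asymp p^{-\delta_0}$ because $\bA^{\T}\bXi_A\bXi_A^{\T}\bA$ is bounded above and below by positive constants (by (b)), while its $\bB_1$-sandwiched counterparts $\bB_1^{\T}\bXi_A\bXi_A^{\T}\bB_1$ and $\bB_1^{\T}\bXi_A\bXi_A^{\T}\bA$ are rewritten via $\bB_1^{\T}\bA=\bzero$ as $\bB_1^{\T}(\bXi_A\bXi_A^{\T}-\bA\bA^{\T})\bB_1$ and $\bB_1^{\T}(\bXi_A\bXi_A^{\T}-\bA\bA^{\T})\bA$, both $O_p(n^{-1/2}p^{(1-\delta_0)/2}+p^{\delta_1-\delta_0})$, hence negligible after multiplication by $\theta_j^{-1}\asymp p^{-\delta_0}$. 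In the weak block ($\theta_j^{-1}\asymp p^{-\delta_1}$) I use $\|\bXi_B^{\T}\bA\|\le\|(\bI_p-\bXi_A\bXi_A^{\T})\bA\|=O_p(n^{-1/2}p^{(1-\delta_0)/2}+p^{\delta_1-\delta_0})$ and $\|\bXi_B^{\T}\bB_1\|\le1$, and a short computation using Condition~\ref{cond.cov_a}(ii) ($p^{1-\delta_1}=o(n)$, $n=O(p)$) and $\delta_1\le\delta_0$ shows each weak-block contribution is $O_p$ of the respective target. In the noise block ($\theta_j^{-1}\asymp n^{-1}p$), the contributions are $n^{-1}p$ times $\sum_{i>r}\|\bxi_i^{\T}\bA\|^2=O_p(n^{-1}p^{1-\delta_0})$, $\sum_{i>r}\|\bxi_i^{\T}\bB_1\|^2=O_p(n^{-1}p^{1-\delta_1})$, and, for the cross term, $O_p(n^{-1}p^{1-(\delta_0+\delta_1)/2})$ by Cauchy--Schwarz, i.e. $O_p(p^{-\delta_0})$, $O_p(p^{-\delta_1})$ and $O_p(p^{-(\delta_0+\delta_1)/2})$ respectively. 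Collecting the three blocks yields \eqref{middle1}; the case $\delta_0=\delta_1$ is handled identically, replacing $\bXi_A\bXi_A^{\T}$ by $\bXi_A\bXi_A^{\T}+\bXi_B\bXi_B^{\T}$ and $\bA\bA^{\T}$ by $\bP_{A,B}$ and noting $p^{-\delta_0}=p^{-\delta_1}=p^{-(\delta_0+\delta_1)/2}$.

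For \eqref{middle2} I would record the factorization $\widehat{\bW}=\widehat{\bW}^{1/2}\widehat{\bW}^{1/2}$ with $\widehat{\bW}^{1/2}=\sum_{j=1}^q\theta_j^{-1/2}\bxi_j\bxi_j^{\T}$, which gives, for $\bL\in\{\bA,\bB_1\}$, $\|\bL^{\T}\widehat{\bW}(\bI_p-\bP_{A,B})\|\le\|\bL^{\T}\widehat{\bW}\bL\|^{1/2}\|(\bI_p-\bP_{A,B})\widehat{\bW}(\bI_p-\bP_{A,B})\|^{1/2}$. The first factor is furnished by \eqref{middle1}. For the second I split once more: for $j\le r$, $(\bI_p-\bP_{A,B})\bxi_j=(\bXi_A\bXi_A^{\T}+\bXi_B\bXi_B^{\T}-\bP_{A,B})\bxi_j$ has norm $O_p(n^{-1/2}p^{(1-\delta_1)/2})$ by (b), so those terms contribute at most $O_p(p^{-\delta_1}\cdot n^{-1}p^{1-\delta_1})=O_p(n^{-1}p^{1-2\delta_1})$; for $j>r$, $(\bI_p-\bP_{A,B})\big(\sum_{j>r}\bxi_j\bxi_j^{\T}\big)(\bI_p-\bP_{A,B})\preceq\bI_p-\bP_{A,B}$ in the positive semidefinite order while $\theta_j^{-1}\asymp n^{-1}p$, contributing $O_p(n^{-1}p)$, which dominates. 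Hence $\|(\bI_p-\bP_{A,B})\widehat{\bW}(\bI_p-\bP_{A,B})\|=O_p(n^{-1}p)$, and combining with $\|\bA^{\T}\widehat{\bW}\bA\|=O_p(p^{-\delta_0})$ and $\|\bB_1^{\T}\widehat{\bW}\bB_1\|=O_p(p^{-\delta_1})$ delivers the rates $n^{1/2}p^{-(1+\delta_0)/2}$ and $n^{1/2}p^{-(1+\delta_1)/2}$ in \eqref{middle2}.

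The main obstacle is the accounting in the weak block of \eqref{middle1}: one needs $p^{-\delta_1}\|\bXi_B^{\T}\bA\|^2$ to stay at or below the target rate, which is exactly where consistency of the leading $r_0$-dimensional eigenspace (Theorem~\ref{advsam1}(ii)) enters, via $\|\bXi_B^{\T}\bA\|\le\|(\bI_p-\bXi_A\bXi_A^{\T})\bA\|$. When $\delta_0>\delta_1$ the margin is comfortable under Condition~\ref{cond.cov_a}(ii); in the borderline case $\delta_0=\delta_1$ that consistency is no longer available, but the target rates simultaneously coarsen to $p^{-\delta_0}=p^{-\delta_1}=p^{-(\delta_0+\delta_1)/2}$, so the crude bounds $\|\bXi_B^{\T}\bA\|\le1$, $\|\bXi_A^{\T}\bB_1\|\le1$ together with $\bB_1^{\T}\bA=\bzero$ and \eqref{eq3} suffice. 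I would therefore carry the two regimes $\delta_0>\delta_1$ and $\delta_0=\delta_1$ in parallel, as in the proof of Lemma~\ref{advsam0}, so that no case is left out.
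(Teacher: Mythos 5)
Your treatment of \eqref{middle1} is essentially the paper's own proof: the same decomposition $\widehat{\bW}=\bXi_{A}\bTheta_{A}^{-1}\bXi_{A}^{\T}+\bXi_{B}\bTheta_{B}^{-1}\bXi_{B}^{\T}+\bXi_{q-r}\bTheta_{q-r}^{-1}\bXi_{q-r}^{\T}$ sandwiched by $\bA$ and $\bB_1$, with $\bB_1^{\T}\bA=\bzero$, Theorem~\ref{advsam1}(ii), \eqref{eq3} and \eqref{eq4ca} as the inputs, and the same separate handling of the borderline case $\delta_0=\delta_1$ through $\bP_{A,B}$ and crude unit bounds. For \eqref{middle2} you genuinely deviate: the paper bounds $\|\bA^{\T}\widehat{\bW}(\bI_p-\bP_{A,B})\|$ directly blockwise (via $p^{-\delta_0}\|\bXi_{A}^{\T}(\bI_p-\bP_{A,B})\|$, $p^{-\delta_1}\|\bA^{\T}\bXi_{B}\|$ and $np^{-1}\|\bA^{\T}\bXi_{q-r}\|$, and analogously for $\bB_1$), whereas you factor $\widehat{\bW}=\widehat{\bW}^{1/2}\widehat{\bW}^{1/2}$ and apply Cauchy--Schwarz, reusing \eqref{middle1} together with a single extra bound on $\|(\bI_p-\bP_{A,B})\widehat{\bW}(\bI_p-\bP_{A,B})\|$. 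That is a valid and arguably cleaner alternative: it shortens the case analysis at the cost of proving that one additional bound.

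One correction is needed, and it is not purely cosmetic in the second half. For $j>r$, Theorem~\ref{advsam1}(i) gives $\theta_j\asymp n^{-1}p$, hence $\theta_j^{-1}\asymp np^{-1}$, not $n^{-1}p$ as you repeatedly write. In the noise blocks of \eqref{middle1} this is only a slip of notation, since the conclusions you state ($O_p(p^{-\delta_0})$, $O_p(p^{-(\delta_0+\delta_1)/2})$, $O_p(p^{-\delta_1})$) are exactly those obtained with $np^{-1}$ multiplied by the \eqref{eq4ca} quantities. But in \eqref{middle2}, taken literally, the step would fail: with $\|(\bI_p-\bP_{A,B})\widehat{\bW}(\bI_p-\bP_{A,B})\|=O_p(n^{-1}p)$, Cauchy--Schwarz only yields $\|\bA^{\T}\widehat{\bW}(\bI_p-\bP_{A,B})\|=O_p(n^{-1/2}p^{(1-\delta_0)/2})$, which, because $n=O(p)$ forces $n^{-1}p\gtrsim1$, is weaker than the claimed $n^{1/2}p^{-(1+\delta_0)/2}$ by the factor $n^{-1}p$. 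The bound your own dominance comparison actually produces is $\|(\bI_p-\bP_{A,B})\widehat{\bW}(\bI_p-\bP_{A,B})\|=O_p(np^{-1})$: the $j\le r$ block contributes $O_p(n^{-1}p^{1-2\delta_1})$ by \eqref{eq3}, which is dominated by $np^{-1}$ since $p^{1-\delta_1}=o(n)$, and the $j>r$ block is at most $\max_{j>r}\theta_j^{-1}\asymp np^{-1}$. With that corrected intermediate bound, your Cauchy--Schwarz step delivers precisely $n^{1/2}p^{-(1+\delta_0)/2}$ and $n^{1/2}p^{-(1+\delta_1)/2}$, so the argument goes through.
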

\begin{proof}
Recall that $\widehat{\bTheta}_{A}=\diag(\hat{\theta}_{1},\dots,\hat{\theta}_{r_0}),\widehat{\bTheta}_{B}=\diag(\hat{\theta}_{r_0+1},\dots,\hat{\theta}_{r}),\widehat{\bTheta}_{q-r}=\diag(\hat{\theta}_{r+1},\dots,\hat{\theta}_{q})$, $\widehat{\bXi}_{A}=(\widehat{\bxi}_1,\dots,\widehat{\bxi}_{r_0}),\widehat{\bXi}_{B}=(\widehat{\bxi}_{r_0+1},\dots,\widehat{\bxi}_{r})$, and $\widehat{\bXi}_{q-r}=(\widehat{\bxi}_{r+1},\dots,\widehat{\bxi}_{q})$. In the following, we consider two cases, i.e., $\delta_0=\delta_1$ and $\delta_0>\delta_1$, separately. 

On the one hand, when $\delta_0>\delta_1$, we have $\|\widehat{\bXi}_B^{\top}\bA\|=O_p(n^{-1/2}p^{(1-\delta_0)/2}+p^{\delta_1-\delta_0})$ by Theorem~\ref{advsam1}(ii), and $\|\widehat{\bTheta}_{q-r}^{-1}\|=O_p(np^{-1})$ by Theorem~\ref{advsam1}(i) and the condition $q\le c'(p\wedge n)$. Then, there exist some constants $C_1,C_2>0$ such that, with probability tending to 1,
$$
\begin{aligned}
    \|\bA^{\top}\widehat{\bW}\bA\|\le&\|\bA^{\top}\widehat{\bXi}_{A}\widehat{\bTheta}_{A}^{-1}\widehat{\bXi}_{A}^{\top}\bA\|+\|\bA^{\top}\widehat{\bXi}_{B}\widehat{\bTheta}_{B}^{-1}\widehat{\bXi}_{B}^{\top}\bA\|+\|\bA^{\top}\widehat{\bXi}_{q-r}\widehat{\bTheta}_{q-r}^{-1}\widehat{\bXi}_{q-r}^{\top}\bA\|\\
    \le&C_1\big(p^{-\delta_0}\|\widehat{\bXi}_{A}^{\top}\bA\|^2+p^{-\delta_1}\|\widehat{\bXi}_B^{\top}\bA\|^2+np^{-1}\|\bA^{\top}\widehat{\bXi}_{q-r}\|^2\big)\le C_2p^{-\delta_0},\\
    \|\bA^{\top}\widehat{\bW}\bB_1\|\le&\|\bA^{\top}\widehat{\bXi}_{A}\widehat{\bTheta}_{A}^{-1}\widehat{\bXi}_{A}^{\top}\bB_1\|+\|\bA^{\top}\widehat{\bXi}_{B}\widehat{\bTheta}_{B}^{-1}\widehat{\bXi}_{B}^{\top}\bB_1\|+\|\bA^{\top}\widehat{\bXi}_{q-r}\widehat{\bTheta}_{q-r}^{-1}\widehat{\bXi}_{q-r}^{\top}\bB_1\|\\
    \le&C_1\big(p^{-\delta_0}\|\widehat{\bXi}_{A}^{\top}\bB_1\|+p^{-\delta_1}\|\widehat{\bXi}_B^{\top}\bA\|+np^{-1}\|\bB_1^{\top}\widehat{\bXi}_{q-r}\|\|\widehat{\bXi}_{q-r}^{\top}\bA\|\big)\le C_2p^{-(\delta_0+\delta_1)/2},\\
    \|\bB_1^{\top}\widehat{\bW}\bB_1\|\le&\|\bB_1^{\top}\widehat{\bXi}_{A}\widehat{\bTheta}_{A}^{-1}\widehat{\bXi}_{A}^{\top}\bB_1\|+\|\bB_1^{\top}\widehat{\bXi}_{B}\widehat{\bTheta}_{B}^{-1}\widehat{\bXi}_{B}^{\top}\bB_1\|+\|\bB_1^{\top}\widehat{\bXi}_{q-r}\widehat{\bTheta}_{q-r}^{-1}\widehat{\bXi}_{q-r}^{\top}\bB_1\|\\
    \le&C_1\big(p^{-\delta_0}\|\widehat{\bXi}_{A}^{\top}\bB_1\|^2+p^{-\delta_1}\|\widehat{\bXi}_{B}^{\top}\bB_1\|^2+np^{-1}\|\bB_1^{\top}\widehat{\bXi}_{q-r}\|^2\big)\le C_2p^{-\delta_1},
\end{aligned}
$$
which together imply \eqref{middle1}. Consequently, $\|\bP_{A,B}\widehat{\bW}\bP_{A,B}\|\le\|\bA^{\top}\widehat{\bW}\bA\|+\|\bB_1^{\top}\widehat{\bW}\bB_1\|+2\|\bA^{\top}\widehat{\bW}\bB_1\|=O_p(p^{-\delta_1}).$ Moreover, by \eqref{eq4ca}, there exist constants $C_3,C_4>0$ such that
$$
\begin{aligned}
    \|\bA^{\top}\widehat{\bW}(\bI_p-\bP_{A,B})\|\le&\|\bA^{\top}\widehat{\bXi}_{A}\widehat{\bTheta}_{A}^{-1}\widehat{\bXi}_{A}^{\top}(\bI_p-\bP_{A,B})\|+\|\bA^{\top}\widehat{\bXi}_{B}\widehat{\bTheta}_{B}^{-1}\widehat{\bXi}_{B}^{\top}(\bI_p-\bP_{A,B})\|\\
    &+\|\bA^{\top}\widehat{\bXi}_{q-r}\widehat{\bTheta}_{q-r}^{-1}\widehat{\bXi}_{q-r}^{\top}(\bI_p-\bP_{A,B})\|\\
    \le&C_3\big(p^{-\delta_0}\|\widehat{\bXi}_{A}^{\top}(\bI_p-\bP_{A,B})\|+p^{-\delta_1}\|\bA^{\top}\widehat{\bXi}_B\|+np^{-1}\|\bA^{\top}\widehat{\bXi}_{q-r}\|\big)\\
    \le&C_4n^{1/2}p^{-(1+\delta_0)/2},\\
    \|\bB_1^{\top}\widehat{\bW}(\bI_p-\bP_{A,B})\|
    \le&\|\bB_1^{\top}\widehat{\bXi}_{A}\widehat{\bTheta}_{A}^{-1}\widehat{\bXi}_{A}^{\top}(\bI_p-\bP_{A,B})\|+\|\bB_1^{\top}\widehat{\bXi}_B\widehat{\bTheta}_{B}^{-1}\widehat{\bXi}_B^{\top}(\bI_p-\bP_{A,B})\|\\
    &+\|\bB_1^{\top}\widehat{\bXi}_{q-r}\widehat{\bTheta}_{q-r}^{-1}\widehat{\bXi}_{q-r}^{\top}(\bI_p-\bP_{A,B})\|\\
    \le&C_3\big(p^{-\delta_0}\|\widehat{\bXi}_{A}^{\top}(\bI_p-\bP_{A,B})\|+p^{-\delta_1}\|\widehat{\bXi}_B^{\top}(\bI_p-\bP_{A,B})\|+np^{-1}\|\bB_1^{\top}\widehat{\bXi}_{q-r}\|\big)\\
    \le&C_4n^{1/2}p^{-(1+\delta_1)/2},
\end{aligned}
$$
which together imply \eqref{middle2}. 

On the other hand, when $\delta_0=\delta_1,$ we can rewrite $\bY$ as $\bY=\bT\bC^{\top}+\bE$ with $\bC\bC^{\top}=\bP_{A,B}$ and $\|n^{-1}\bT^{\top}\bT\|\asymp \|n^{-1}\bT^{\top}\bT\|_{\min}\asymp p^{\delta_0}$ with probability tending to 1. Since $\bB_1$ is proved to be full-ranked in Lemma~\ref{advsam0}, some standard arguments similar to the proof of Theorem~\ref{advsam1}(i) can show that $\bC\in\eR^{p\times r},\bT\in\eR^{n\times r},\hat{\theta}_1^{1/2}\ge\cdots\ge\hat{\theta}_r^{1/2}\ge0$ are the $r$ largest singular values of $n^{-1}\bT\bC^{\top}$ with $\hat{\theta}_1\asymp\hat{\theta}_r\asymp p^{\delta_0}$, and also $\hat{\theta}_{r+1}\asymp\hat{\theta}_q\asymp n^{-1}p$ with probability tending to 1. Thus, $\|\bA^{\top}\widehat{\bW}\bA\|\le C_1(p^{-\delta_0}\|\widehat{\bXi}_{A}^{\top}\bA\|^2+p^{-\delta_0}\|\widehat{\bXi}_B^{\top}\bA\|^2+np^{-1}\|\bA^{\top}\widehat{\bXi}_{q-r}\|^2)\le C_2p^{-\delta_0},\|\bA^{\top}\widehat{\bW}(\bI_p-\bP_{A,B})\|\le C_3(p^{-\delta_0}\|\widehat{\bXi}_{A}^{\top}(\bI_p-\bP_{A,B})\|+p^{-\delta_0}\|\bA^{\top}\widehat{\bXi}_B\|+np^{-1}\|\bA^{\top}\widehat{\bXi}_{q-r}\|)\le C_4n^{1/2}p^{-(1+\delta_0)/2}$, and the other terms can be similarly proved. The proof is complete.
\end{proof}

\begin{lemma}
    \label{lem.our_1}
    Under the conditions of Theorem \ref{advour}, we have
    \begin{equation}
        \label{yux1}
        \|\widehat{\bOmega}_{y}(k)(\bI_p-\bP_{A,B}) \widehat{\bXi}_q\widehat{\bTheta}_{q}^{-1/2}\|=O_p(n^{-1/2}p^{1/2})=o_p(p^{\delta_{1k}-\delta_1/2}),
    \end{equation}
    where $\bP_{A,B}$ is defined in Lemma~\ref{advsam0}, and $\widehat{\bXi}_q$ and $\widehat{\bTheta}_{q}$ are defined in \eqref{eq1}.
\end{lemma}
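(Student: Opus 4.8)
The plan is to exploit that the projection $\bI_p-\bP_{A,B}$ annihilates the loadings. Since $\bP_{A,B}$ is the orthogonal projection onto $\cC((\bA,\bB))$, we have $\bA^{\T}(\bI_p-\bP_{A,B})=\bzero$ and $\bB^{\T}(\bI_p-\bP_{A,B})=\bzero$, so from $\bY=\bX\bA^{\T}+\bZ\bB^{\T}+\bE$ we get $\bY(\bI_p-\bP_{A,B})=\bE(\bI_p-\bP_{A,B})$, whence $\widehat{\bOmega}_{y}(k)(\bI_p-\bP_{A,B})=(n-k)^{-1}\bY^{\T}\bD_k\bE(\bI_p-\bP_{A,B})$. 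Writing $\bY^{\T}=\bA\bX^{\T}+\bB\bZ^{\T}+\bE^{\T}$ splits the target into a pure-noise piece and two factor--noise cross pieces, and in parallel I split the right factor columnwise, $\bXi_q\bTheta_q^{-1/2}=[\,\bXi_r\bTheta_r^{-1/2}\mid\bXi_{q-r}\bTheta_{q-r}^{-1/2}\,]$ with $r=r_0+r_1$, $\bXi_r=(\bXi_A,\bXi_B)$, $\bTheta_r=\diag(\theta_1,\dots,\theta_r)$. By Theorem~\ref{advsam1}(i), $\theta_r\asymp p^{\delta_1}$ and $\theta_q\asymp n^{-1}p$ with probability tending to $1$; and by \eqref{eq3} of Lemma~\ref{advsam0} together with $(\bI_p-\bP_{A,B})\bP_{A,B}=\bzero$, every column $(\bI_p-\bP_{A,B})\bxi_j$ with $j\le r$ has norm $O_p(n^{-1/2}p^{(1-\delta_1)/2})$, while each column with $j>r$ has norm at most $1$ and there are only $q-r=O(1)$ of them.

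For the pure-noise piece I bound $\|(n-k)^{-1}\bE^{\T}\bD_k\bE(\bI_p-\bP_{A,B})\bXi_q\bTheta_q^{-1/2}\|\le(n-k)^{-1}\|\bE\|^2\,\theta_q^{-1/2}$; since $\|\bE\|=O_p(n^{1/2}+p^{1/2})=O_p(p^{1/2})$ under $n=O(p)$ (Condition~\ref{cond.E_a} and Remark~\ref{rmk.cond}), this is $O_p(n^{-1}p\cdot n^{1/2}p^{-1/2})=O_p(n^{-1/2}p^{1/2})$. For the two cross pieces, the mutual uncorrelatedness in Condition~\ref{cond.x_a}(i), with $\psi$-mixing and finite fourth moments, gives that for any (possibly data-dependent) matrix $\bW$ with columns $w_j$ the matrix $\bX^{\T}\bD_k\bE\bW$ (resp.\ $\bZ^{\T}\bD_k\bE\bW$) has $(i,j)$-entry with variance $O\big(n\|\bOmega_x\|\,w_j^{\T}\bOmega_e w_j\big)=O\big(np^{\delta_0}\|w_j\|^2\big)$ (resp.\ $O\big(np^{\delta_1}\|w_j\|^2\big)$), using $\|\bOmega_e\|=O(1)$ and $\|\bOmega_x\|\asymp p^{\delta_0}$, $\|\bOmega_z\|\asymp p^{\delta_1}$ from Conditions~\ref{cond.E_a} and \ref{cond.cov_a}(i). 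Taking $\bW=(\bI_p-\bP_{A,B})\bXi_r$ (columns of norm $O_p(n^{-1/2}p^{(1-\delta_1)/2})$) and $\bW=(\bI_p-\bP_{A,B})\bXi_{q-r}$ (finitely many columns of norm $\le1$), then reweighting by $\bTheta_r^{-1/2}$ and $\bTheta_{q-r}^{-1/2}$ and dividing by $n-k$, each of the four resulting terms is $O_p(n^{-1/2}p^{1/2})$ after tracking the exponents using $n=O(p)$ and $p^{1-\delta_1}=o(n)$ from Condition~\ref{cond.cov_a}(ii) (indeed $o_p(n^{-1/2}p^{1/2})$ for the signal blocks, and for the noise blocks when $\delta_0<1$). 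Summing the three pieces gives $\|\widehat{\bOmega}_{y}(k)(\bI_p-\bP_{A,B})\bXi_q\bTheta_q^{-1/2}\|=O_p(n^{-1/2}p^{1/2})$, and the identification with $o_p(p^{\delta_{1k}-\delta_1/2})$ is immediate from Condition~\ref{cond.auto_a}(ii), which reads $p^{1+\delta_1-2\delta_{1k}}=o(n)$.

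The main obstacle is the two factor--noise cross pieces. Because the individual components of $\bx_t$ and $\bz_t$ may have variances as large as $p^{\delta_0}$ and $p^{\delta_1}$, naive operator- or Frobenius-norm bounds on $\bX^{\T}\bD_k\bE$ and $\bZ^{\T}\bD_k\bE$ overshoot the target rate by a factor $p^{\delta_0/2}$ (resp.\ $p^{\delta_1/2}$); this loss is recovered only by using simultaneously (i) the $\sqrt n$ concentration afforded by uncorrelatedness, (ii) that $\bI_p-\bP_{A,B}$ nearly annihilates $\bxi_j$ for $j\le r$, so the large-variance factor components meet a column of size $O_p(n^{-1/2}p^{(1-\delta_1)/2})$ rather than $O(1)$, and (iii) that only the finitely many directions $\bxi_j$ with $j>r$ carry the large weight $\theta_j^{-1/2}\asymp n^{1/2}p^{-1/2}$, so the sum over the $p$-dimensional ambient space collapses to a sum over $q-r=O(1)$ directions. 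A secondary technical point is that $\bXi_q$ depends on $\bE,\bX,\bZ$, so these variance estimates are not literally for sums of independent terms against fixed directions; I would restore them by a conditioning/decoupling argument, or by a net over the low-dimensional family of admissible trailing eigenspaces (recall $q-r$ is fixed), absorbing the resulting logarithmic factor into the $O_p$ bound.
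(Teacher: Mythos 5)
Your proposal is correct and starts from the same two key observations as the paper: $\bY(\bI_p-\bP_{A,B})=\bE(\bI_p-\bP_{A,B})$, so $\widehat{\bOmega}_{y}(k)(\bI_p-\bP_{A,B})=(n-k)^{-1}\bY^{\T}\bD_k\bE(\bI_p-\bP_{A,B})$, followed by the split $\bY^{\T}=\bA\bX^{\T}+\bB\bZ^{\T}+\bE^{\T}$ and the fact that $q$ is fixed so the cross terms concentrate at the $n^{1/2}p^{\delta_0/2}$, $n^{1/2}p^{\delta_1/2}$ entrywise scales. Where you diverge is in handling the weighting $\bTheta_q^{-1/2}$: you split $\bXi_q\bTheta_q^{-1/2}$ into the leading $r$ columns (invoking \eqref{eq3} of Lemma~\ref{advsam0} to make $(\bI_p-\bP_{A,B})\bxi_j$ small and $\theta_j^{-1/2}\lesssim p^{-\delta_1/2}$) and the trailing $q-r$ columns (with $\theta_j^{-1/2}\asymp n^{1/2}p^{-1/2}$). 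The paper's route is cruder and shorter: it bounds the unweighted matrix $(n-k)^{-1}\bY^{\T}\bD_k\bE(\bI_p-\bP_{A,B})\bXi_q$ by $O_p(n^{-1}p)$ (the pure-noise piece dominates the two cross pieces once $n=O(p)$ is used) and then multiplies by the single global bound $\|\bTheta_q^{-1/2}\|=O_p(n^{1/2}p^{-1/2})$, which already yields $O_p(n^{-1/2}p^{1/2})$; your leading/trailing refinement and the appeal to \eqref{eq3} are therefore not needed, though they are valid and give slightly sharper ($o_p$) bounds on the signal blocks. Two small cautions. First, you cite Theorem~\ref{advsam1}(i), which is stated under $\delta_0>\delta_1$, whereas Theorem~\ref{advour} (and hence this lemma) also covers $\delta_0=\delta_1$; the eigenvalue facts you need ($\theta_j\gtrsim p^{\delta_1}$ for $j\le r$ and $\theta_j\asymp n^{-1}p$ for $j>r$) do hold in both cases, and the paper explicitly notes in its proof that $\|\bTheta_q^{-1/2}\|=O_p(n^{1/2}p^{-1/2})$ can be established for both regimes, so you should qualify the citation rather than rely on the theorem as stated. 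Second, the data-dependence of $\bXi_q$ (and of $(\bI_p-\bP_{A,B})\bxi_j$) in your variance computation is a genuine subtlety that a uniform bound over all unit directions cannot repair, since $\sup_u\|\bX^{\T}\bD_k\bE u\|$ loses a factor $p^{1/2}$; however, the paper's own entrywise claim $O_p(n^{1/2}p^{\delta_0/2})$ is of exactly the same character and is asserted without a decoupling argument, so flagging it and proposing conditioning/decoupling, as you do, puts you at least at the paper's level of rigor. Your closing identification $O_p(n^{-1/2}p^{1/2})=o_p(p^{\delta_{1k}-\delta_1/2})$ via Condition~\ref{cond.auto_a}(ii) matches the paper exactly.
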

\begin{proof}
    Notice that the entries of $\bX^{\top}\bD_k\bE(\bI_p-\bP_{A,B}) \widehat{\bXi}_q$ and
    $\bZ^{\top}\bD_k\bE(\bI_p-\bP_{A,B}) \widehat{\bXi}_q$ are all $O_p(n^{1/2}p^{\delta_0/2})$ and $O_p(n^{1/2}p^{\delta_1/2})$, respectively. Thus,
    $$
    \begin{aligned}
        &\|(n-k)^{-1}(\bA\bX^{\top}+\bB\bZ^{\top}+\bE^{\top})\bD_k\bE(\bI_p-\bP_{A,B}) \widehat{\bXi}_q\|\\
        \le &\|(n-k)^{-1}\bA\bX^{\top}\bD_k\bE(\bI_p-\bP_{A,B}) \widehat{\bXi}_q\|+\|(n-k)^{-1}\bB\bZ^{\top}\bD_k\bE(\bI_p-\bP_{A,B}) \widehat{\bXi}_q\|\\
        &+\|(n-k)^{-1}\bE^{\top}\bD_k\bE(\bI_p-\bP_{A,B}) \widehat{\bXi}_q\|\\
        =&O_p(q^{1/2}n^{-1/2}p^{\delta_0/2})+O_p(q^{1/2}n^{-1/2}p^{\delta_1/2})+O_p(n^{-1}p)=O_p(n^{-1}p),
    \end{aligned}
    $$
    where the last line holds since $q=O(n^{-1}p^{2-\delta_0})$ and $\delta_0\ge\delta_1$.
    This, together with the fact $\|\widehat{\bTheta}_{q}^{-1/2}\|=O_p(n^{1/2}p^{-1/2})$, which can be shown to hold for both cases, i.e., $\delta_0=\delta_1$ and $\delta_0>\delta_1$ in a similar fashion to the proof of Theorem~\ref{advsam1}(i), implies that
    $$
        \begin{aligned}
            \|\widehat{\bOmega}_{y}(k)(\bI_p-\bP_{A,B}) \widehat{\bXi}_q\widehat{\bTheta}_{q}^{-1/2}\|
            =&\|(n-k)^{-1}(\bA\bX^{\top}+\bB\bZ^{\top}+\bE^{\top})\bD_k\bE(\bI_p-\bP_{A,B}) \widehat{\bXi}_q\widehat{\bTheta}_{q}^{-1/2}\|\\
            \le &\|\widehat{\bTheta}_{q}^{-1/2}\|\|(n-k)^{-1}(\bA\bX^{\top}+\bB\bZ^{\top}+\bE^{\top})\bD_k\bE(\bI_p-\bP_{A,B}) \widehat{\bXi}_q\|\\
            =&O_p(n^{1/2}p^{-1/2})\times O_p(n^{-1}p)=O_p(n^{-1/2}p^{1/2})=o_p(p^{\delta_{1k}-\delta_1/2}),
        \end{aligned}
    $$
    which concludes \eqref{yux1}. The proof is complete.
\end{proof}

\begin{lemma}
    \label{lem.our_2}
    Under the conditions of Theorem \ref{advour}, for $k\in[m],$
    \begin{equation}
        \label{eq.lem_our_2}
        \Vert\widehat{\bPsi}_{k,B}\widehat{\bPsi}_{k,B}^{\top}-\bB_1\bB_1^{\top}\Vert=O_p(n^{-1/2}p^{(1+\delta_1-2\delta_{1k})/2}+p^{\delta_{1k}-\delta_0})=o_p(1),
    \end{equation}
    where $\bB_1$ is defined in Lemma~\ref{advsam0}.
\end{lemma}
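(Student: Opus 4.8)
The plan is to reuse the deflation already carried out in the proof of Theorem~\ref{advauto} and then to pass from its ``doubly deflated'' block eigenvectors to $\bPsi_{k,B}$ by an invariant-subspace perturbation argument. Recall from that proof the identity
\[
\bPsi_{k,B}\bN_{k,B}\bPsi_{k,B}^{\T}+\bPsi_{k,E}\bN_{k,E}\bPsi_{k,E}^{\T}
=(\bI_p-\bPsi_{k,A}\bPsi_{k,A}^{\T})\widehat{\bOmega}_{y}(k)\widehat{\bOmega}_{y}(k)^{\T}(\bI_p-\bPsi_{k,A}\bPsi_{k,A}^{\T})=\bDelta_k^{(1)}+\bDelta_k^{(2)},
\]
where $\bDelta_k^{(1)}=(\bI_p-\bPsi_{k,A}\bPsi_{k,A}^{\T})\widehat{\bOmega}_{y}(k)(\bI_p-\bPsi_{k,A}\bPsi_{k,A}^{\T})\widehat{\bOmega}_{y}(k)^{\T}(\bI_p-\bPsi_{k,A}\bPsi_{k,A}^{\T})$ is the matrix in \eqref{eq.thm_auto_1}, with spectral decomposition $\widebar{\bPsi}_{k,B}\widebar{\bN}_{k,B}\widebar{\bPsi}_{k,B}^{\T}+\widebar{\bPsi}_{k,E}\widebar{\bN}_{k,E}\widebar{\bPsi}_{k,E}^{\T}$, and $\bDelta_k^{(2)}=(\bI_p-\bPsi_{k,A}\bPsi_{k,A}^{\T})\widehat{\bOmega}_{y}(k)\bPsi_{k,A}\bPsi_{k,A}^{\T}\widehat{\bOmega}_{y}(k)^{\T}(\bI_p-\bPsi_{k,A}\bPsi_{k,A}^{\T})$ is the cross term, bounded in Lemma~\ref{lem.auto} only by $o_p(p^{2\delta_{1k}})$. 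From the proof of Theorem~\ref{advauto} I already have $\|\widebar{\bPsi}_{k,B}\widebar{\bPsi}_{k,B}^{\T}-\bB_1\bB_1^{\T}\|=O_p(n^{-1/2}p^{(1+\delta_1-2\delta_{1k})/2}+p^{\delta_{1k}-\delta_0})$ together with $\widebar{\mu}_{k,r_0+1}\asymp\widebar{\mu}_{kr}\asymp p^{2\delta_{1k}}$; since Conditions~\ref{cond.cov_a}(ii) and~\ref{cond.auto_a}(ii) together force $p^{1-\delta_{1k}}=o(n)$, and since Weyl's inequality (Lemma~\ref{lem.weyl}) with the crude bound on $\|\bDelta_k^{(2)}\|$ transfers the order $n^{-2}p^2$ of $\mu_{k,r+1}$ to $\widebar{\mu}_{k,r+1}$, the $r_1$ leading eigenvalues of $\bDelta_k^{(1)}$ are separated from the remaining ones by a gap of order $p^{2\delta_{1k}}$.

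The one genuinely new estimate is the sharpening of the cross term to $\|\bDelta_k^{(2)}\|=O_p\big(p^{2\delta_{1k}}\{n^{-1/2}p^{(1+\delta_1-2\delta_{1k})/2}+p^{\delta_{1k}-\delta_0}\}\big)$. Since $\|\bDelta_k^{(2)}\|=\|(\bI_p-\bPsi_{k,A}\bPsi_{k,A}^{\T})\widehat{\bOmega}_{y}(k)\bPsi_{k,A}\|^{2}$, I split $\bI_p-\bPsi_{k,A}\bPsi_{k,A}^{\T}=\bPsi_{k,B}\bPsi_{k,B}^{\T}+\bPsi_{k,E}\bPsi_{k,E}^{\T}$ and bound the two pieces. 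The first, $\|\bPsi_{k,B}^{\T}\widehat{\bOmega}_{y}(k)\bPsi_{k,A}\|$, is exactly \eqref{eq.BOA}, namely $O_p(p^{2\delta_{1k}-\delta_0}+n^{-1/2}p^{(1-\delta_0+2\delta_{1k})/2})$, whose square has the claimed order. For $\|\bPsi_{k,E}^{\T}\widehat{\bOmega}_{y}(k)\bPsi_{k,A}\|$ I use the same device as in the derivation of \eqref{eq.BOA}: from $\bPsi_{k,E}^{\T}\widehat{\bOmega}_{y}(k)\widehat{\bOmega}_{y}(k)^{\T}\bPsi_{k,A}=\bzero$, inserting $\bI_p=\bPsi_{k,A}\bPsi_{k,A}^{\T}+\bPsi_{k,B}\bPsi_{k,B}^{\T}+\bPsi_{k,E}\bPsi_{k,E}^{\T}$ in the middle and solving, $\bPsi_{k,E}^{\T}\widehat{\bOmega}_{y}(k)\bPsi_{k,A}$ is expressed through $(\bPsi_{k,A}^{\T}\widehat{\bOmega}_{y}(k)^{\T}\bPsi_{k,A})^{-1}$ (order $p^{-\delta_0}$) and the block norms estimated in parts (i)--(v) of the proof of Lemma~\ref{lem.auto} (and their transposes), giving $\|\bPsi_{k,E}^{\T}\widehat{\bOmega}_{y}(k)\bPsi_{k,A}\|=O_p(p^{2\delta_{1k}-\delta_0}+n^{-1/2}p^{\delta_{1k}+(1-\delta_0)/2}+n^{-1}p^{1+\delta_{1k}-\delta_0}+n^{-3/2}p^{(3-\delta_0)/2})$. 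Squaring and comparing with $p^{2\delta_{1k}}\{n^{-1/2}p^{(1+\delta_1-2\delta_{1k})/2}+p^{\delta_{1k}-\delta_0}\}$, each term is absorbed by a short computation with exponents; the only non-trivial inequalities needed are $p^{1+2\delta_{1k}-\delta_1-2\delta_0}=o(n)$ and $p^{1-(\delta_0+\delta_{1k})/2}=o(n)$, which both follow from $p^{1-\delta_1}=o(n)$ (Condition~\ref{cond.cov_a}(ii)) together with $\delta_{1k}<\delta_0$ (Condition~\ref{cond.auto_a}(i)).

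With the sharpened bound in hand, apply the invariant-subspace version of the $\sin\theta$ theorem (the subspace analogue of Lemma~\ref{lem.sintheta}) to the pair $\bDelta_k^{(1)}$ and $\bDelta_k^{(1)}+\bDelta_k^{(2)}$: the span of the $r_1$ leading eigenvectors of the latter is $\cC(\bPsi_{k,B})$, that of the former is $\cC(\widebar{\bPsi}_{k,B})$, and the relevant eigengap is of order $p^{2\delta_{1k}}$, so
\[
\big\|\bPsi_{k,B}\bPsi_{k,B}^{\T}-\widebar{\bPsi}_{k,B}\widebar{\bPsi}_{k,B}^{\T}\big\|\;\lesssim\;\frac{\|\bDelta_k^{(2)}\|}{p^{2\delta_{1k}}}\;=\;O_p\big(n^{-1/2}p^{(1+\delta_1-2\delta_{1k})/2}+p^{\delta_{1k}-\delta_0}\big).
\]
Combining this with the bound on $\|\widebar{\bPsi}_{k,B}\widebar{\bPsi}_{k,B}^{\T}-\bB_1\bB_1^{\T}\|$ via the triangle inequality yields \eqref{eq.lem_our_2}, and the rate is $o_p(1)$ because $p^{1+\delta_1-2\delta_{1k}}=o(n)$ (Condition~\ref{cond.auto_a}(ii)) and $\delta_{1k}<\delta_0$ (Condition~\ref{cond.auto_a}(i)).

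I expect the sharpening of $\|\bDelta_k^{(2)}\|$ to be the main obstacle: Lemma~\ref{lem.auto} alone only gives $o_p(1)$ in \eqref{eq.lem_our_2}, and the tempting shortcut of bounding $\|\bPsi_{k,B}\bPsi_{k,B}^{\T}-\bB_1\bB_1^{\T}\|$ by $\|\bPsi_{k,A}\bPsi_{k,A}^{\T}-\bA\bA^{\T}\|$ plus an $\widehat{\bOmega}_{y}(k)$-analogue of Lemma~\ref{advsam0} only reproduces the coarser rate $n^{-1/2}p^{(1-\delta_0)/2}+p^{\delta_{1k}-\delta_0}$ of Theorem~\ref{advauto}(ii), which is inferior to the claimed rate when $2\delta_{1k}>\delta_0+\delta_1$. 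One therefore has to carry along the finer, direction-dependent block estimates \eqref{eq.BOA} and their $\bPsi_{k,E}$-counterpart, whose derivations rest on the eigenvalue orders of Theorem~\ref{advauto}(i) and the loading-space rate of Theorem~\ref{advauto}(ii).
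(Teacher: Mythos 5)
Your proposal is correct and shares the paper's overall skeleton (compare $\cC(\bPsi_{k,B})$ with $\cC(\widebar{\bPsi}_{k,B})$ by a Davis--Kahan argument with eigengap of order $p^{2\delta_{1k}}$, then triangle inequality with the bound on $\|\widebar{\bPsi}_{k,B}\widebar{\bPsi}_{k,B}^{\T}-\bB_1\bB_1^{\T}\|$ from the proof of Theorem~\ref{advauto}), but the key quantitative step is genuinely different. The paper never bounds the full cross term $\bDelta_k^{(2)}$ sharply: it controls only its $(B,E)$ off-diagonal block, $\|\bPsi_{k,B}\bPsi_{k,B}^{\T}\widehat{\bOmega}_{y}(k)\bPsi_{k,A}\bPsi_{k,A}^{\T}\widehat{\bOmega}_{y}(k)^{\T}\bPsi_{k,E}\bPsi_{k,E}^{\T}\|$, by multiplying \eqref{eq.BOA} with the crude item (v) of Lemma~\ref{lem.auto}, and divides that by the gap; since only the off-diagonal block rotates the subspace, this suffices and no new estimate is needed. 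You instead use the standard subspace form of Lemma~\ref{lem.sintheta} with $\|\bDelta_k^{(2)}\|=\|(\bI_p-\bPsi_{k,A}\bPsi_{k,A}^{\T})\widehat{\bOmega}_{y}(k)\bPsi_{k,A}\|^2$, which forces you to sharpen $\|\bPsi_{k,E}^{\T}\widehat{\bOmega}_{y}(k)\bPsi_{k,A}\|$ beyond anything in Lemma~\ref{lem.auto} (item (v) concerns the transposed direction and its crude rate would only give $O_p(1)$ after dividing by the gap, as you rightly note). Your route works: the zero-identity device $\bPsi_{k,E}^{\T}\widehat{\bOmega}_{y}(k)\widehat{\bOmega}_{y}(k)^{\T}\bPsi_{k,A}=\bzero$ does yield the needed sharpening, although the auxiliary blocks $\|\bPsi_{k,E}^{\T}\widehat{\bOmega}_{y}(k)\bPsi_{k,B}\|$ and $\|\bPsi_{k,E}^{\T}\widehat{\bOmega}_{y}(k)\bPsi_{k,E}\|$ are not literally "transposes" of items (ii)--(v) (since $\widehat{\bOmega}_{y}(k)$ is not symmetric) and must be re-derived by the same expansions, and your stated term list omits a few cross-products (e.g.\ the $n^{-1}p^{1+(\delta_1-\delta_0)/2}$ term arising from $n^{-1/2}p^{(1+\delta_1)/2}$ in $\|\bPsi_{k,E}^{\T}\widehat{\bOmega}_{y}(k)\bPsi_{k,E}\|$ times $n^{-1/2}p^{(1+\delta_0)/2}$); these are nonetheless absorbed into the claimed rate using $p^{1-\delta_1}=o(n)$, $p^{1+\delta_1-2\delta_{1k}}=o(n)$ and $\delta_{1k}\le\delta_1\wedge\delta_0$, so the conclusion stands. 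In short, the paper's argument is leaner because it isolates the off-diagonal block; yours is heavier in bookkeeping but reaches the same rate, and the eigengap and triangle-inequality steps coincide with the paper's.
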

\begin{proof}
    By \eqref{eq.BOA}, it can be shown that
    $$
    \begin{aligned}
        &\Vert\widehat{\bPsi}_{k,B}\widehat{\bPsi}_{k,B}^{\top}\widehat{\bOmega}_{y}(k)\widehat{\bPsi}_{k,A}\widehat{\bPsi}_{k,A}^{\top}\widehat{\bOmega}_{y}(k)^{\top}\widehat{\bPsi}_{k,E}\widehat{\bPsi}_{k,E}^{\top}\Vert\\
        =&O_p(p^{2\delta_{1k}-\delta_0}+n^{-1/2}p^{(1-\delta_0+2\delta_{1k})/2})\times O_p(p^{\delta_{1k}}+n^{-1/2}p^{(1+\delta_0)/2})\\
        =&O_p(n^{-1/2}p^{(1-\delta_0+4\delta_{1k})/2}+n^{-1}p^{1+\delta_{1k}}+p^{3\delta_{1k}-\delta_0}),
    \end{aligned}
    $$
    which, together with Lemma~\ref{lem.sintheta}, the second assertion in Theorem~\ref{advauto}(i) and the orders calculated in Lemma~\ref{lem.auto}, implies that $\Vert\widehat{\bPsi}_{k,B}\widehat{\bPsi}_{k,B}^{\top}-\widebar{\bPsi}_{k,B}\widebar{\bPsi}_{k,B}^{\top}\Vert=O_p(n^{-1/2}p^{(1-\delta_0)/2}+n^{-1}p^{1-\delta_{1k}}+p^{\delta_{1k}-\delta_0})=O_p(n^{-1/2}p^{(1+\delta_1-2\delta_{1k})/2}+p^{\delta_{1k}-\delta_0}),$ where $\widebar{\bPsi}_{k,B}$ is defined in \eqref{eq.thm_auto_1}. By the proof of Theorem~\ref{advauto}, we have $\|\widebar{\bPsi}_{k,B}\widebar{\bPsi}_{k,B}^{\top}-\bB_1\bB_1^{\top}\|=O_p(n^{-1/2}p^{(1+\delta_1-2\delta_{1k})/2}+p^{\delta_{1k}-\delta_{0}}),$ which concludes \eqref{eq.lem_our_2}. The proof is complete. 
\end{proof}

\begin{lemma}
    \label{lem.our_3}
    Under the conditions of Theorem \ref{advour}, for $k\in[m],$
    \begin{equation}
        \label{eq.lem_our_3}
        \|(\bI_p-\bA\bA^{\top})\widehat{\bOmega}_{y}(k)\widehat{\bW}\widehat{\bOmega}_{y}(k)^{\top}\|=O_p\big(n^{-1/2}p^{(1+\delta_0)/2}+p^{\delta_{1k}+(\delta_0-\delta_1)/2})=o_p(p^{\delta_0}\big).
    \end{equation}
\end{lemma}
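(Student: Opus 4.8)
Write $\bM_k:=\widehat{\bOmega}_{y}(k)\widehat{\bW}^{1/2}$ with $\widehat{\bW}^{1/2}=\bXi_{q}\bTheta_{q}^{-1/2}\bXi_{q}^{\T}$, so that $\widehat{\bOmega}_{y}(k)\widehat{\bW}\widehat{\bOmega}_{y}(k)^{\T}=\bM_k\bM_k^{\T}$ and
$$\big\|(\bI_p-\bA\bA^{\T})\widehat{\bOmega}_{y}(k)\widehat{\bW}\widehat{\bOmega}_{y}(k)^{\T}\big\|\le\big\|(\bI_p-\bA\bA^{\T})\bM_k\big\|\,\big\|\bM_k\big\|.$$
The plan is to bound the first factor by $O_{p}(p^{\delta_{1k}-\delta_1/2}+n^{-1/2}p^{1/2})$ and then to deduce $\|\bM_k\|=O_{p}(p^{\delta_0/2})$; multiplying these yields exactly the asserted rate $O_{p}(p^{\delta_{1k}+(\delta_0-\delta_1)/2}+n^{-1/2}p^{(1+\delta_0)/2})$. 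The concluding $o_{p}(p^{\delta_0})$ then follows from $p^{1-\delta_0}=o(n)$ (which holds by Condition~\ref{cond.cov_a}(ii) since $\delta_1\le\delta_0$) together with $\delta_{1k}+(\delta_0-\delta_1)/2<\delta_0$ under the factor-strength conditions of Section~\ref{subsec.auto}.

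For the first factor, since $\|N\bXi_q^{\T}\|=\|N\|$ whenever $\bXi_q^{\T}\bXi_q=\bI_q$, we have $\|(\bI_p-\bA\bA^{\T})\bM_k\|=\|(\bI_p-\bA\bA^{\T})\widehat{\bOmega}_{y}(k)\bXi_q\bTheta_q^{-1/2}\|$. Using the decomposition $\bP_{A,B}=\bA\bA^{\T}+\bB_1\bB_1^{\T}$ from Lemma~\ref{advsam0}, split $\bXi_q\bTheta_q^{-1/2}$ (inside the bracket) into its $(\bI_p-\bP_{A,B})$-, $\bA\bA^{\T}$- and $\bB_1\bB_1^{\T}$-parts. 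The $(\bI_p-\bP_{A,B})$-part contributes $O_{p}(n^{-1/2}p^{1/2})$ directly by Lemma~\ref{lem.our_1}. For the other two, Lemma~\ref{lem.S1} gives $\|\bA^{\T}\bXi_q\bTheta_q^{-1/2}\|=\|\bA^{\T}\widehat{\bW}\bA\|^{1/2}=O_{p}(p^{-\delta_0/2})$ and $\|\bB_1^{\T}\bXi_q\bTheta_q^{-1/2}\|=\|\bB_1^{\T}\widehat{\bW}\bB_1\|^{1/2}=O_{p}(p^{-\delta_1/2})$, so it remains only to prove the two ``off-diagonal block'' estimates $\|(\bI_p-\bA\bA^{\T})\widehat{\bOmega}_{y}(k)\bA\|=O_{p}(p^{\delta_{1k}}+n^{-1/2}p^{(1+\delta_0)/2})$ and $\|(\bI_p-\bA\bA^{\T})\widehat{\bOmega}_{y}(k)\bB_1\|=O_{p}(p^{\delta_{1k}}+n^{-1/2}p^{(1+\delta_1)/2})$; multiplying the three contributions and using $\delta_1\le\delta_0$ gives $\|(\bI_p-\bA\bA^{\T})\bM_k\|=O_{p}(p^{\delta_{1k}-\delta_1/2}+n^{-1/2}p^{1/2})$. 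Given this, $\|\bM_k\|\le\|\bA^{\T}\widehat{\bOmega}_{y}(k)\bXi_q\bTheta_q^{-1/2}\|+\|(\bI_p-\bA\bA^{\T})\bM_k\|$, and the first term here is $O_{p}(p^{\delta_0/2})$: indeed $\bY\bXi_q\bTheta_q^{-1/2}=n^{1/2}\widetilde{\bXi}_q$ has orthonormal columns, $\|\bD_k\|\le1$, and $\bA^{\T}\bY^{\T}=(\bX+\bZ\bB^{\T}\bA+\bE\bA)^{\T}$ has norm $O_{p}(n^{1/2}p^{\delta_0/2})$ by Conditions~\ref{cond.A_a}--\ref{cond.E_a}; hence $\|\bM_k\|=O_{p}(p^{\delta_0/2})$.

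It remains to establish the two off-diagonal block bounds, which are the crux. Using $\bY=\bX\bA^{\T}+\bZ\bB^{\T}+\bE$ and $\widehat{\bOmega}_{y}(k)=(n-k)^{-1}\bY^{\T}\bD_k\bY$, one exploits the cancellations $(\bI_p-\bA\bA^{\T})\bY^{\T}=(\bI_p-\bA\bA^{\T})(\bB\bZ^{\T}+\bE^{\T})$ and $\bA^{\T}\bB_1=\bzero$ (the latter because $\bB_1\bB_1^{\T}=\bP_{A,B}-\bA\bA^{\T}$ annihilates $\bA$), which give $\bY\bA=\bX+\bZ\bB^{\T}\bA+\bE\bA$ and $\bY\bB_1=\bZ\bB^{\T}\bB_1+\bE\bB_1$ with $\bB^{\T}\bB_1$ an $O(1)$ invertible matrix (Condition~\ref{cond.A_a}). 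Expanding, e.g., $(\bI_p-\bA\bA^{\T})\widehat{\bOmega}_{y}(k)\bB_1=(n-k)^{-1}(\bI_p-\bA\bA^{\T})(\bB\bZ^{\T}+\bE^{\T})\bD_k(\bZ\bB^{\T}\bB_1+\bE\bB_1)$ into four pieces: the weak-factor autocovariance piece $(n-k)^{-1}\bB\bZ^{\T}\bD_k\bZ\bB^{\T}\bB_1$ is $O_{p}(p^{\delta_{1k}})$ by Condition~\ref{cond.auto_a}(i) ($\|\bOmega_{z}(k)\|\asymp p^{\delta_{1k}}$); the mixed pieces are $O_{p}(n^{-1/2}p^{\delta_1/2})$ and $O_{p}(n^{-1/2}p^{(1+\delta_1)/2})$ by the uncorrelatedness in Condition~\ref{cond.x_a}(i), the fourth-moment bounds, and $\tr(\bOmega_{e})\asymp p$; and the noise--noise lag-$k$ piece $(n-k)^{-1}\bE^{\T}\bD_k\bE\bB_1$ is $O_{p}(n^{-1/2}p^{1/2})$, obtained from $\|\bE\|=O_{p}(n^{1/2}+p^{1/2})$ together with $\|\bE\bB_1\|=O_{p}(n^{1/2})$ (a fixed-dimensional linear image of $\bE=\bG\bGamma\bR$, by Condition~\ref{cond.E_a}) rather than the crude $\|\bE\|^{2}$. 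The bound for $(\bI_p-\bA\bA^{\T})\widehat{\bOmega}_{y}(k)\bA$ is entirely analogous, the only change being the appearance of $(n-k)^{-1}\bE^{\T}\bD_k\bX=O_{p}(n^{-1/2}p^{(1+\delta_0)/2})$. I expect this last family of moment estimates to be the main obstacle: keeping the noise--noise lag-$k$ term from dominating (which forces one to use the $\bG\bGamma\bR$ structure and project onto fixed subspaces) and pinning the weak-factor autocovariance contribution down to the exact order $p^{\delta_{1k}}$; these computations run parallel to those already carried out in the proofs of Theorem~\ref{advauto} and Lemma~\ref{lem.auto}.
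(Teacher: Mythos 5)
Your proof is correct and takes a genuinely different route from the paper's. The paper keeps $\widehat{\bW}$ intact and expands $(\bI_p-\bA\bA^{\T})\widehat{\bOmega}_{y}(k)\widehat{\bW}\widehat{\bOmega}_{y}(k)^{\T}$ by inserting $\bI_p=\bA\bA^{\T}+\bB_1\bB_1^{\T}+(\bI_p-\bP_{A,B})$ on both sides of $\widehat{\bW}$, producing nine terms that are bounded one at a time via Lemmas~\ref{lem.S1} and \ref{lem.our_1}; the most delicate of these, $(\bI_p-\bA\bA^{\T})\widehat{\bOmega}_{y}(k)(\bI_p-\bP_{A,B})\widehat{\bW}\bA\bA^{\T}\widehat{\bOmega}_{y}(k)^{\T}$, resists the naive submultiplicative estimate (since $\|\widehat{\bW}\|=O_p(np^{-1})$ appears undamped) and has to be re-expanded using the eigenvector concentration \eqref{eq4ca} from Lemma~\ref{advsam0}. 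You instead split the weight matrix symmetrically as $\bM_k=\widehat{\bOmega}_{y}(k)\widehat{\bW}^{1/2}$ and apply $\|(\bI_p-\bA\bA^{\T})\bM_k\bM_k^{\T}\|\le\|(\bI_p-\bA\bA^{\T})\bM_k\|\,\|\bM_k\|$, then carry out the $\bA$-/$\bB_1$-/$(\bI_p-\bP_{A,B})$-decomposition only on the column factor $\bXi_q\bTheta_q^{-1/2}$; since $\|\bA^{\T}\bXi_q\bTheta_q^{-1/2}\|=\|\bA^{\T}\widehat{\bW}\bA\|^{1/2}$ and likewise for $\bB_1$, the weight-matrix dampening is already square-rooted, and the difficult cross contribution is absorbed directly into Lemma~\ref{lem.our_1}. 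This sidesteps the paper's targeted cross-term estimate and is arguably cleaner. A minor difference: your off-diagonal bounds $\|(\bI_p-\bA\bA^{\T})\widehat{\bOmega}_{y}(k)\bA\|=O_p(p^{\delta_{1k}}+n^{-1/2}p^{(1+\delta_0)/2})$ and $\|(\bI_p-\bA\bA^{\T})\widehat{\bOmega}_{y}(k)\bB_1\|=O_p(p^{\delta_{1k}}+n^{-1/2}p^{(1+\delta_1)/2})$ carry explicit secondary terms that the paper suppresses as $O_p(p^{\delta_{1k}})$; your bookkeeping is safer since those secondary terms are not dominated under the stated conditions alone, and after multiplying by $\|\bA^{\T}\bXi_q\bTheta_q^{-1/2}\|$, $\|\bB_1^{\T}\bXi_q\bTheta_q^{-1/2}\|$, and $\|\bM_k\|$ they land exactly on $n^{-1/2}p^{(1+\delta_0)/2}$, so the stated rate is unchanged.
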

\begin{proof}
    Consider the following decomposition:
    {\small $$
    \begin{aligned}
        &(\bI_p-\bA\bA^{\top})\widehat{\bOmega}_{y}(k)\widehat{\bW}\widehat{\bOmega}_{y}(k)^{\top} \\
        =&(\bI_p-\bA\bA^{\top})\widehat{\bOmega}_{y}(k)\bA\bA^{\top}\widehat{\bW}\bA\bA^{\top}\widehat{\bOmega}_{y}(k)^{\top}+(\bI_p-\bA\bA^{\top})\widehat{\bOmega}_{y}(k)\bB_1\bB_1^{\top}\widehat{\bW}\bB_1\bB_1^{\top}\widehat{\bOmega}_{y}(k)^{\top}\\
        &+(\bI_p-\bA\bA^{\top})\widehat{\bOmega}_{y}(k)\bA\bA^{\top}\widehat{\bW}\bB_1\bB_1^{\top}\widehat{\bOmega}_{y}(k)^{\top}+(\bI_p-\bA\bA^{\top})\widehat{\bOmega}_{y}(k)\bB_1\bB_1^{\top}\widehat{\bW}\bA\bA^{\top}\widehat{\bOmega}_{y}(k)^{\top}\\
        &+(\bI_p-\bA\bA^{\top})\widehat{\bOmega}_{y}(k)\bA\bA^{\top}\widehat{\bW}(\bI_p-\bP_{A,B})\widehat{\bOmega}_{y}(k)^{\top}+(\bI_p-\bA\bA^{\top})\widehat{\bOmega}_{y}(k)\bB_1\bB_1^{\top}\widehat{\bW}(\bI_p-\bP_{A,B})\widehat{\bOmega}_{y}(k)^{\top}\\
        &+(\bI_p-\bA\bA^{\top})\widehat{\bOmega}_{y}(k)(\bI_p-\bP_{A,B})\widehat{\bW}(\bI_p-\bP_{A,B})\widehat{\bOmega}_{y}(k)^{\top}+(\bI_p-\bA\bA^{\top})\widehat{\bOmega}_{y}(k)(\bI_p-\bP_{A,B})\widehat{\bW}\bB_1\bB_1^{\top}\widehat{\bOmega}_{y}(k)^{\top}\\
        &+(\bI_p-\bA\bA^{\top})\widehat{\bOmega}_{y}(k)(\bI_p-\bP_{A,B})\widehat{\bW}\bA\bA^{\top}\widehat{\bOmega}_{y}(k)^{\top}.
    \end{aligned}
    $$}
    The orders of each term are calculated as follows:
    $$
    \begin{aligned}
        & \|(\bI_p-\bA\bA^{\top})\widehat{\bOmega}_{y}(k)\bA\| = \|(n-k)^{-1}(\bI_p-\bA\bA^{\top})(\bB\bZ^{\top}+\bE^{\top})\bD_k(\bX\bA^{\top}+\bZ\bB^{\top}+\bE)\bA\|=O_p(p^{\delta_{1k}}),\\
        & \|\bA^{\top}\widehat{\bOmega}_{y}(k)^{\top}\|=\|(n-k)^{-1}(\bA\bX^{\top}+\bB\bZ^{\top}+\bE^{\top})\bD_k(\bX\bA^{\top}+\bZ\bB^{\top}+\bE)\bA\|=O_p(p^{\delta_{0}}), \\
        & \|(\bI_p-\bA\bA^{\top})\widehat{\bOmega}_{y}(k)\bB_1\| = \|(n-k)^{-1}(\bI_p-\bA\bA^{\top})(\bB\bZ^{\top}+\bE^{\top})\bD_k(\bZ\bB^{\top}+\bE)\bB_1\|=O_p(p^{\delta_{1k}}), \\
        & \|\bB_1^{\top}\widehat{\bOmega}_{y}(k)^{\top}\| =\|(n-k)^{-1}(\bA\bX^{\top}+\bB\bZ^{\top}+\bE^{\top})\bD_k(\bZ\bB^{\top}+\bE)\bB_1\|=O_p(p^{\delta_{1k}}), \\
        & \|(\bI_p-\bP_{A,B})\widehat{\bOmega}_{y}(k)^{\top}\| =\|(n-k)^{-1}(\bA\bX^{\top}+\bB\bZ^{\top}+\bE^{\top})\bD_k\bE(\bI_p-\bP_{A,B})\|=O_p(n^{-1/2}p^{(1+\delta_{0})/2}),\\
        & \|(\bI_p-\bA\bA^{\top})\widehat{\bOmega}_{y}(k)(\bI_p-\bP_{A,B})\| = \|(n-k)^{-1}(\bI_p-\bA\bA^{\top})(\bB\bZ^{\top}+\bE^{\top})\bD_k\bE\|=O_p(n^{-1/2}p^{(1+\delta_{1})/2}), \\
        &\|\widehat{\bOmega}_{y}(k)(\bI_p-\bP_{A,B})\widehat{\bW}(\bI_p-\bP_{A,B})\widehat{\bOmega}_{y}(k)^{\top}\|=O_p(n^{-1}p)~{\rm by}~\eqref{yux1},
    \end{aligned}
    $$
    which, together with \eqref{middle1} and \eqref{middle2} in Lemma~\ref{lem.S1}, imply that
    $$
    \begin{aligned}
        &\|(\bI_p-\bA\bA^{\top})\widehat{\bOmega}_{y}(k)\widehat{\bW}\widehat{\bOmega}_{y}(k)^{\top}\|\\
        =&\|(\bI_p-\bA\bA^{\top})\widehat{\bOmega}_{y}(k)(\bI_p-\bP_{A,B})\widehat{\bW}\bA\bA^{\top}\widehat{\bOmega}_{y}(k)^{\top}\|+O_p(p^{\delta_{1k}})+O_p(p^{\delta_{1k}+(\delta_0-\delta_1)/2})+O_p(n^{-1/2}p^{(1+\delta_0)/2})\\
        \le&\|(\bI_p-\bA\bA^{\top})\widehat{\bOmega}_{y}(k)(\bI_p-\bP_{A,B})\widehat{\bW}\bA\|\|\widehat{\bOmega}_{y}(k)^{\top}\|+O_p(p^{\delta_{1k}+(\delta_0-\delta_1)/2}+n^{-1/2}p^{(1+\delta_0)/2}).
    \end{aligned}
    $$
    For the first term above, we have
    $$
    \begin{aligned}
        &\|(\bI_p-\bA\bA^{\top})\widehat{\bOmega}_{y}(k)(\bI_p-\bP_{A,B})\widehat{\bW}\bA\|\\
        \le&\|(n-k)^{-1}\bZ^{\top}\bD_k\bE(\bI_p-\bP_{A,B})\widehat{\bW}\bA\|+\|(n-k)^{-1}\bE^{\top}\bD_k\bE(\bI_p-\bP_{A,B})\widehat{\bW}\bA\|\\
        =&\|(n-k)^{-1}\bZ^{\top}\bD_k\bE(\bI_p-\bP_{A,B})\widehat{\bW}\bA\|+O_p(n^{-1/2}p^{(1-\delta_0)/2}),\\
        \le&\|(n-k)^{-1}\bZ^{\top}\bD_k\bE(\bI_p-\bP_{A,B})\widehat{\bXi}_{A}\widehat{\bTheta}_{A}^{-1}\widehat{\bXi}_{A}^{\top}\bA\|+\|(n-k)^{-1}\bZ^{\top}\bD_k\bE(\bI_p-\bP_{A,B})\widehat{\bXi}_{B}\widehat{\bTheta}_{B}^{-1}\widehat{\bXi}_{B}^{\top}\bA\|\\
        &+\|(n-k)^{-1}\bZ^{\top}\bD_k\bE(\bI_p-\bP_{A,B})\widehat{\bXi}_{q-r}\widehat{\bTheta}_{q-r}^{-1}\widehat{\bXi}_{q-r}^{\top}\bA\|+O_p(n^{-1/2}p^{(1-\delta_0)/2})\\
        =&\|(n-k)^{-1}\bZ^{\top}\bD_k\bE(\bI_p-\bP_{A,B})\widehat{\bXi}_{q-r}\widehat{\bTheta}_{q-r}^{-1}\widehat{\bXi}_{q-r}^{\top}\bA\|+O_p(n^{-1/2}p^{(1-\delta_0)/2}),
    \end{aligned}
    $$
    where $\widehat{\bXi}_{q-r}:=(\widehat{\bxi}_{r+1},\dots,\widehat{\bxi}_{q})$ and $\widehat{\bTheta}_{q-r}:=\diag(\hat{\theta}_{r+1},\dots,\hat{\theta}_{q})$. Note that $\widehat{\bXi}_{q-r}$ is a $p \times (q-r)$ matrix, and $\|(n-k)^{-1}\bZ^{\top}\bD_k\bE(\bI_p-\bP_{A,B})\widehat{\bXi}_{q-r}\|=O_p(n^{-1/2}p^{\delta_1/2}).$ Then, by \eqref{eq4ca}, 
    $$\|(n-k)^{-1}\bZ^{\top}\bD_k\bE(\bI_p-\bP_{A,B})\widehat{\bW}\bA\|
    =O_p(p^{(\delta_1-\delta_0-1)/2}+n^{-1/2}p^{(1-\delta_0)/2}).$$ 
    Since $p^{\delta_{1}}=o(p^{2\delta_{1k}})$ by combining Conditions~\ref{cond.cov_a}(ii) and \ref{cond.auto_a}(ii), we have $\|(\bI_p-\bA\bA^{\top})\widehat{\bOmega}_{y}(k)(\bI_p-\bP_{A,B})\widehat{\bW}\bA\|\|\widehat{\bOmega}_{y}(k)^{\top}\|=O_p(n^{-1/2}p^{(1+\delta_0)/2}+p^{(\delta_1+\delta_0-1)/2})=O_p(n^{-1/2}p^{(1+\delta_0)/2}+p^{\delta_{1k}+(\delta_0-\delta_1)/2})$. Combining above, it follows that
    $$
    \|(\bI_p-\bA\bA^{\top})\widehat{\bOmega}_{y}(k)\widehat{\bW}\widehat{\bOmega}_{y}(k)^{\top}\|=O_p\big(n^{-1/2}p^{(1+\delta_0)/2}+p^{\delta_{1k}+(\delta_0-\delta_1)/2})=o_p(p^{\delta_0}\big),
    $$
    which concludes \eqref{eq.lem_our_3}. The proof is complete. 
\end{proof}

Now we are ready to prove Theorem~\ref{advour}. 

{\bf Step 1}. Recall that in \eqref{eq1} we have $\widehat{\bOmega}_{y}(k)\widehat{\bW}\widehat{\bOmega}_{y}(k)^{\top}=\widehat{\bOmega}_{y}(k) \widehat{\bXi}_q\widehat{\bTheta}_{q}^{-1}\widehat{\bXi}_q^{\top} \widehat{\bOmega}_{y}(k)^{\top}$, so we focus on the singular values of $\widehat{\bOmega}_{y}(k) \widehat{\bXi}_q\widehat{\bTheta}_{q}^{-1/2}$. Note that
\begin{eqnarray*}
\widehat{\bOmega}_{y}(k) \widehat{\bXi}_q\widehat{\bTheta}_{q}^{-1/2}=\widehat{\bOmega}_{y}(k)\bP_{A,B} \widehat{\bXi}_q\widehat{\bTheta}_{q}^{-1/2}+\widehat{\bOmega}_{y}(k)(\bI_p-\bP_{A,B}) \widehat{\bXi}_q\widehat{\bTheta}_{q}^{-1/2},
\end{eqnarray*}
where $\rank\big\{\widehat{\bOmega}_{y}(k)\bP_{A,B} \widehat{\bXi}_q\widehat{\bTheta}_{q}^{-1/2}\big\}\le r$. Then by Lemma~\ref{lem.our_1}, $\hat{\lambda}_{k,r+1}=O_p(n^{-1}p)$ holds. 

{\bf Step 2}. Next we prove $\hat{\lambda}_{k1} \asymp  \hat{\lambda}_{kr_0} \asymp p^{\delta_0}$ with probability tending to 1, and $\hat{\lambda}_{k,r_0+1} \asymp \hat{\lambda}_{kr}  \asymp p^{2\delta_{1k}-\delta_1}$ with probability tending to 1 when $\delta_0=\delta_1$. Consider the singular values of $\widehat{\bOmega}_{y}(k)\bP_{A,B} \widehat{\bXi}_q\widehat{\bTheta}_{q}^{-1/2}=\widehat{\bOmega}_{y}(k)\bP_{A,B} \bP_{A,B}\widehat{\bXi}_q\widehat{\bTheta}_{q}^{-1/2}$, which can be divided into two parts, $\widehat{\bOmega}_{y}(k)\bP_{A,B}$ and $\bP_{A,B}\widehat{\bXi}_q\widehat{\bTheta}_{q}^{-1/2}$. Firstly, by combining Theorem~\ref{advauto} and \eqref{eq.lem_our_2} in Lemma~\ref{lem.our_2}, it can be shown that $\bP_{A,B}\widehat{\bOmega}_{y}(k)^{\top}\widehat{\bOmega}_{y}(k)\bP_{A,B}$ has $r$ non-negative eigenvalues, where the $r_0$ largest eigenvalues are of order $p^{2\delta_0}$ with probability tending to 1, and the other $r_1=r-r_0$ non-negative eigenvalues are of order $p^{2\delta_{1k}}$ with probability tending to 1.
Then, note that
$$
\bP_{A,B} \widehat{\bXi}_q\widehat{\bTheta}_{q}^{-1}\widehat{\bXi}_q^{\top}\bP_{A,B}\\
=\bP_{A,B} \widehat{\bXi}_{A}\widehat{\bTheta}_{A}^{-1}\widehat{\bXi}_{A}^{\top}\bP_{A,B}+\bP_{A,B} \widehat{\bXi}_{B}\widehat{\bTheta}_{B}^{-1}\widehat{\bXi}_{B}^{\top}\bP_{A,B}+\bP_{A,B} \widehat{\bXi}_{q-r}\widehat{\bTheta}_{q-r}^{-1}\widehat{\bXi}_{q-r}^{\top}\bP_{A,B},
$$
where $\widehat{\bXi}_{q-r}=(\widehat{\bxi}_{r+1},\dots,\widehat{\bxi}_{q})$ and $\widehat{\bTheta}_{q-r}=\diag(\hat{\theta}_{r+1},\dots,\hat{\theta}_{q})$, and the three terms in the right hand are non-negative.
By using \eqref{eq3} in Lemma~\ref{advsam0}, we have
$$
\begin{aligned}
    \|\bP_{A,B} \widehat{\bXi}_{A}\widehat{\bTheta}_{A}^{-1}\widehat{\bXi}_{A}^{\top}\bP_{A,B}\| \asymp& \|\bP_{A,B} \widehat{\bXi}_{A}\widehat{\bTheta}_{A}^{-1}\widehat{\bXi}_{A}^{\top}\bP_{A,B}\|_{\min} \asymp p^{-\delta_{0}},\\
    \|\bP_{A,B} \widehat{\bXi}_{B}\widehat{\bTheta}_{B}^{-1}\widehat{\bXi}_{B}^{\top}\bP_{A,B}\| \asymp& \|\bP_{A,B} \widehat{\bXi}_{B}\widehat{\bTheta}_{B}^{-1}\widehat{\bXi}_{B}^{\top}\bP_{A,B}\|_{\min} \asymp p^{-\delta_{1}}
\end{aligned}
$$
with probability tending to 1, and by \eqref{eq4ca} in Lemma~\ref{advsam0}, $\|\bP_{A,B} \widehat{\bXi}_{q-r}\widehat{\bTheta}_{q-r}^{-1}\widehat{\bXi}_{q-r}^{\top}\bP_{A,B}\|=O_p(p^{-\delta_{1}})$.
In the following, we consider two cases, i.e., $\delta_0=\delta_1$ and $\delta_0>\delta_1$, separately. On the one hand, when $\delta_0=\delta_1$, we can find that
$$
\|\bP_{A,B} \widehat{\bXi}_q\widehat{\bTheta}_{q}^{-1}\widehat{\bXi}_q^{\top}\bP_{A,B}\| \asymp \|\bP_{A,B} \widehat{\bXi}_q\widehat{\bTheta}_{q}^{-1}\widehat{\bXi}_q^{\top}\bP_{A,B}\|_{\min} \asymp p^{-\delta_{0}}
$$
with probability tending to 1, which, together with the eigenvalues of $\bP_{A,B}\widehat{\bOmega}_{y}(k)^{\top}\widehat{\bOmega}_{y}(k)\bP_{A,B}$, implies that 
$$
\|\widehat{\bOmega}_{y}(k)\bP_{A,B} \widehat{\bXi}_q\widehat{\bTheta}_{q}^{-1/2}\| \asymp \|\widehat{\bOmega}_{y}(k)\bP_{A,B} \widehat{\bXi}_q\widehat{\bTheta}_{q}^{-1/2}\|_{\min} \asymp p^{\delta_{0}/2}
$$ 
with probability tending to 1. Then combined with \eqref{yux1} and Theorem~\ref{advauto}(i), we can show that $\hat{\lambda}_{k1} \asymp  \hat{\lambda}_{kr_0} \asymp p^{\delta_0},$ and $\hat{\lambda}_{k,r_0+1} \asymp \hat{\lambda}_{kr}  \asymp p^{2\delta_{1k}-\delta_1}$ with probability tending to 1. On the other hand, when $\delta_0>\delta_1$, we have
\begin{equation}
    \label{eq.Omega_11}
    \widehat{\bOmega}_{y}(k)\bP_{A,B} \widehat{\bXi}_q\widehat{\bTheta}_{q}^{-1/2}=\widehat{\bOmega}_{y}(k)\bA\bA^{\top} \widehat{\bXi}_q\widehat{\bTheta}_{q}^{-1/2}+\widehat{\bOmega}_{y}(k)\bB_1\bB_1^{\top} \widehat{\bXi}_q\widehat{\bTheta}_{q}^{-1/2}.
\end{equation}
For the first term in \eqref{eq.Omega_11}, note that
$$
\bA^{\top} \widehat{\bXi}_q\widehat{\bTheta}_{q}^{-1}\widehat{\bXi}_q^{\top}\bA=\bA^{\top} \widehat{\bXi}_{A}\widehat{\bTheta}_{A}^{-1}\widehat{\bXi}_{A}^{\top}\bA+\bA^{\top} \widehat{\bXi}_{B}\widehat{\bTheta}_{B}^{-1}\widehat{\bXi}_{B}^{\top}\bA+\bA^{\top}\widehat{\bXi}_{q-r}\widehat{\bTheta}_{q-r}^{-1}\widehat{\bXi}_{q-r}^{\top}\bA,
$$
where the three terms in the right hand are non-negative.
By combining \eqref{eq3} and Theorem~\ref{advsam1}(ii), $\|\bA^{\top} \widehat{\bXi}_{B}\widehat{\bTheta}_{B}^{-1}\widehat{\bXi}_{B}^{\top}\bA\|=o_p(p^{-\delta_0})$, and
$\|\bA^{\top} \widehat{\bXi}_{A}\widehat{\bTheta}_{A}^{-1}\widehat{\bXi}_{A}^{\top}\bA\| \asymp \|\bA^{\top} \widehat{\bXi}_{A}\widehat{\bTheta}_{A}^{-1}\widehat{\bXi}_{A}^{\top}\bA\|_{\min} \asymp p^{-\delta_0}$
with probability tending to 1. By combining 
\eqref{eq4ca} and the fact $\hat{\theta}_{r+1} \asymp \hat{\theta}_{q} \asymp n^{-1}p$ with probability tending to 1 by Theorem~\ref{advsam1}(i), we have $\|\bA^{\top} \widehat{\bXi}_{q-r}\widehat{\bTheta}_{q-r}^{-1}\widehat{\bXi}_{q-r}^{\top}\bA\|=o_p(p^{-\delta_0})$. Then, $\|\bA^{\top} \widehat{\bXi}_q\widehat{\bTheta}_{q}^{-1}\widehat{\bXi}_q^{\top}\bA\| \asymp \|\bA^{\top} \widehat{\bXi}_q\widehat{\bTheta}_{q}^{-1}\widehat{\bXi}_q^{\top}\bA\|_{\min} \asymp p^{-\delta_{0}}$ with probability tending to 1.
Thus, it follows that
\begin{eqnarray}\label{z1}
\|\widehat{\bOmega}_{y}(k)\bA\bA^{\top} \widehat{\bXi}_q\widehat{\bTheta}_{q}^{-1/2}\| \asymp \|\widehat{\bOmega}_{y}(k)\bA\bA^{\top} \widehat{\bXi}_q\widehat{\bTheta}_{q}^{-1/2}\|_{\min} \asymp p^{\delta_{0}/2}
\end{eqnarray}
with probability tending to 1. For the second term in \eqref{eq.Omega_11}, note that
$$
\bB_1^{\top} \widehat{\bXi}_q\widehat{\bTheta}_{q}^{-1}\widehat{\bXi}_q^{\top}\bB_1=\bB_1^{\top} \widehat{\bXi}_{A}\widehat{\bTheta}_{A}^{-1}\widehat{\bXi}_{A}^{\top}\bB_1+\bB_1^{\top} \widehat{\bXi}_{B}\widehat{\bTheta}_{B}^{-1}\widehat{\bXi}_{B}^{\top}\bB_1+\bB_1^{\top} \widehat{\bXi}_{q-r}\widehat{\bTheta}_{q-r}^{-1}\widehat{\bXi}_{q-r}^{\top}\bB_1,
$$
where the three terms in the right hand are non-negative. By Theorem~\ref{advsam1}(ii), it follows that $\widehat{\bXi}_{A}^{\top}\bB_1=o_p(1)$, which implies that $\|\bB_1^{\top} \widehat{\bXi}_{A}\widehat{\bTheta}_{A}^{-1}\widehat{\bXi}_{A}^{\top}\bB_1\|=o_p(p^{-\delta_1})$. By \eqref{eq3}, we have $\|\bB_1^{\top} \widehat{\bXi}_{B}\widehat{\bTheta}_{B}^{-1}\widehat{\bXi}_{B}^{\top}\bB_1\| \asymp \|\bB_1^{\top} \widehat{\bXi}_{B}\widehat{\bTheta}_{B}^{-1}\widehat{\bXi}_{B}^{\top}\bB_1\|_{\min} \asymp p^{-\delta_1}$ with probability tending to 1. By combining 
\eqref{eq4ca} and the fact $\hat{\theta}_{r+1} \asymp \hat{\theta}_{q} \asymp n^{-1}p$ with probability tending to 1 by Theorem~\ref{advsam1}(i), we have $\|\bB_1^{\top} \widehat{\bXi}_{q-r}\widehat{\bTheta}_{q-r}^{-1}\widehat{\bXi}_{q-r}^{\top}\bB_1\|=o_p(p^{-\delta_{1}})$. Then, $\|\bB_1^{\top} \widehat{\bXi}_q\widehat{\bTheta}_{q}^{-1}\widehat{\bXi}_q^{\top}\bB_1\| \asymp \|\bB_1^{\top} \widehat{\bXi}_q\widehat{\bTheta}_{q}^{-1}\widehat{\bXi}_q^{\top}\bB_1\|_{\min} \asymp p^{-\delta_{1}}$ with probability tending to 1. Thus, considering the orthogonality of the columns of $\bA$ and $\bB_1,$ we have
\begin{eqnarray}\label{z2}
\|\widehat{\bOmega}_{y}(k)\bB_1\bB_1^{\top} \widehat{\bXi}_q\widehat{\bTheta}_{q}^{-1/2}\| \asymp \|\widehat{\bOmega}_{y}(k)\bB_1\bB_1^{\top} \widehat{\bXi}_q\widehat{\bTheta}_{q}^{-1/2}\|_{\min} \asymp p^{\delta_{1k}-\delta_{1}/2}
\end{eqnarray}
with probability tending to 1. Note that the rank of $\widehat{\bOmega}_{y}(k)\bP_{A,B}\widehat{\bXi}_q\widehat{\bTheta}_{q}^{-1}\widehat{\bXi}_q^{\top}\bP_{A,B}\widehat{\bOmega}_{y}(k)^{\top}$ is not larger than $r$. By combining \eqref{yux1}, \eqref{eq.Omega_11}, \eqref{z1} and \eqref{z2}, we conclude $\hat{\lambda}_{k1} \asymp  \hat{\lambda}_{kr_0} \asymp p^{\delta_0}$ with probability tending to 1, and
\begin{eqnarray}\label{r2a}
\hat{\lambda}_{k,r_0+1} =O_p(p^{2\delta_{1k}-\delta_1}).
\end{eqnarray}
It is noteworthy that there is a gap between \eqref{r2a} and the assertion $\hat{\lambda}_{k,r_0+1} \asymp \hat{\lambda}_{kr}  \asymp p^{2\delta_{1k}-\delta_1}$ with probability tending to 1. We will prove Theorem~\ref{advour}(ii) first and then conclude this assertion by using \eqref{r2a} and Theorem~\ref{advour}(ii).

{\bf Step 3}. Now we prove Theorem~\ref{advour}(ii). Note that the fact $\hat{\lambda}_{k1} \asymp  \hat{\lambda}_{kr_0} \asymp p^{\delta_0}$ with probability tending to 1 and \eqref{eq.lem_our_3} imply $\|\bA^{\top}\widehat{\bOmega}_{y}(k)\widehat{\bW}\widehat{\bOmega}_{y}(k)^{\top}\bA\|_{\min} \asymp p^{\delta_0}$ with probability tending to 1. To show Theorem~\ref{advour}(ii), by \eqref{eq.lem_our_3} in Lemma~\ref{lem.our_3}, we have
\begin{equation}
    \label{eq.thm4_1}
    \begin{aligned}
        &\|\widehat{\bOmega}_{y}(k)\widehat{\bW}\widehat{\bOmega}_{y}(k)^{\top}-\bA\bA^{\top}\widehat{\bOmega}_{y}(k)\widehat{\bW}\widehat{\bOmega}_{y}(k)^{\top}\bA\bA^{\top}\| \\
        \le & \|(\bI_p-\bA\bA^{\top})\widehat{\bOmega}_{y}(k)\widehat{\bW}\widehat{\bOmega}_{y}(k)^{\top}\|+\|\bA\bA^{\top}\widehat{\bOmega}_{y}(k)\widehat{\bW}\widehat{\bOmega}_{y}(k)^{\top}(\bI_p-\bA\bA^{\top})\|\\
        \le &2\|(\bI_p-\bA\bA^{\top})\widehat{\bOmega}_{y}(k)\widehat{\bW}\widehat{\bOmega}_{y}(k)^{\top}\|=O_p\big(n^{-1/2}p^{(1+\delta_0)/2}+p^{\delta_{1k}+(\delta_0-\delta_1)/2}),
    \end{aligned}
\end{equation}
which, together with the fact $\hat{\lambda}_{k1} \asymp  \hat{\lambda}_{kr_0} \asymp p^{\delta_0}$ with probability tending to 1 and Lemma~\ref{lem.sintheta}, concludes $\big\|\widehat{\bPhi}_{k,A}\widehat{\bPhi}_{k,A}^{\top}-\bA\bA^{\top}\big\|=O_p(n^{-1/2}p^{(1-\delta_{0})/2}+p^{\delta_{1k}-(\delta_{0}+\delta_{1})/2})$.

{\bf Step 4}. Finally, we prove the assertion $\hat{\lambda}_{k,r_0+1} \asymp \hat{\lambda}_{kr}  \asymp p^{2\delta_{1k}-\delta_1}$ with probability tending to 1 when $\delta_0>\delta_1$. By the definitions, we have that $\hat{\lambda}_{k,r_0+1} \ge \cdots \ge \hat{\lambda}_{kq}$ are the $q-r_0$ largest eigenvalues of the matrix
$$
\begin{aligned}
    &(\bI_p-\widehat{\bPhi}_{k,A}\widehat{\bPhi}_{k,A}^{\top})\widehat{\bOmega}_{y}(k)\widehat{\bW}\widehat{\bOmega}_{y}(k)^{\top}(\bI_p-\widehat{\bPhi}_{k,A}\widehat{\bPhi}_{k,A}^{\top})\\
    =&(\bI_p-\widehat{\bPhi}_{k,A}\widehat{\bPhi}_{k,A}^{\top})\widehat{\bOmega}_{y}(k)\widehat{\bXi}_{A}\widehat{\bTheta}_{A}^{-1}\widehat{\bXi}_{A}^{\top}\widehat{\bOmega}_{y}(k)^{\top}(\bI_p-\widehat{\bPhi}_{k,A}\widehat{\bPhi}_{k,A}^{\top})\\
    &+(\bI_p-\widehat{\bPhi}_{k,A}\widehat{\bPhi}_{k,A}^{\top})\widehat{\bOmega}_{y}(k)\widehat{\bXi}_{B}\widehat{\bTheta}_{B}^{-1}\widehat{\bXi}_{B}^{\top}\widehat{\bOmega}_{y}(k)^{\top}(\bI_p-\widehat{\bPhi}_{k,A}\widehat{\bPhi}_{k,A}^{\top})\\
    &+(\bI_p-\widehat{\bPhi}_{k,A}\widehat{\bPhi}_{k,A}^{\top})\widehat{\bOmega}_{y}(k)\widehat{\bXi}_{q-r}\widehat{\bTheta}_{q-r}^{-1}\widehat{\bXi}_{q-r}^{\top}\widehat{\bOmega}_{y}(k)^{\top}(\bI_p-\widehat{\bPhi}_{k,A}\widehat{\bPhi}_{k,A}^{\top}),
\end{aligned}
$$
where the three terms in the right hand are non-negative. We focus on the singular values of $(\bI_p-\widehat{\bPhi}_{k,A}\widehat{\bPhi}_{k,A}^{\top})\widehat{\bOmega}_{y}(k)\widehat{\bXi}_{B}\widehat{\bTheta}_{B}^{-1/2}.$
By Theorem~\ref{advsam1}(ii) under $\delta_0>\delta_1$, we have
\begin{eqnarray}\label{vr2a}
\|\bA^{\top}\widehat{\bXi}_{B}\|=O_p(n^{-1/2}p^{(1-\delta_0)/2}+p^{\delta_1-\delta_0})=o_p(1).
\end{eqnarray}
There exists some constant $C_5>0$ such that
\begin{eqnarray}
    \|(\bI_p-\widehat{\bPhi}_{k,A}\widehat{\bPhi}_{k,A}^{\top})\widehat{\bOmega}_{y}(k)\widehat{\bXi}_{B}\widehat{\bTheta}_{B}^{-1/2}\|_{\min}
        &\ge&\|(\bI_p-\widehat{\bPhi}_{k,A}\widehat{\bPhi}_{k,A}^{\top})\widehat{\bOmega}_{y}(k)\widehat{\bXi}_{B}\|_{\min}\|\widehat{\bTheta}_{B}^{-1/2}\|_{\min}\nonumber\\
        &\ge&C_5\|(\bI_p-\widehat{\bPhi}_{k,A}\widehat{\bPhi}_{k,A}^{\top})\widehat{\bOmega}_{y}(k)\widehat{\bXi}_{B}\|_{\min}p^{-\delta_1/2}\label{eq.thm4_2}
\end{eqnarray}
with probability tending to 1. For \eqref{eq.thm4_2}, consider the decomposition
\begin{equation}
    \label{eq.thm_our_3}
    \begin{aligned}
        &(\bI_p-\widehat{\bPhi}_{k,A}\widehat{\bPhi}_{k,A}^{\top})\widehat{\bOmega}_{y}(k)\widehat{\bXi}_{B}\\
        =&(n-k)^{-1}(\bI_p-\widehat{\bPhi}_{k,A}\widehat{\bPhi}_{k,A}^{\top})(\bA\bX^{\top}+\bB\bZ^{\top}+\bE^{\top})\bD_k\bX\bA^{\top}\widehat{\bXi}_{B}\\
        +&(n-k)^{-1}(\bI_p-\widehat{\bPhi}_{k,A}\widehat{\bPhi}_{k,A}^{\top})(\bA\bX^{\top}+\bB\bZ^{\top}+\bE^{\top})\bD_k\bE \widehat{\bXi}_{B}\\
        +&(n-k)^{-1}(\bI_p-\widehat{\bPhi}_{k,A}\widehat{\bPhi}_{k,A}^{\top})(\bA\bX^{\top}+\bB\bZ^{\top}+\bE^{\top})\bD_k\bZ\bB^{\top}\widehat{\bXi}_{B},\\
    \end{aligned}
\end{equation}
where the first and second terms are bounded by \eqref{vr2a} and Theorem~\ref{advour}(ii), following that
$$
\begin{aligned}
    &\|(n-k)^{-1}(\bI_p-\widehat{\bPhi}_{k,A}\widehat{\bPhi}_{k,A}^{\top})(\bA\bX^{\top}+\bB\bZ^{\top}+\bE^{\top})\bD_k\bX\bA^{\top}\widehat{\bXi}_{B}\|\\
    \le&\|\bA^{\top}\widehat{\bXi}_{B}\|\|\bA\bA^{\top}-\widehat{\bPhi}_{k,A}\widehat{\bPhi}_{k,A}^{\top}\|\|(n-k)^{-1}\bA\bX^{\top}\bD_k\bX\|\\
    &+\|\bA^{\top}\widehat{\bXi}_{B}\|\|(n-k)^{-1}(\bB\bZ^{\top}+\bE^{\top})\bD_k\bX\|
    =o_p(p^{\delta_{1k}}),\\
    &\|(n-k)^{-1}(\bI_p-\widehat{\bPhi}_{k,A}\widehat{\bPhi}_{k,A}^{\top})(\bA\bX^{\top}+\bB\bZ^{\top}+\bE^{\top})\bD_k\bE \widehat{\bXi}_{B}\|\\
    \le&\|\bA\bA^{\top}-\widehat{\bPhi}_{k,A}\widehat{\bPhi}_{k,A}^{\top}\|\|(n-k)^{-1}\bA\bX^{\top}\bD_k\bE\|+\|(n-k)^{-1}(\bB\bZ^{\top}+\bE^{\top})\bD_k\bE\|
    =o_p(p^{\delta_{1k}}),
\end{aligned}
$$
and for the third term of \eqref{eq.thm_our_3}, there exists some constant $C_6>0$ such that
\begin{equation}
    \label{eq.thm4_3}
    \begin{aligned}
        &\|(n-k)^{-1}(\bI_p-\widehat{\bPhi}_{k,A}\widehat{\bPhi}_{k,A}^{\top})(\bA\bX^{\top}+\bB\bZ^{\top}+\bE^{\top})\bD_k\bZ\bB^{\top}\widehat{\bXi}_{B}\|_{\min}\\
        \ge&\|(n-k)^{-1}(\bI_p-\widehat{\bPhi}_{k,A}\widehat{\bPhi}_{k,A}^{\top})\bB\bZ^{\top}\bD_k\bZ\bB^{\top}\widehat{\bXi}_{B}\|_{\min}-\|(n-k)^{-1}\bE^{\top}\bD_k\bZ\bB^{\top}\|\\
        &-\|\bA\bA^{\top}-\widehat{\bPhi}_{k,A}\widehat{\bPhi}_{k,A}^{\top}\|\|(n-k)^{-1}\bA\bX^{\top}\bD_k\bZ\bB^{\top}\|
        \ge C_6p^{\delta_{1k}}
    \end{aligned}
\end{equation}
with probability tending to 1. Combining \eqref{eq.thm4_2}, \eqref{eq.thm_our_3} and \eqref{eq.thm4_3}, we have $$\Vert(\bI_p-\widehat{\bPhi}_{k,A}\widehat{\bPhi}_{k,A}^{\top})\widehat{\bOmega}_{y}(k)\widehat{\bXi}_{B}\widehat{\bTheta}_{B}^{-1/2}\Vert_{\min}\ge C_5C_6p^{\delta_{1k}-\delta_1/2}$$
with probability tending to 1, which implies that $\hat{\lambda}_{kr} \ge C_5^2C_6^2p^{2\delta_{1k}-\delta_1}$ with probability tending to 1. Then, by combining \eqref{r2a}, we conclude that $\hat{\lambda}_{k,r_0+1} \asymp \hat{\lambda}_{kr}  \asymp p^{2\delta_{1k}-\delta_1}$ with probability tending to 1 and complete the proof.

\subsection{Proof of  Corollary \ref{coro.new}}
If Condition \ref{cond.auto_a} is replaced by $\|\bOmega_z(k)\|=O(n^{-1/2}p^{(1+\delta_1)/2})$ for $k\in[m]$, then we have
$\|(\bI_p-\bA\bA^{\top})\widehat{\bOmega}_{y}(k)\bA\|=O_p(n^{-1/2}p^{(1+\delta_1)/2}),
\|(\bI_p-\bA\bA^{\top})\widehat{\bOmega}_{y}(k)\bB_1\|=O_p(n^{-1/2}p^{(1+\delta_1)/2}),$ and $
\|\bB_1^{\top}\widehat{\bOmega}_{y}(k)^{\top}\|=O_p(n^{-1/2}p^{(1+\delta_1)/2}).
$
By similar procedures to Lemma~\ref{lem.our_3}, we can show that
$$
\|(\bI_p-\bA\bA^{\top})\widehat{\bOmega}_{y}(k)\widehat{\bW}\widehat{\bOmega}_{y}(k)^{\top}\|=O_p(n^{-1/2}p^{(1+\delta_1)/2})=o_p(p^{\delta_0}),
$$
which implies that $\hat{\lambda}_{k1} \asymp  \hat{\lambda}_{kr_0} \asymp p^{\delta_0}$ with probability tending to 1, and $\|\widehat{\bPhi}_{k,A}\widehat{\bPhi}_{k,A}^{\top}-\bA\bA^{\top}\|=O_p(n^{-1/2}p^{(1-\delta_0)/2})$. Note that $\hat{\lambda}_{k,r_0+1}^{1/2}$ is the largest singular value of $(\bI_p-\bA\bA^{\top})\widehat{\bOmega}_{y}(k)\widehat{\bW}^{1/2}$. By using similar procedures on $\|(\bI_p-\bA\bA^{\top})\widehat{\bOmega}_{y}(k)\widehat{\bW}^{1/2}\|$ to the proof of Theorem~\ref{advour}(i), we obtain that $\hat{\lambda}_{k,r_0+1}=O_p(n^{-1}p)$. The proof is complete.

\section{Further methodology and derivations}
\label{supsec.C}

\subsection{The solution of \eqref{eq.minimize}}
\label{supsubsec.solution}
This section derives the solution of \eqref{eq.minimize} in Section~\ref{subsec.reduced}. We rewrite \eqref{eq.rrr} as
\begin{equation}
    \label{eq.rrr_sup}
    \by_t=\bL_k\widetilde{\by}_{t-k}+\be_{tk}=\bA\bH_k^{\top}\widetilde{\by}_{t-k}+\be_{tk},~~t=k+1,\dots,n,
\end{equation}
where the rank-reduced coefficient matrix $\bL_k=\bA\bH_k^{\top}\in\eR^{p\times q}$ satisfies $\rank(\bL_k)=r_0$ and $\cC(\bL_k)=\cC(\bA)$. 
Let $\bY_k=(\by_{k+1},\dots,\by_n)^{\top}\in\eR^{(n-k)\times p}$ and $\widetilde{\bY}_k=(\widetilde{\by}_1,\dots,\widetilde{\by}_{n-k})^{\top}\in\eR^{(n-k)\times q}.$ The ordinary least squares estimator for $\bL_k$  is $\widehat{\bL}_{k,\ols}=\bY_k^{\top}\widetilde{\bY}_k(\widetilde{\bY}_k^{\top}\widetilde{\bY}_k)^{-1},$ which, however, does not satisfy the rank-reduced constraint. 
To fit model \eqref{eq.rrr_sup}, we consider solving the following constrained minimization problem, which is equivalent to \eqref{eq.minimize}:
\begin{equation}
    \label{eq.minimize_sup}
    \min_{\bH_k,\bA^{\top}\bA=\bI_{r_0}}\tr\{(\bY_k-\widetilde{\bY}_k\bH_k\bA^{\top})(\bY_k-\widetilde{\bY}_k\bH_k\bA^{\top})^{\top}\}.
\end{equation}
The solution of \eqref{eq.minimize_sup} is $\widehat{\bH}_k=(\widetilde{\bY}_k^{\top}\widetilde{\bY}_k)^{-1}\widetilde{\bY}_k^{\top}\bY_k\widehat{\bPhi}_{k,A}$ and $\widehat{\bA}_k=\widehat{\bPhi}_{k,A},$ where the columns of $\widehat{\bPhi}_{k,A}$ are the $r_0$ leading eigenvectors of $\widehat{\bY}_{k,\ols}^{\top}\widehat{\bY}_{k,\ols}$ with $\widehat{\bY}_{k,\ols}=\widetilde{\bY}_k\widehat{\bL}_{k,\ols}^{\top}.$ By noticing that
$$
\begin{aligned}
    \widehat{\bY}_{k,\ols}^{\top}\widehat{\bY}_{k,\ols}=&\widehat{\bL}_{k,\ols}\widetilde{\bY}_k^{\top}\widetilde{\bY}_k\widehat{\bL}_{k,\ols}^{\top}=\bY_k^{\top}\widetilde{\bY}_k(\widetilde{\bY}_k^{\top}\widetilde{\bY}_k)^{-1}\widetilde{\bY}_k^{\top}\bY_k\\
    =&\Big(\sum_{t=k+1}^n\by_t\by_{t-k}^{\top}\Big)\bQ\Big(\bQ^{\top}\sum_{t=1}^{n-k}\by_t\by_t^{\top}\bQ\Big)^{-1}\bQ^{\top}\Big(\sum_{t=k+1}^n\by_{t-k}\by_t^{\top}\Big)\\
    =&(n-k)\widehat{\bOmega}_{y}(k)\bQ\big(\bQ^{\top}\widehat{\bOmega}_{y}\bQ\big)^{-1}\bQ^{\top}\widehat{\bOmega}_{y}(k)^{\top},
\end{aligned}
$$
we obtain the solution of \eqref{eq.minimize}.

\subsection{Justification for the weight matrix}
\label{supsubsec.calibrating}

This section presents explanations for why $\widehat{\bW}$ can improve the estimation performance and why $\bQ$ is constructed as the leading eigenvectors of $\widehat{\bOmega}_y$. Recall that $\widehat{\bOmega}_y=\sum_{j=1}^{p}\hat{\theta}_j\widehat{\bxi}_j\widehat{\bxi}_j^{\top}$ and $\widehat{\bOmega}_{y}(k)=\sum_{j=1}^{p}\hat{\mu}_{kj}^{1/2}\widehat{\bpsi}_{kj}\widetilde{\bpsi}_{kj}^{\top}$ for $k \in [m],$ with eigenvalues $\hat{\theta}_{1}\ge\cdots\ge\hat{\theta}_{p}\ge0$ and $\hat{\mu}_{k1}\ge\cdots\ge\hat{\mu}_{kp}\ge0,$ respectively. Suppose that the weight matrix has the spectral decomposition $\widehat{\bW}=\sum_{i=1}^{q}\hat{\tau}_i\widehat{\bnu}_{i}\widehat{\bnu}_{i}^{\top}$ with eigenvalues $\hat{\tau}_1\ge\cdots\ge\hat{\tau}_q\ge 0$. 
Notice that
\begin{equation}
    \label{eq.cali_under}
    \begin{aligned}
        \widehat{\bOmega}_{y}(k)\widehat{\bOmega}_{y}(k)^{\top}=&\sum_{j=1}^{p}\hat{\mu}_{kj}\widehat{\bpsi}_{kj}\widehat{\bpsi}_{kj}^{\top},\quad{\rm and}\\
        \widehat{\bOmega}_{y}(k)\widehat{\bW}\widehat{\bOmega}_{y}(k)^{\top}=&\Big(\sum_{j=1}^{p}\hat{\mu}_{kj}^{1/2}\widehat{\bpsi}_{kj}\widetilde{\bpsi}_{kj}^{\top}\Big)\Big(\sum_{i=1}^{q}\hat{\tau}_i\widehat{\bnu}_{i}\widehat{\bnu}_{i}^{\top}\Big)\Big(\sum_{j'=1}^{p}\hat{\mu}_{kj'}^{1/2}\widetilde{\bpsi}_{kj'}\widehat{\bpsi}_{kj'}^{\top}\Big)\\
        =&\sum_{j=1}^{p}c_{kj}\hat{\mu}_{kj}\widehat{\bpsi}_{kj}\widehat{\bpsi}_{kj}^{\top}+\bS_k,
    \end{aligned}
\end{equation}
where $\bS_k=\sum_{1\le j\neq j'\le p}\hat{\mu}_{kj}^{1/2}\hat{\mu}_{kj'}^{1/2}\big(\widetilde{\bpsi}_{kj}^{\top}\widehat{\bW}\widetilde{\bpsi}_{kj'}\big)\widehat{\bpsi}_{kj}\widehat{\bpsi}_{kj'}^{\top}$ is the cross term and $c_{kj}=\sum_{i=1}^{q}\hat{\tau}_i\big(\widetilde{\bpsi}_{kj}^{\top}\widehat{\bnu}_{i}\big)^2$ is regarded as the calibrating coefficient. The cross term $\bS_k$ has a small impact on the $r_0$ leading eigenpairs of \eqref{eq.cali_under} asymptotically, since $\bS_k\widehat{\bpsi}_{kj}=\sum_{j'\neq j}\hat{\mu}_{kj'}^{1/2}\hat{\mu}_{kj}^{1/2}\big(\widetilde{\bpsi}_{kj'}^{\top}\widehat{\bW}\widetilde{\bpsi}_{kj}\big)\widehat{\bpsi}_{kj'},$ is orthogonal to $\widehat{\bpsi}_{kj}$ for each $j\in[r_0]$, thus asymptotically orthogonal to $\cC(\bA)$. Hence, $c_{kj}$'s are expected to improve the eigenstructure of $\widehat{\bOmega}_{y}(k)\widehat{\bW}\widehat{\bOmega}_{y}(k)^{\top}$ compared to $\widehat{\bOmega}_{y}(k)\widehat{\bOmega}_{y}(k)^{\top}$ by adjusting the corresponding coefficients in the leading term $\sum_{j=1}^{p}c_{kj}\hat{\mu}_{kj}^{1/2}\widehat{\bpsi}_{kj}\widehat{\bpsi}_{kj}^{\top}$. To this end, we aim for $c_{kj}$'s to be large (or small) for those corresponding $\cC(\widehat{\bpsi}_{kj})$'s that are close to (or far from) $\cC(\bA)$. Hereafter, we say $\cC(\bK_1)$ is close to $\cC(\bK_2)$, if $\cD\{\cC(\bK_1),\cC(\bK_2)\}$ is large and vice versa. This ensures that the eigenvectors of $\widehat{\bOmega}_{y}(k)\widehat{\bW}\widehat{\bOmega}_{y}(k)^{\top}$ that are closer to $\cC(\bA)$ are associated with larger eigenvalues, relative to the eigenpairs of $\widehat{\bOmega}_{y}(k)\widehat{\bOmega}_{y}(k)^{\top}$. Consequently, these eigenvectors are more likely to be selected when identifying the $r_0$ leading eigenvectors to estimate $\cC(\bA).$ Additionally, this often results in an enhanced separation between common and idiosyncratic components, and, more precisely, a relatively faster order of decrease from the $r_0$-th to the $(r_0+1)$-th largest eigenvalues of $\widehat{\bOmega}_{y}(k)\widehat{\bW}\widehat{\bOmega}_{y}(k)^{\top}$ than that of $\widehat{\bOmega}_{y}(k)\widehat{\bOmega}_{y}(k)^{\top}.$

Based on the properties of $c_{kj}$'s discussed above, we present a reasonable choice of $\bQ.$ 
Note that, for each $k \in [m],$ $\{\widehat{\bpsi}_{kj}\}_{j=1}^{r_0}$ and $\{\widetilde{\bpsi}_{kj}\}_{j=1}^{r_0}$ are the $r_0$ leading eigenvectors of $\widehat{\bOmega}_{y}(k)\widehat{\bOmega}_{y}(k)^{\top}$ and $\widehat{\bOmega}_{y}(k)^{\top}\widehat{\bOmega}_{y}(k)$, respectively. According to the standard autocovariance-based method, both $\cC\big(\{\widehat{\bpsi}_{kj}\}_{j=1}^{r_0}\big)$ and $\cC\big(\{\widetilde{\bpsi}_{kj}\}_{j=1}^{r_0}\big)$ are close to $\cC(\bA)$. For $k\in[m]$ and $j\in[p],$ we expand $\widetilde{\bpsi}_{kj}=\sum_{i=1}^{p}g_{kji}\widehat{\bnu}_i,$ where $\{\widehat{\bnu}_i\}_{i=1}^{p}$ forms an orthonormal basis of $\eR^p$ and the basis coefficients satisfy $\sum_{i=1}^{p}g_{kji}^2=1$. Given that $c_{kj}=\sum_{i=1}^{q}\hat{\tau}_ig_{kji}^2,$ we aim for $\{g_{kji}\}_{i\in[q]}$'s to be large for those corresponding $\cC(\widehat{\bpsi}_{kj})$'s that are close to $\cC(\bA),$ which suggests that $\cD\{\cC(\{\widehat{\bnu}_i\}_{i=1}^{q}),\cC(\bA)\}$ should be small.
By the covariance-based estimation, a feasible choice is to guarantee
\begin{equation}
    \label{eq.psi_select}
    \{\widehat{\bnu}_i\}_{i=1}^{q}=\{\widehat{\bxi}_{i}\}_{i=1}^{q}.
\end{equation}
To achieve this, we use $\bQ=(\widehat{\bxi}_{1},\dots,\widehat{\bxi}_{q})$. Then, $\bQ^{\top}\widehat{\bOmega}_{y}\bQ=\diag(\hat{\theta}_{1}^{-1},\dots,\hat{\theta}_{q}^{-1})$, and
\begin{equation}
    \label{eq.hat_W}
    \sum_{i=1}^{q}\hat{\tau}_i\widehat{\bnu}_{i}\widehat{\bnu}_{i}^{\top}=\widehat{\bW}=\bQ\big(\bQ^{\top}\widehat{\bOmega}_y\bQ\big)^{-1}\bQ^{\top}=\sum_{i=1}^{q}\hat{\theta}_{i}^{-1}\widehat{\bxi}_{i}\widehat{\bxi}_{i}^{\top},
\end{equation}
which implies that the non-degenerate eigenvectors of $\widehat{\bW}$ correspond to the $q$ leading eigenvectors of $\widehat{\bOmega}_{y},$ and thus \eqref{eq.psi_select} holds. 

An alternative method is to sample the entries of $\bQ$ independently from some random distribution with zero mean and, e.g., unit variance, such as standard normal. This random calibration ensures that $\widehat{\bW}$ will still contain some information about $\{\widehat{\bxi}_{i}\}_{i=1}^{q}.$ However, it is expected to suffer from the reduced statistical efficiency and empirical performance compared to \eqref{eq.hat_W}.

\subsection{Heuristic distributional analysis}
\label{supsubsec.approximation}
Recall that $\widetilde{\by}_t=\bQ^{\top}\by_t\in\eR^q,\bx_t=\bH_k^{\top}\widetilde{\by}_{t-k}+\widetilde{\be}_{tk}$, and $\by_t=\bA\bH_k^{\top}\widetilde{\by}_{t-k}+\be_{tk}.$ Let
$$
(\widehat{\bH}_k,\widehat{\bA}_k)=\arg\min_{\bH_k,\bA^{\top}\bA=\bI_{r_0}}\tr\{(\bY_k-\widetilde{\bY}_k\bH_k\bA^{\top})(\bY_k-\widetilde{\bY}_k\bH_k\bA^{\top})^{\top}\},
$$
where $\widehat{\bA}_k$ is equivalent to $\widehat{\bPhi}_{k,A}$ in \eqref{basiceq000anew}. For each $k\in[m]$, we assume that $\be_{tk}$'s are i.i.d. and follow a multivariate normal distribution $\cN(\bzero,\bI_p).$ Then, it is shown that $\widehat{\bA}_k=(\widehat{\bA}_{k1},\dots,\widehat{\bA}_{kr_0})$ is the maximum likelihood estimator of $\bA=(\bA_1,\dots,\bA_{r_0}).$ Note that 
$
\cov(\by_t,\widetilde{\by}_{t-k})=\bA\bH_k^{\top}\bOmega_{\widetilde{y}},
$
where $\bOmega_{\widetilde{y}}=\cov(\widetilde{y}_t)$ is positive definite.
Then, for each $k\in[m],$
$$
\bOmega_y(k)\bW\bOmega_y(k)=\cov(\by_t,\widetilde{\by}_{t-k})\{\cov(\widetilde{\by}_{t-k})\}^{-1}\cov(\by_t,\widetilde{\by}_{t-k})^{\top}=\bA\bH_k^{\top}\bOmega_{\widetilde{y}}\bH_k\bA^{\top}
$$
has rank $r_0.$ Given that $\bH_k^{\top}\bOmega_{\widetilde{y}}\bH_k=\bH_k^{\top}\bQ^{\top}\bOmega_{y}\bQ\bH_k\in\eR^{r_0\times r_0}$ is full-ranked, the space spanned by the $r_0$ leading eigenvectors of $\bOmega_y(k)\bW\bOmega_y(k)$ is exactly $\cC(\bA)$. Let $\{\bA_j\}_{j\in[p]}$ be an orthonormal basis. 
Without loss of generality, assume $\widehat{\bA}_{kj}^{\top}\bA_j\ge0.$
By Theorem 2.4 of \textcolor{blue}{Reinsel et al.} (\textcolor{blue}{2022}), for $j\in[r_0],$
\begin{equation}
\label{eq.dist1}
\eE\{n(\widehat{\bA}_{kj}-\bA_j)(\widehat{\bA}_{kj}-\bA_j)^{\top}\}\to\bG_{jj}:=\sum_{1\le i\neq j\le p}\frac{\widetilde{\lambda}_{kj}+\widetilde{\lambda}_{ki}}{(\widetilde{\lambda}_{kj}-\widetilde{\lambda}_{ki})^2}\bA_i\bA_i^{\top},
\end{equation}
as $n\to\infty$, and for $1\le j\neq l\le r_0,$
\begin{equation}
\label{eq.dist2}
\eE\{n(\widehat{\bA}_{kj}-\bA_j)(\widehat{\bA}_{kl}-\bA_l)^{\top}\}\to\bG_{jl}:=-\frac{\widetilde{\lambda}_{kj}+\widetilde{\lambda}_{kl}}{(\widetilde{\lambda}_{kj}-\widetilde{\lambda}_{kl})^2}\bA_l\bA_j^{\top},
\end{equation}
as $n\to\infty$, where $\widetilde{\lambda}_{kj}$'s are the eigenvalues of $\bOmega_y(k)\bW\bOmega_y(k)$ for $j\in[r_0]$, and $\widetilde{\lambda}_{kj}=0$ for $j=r_0+1,\dots,p.$ Thus, $\bG_{jj}$ can be rewritten as
$$
\bG_{jj}=\sum_{1\le i\neq j\le r_0}\frac{\widetilde{\lambda}_{kj}+\widetilde{\lambda}_{ki}}{(\widetilde{\lambda}_{kj}-\widetilde{\lambda}_{ki})^2}\bA_i\bA_i^{\top}+\frac{1}{\widetilde{\lambda}_{kj}}(\bI_p-\bA\bA^{\top}).
$$
Let $\bG=\{\bG_{jl}\}\in\eR^{pr_0\times pr_0}$ be the asymptotic covariance matrix of the vectorization of $\widehat{\bA}_k,$ where $\bG_{jl}$ is the $(j,l)$-th block for $j,l\in[r_0]$. 
Then, by \eqref{eq.dist1} and \eqref{eq.dist2}, we have $\sqrt{n}{\rm vec}(\widehat{\bA}_k-\bA)\to\cN(\bzero,\bG)$ as $n\to\infty,$ which can be used to conduct hypothesis tests on $\bA$. Additionally, similar to Theorem~\ref{theoforbasic}, we can show that $\widetilde{\lambda}_{kj}\asymp p^{\delta_0}$ for $j\in[r_0].$ Thus, $\Vert\bG_{jj}\Vert_{\F}\lesssim p^{1-\delta_0}$ and $\Vert\bG_{jl}\Vert_{\F}\lesssim p^{-\delta_0}$ for $j\neq l.$ Hence, $\Vert\bG\Vert\le\big(\sum_{j=1}^{r_0}\sum_{l=1}^{r_0}\Vert\bG_{jl}\Vert_{\F}^2\big)^{1/2}\lesssim p^{1-\delta_0}$, and thus $\Vert\widehat{\bA}_k\widehat{\bA}_k^{\top}-\bA\bA^{\top}\Vert=O_p(n^{-1/2}p^{(1-\delta_0)/2})$, aligning with the rate in Theorem~\ref{theoforbasic}(ii).

\subsection{Reduced-rank autoregression formulation for matrix-valued time series}
\label{supsubsec.reduced}
This section provides an explanation for the suggested forms of weight matrices for matrix factor models as discussed in Section~\ref{sec.disscuss}. 
Recall that $\{\bY_t\}_{t\in\eZ}$ is a stationary matrix-valued time series of size $p_1\times p_2$ satisfying the factor model $\bY_t=\bR\bX_t\bC^{\top}+\bE_t$. The matrices $\widehat{\bOmega}_{y,ij}^{(1)}(k)$ and $\widehat{\bOmega}_{y,ij}^{(2)}(k)$ denote the sample estimates of $\bOmega_{y,ij}^{(1)}(k)$ and $\bOmega_{y,ij}^{(2)}(k)$, respectively. 
Furthermore, let the $j$-th row of $\bR$ and $\bC$ be $\br_{j\cdot}$ and $\bc_{j\cdot}$, respectively, the $j$-th column of $\bE_t$ be $\be_{t,\cdot j}$, and $\bz_{t,j}=\bX_t\bc_{j\cdot}^{\top}$. Then, we have the following $p_2$ vector factor models:
\begin{equation}
    \label{eq.model_matrix_v}
    \by_{t,\cdot j}=\bR\bX_t\bc_{j\cdot}^{\top}+\be_{t,\cdot j}=\bR\bz_{t,j}+\be_{t,\cdot j},~~j\in[p_2].
\end{equation}
Without loss of generality, we assume that $\eE(\by_{t,\cdot j})=\bzero$.
Letting $\bQ_{1j}$ consists of the $q_1$ leading eigenvectors of $\widehat{\bOmega}_{y,jj}^{(1)}(0)$ with $d_1<q_1\le p_1\wedge n$, we project $\by_{t,\cdot j}$ to $\widetilde{\by}_{t,\cdot j}=\bQ_{1j}^{\top}\by_{t,\cdot j}\in\eR^{q_1}.$ 
By assuming the latent factor is of the form $\bz_{t,i}=\bH_{kij}^{\top}\widetilde{\by}_{t-k,\cdot j}+\widetilde{\be}_{tkij}$, where $\bH_{kij}$ is  $q_1\times d_1$ and full-ranked, model \eqref{eq.model_matrix_v} becomes a reduced-rank autoregression
\begin{equation}
    \label{eq.rrr_matrix}
    \by_{t,\cdot i}=\bL_{kij}\widetilde{\by}_{t-k,\cdot j}+\be_{tkij}=\bR\bH_{kij}^{\top}\widetilde{\by}_{t-k,\cdot j}+\be_{tkij},~~t=k+1,\dots,n,
\end{equation}
where $\bL_{kij}=\bR\bH_{kij}^{\top}\in\eR^{p_1\times q_1}$ satisfies $\rank(\bL_{kij})=d_1$ and $\cC(\bL_{kij})=\cC(\bR)$, and $\be_{tkij}=\bR\widetilde{\be}_{tkij}+\be_{t,\cdot i}$.
Let $\bY_{k,\cdot j}=(\by_{k+1,\cdot j},\dots,\by_{n,\cdot j})^{\top}\in\eR^{(n-k)\times p_1}$, and $\widetilde{\bY}_{k,\cdot j}=(\widetilde{\by}_{1,\cdot j},\dots,\widetilde{\by}_{n-k,\cdot j})^{\top}\in\eR^{(n-k)\times q_1}.$
To fit \eqref{eq.rrr_matrix}, we consider solving the constrained minimization problem:
\begin{equation}
    \label{eq.min_matrix}(\widehat{\bH}_{kij},\widehat{\bR}_{kij})=\arg\min_{\bH_{kij},\bR^{\top}\bR=\bI_{d_1}}\tr\{(\bY_{k,\cdot i}-\widetilde{\bY}_{k,\cdot j}\bH_{kij}\bR^{\top})(\bY_{k,\cdot i}-\widetilde{\bY}_{k,\cdot j}\bH_{kij}\bR^{\top})^{\top}\}.
\end{equation}
Following similar procedures to those in Section~\ref{supsubsec.solution}, the solution of~\eqref{eq.min_matrix} is $\widehat{\bH}_{kij}=(\widetilde{\bY}_{k,\cdot j}^{\top}\widetilde{\bY}_{k,\cdot j})^{-1}\widetilde{\bY}_{k,\cdot j}^{\top}\bY_{k,\cdot i}\widehat{\bPhi}_{kij,R}$ and $\widehat{\bR}_{kij}=\widehat{\bPhi}_{kij,R},$ where the columns of $\widehat{\bPhi}_{kij,R}$ are the $d_1$ leading eigenvectors of 
$$
\begin{aligned}
    \widehat{\bY}_{k,\cdot i,\ols}^{\top}\widehat{\bY}_{k,\cdot i,\ols}
    =&(n-k)\widehat{\bOmega}_{y,ij}^{(1)}(k)\bQ_{1j}\big\{\bQ_{1j}^{\top}\widehat{\bOmega}_{y,jj}^{(1)}(0)\bQ_{1j}\big\}^{-1}\bQ_{1j}^{\top}\widehat{\bOmega}_{y,ij}^{(1)}(k)^{\top}.
\end{aligned}
$$
This implies that the weight matrix to construct $\widehat{\bM}^{(1)}$ is $\widehat{\bW}_{j}^{(1)}=\bQ_{1j}\{\bQ_{1j}^{\top}\widehat{\bOmega}_{y,jj}^{(1)}(0)\bQ_{1j}\}^{-1}\bQ_{1j}^{\top}$ for $j\in[p_2]$, which is non-negative and independent of lag $k$. The weight matrix $\widehat{\bW}_{j}^{(2)}$ to improve the estimation of $d_2$ and $\bC$ can be similarly obtained.

\section{Additional simulation results}
\label{supsec.D}

\subsection{\texorpdfstring{Estimation of the loading space with known $r_0$}{Estimation of the loading space with known r0}}
\label{supsubsec.true}

This section presents the estimation results of the loading space when the number of (strong) factors $r_0$ is known. These results are compared with those based on the estimated $\hat{r}_0$, as reported in Tables~\ref{tab.loading} and \ref{tab.loading_new}. 
For comparison, with known $r_0$, the estimated loading matrix $\widehat{\bA}^*$ is constructed by the $r_0$ leading eigenvectors of matrices $\widehat{\bOmega}_y,\widehat{\bM}_1=\sum_{k=1}^{m}\widehat{\bOmega}_y(k)\widehat{\bOmega}_y(k)^{\top},$ and $\widehat{\bM}=\sum_{k=1}^{m}\widehat{\bOmega}_y(k)\widehat{\bW}\widehat{\bOmega}_y(k)^{\top}$ for Cov, Auto, and WAuto, respectively. The simulation settings follow those in Section~\ref{sec.sim}.

\begin{table}[H]
    \small
    \centering
    \caption{The mean and standard deviation (in parentheses) of $\cD\big\{\cC(\widehat{\bA}^*),\cC(\bA)\big\}$ using the true $r_0$ for model~\eqref{eq.model_ex1} over 1000 simulation runs.} 
    \label{tab.true_loading}{
    \begin{tabular}{l|ccc|ccc}
            \hline
             & \multicolumn{3}{c|}{$\delta_0=0.75$} & \multicolumn{3}{c}{$\delta_0=1$} \\
          & Cov & Auto & WAuto & Cov & Auto & WAuto \\
          \hline
              $p=50$ & 0.477  & 0.183  & 0.181  & 0.149  & 0.109  & 0.108  \\
          & (0.137) & (0.046) & (0.044) & (0.045) & (0.026) & (0.025) \\
    $p=100$ & 0.304  & 0.197  & 0.197  & 0.104  & 0.111  & 0.109  \\
          & (0.092) & (0.041) & (0.041) & (0.021) & (0.023) & (0.022) \\
    $p=200$ & 0.225  & 0.213  & 0.214  & 0.091  & 0.112  & 0.110  \\
          & (0.045) & (0.039) & (0.041) & (0.014) & (0.021) & (0.021) \\
    $p=400$ & 0.201  & 0.223  & 0.225  & 0.085  & 0.109  & 0.108  \\
          & (0.031) & (0.038) & (0.040) & (0.011) & (0.020) & (0.020) \\
        \hline
    \end{tabular}}
\end{table}

\begin{table}[H]
    \small
    \centering
    \caption{The mean and standard deviation (in parentheses) of $\cD\big\{\cC(\widehat{\bA}^*),\cC(\bA)\big\}$ using the true $r_0$ for model~\eqref{eq.model_ex2} over 1000 simulation runs.}
    \label{tab.true_loading_new}{
    \begin{tabular}{l|ccccccc}
            \hline
          & Cov & Auto & WAuto & Auto1 & WAuto1 & Auto2 & WAuto2 \\
          \hline
        & \multicolumn{7}{c}{$\delta_0=1,~~~\delta_1=1$}\\
        $p=50$ & 0.303  & 0.250  & 0.252  & 0.268  & 0.269  & 0.286  & 0.293  \\
          & (0.086) & (0.046) & (0.046) & (0.057) & (0.056) & (0.054) & (0.058) \\
    $p=100$ & 0.265  & 0.238  & 0.239  & 0.246  & 0.246  & 0.265  & 0.270  \\
          & (0.068) & (0.034) & (0.034) & (0.039) & (0.037) & (0.041) & (0.043) \\
    $p=300$ & 0.231  & 0.230  & 0.231  & 0.231  & 0.231  & 0.251  & 0.255  \\
          & (0.049) & (0.031) & (0.032) & (0.033) & (0.033) & (0.036) & (0.039) \\
    $p=500$ & 0.224  & 0.230  & 0.231  & 0.230  & 0.230  & 0.250  & 0.255  \\
          & (0.043) & (0.030) & (0.030) & (0.031) & (0.031) & (0.034) & (0.037) \\
          \hline
    & \multicolumn{7}{c}{$\delta_0=1,~~~\delta_1=0.85$}\\
     $p=50$ & 0.242  & 0.241  & 0.245  & 0.253  & 0.258  & 0.281  & 0.289  \\
          & (0.043) & (0.040) & (0.042) & (0.044) & (0.046) & (0.052) & (0.057) \\
    $p=100$ & 0.217  & 0.230  & 0.233  & 0.234  & 0.237  & 0.260  & 0.265  \\
          & (0.029) & (0.031) & (0.032) & (0.032) & (0.033) & (0.039) & (0.042) \\
    $p=300$ & 0.200  & 0.223  & 0.225  & 0.221  & 0.223  & 0.246  & 0.250  \\
          & (0.022) & (0.028) & (0.030) & (0.028) & (0.029) & (0.035) & (0.038) \\
    $p=500$ & 0.198  & 0.223  & 0.225  & 0.220  & 0.222  & 0.245  & 0.249  \\
          & (0.021) & (0.027) & (0.028) & (0.026) & (0.027) & (0.033) & (0.036) \\
        \hline
    \end{tabular}}
\end{table}

Tables \ref{tab.true_loading} and \ref{tab.true_loading_new} report numerical summaries of the estimation errors for $\cC(\bA)$ under models \eqref{eq.model_ex1} and \eqref{eq.model_ex2}, respectively, revealing several notable trends. Firstly, when the signal is relatively weak, Auto and WAuto outperform Cov, whereas Cov performs the best when the signal is relatively strong. This aligns with the findings in Tables ~\ref{tab.loading} and \ref{tab.loading_new} based on the estimated number of (strong) factors $\hat r_0$. 
Secondly, WAuto achieves the best performance in three cases reported in Table \ref{tab.true_loading}, i.e., $(\delta_0,p)=(0.75,50),(0.75,100)$ and $(1,50)$, which correspond to scenarios with very weak signals. This highlights the advantage of the proposed method in the most challenging settings. Thirdly, in the remaining cases, Auto and WAuto exhibit comparable performance, supporting the discussion in Section \ref{subsec.summary} that WAuto does not attain a faster convergence rate than Auto, and in certain cases both converge at the same rate.

\subsection{\texorpdfstring{Sensitivity analysis with respect to the choice of $q$}{Sensitivity analysis with respect to the choice of q}}
\label{supsubsec.sensitivity}

In this section, we conduct simulations to see how (in)sensitive the proposed method is to the choice of $q.$ Specifically, we use the same data generating process as in Section~\ref{sim.bfm}, fixing $q$ at 5, 10, 15, and 20, and replicating each simulation 1000 times. Tables \ref{tab.sen_num} and \ref{tab.sen_loading} present the average relative frequencies of $\hat{r}_0=r_0$ and the mean estimation errors of $\cC(\bA)$, respectively. Compared with the results in Tables~\ref{tab.num} and \ref{tab.loading}, we observe several apparent patterns. 
Firstly, regardless of the fixed value of $q$, WAuto consistently outperforms Auto, demonstrating the robustness of our proposed calibrating weight to the choice of $q$ in improving the performance of the standard autocovariance-based method. 
Secondly, using the generalized BIC introduced to adaptively select the tuning parameter $q$ outperforms fixing $q$ at any specific value across all simulation settings. This suggests that, despite lacking the theoretical guarantee, the generalized BIC demonstrates good empirical performance.

\begin{table}[H]
    \small
    \centering
    \caption{The relative frequency estimate of $\eP(\hat{r}_{0}=r_0)$ and the average of $\hat{r}_{0}$ (in parentheses) for model~\eqref{eq.model_ex1} with fixed values of $q$ over 1000 simulation runs.}
    \label{tab.sen_num}{
    \begin{tabular}{l|cccc|cccc}
            \hline
            & \multicolumn{4}{c|}{$\delta_0=0.75$} & \multicolumn{4}{c}{$\delta_0=1$} \\
          $q$ & 5     & 10    & 15    & 20    & 5     & 10    & 15    & 20 \\
          \hline
    $p=50$ & 0.736  & 0.855  & 0.803  & 0.744  & 0.995  & 0.990  & 0.983  & 0.964  \\
          & (2.659) & (2.768) & (2.680) & (2.605) & (2.992) & (2.984) & (2.972) & (2.943) \\
    $p=100$ & 0.908  & 0.874  & 0.853  & 0.817  & 1.000  & 0.998  & 0.995  & 0.991  \\
          & (2.866) & (2.801) & (2.759) & (2.694) & (3.000)   & (2.997) & (2.991) & (2.986) \\
    $p=200$ & 0.950  & 0.903  & 0.875  & 0.849  & 0.999  & 0.997  & 0.994  & 0.992  \\
          & (2.923) & (2.840) & (2.797) & (2.755) & (2.999) & (2.995) & (2.989) & (2.985) \\
    $p=400$ & 0.956  & 0.921  & 0.902  & 0.886  & 1.000  & 0.999  & 0.998  & 0.997  \\
          & (2.923) & (2.874) & (2.847) & (2.817) & (3.000)   & (2.998) & (2.997) & (2.995) \\
        \hline
    \end{tabular}}
\end{table}

\begin{table}[H]
    \small
    \centering
    \caption{The mean and standard deviation (in parentheses) of $\cD\big\{\cC(\widehat{\bA}),\cC(\bA)\big\}$ for model~\eqref{eq.model_ex1} with fixed values of $q$ over 1000 simulation runs.}
    \label{tab.sen_loading}{
    \begin{tabular}{l|cccc|cccc}
            \hline
            & \multicolumn{4}{c|}{$\delta_0=0.75$} & \multicolumn{4}{c}{$\delta_0=1$} \\
          $q$ & 5     & 10    & 15    & 20    & 5     & 10    & 15    & 20 \\
           \hline
    $p=50$ & 0.312  & 0.257  & 0.285  & 0.318  & 0.113  & 0.115  & 0.119  & 0.129  \\
          & (0.226) & (0.203) & (0.230) & (0.251) & (0.051) & (0.067) & (0.086) & (0.119) \\
    $p=100$ & 0.245  & 0.262  & 0.273  & 0.293  & 0.111  & 0.111  & 0.112  & 0.114  \\
          & (0.153) & (0.184) & (0.200) & (0.221) & (0.023) & (0.035) & (0.052) & (0.062) \\
    $p=200$ & 0.241  & 0.264  & 0.276  & 0.289  & 0.113  & 0.113  & 0.114  & 0.115  \\
          & (0.118) & (0.164) & (0.181) & (0.196) & (0.026) & (0.041) & (0.056) & (0.064) \\
    $p=400$ & 0.250  & 0.264  & 0.271  & 0.278  & 0.110  & 0.109  & 0.109  & 0.109  \\
          & (0.119) & (0.145) & (0.157) & (0.170) & (0.020) & (0.030) & (0.033) & (0.040) \\
        \hline
    \end{tabular}}
\end{table}

\subsection{Iterative weight-calibrated method}
\label{supsubsec.iterative}

In this section, we develop an iterative extension of our weight-calibrated method, and compare their performance through simulations. The current weight-calibrated method can be viewed as a two-step estimator, where the first step projects the data from the $p$-dimensional column space onto a rank-$q$ subspace, and the second step performs eigenanalysis within this subspace to estimate the number of factors and the loadings. This subspace is constructed to retain the most information of common components while filtering out idiosyncratic noise, thereby enhancing their separation and improving the estimation performance.
Building on this idea, we propose an iterative extension of our method. At each iteration, the matrix $\widehat{\bM}=\sum_{k=1}^{m}\widehat\bOmega_{y}(k)\widehat{\bW}\,\widehat\bOmega_{y}(k)^{\top}$ obtained from the previous step is used to update the projection matrix $\bQ$, and then eigenanalysis is performed in the resulting subspace. Here, $\cC(\bQ)$ can be regarded as an ``overfitted'' candidate estimator of the loading space, and the iterative procedure aims to progressively refine its accuracy. The generalized BIC proposed in Section~\ref{subsec.IC} is applied at each step to determine $q$. We summarize the iterative method in Algorithm \ref{alg.iterative} below. Note that the estimation results for the non-iterative method in Section~\ref{sec.method} correspond to $\hat{r}_{0}^{(0)}$ and $\widehat{\bA}^{(0)}$ in the algorithm.

We next employ the same data generating process as in Section~\ref{sim.bfm} to assess the performance of the proposed iterative weight-calibrated autocovariance-based method (denoted as IWAuto). For implementation, we set $l_{\max}=10,\delta_0=0.5,0.75,1$, and retain all other settings from Section~\ref{sim.bfm}. Tables~\ref{tab.iter_num} and \ref{tab.iter_loading} present the comparison between WAuto and IWAuto in terms of estimation accuracy for $r_0$ and $\cC(\bA)$, respectively. The results show that IWAuto outperforms WAuto substantially in most cases, particularly when the factor strength is weak. This finding highlights the potential benefits of the iterative extension to the proposed method and points to a promising direction for future research.

\begin{algorithm}[H]
\spacingset{1.2}
\caption{Iterative Weight-Calibrated Autocovariance-based Method}
\label{alg.iterative}
\begin{algorithmic}[1]
\Require Data $\{\by_t\}_{t=1}^n$, tuning parameters $m,C,q_0$, maximum number of iterations $l_{\max}$
\Ensure Estimated number of factor $\hat{r}_0$, estimated factor loading matrix $\widehat{\bA}$

\State \textbf{Initialization:}
\State Compute the sample covariance $\widehat{\bOmega}_y=n^{-1}\sum_{t=1}^n \by_t\by_t{^{\top}}$
\State Select $q^{(0)}$ using the generalized BIC in (8)
\State Set $\bQ^{(0)}$ to consist of the $q^{(0)}$ leading eigenvectors of $\widehat{\bOmega}_y$
\State Construct {\small$\widehat{\bW}^{(0)}= \bQ^{(0)}\big\{(\bQ^{(0)})^{\top}\widehat{\bOmega}_{y}\bQ^{(0)}\big\}^{-1}(\bQ^{(0)})^{\top}$}, and {\small$\widehat{\bM}^{(0)} = \sum_{k=1}^{m}\widehat{\bOmega}_{y}(k)\widehat{\bW}^{(0)}\widehat{\bOmega}_{y}(k)^{\top}$}
\State Obtain $\hat{r}_{0}^{(0)}$ by (6) and $\widehat{\bA}^{(0)}$ by the $\hat{r}_{0}^{(0)}$ leading eigenvectors of $\widehat{\bM}^{(0)}$
\Statex

\For{$l=1,2,\dots,l_{\max}$}
    \State \textbf{BIC-based selection of $q^{(l)}$}:
    \State $\bar{r}_0$ is obtained through (6) using $q_0$ by replacing $\widehat{\bW}$ with $\widehat{\bW}^{(l-1)}$
    \For{each $q \in \{\bar{r}_0+1,\dots,q_0\}$}
        \For{$k=1$ to $m$}
            \State Compute $L_k^{(l)}(q)=(pn)^{-1}\sum_{t=k+1}^n\|\by_t-\widehat{\bA}_k^{(l)} (\widehat{\bH}_k^{(l)})^{\top}\widetilde{\by}_{t-k}\|^2$, where $(\widehat{\bH}_k^{(l)},\widehat{\bA}_k^{(l)})$ is the solution of the constrained minimization problem in (5), where $\bQ$ is defined as the $q$ leading eigenvectors of $\widehat{\bM}^{(l-1)}$ and $r_0$ is estimated by (6) using $q_0$ by replacing $\widehat{\bW}$ with $\widehat{\bW}^{(l-1)}$
            \State Set $\bic_k^{(l)}(q)=pn\log L_k^{(l)}(q)+Cd_k(q)\log(pn)$
        \EndFor
    \EndFor
    \State $q^{(l)} = \arg\min_{\bar r_0<q\le q_0}\sum_{k=1}^{m}\bic_k^{(l)}(q)$ 
    \Statex
    
    \State \textbf{Update the projection matrix and perform estimation:}
    \State Set $\bQ^{(l)}$ to consist of the $q^{(l)}$ leading eigenvectors of $\widehat{\bM}^{(l-1)}$
    \State Construct {\small$\widehat{\bW}^{(l)}= \bQ^{(l)}\big\{(\bQ^{(l)})^{\top}\widehat{\bOmega}_{y}\bQ^{(l)}\big\}^{-1}(\bQ^{(l)})^{\top}$}, and {\small$\widehat{\bM}^{(l)}=\sum_{k=1}^{m}\widehat{\bOmega}_{y}(k)\widehat{\bW}^{(l)}\widehat{\bOmega}_{y}(k)^{\top}$}
    \State Obtain $\hat{r}_{0}^{(l)}$ by (6) and $\widehat{\bA}^{(l)}$ by the $\hat{r}_{0}^{(l)}$ leading eigenvectors of $\widehat{\bM}^{(l)}$
    \Statex
    
    \State \textbf{Early stopping:}
    \If{$l\ge 1$ \textbf{ and } $\hat{r}_0^{(l)}=\hat{r}_0^{(l-1)}$}
        \State \textbf{break}
    \EndIf
\EndFor
\Statex

\State \Return $\hat{r}_0=\hat{r}_0^{(l)}$, $\widehat{\bA}=\widehat{\bA}^{(l)}$
\end{algorithmic}
\end{algorithm}

\begin{table}[H]
    \small
    \centering
    \caption{Comparison between WAuto and IWAuto in terms of the relative frequency estimate of $\eP(\hat{r}_{0}=r_0)$ and the average of $\hat{r}_{0}$ (in parentheses) for model~\eqref{eq.model_ex1} over 1000 simulation runs.}
    \label{tab.iter_num}{
    \begin{tabular}{l|cc|cc|cc}
            \hline
            & \multicolumn{2}{c|}{$\delta_0=0.5$} & \multicolumn{2}{c|}{$\delta_0=0.75$} & \multicolumn{2}{c}{$\delta_0=1$} \\
          & WAuto & IWAuto & WAuto & IWAuto & WAuto & IWAuto \\
          \hline
    $p=50$ & 0.444  & 0.593  & 0.899  & 0.950  & 0.998  & 0.998  \\
          & (2.369) & (2.567) & (2.855) & (2.935) & (2.997) & (2.997) \\
    $p=100$ & 0.233  & 0.549  & 0.942  & 0.973  & 1.000  & 1.000  \\
          & (1.832) & (2.355) & (2.922) & (2.963) & (3.000)   & (3.000) \\
    $p=200$ & 0.158  & 0.445  & 0.961  & 0.971  & 0.999  & 0.999  \\
          & (1.726) & (2.175) & (2.940) & (2.955) & (2.999) & (2.999) \\
    $p=400$ & 0.231  & 0.434  & 0.965  & 0.970  & 1.000  & 1.000  \\
          & (1.821) & (2.150) & (2.939) & (2.948) & (3.000)   & (3.000) \\
        \hline
    \end{tabular}}
\end{table}

\begin{table}[H]
    \small
    \centering
    \caption{Comparison between WAuto and IWAuto in terms of the mean and standard deviation (in parentheses) of $\cD\big\{\cC(\widehat{\bA}),\cC(\bA)\big\}$ for model~\eqref{eq.model_ex1} over 1000 simulation runs.}
    \label{tab.iter_loading}{
    \begin{tabular}{l|cc|cc|cc}
            \hline
            & \multicolumn{2}{c|}{$\delta_0=0.5$} & \multicolumn{2}{c|}{$\delta_0=0.75$} & \multicolumn{2}{c}{$\delta_0=1$} \\
          & WAuto & IWAuto & WAuto & IWAuto & WAuto & IWAuto \\
          \hline
    $p=50$ & 0.522  & 0.454  & 0.230  & 0.204  & 0.109  & 0.109  \\
          & (0.234) & (0.218) & (0.165) & (0.118) & (0.037) & (0.037) \\
    $p=100$ & 0.640  & 0.510  & 0.223  & 0.209  & 0.109  & 0.109  \\
          & (0.201) & (0.207) & (0.121) & (0.089) & (0.022) & (0.022) \\
    $p=200$ & 0.682  & 0.579  & 0.233  & 0.228  & 0.111  & 0.111  \\
          & (0.171) & (0.196) & (0.107) & (0.095) & (0.026) & (0.026) \\
    $p=400$ & 0.674  & 0.607  & 0.243  & 0.241  & 0.108  & 0.108  \\
          & (0.173) & (0.182) & (0.108) & (0.101) & (0.020) & (0.020) \\
        \hline
    \end{tabular}}
\end{table}

\subsection{\texorpdfstring{Extension to information criterion for estimating $r_0$}{Extension to information criterion for estimating r0}}
\label{supsubsec.IC}

Our paper demonstrates the usefulness of the calibrating weight in enhancing the accuracy of the eigenvalue-ratio estimator for determining $r_0$ within the autocovariance-based framework. In this section, we further incorporate the calibrating weight into an alternative commonly-adopted information-criterion method for estimating $r_0$ and evaluate its performance through simulations.

Let $\widehat{\bA}_{(\gamma)}$ denote the estimated loading matrix with $\gamma$ factors. For the covariance-based, standard autocovariance-based, and weight-calibrated autocovariance-based methods, $\widehat{\bA}_{(\gamma)}$ is constructed by the $\gamma$ leading eigenvectors of the matrices $\widehat{\bOmega}_y,\widehat{\bM}_1=\sum_{k=1}^{m}\widehat{\bOmega}_y(k)\widehat{\bOmega}_y(k)^{\top},$ and $\widehat{\bM}=\sum_{k=1}^{m}\widehat{\bOmega}_y(k)\widehat{\bW}\widehat{\bOmega}_y(k)^{\top},$ respectively.
Then we propose the following information criterion:
$$
{\rm IC}(\gamma)=\log\{V(\gamma)\}+\gamma g(p,n),
$$ 
which is represented as the sum of the logarithm of the mean squared residual, defined as $V(\gamma)=(pn)^{-1}\sum_{t=1}^{n}\big\|\by_t-\widehat{\bA}_{(\gamma)}\widehat{\bA}_{(\gamma)}^{\top}\by_t\big\|^2,$ and the penalty term with $g(p,n)$ being the penalty function of $(p,n)$ to avoid overparameterization. Following \textcolor{blue}{Bai and Ng} (\textcolor{blue}{2002}), we suggest the following three examples of $g(p,n)$:
{\small\begin{equation}
    \label{eq.penalty}
    (i)\ g(p,n)=\frac{p+n}{pn}\log\left(\frac{pn}{p+n}\right),\
    (ii)\ g(p,n)=\frac{p+n}{pn}\log(p\wedge n),\
    (iii)\ g(p,n)=\frac{\log(p\wedge n)}{p\wedge n}.
\end{equation}
}The number of factors is then estimated as 
\begin{equation}
    \label{eq.IC}
    \hat{r}_0^{\IC}=\arg\min_{\gamma\in[\gamma_{\max}]}{\rm IC}(\gamma),
\end{equation}
where $\gamma_{\max}$ is a prespecified positive integer. Note that, for the covariance-based method, the information criterion is the same as that in \textcolor{blue}{Bai and Ng} (\textcolor{blue}{2002}). Thus, \eqref{eq.IC} naturally extends their criterion to accommodate the autocovariance-based framework. Since the penalty function $g(p,n)$ must dominate the convergence rate of the common component estimator, and the covariance-based and autocovariance-based methods share the same rate under some regularity conditions, adopting the same penalty functions as in \textcolor{blue}{Bai and Ng} (\textcolor{blue}{2002}) is justified.

\begin{table}[H]
    \small
    \centering
    \caption{The relative frequency estimate of $\eP(\hat{r}_0^{\IC}=r_0)$ and the average of $\hat{r}_0^{\IC}$ (in parentheses) for model~\eqref{eq.model_ex1} over 1000 simulation runs.}
    \label{tab.IC_num}{
    \begin{tabular}{l|ccc|ccc}
            \hline
            & \multicolumn{3}{c|}{$\delta_0=0.75$} & \multicolumn{3}{c}{$\delta_0=1$} \\
          & Cov & Auto & WAuto & Cov & Auto & WAuto \\
          \hline
          $p=100$ & 0.000  & 0.000  & 0.320  & 0.000  & 0.000  & 0.510  \\
          & (10.000)  & (9.965) & (4.967) & (10.000)  & (9.954) & (3.982) \\
    $p=200$ & 0.000  & 0.843  & 0.989  & 0.000  & 0.964  & 0.993  \\
          & (10.000)  & (3.206) & (3.011) & (10.000)  & (3.037) & (3.007) \\
    $p=300$ & 0.248  & 1.000  & 1.000  & 0.244  & 1.000  & 1.000  \\
          & (4.805) & (3.000)   & (3.000)   & (4.877) & (3.000)   & (3.000) \\
    $p=400$ & 0.946  & 1.000  & 1.000  & 0.945  & 1.000  & 1.000  \\
          & (3.057) & (3.000)   & (3.000)   & (3.058) & (3.000)   & (3.000) \\
        \hline
    \end{tabular}}
\end{table}

We next employ the same data generating process as in Section~\ref{sim.bfm} to assess the performance of the proposed information criteria. For implementation, we set $\gamma_{\max}=10,p=100,200,300,400$, adopt the second penalty function in \eqref{eq.penalty} when computing the information criteria, and retain all other settings from  Section~\ref{sim.bfm}. The average relative frequencies of $\hat{r}_0^{\IC}=r_0$ and the average $\hat{r}_0^{\IC}$ are reported in Table \ref{tab.IC_num}. A few trends are observable. Firstly, all three methods perform poorly when $p$ is small, with performance improving as $p$ increases. Particularly, for $p=100$ and $200$, Cov completely fails to identify the number of factors, whereas for $p=300$ and $400$, both Auto and WAuto achieve 100\% accuracy, highlighting the effectiveness of the proposed information criteria. Secondly, for $p=100$ and 200, WAuto consistently outperforms Auto, yielding significant gains in the estimation accuracy of $r_0$. This demonstrates the usefulness of the calibrating weight in enhancing the performance of the information-criterion method.

\spacingset{1.0}
\section*{References}
\begin{description}

    \item 
    Bai, J. and Ng, S. (2002). Determining the number of factors in approximate factor models, {\it Econometrica} {\bf 70}(1): 191--221.

    
    \item 
    Reinsel, G. C., Velu, R. P. and Chen, K. (2022). {\it Multivariate Reduced-Rank Regression: Theory, Methods and Applications}, Vol. 225, Springer.

    \item 
    Weyl, H. (1912). Das asymptotische verteilungsgesetz der eigenwerte linearer partieller differentialgleichungen (mit einer anwendung auf die theorie der hohlraumstrahlung), {\it Mathematische Annalen} {\bf 71}(4): 441--479.

    \item 
    Yu, Y., Wang, T. and Samworth, R. J. (2015). A useful variant of the Davis–Kahan theorem for statisticians, {\it Biometrika} {\bf 102}(2): 315--323.
    
    \item 
    Zhang, B., Pan, G., Yao, Q. and Zhou, W. (2024). Factor modeling for clustering high-dimensional time series, {\it Journal of the American Statistical Association} {\bf 119}(546): 1252--1263.
\end{description}

\end{document}